\documentclass[12pt]{article}

\usepackage{appendix}
\usepackage{amsmath}
\usepackage{changes}
\usepackage{setspace}
\usepackage{bbold}

\newcommand{\can}{\mathrm{can}}
\newcommand{\Prob}{\mathrm{Prob}}

\newcommand{\indicator}{\mathbb{1}}

\DeclareMathOperator*{\Aut}{Aut}

\DeclareMathOperator*{\argmax}{arg\,max}

\usepackage{amsthm}
\usepackage{amsmath}
\usepackage{amssymb}
\usepackage[shortlabels]{enumitem}
\usepackage{xcolor}
\usepackage{listings}
\usepackage{multirow}
\usepackage{color}
\usepackage[pdfencoding=auto, psdextra]{hyperref}
\usepackage{doi}
\usepackage{booktabs}
\usepackage{rotating}
\usepackage{bm}
\usepackage{tablefootnote}
\usepackage[normalem]{ulem}
\usepackage{array}
\usepackage{longtable}
\usepackage{bbm}
\usepackage{cleveref}
\usepackage{tabularx}
\usepackage{bbm}

\newcommand{\ed}{\mathrm{d}}

\newcommand{\id}{{\bf 1}}

\newcolumntype{L}[1]{>{\raggedright\let\newline\\\arraybackslash\hspace{0pt}}p{#1}}

\newtheorem{theorem}{Theorem}
\newtheorem{lemma}[theorem]{Lemma}
\newtheorem{remark}[theorem]{Remark}

\newtheorem{example}[theorem]{Example}
\newtheorem{corollary}[theorem]{Corollary}
\newtheorem{proposition}[theorem]{Proposition}

\theoremstyle{definition}
\newtheorem{definition}[theorem]{Definition}
\numberwithin{equation}{section}
\numberwithin{theorem}{section}

\title{An ergodic theorem for multi-period mutual insurance}
\author{John Armstrong}

\allowdisplaybreaks
\makeatletter
\renewenvironment{proof}[1][\proofname]{\par
  \pushQED{\qed}%
  \normalfont\topsep6\p@\@plus6\p@\relax
  \trivlist
  \item\relax
  {\itshape #1\@addpunct{.}}\enspace\ignorespaces
}{%
  \popQED\endtrivlist\@endpefalse
}
\makeatother

\begin{document}

\maketitle

\begin{abstract}
Suppose there are $N$ heterogeneous agents in a market with idiosyncratic risks but no uninsurable systematic risk factors. These agents may agree arbitrary financial contracts with one another, subject to the condition that contracts are self-enforcing under coalitions of agents in a common state. We show that, under mild conditions, this uniquely determines the limiting utility of every agent as $N$ tends to infinity. The result is an ergodic theorem: as the population grows, the number of degrees of freedom in the problem collapses, so that agents in the same state are treated identically in the limit. We exhibit an explicit, practically realisable mechanism achieving this limit using only short-dated contracts. The model can be applied either to an economy of heterogeneous agents pooling idiosyncratic risk through self-enforcing contracts or to the design of optimal insurance products such as pensions.
\end{abstract}

\section{Introduction}

Suppose that there are $N$ heterogeneous agents in a market with idiosyncratic risks but no uninsurable systematic risk factors. These agents may agree arbitrary financial contracts with one another, subject to the condition that contracts are self-enforcing under coalitions of agents in a common state. We will show that under mild conditions, this uniquely determines the limiting utility of each agent as $N$ tends to infinity.

Because the utility is uniquely determined it is trivially an optimal outcome from the perspective of every agent.

If the only idiosyncratic risk-factor is idiosyncratic mortality risk, this optimal utility can be achieved by a tontine structure. This implies that there is no potential for additional welfare benefit from more complex contracts. A generalised tontine structure can be used to achieve the optimal utility for arbitrary idiosyncratic risks.

The contracts agreed between agents can include arbitrary illiquid long-term insurance contracts that cannot be hedged in the market. We will see that the same optimal utility can be achieved if one restricts to short-dated ``myopic'' contracts that only look one period ahead. Indeed, the generalised tontine design is of this form.

Our results can equally be applied to obtain structural results about an economy of heterogeneous agents (who might be individuals, firms, insurers, or reinsurers) pooling idiosyncratic risk through self-enforcing contracts, or to understand the possible designs of pension schemes themselves. For convenience we use the word ``fund'' to denote the universe in which our agents operate, but we could equally have used the word ``insurance market''. Nothing in the argument depends on the pension interpretation specifically.

\medskip

Our result is an ``ergodic theorem'' because we show that as $N \to \infty$ the number of degrees of freedom in the problem reduces in a similar fashion to thermodynamic problems.

It is standard practice in insurance problems to model infinitely large systems where idiosyncratic risk has been eliminated. In such infinite models, one assumes that all agents in the same state should be treated equally. This assumption drastically reduces the number of degrees of freedom in the problem. This assumption is called the ergodic hypothesis. A central contribution of this paper is that we do not assume the ergodic hypothesis; we prove that it holds.

In a multi-period setting with long-dated contracts, one runs into the problem that, after the first period, agents will no longer have independent financial positions. 
This creates a tension: in order to establish the independence
of agents' financial positions in the limit, one seems to need a forward induction argument, but the key economic arguments describing the optimal behaviour of agents naturally use dynamic programming and hence backward induction.

There is no such tension when $N=\infty$: backward induction arguments suffice in this limit. We therefore would like some way of establishing the existence of an ergodic limit before we attempt to identify optimal strategies.

The key observation is that compactness of the problem is sufficient to establish the existence of an ergodic limit.  Intuitively, there is an inherent contradiction between a problem being compact and having an infinite number of identical degrees of freedom.  By combining the compactness results of Kramkov and Schachermayer \cite{kramkov1999asymptotic,kramkov2003necessary} with the infinite symmetries in our investment problem obtained by permuting members we can obtain our ergodic theorem. The key technical tool that allows one to formalise this heuristic argument is the classical Hewitt--Savage Theorem (\cite{hewittSavage}, stated as Theorem~\ref{thm:hewittSavage}).

\paragraph{Assumptions}

\begin{enumerate}[(i),series=assumptions]
\item \emph{Self-enforcing contracts under coalitions.} With probability 1, all agents should remain solvent. It should not be possible for a group of agents in the same state to obtain a better outcome by reneging on their contract and working together.

\item \emph{Fully insurable systematic risk.} Stated more technically, we will assume that our problem can be written in terms of a complete investment market and independent identically distributed idiosyncratic risks.

If there is uninsurable systematic risk, there can be welfare benefits in mutually insuring against this risk as is shown in \cite{armstrong_dalby} for the case of systematic longevity risk.

\item \emph{No altruism or envy.} Agents are indifferent to the outcomes experienced by other agents. If they were altruistic they may prefer contracts where they simply give away their wealth over the tontine mechanism. Similarly, if they are motivated by envy as in \cite{espinosa2015optimal}, our results will not apply.

\item \emph{No satiation.} An agent's utility must always lie strictly below its upper bound, so that additional wealth is always strictly valued. Without this assumption, agents who are satiated will be willing to give away their wealth.

\item \emph{Finite time horizon.} This eliminates Ponzi-type schemes. See the discussion in \cite{armstrong_dalby_donnelly} for a concrete example of how a fully funded DB scheme can under certain circumstances exploit future wage growth to provide members with an income whose present value is greater than the present value of their contributions.

\item \emph{Time consistency.} An agent with time-inconsistent preferences (in the sense of Kreps and Porteus \cite{krepsPorteus}) may make decisions as though they will become satiated even if the no-satiation property holds.

Although time consistency is a standard economic assumption, it is violated in Prospect Theory so this is a nontrivial restriction.

\item \emph{Monotonicity.} Everyone agrees increased wealth is good.
\end{enumerate}

We also make two additional assumptions which are primarily to simplify the exposition. Both of these assumptions could be weakened significantly, but not eliminated entirely. See Section~\ref{sec:extensions}
for a more detailed discussion of this point. These assumptions are only in force for Section~\ref{sec:finiteFunds}.
\begin{enumerate}[(i),resume=assumptions]
\item \emph{Markovian preferences.} Agents' decisions can depend on their personal history and the market history, but not on their consumption history.
\item \emph{Finite state space.} The set of possible states of an agent is finite.
\end{enumerate}

The problem must of course be well-posed, in the sense that non-trivial solutions exist, but we assume something stronger than this.

\begin{enumerate}[(i),resume=assumptions]
	\item \emph{Compactness.} Unlike the assumptions above, this is a technical rather than an economic condition. We use the word as a shorthand for the requirement that maximisers exist and that the problem has appropriate continuity properties; it is formalised in Section~\ref{sec:compactness}.
\end{enumerate}

\paragraph{Relationship to existing literature}
Borch's mutuality
principle \cite{borch1962,wilson1968} establishes, in a one-period
model, that it is Pareto optimal for a group of agents to pool their risks
so that each agent's outcome depends on their own state only through
the aggregate shock, with idiosyncratic risk pooled away entirely.
Borch's theorem holds for any number of agents,
but assumes concave utility functions and a shared probability model among agents.
What Borch's Theorem does not address is whether, or how,
such an arrangement could be reached or sustained by self-interested
agents who are free to walk away; it is derived on the
assumption that whatever allocation is agreed will simply be
honoured.

The same is true of an extensive applied literature on collective pension
investment designs in which membership of a scheme is compulsory and the
fund pursues a welfare goal set by a central planner, for example
\cite{gollier,cui,bonenkamp,haan}. There the question of walking away does
not arise. Such schemes lie outside the scope of our theorem and our
results in no way invalidate this body of work.

A separate literature
instead asks whether efficient risk sharing can emerge from
decentralised, self-enforcing behaviour. Classically, as a static exchange
economy is replicated, the core (the set of allocations that no
coalition could improve upon by trading only among its own members)
shrinks to the competitive equilibrium allocations
\cite{debreuScarf1963,aumann1964}: this establishes emergence from
self-interested, coalition-proof behaviour as the population grows,
but only in a single period, and the convergence is known to depend
delicately on the convexity of preferences, weakening substantially
in its absence. Becker and Chakrabarti's recursive core
\cite{beckerChakrabarti1995} extends coalitional stability
to a genuinely dynamic, multi-period setting.

The topic of ergodic limits is an area of active research,
particularly the substantial literature on the convergence of 
large-population dynamic games to a mean-field
limit, known in this context as propagation of chaos.
Different routes are used to
prove such results: alongside the classical coupling method of
\cite{sznitman1991}, a compactness-and-exchangeability argument
closely analogous to our own has already been used by
\cite{fischer2017,lacker2016} to identify limits of $N$-player Nash
equilibria. That literature, however, typically relies on a
convexity condition on the population's cost (the Lasry--Lions
monotonicity condition \cite{cardaliaguet2019}) to guarantee that
even the value of the game is unique, since without it distinct
equilibria can carry genuinely different payoffs, not merely
different strategies achieving a common payoff. We instead obtain
uniqueness of the value directly from the self-enforcement condition
itself, via the no-persistent-gap argument of
Proposition~\ref{prop:noPersistentGap}, without assuming any
convexity.

The ``tontine strategy'' we propose has ancient origins. A review of the long history of tontines and the more recent literature can be found in \cite{milevsky}. The broad tontine design we describe in Section~\ref{sec:tontineStrategies} is based on the discrete-time tontine
to insure against single-life mortality risk described in \cite{armstrong_buescu_dalby}. A continuous time-version of the same mechanism for separating market and idiosyncratic risks was described in \cite{donelly_bernhardt}.
Our design generalises this idea to arbitrary idiosyncratic risks.

Our results are essentially structural, describing the form of any system of self-enforcing contracts but without computing the strategy in detail. We do however observe that this task can be completed using the techniques of optimal stochastic control. The classic reference for the optimal investment problem without idiosyncratic risk is \cite{merton1969lifetime}. Stamos shows in \cite{stamos} how to adapt this to solve the optimal-investment problem for a single-life with mortality and a longevity credit. The discrete-time problem that corresponds exactly to the methodology in this paper and the economic setup of Example~\ref{example:blackScholes} is solved in \cite{armstrong_buescu_dalby} for Epstein--Zin preferences with mortality.

\medskip

Our paper makes the following contributions to the literature.

We show that, as the number of participants grows, the value
of this problem converges to a single limiting value. This provides
a rigorous justification for the standard practice in the actuarial literature
of assuming the ergodic hypothesis even in multi-period problems.

We exhibit an explicit and practically realisable mechanism
achieving the same outcome using only short-dated contracts: the
generalised tontine of Section~\ref{sec:finiteFunds}.

We show that even if the contract space is enlarged
to allow arbitrary long-dated and illiquid contracts
between any number of agents, the requirement of coalition stability 
prevents this bringing any welfare
benefit in the large-population limit. This result appears to be new, even
if one assumes the ergodic hypothesis. The classical core literature has no multi-period
contract structure at all; Becker and Chakrabarti's recursive
core is implemented entirely through liquid, spot transfers of
capital. 

Even restricted to a single period, our results offer a
modest improvement on classical arguments: we require
neither a common probability model of market dynamics across agents,
nor concave utility functions.

\medskip

Of particular practical significance is the special case of idiosyncratic mortality risk,
where these results establish the optimal design for a collective pension arrangement.
This design shares single-life longevity risk through a tontine that shields members
from the market risks taken by other members.

In the UK, pension legislation requires that scheme members should be able to transfer out of a scheme
at any time \cite{pensionschemesact2015}. This naturally leads to a model centred on coalitional stability. Novel CDC pension designs
have been proposed for the UK market which attempt to diversify market risk across cohorts \cite{pensionSchemesBill2015}. Our results show
that this is incompatible with coalitional stability.

\paragraph{Proof outline}

The proof will proceed by considering a number of investment problems with differing constraints which we can classify with the following labels:
\begin{itemize}
	\item $N_0$ labels the infinite-fund case, which is constrained by treating all agents in same state in the same way. $N_1$ labels the finite-fund case.
	\item $H_0$ the homogeneous case where insurance is only shared between agents in the same state, $H_1$ the inhomogeneous case.
	\item $T_0$ where contracts are short-dated, $T_1$ where long-term contracts are allowed.
\end{itemize}
The eventual goal is to prove that the value function for the problem $N_0 H_0 T_0$
is equal to that of $N_1 H_1 T_1$. We will write this goal symbolically as $N_0 H_0 T_0=N_1 H_1 T_1$.

We begin by proving that $N_0 H_0 T_0 = N_0 H_0 T_1$. In other words the flexibility of long-dated contracts adds no benefit for homogeneous infinite funds. This uses time-consistency and relies on establishing a dynamic programming principle.

Using this we then prove $N_0 H_1 T_1 = N_0 H_0 T_0$ directly, without a further appeal to dynamic programming: this simultaneously shows that long-dated contracts add no benefit for inhomogeneous funds either ($N_0 H_1 T_0 = N_0 H_1 T_1$) and that there is no benefit from contracting across states. This relies on the positivity of the measure and can be viewed as a multi-agent no arbitrage argument. So we have that all the $N_0 H_* T_*$ are equal (the symbol $*$ is just a wildcard in this informal discussion).

We then show that $N_1 H_1 T_1 \geq N_0 H_* T_*$ using a forward induction and the law of large numbers.

We then perform a backward induction in time.

As induction hypothesis we have that $N_1 H_1 T_1 \geq N_0 H_* T_*$  at all later times. This allows us to simplify the problem after the first time step, so we only need to consider the problem $N_1 H_1 T_0$ when proving the induction step.

To prove the induction step, we use the compactness and symmetry argument to show that any sequence of solutions to the finite problems for $N_1 H_1 T_1$ has a limit point which is a solution to $N_0 H_* T_*$. From this we find  $N_1 H_1 T_1 \leq N_0 H_* T_*$ so we must have equality throughout, including in the budget constraint for the limiting strategy.

\medskip

This sets the agenda for our proof. We must define any problem that we use in the proof (we don't use all 8 possible combinations fortunately) and establish that it is well-posed. This is inevitably somewhat time consuming, but the same compactness property can be used to establish well-posedness throughout in a routine manner.

The most challenging analytical issue is to establish that compactness holds for a broad range of problems. The techniques of Kramkov and Schachermayer provide exactly the tools we need to do this, but unfortunately we are not aware of a source that provides precisely the results we need in our dynamic setting so we must run through the same arguments by hand. The goal of Section~\ref{sec:compactness} is to package the arguments of Kramkov and Schachermayer by defining a formal notion of a compact problem that provides precisely the details needed for the rest of the paper. This isolates the analytical complexity from the economic argument.

\paragraph{Acknowledgements}

This work was funded in part by Nuffield grant FR-000024058.

\section{Modelling with product spaces}
\label{sec:notation} 

The classification of one-period financial markets in \cite{armstrong_classifying_markets}
is greatly simplified by introducing the notion of a ``casino''. This is a complete, atomless probability space on which arbitrary bets can be made, priced by their expectation. If one enriches an existing financial market model by adding a casino, one eliminates the presence of atomic events. Thus if one only classifies markets up to enrichment by a casino, one can ignore the atoms in any probability space leading to a cleaner classification theory.

The principal goal of this section is
to extend the idea of enrichment to multi-period settings and settings with more complex information structures.

In market models with an unambiguous physical probability measure, one can understand preferences using the push-forward of this measure onto the space of consumption streams. Hence, the models of von Neumann--Morgenstern and Kreps--Porteus can be framed in terms of choices between lotteries. Thus comparing lotteries allows one to compare different probability spaces. With no push-forward, 
we need an alternative mechanism to compare probability spaces and classification results up to enrichment will provide precisely the tool we need.

In mathematical terms, the enrichment occurs using product probability spaces. Product probability spaces are also the natural tool to model idiosyncratic risk and symmetries of probability spaces.

We therefore begin by establishing a clear notation for product spaces and the various $\sigma$-algebras that will occur during our proofs. The notation may appear pedantic at first, but it will ultimately pay dividends by allowing us to state complex measurability and independence relationships succinctly.

We then state the classification results we need on the stable equivalence of probability spaces under products. The measure theory underlying these results is standard, but the notion of stable isomorphism provides a particularly convenient packaging.

We then discuss the topics of group actions on probability spaces
and the Hewitt--Savage theorem. These sections also exploit product structures and play a crucial role in the symmetry arguments at the centre of our proof.

\subsection{Product spaces}

Let ${\cal T} = \{0, 1, 2, \ldots, T \}$ be a finite set of time points.
We write ${\cal T}_{\geq s}$ for those points in ${\cal T}$ which are greater than or
equal to $s$, and define ${\cal T}_{\leq s}$, ${\cal T}_{>s}$ and ${\cal T}_{<s}$ similarly.

Let $V$ denote a finite-dimensional real normed vector space.
Given a ${\cal T}^\prime \subseteq {\cal T}$
and a filtered probability space $(\Omega, {\cal F}, ({\cal F}_t)_{t \in {\cal T^\prime}}, {\mathbb Q})$, we will write 
$L^p( V, {\cal F}_{s}, {\mathbb Q})$
for the space of $p$-integrable, $V$-valued, ${\cal F}_s$-measurable random variables, where $s \in {\cal T}^\prime$. 
We will write
\[
L^p( V, ({\cal F}_{s})_{s \in {\cal T}^\prime}, {\mathbb Q}):=\prod_{s \in {\cal T}^\prime} L^p( V, {\cal F}_{s}, {\mathbb Q})
\]
for the corresponding space of $p$-integrable, $V$-valued adapted processes.
Since the measure-class on our probability spaces will always be unambiguous we write
simply
\[
L^0( V, ({\cal F}_{s})_{s \in {\cal T}^\prime})
\]
for the space of $V$-valued adapted processes up to equivalence.

Given filtered probability spaces
$\underline{\Omega}^\alpha:=(\Omega^\alpha, {\cal F}^\alpha, ({\cal F}^\alpha_t)_{t \in {\cal T^\prime}}, {\mathbb Q}^\alpha)$ indexed by $\alpha \in A$, we
will write 
\[
\underline{\Omega}^A:=
\prod_{\alpha \in A}\underline{\Omega}^\alpha:= \left(
\Omega^A,
{\cal F}^{A}, ({\cal F}^{A}_t)_{t \in {\cal T^\prime}}, {\mathbb Q}^A \right)
\]
where  $\Omega^A := \prod_{\alpha \in A} \Omega^\alpha$, ${\mathbb Q}^A:=\prod_{\alpha \in A}{\mathbb Q}^\alpha$. ${\cal F}_t^{A}$ is equal to the completion of the product $\bigotimes_{\alpha \in A}{\cal F}^\alpha_t$ under ${\mathbb Q}^A$ and
${\cal F}^{A}$
is the completion of the product
$\bigotimes_{\alpha\in A}{\cal F}^{\alpha}$
under ${\mathbb Q}^A$.
Given an element $\omega \in \Omega^A$, we will write $\omega^\alpha$ for the component of $\omega$ corresponding to a given $\alpha \in A$.

Given a subset $A^\prime \subseteq A$,
we will write
\begin{equation}
\iota^{\Omega^A}
\left( \prod_{\alpha \in A^\prime} {\cal F}_t^\alpha \right)
= \overline{ \bigotimes_{\alpha \in A} \begin{cases}
\{ \emptyset, \Omega^\alpha \} & \alpha \notin A^\prime \\
{\cal F}^\alpha_t & \alpha \in A^\prime
\end{cases}
}^{\,{\mathbb Q}^A}
\label{eq:defIota}
\end{equation}
where the overline denotes the completion of the $\sigma$-algebra with respect to the given measure. Note that the term on the left-hand side in equation~\eqref{eq:defIota} should be understood purely symbolically rather than as a function applying to an argument. Our notation is intended simply as a shorthand for the term on the right.

When $A=A^\prime$ we use the abbreviation
\begin{equation}
\iota
\left( \prod_{\alpha \in A} {\cal F}_t^\alpha \right)
:=
\iota^{\Omega^A}
\left( \prod_{\alpha \in A} {\cal F}_t^\alpha \right).
\end{equation}

This notation allows us to define $\sigma$-algebras containing only certain
parts of the information available in a product probability space. It is also useful to define the complementary $\sigma$-algebra
\begin{equation}
\iota^{\perp,\Omega^A}
\left( \prod_{\alpha \in A^\prime} {\cal F}_t^\alpha \right)
:= \iota^{\Omega^A}
\left( \prod_{\alpha \in A \setminus A^\prime} {\cal F}_t^\alpha \right).
\label{eq:defIotaPerp}
\end{equation}
Under the product measure ${\mathbb Q}^A$, this complementary
$\sigma$-algebra is independent of
$
\iota^{\Omega^A}
\left( \prod_{\alpha \in A^\prime} {\cal F}_t^\alpha \right)
$.

\subsection{Stable classification of probability spaces}

Let ${\cal B}([0,1])$ be the Borel $\sigma$-algebra on $[0,1]$ 
and let 
$\lambda$ be the Lebesgue measure. We define 
\[
\underline{\Omega}^\can:=([0,1],\overline{{\cal B}([0,1])}^{\, \lambda},\lambda)
\]
Whenever we write an expression such as this to define a probability space, we are implicitly defining the corresponding components of the probability space. In this case
\[
(\Omega^{\can},{\cal F}^{\can}, {\mathbb Q}^\can)=([0,1],\overline{{\cal B}([0,1])}^{\, \lambda},\lambda).
\]
We define 
\[
\underline{\Omega}^\can_{\cal T}:=\prod_{t \in {\cal T}} \underline{\Omega}^{\can,t}
\]
where each $\underline{\Omega}^{\can,t}$ is a copy of $\underline{\Omega}^{\can}$. We equip $\underline{\Omega}^\can_{\cal T}$ with the filtration $({\cal F}^{\can}_{t})_{t \in {\cal T}}$ defined by
\[
{\cal F}^{\can}_{t}:=\iota^{\Omega^{\can}_{\cal T}}\left(
\prod_{s \in {\cal T}_{\leq t}} {\cal F}^{\can,s}\right).
\]

\begin{definition}
A {\em mod 0 isomorphism} between two probability spaces, $\Omega^1$, $\Omega^2$
is a measure-preserving, measurable bijection $\phi:\Omega^1 \setminus N^1 \to \Omega^2 \setminus N^2$ where $N^1$ and $N^2$ are null subsets of $\Omega^1$ and $\Omega^2$ respectively which also has a measurable, measure-preserving inverse.
\end{definition}

\begin{definition}
A probability space $\underline{\Omega}$ 
is called {\em standard} if it is mod $0$ isomorphic to a Borel
probability measure on a complete separable metric space.
\end{definition}

In practice, all the complete probability spaces arising in finance are standard: for example countable disjoint sums, countable products, and completions of these, of standard probability spaces are standard, as is the probability space generated by Brownian motion. Non-standard complete probability spaces are essentially pathological from an applied viewpoint.

\begin{theorem}[Rokhlin's Theorem \cite{rokhlin}]
Every standard probability space is mod $0$ isomorphic to the disjoint
union of a countable collection of atoms and an atomless component of
mass $a\in[0,1]$. The atomless component is uniquely determined up to
mod $0$ isomorphism and is isomorphic to the interval $[0,a]$
equipped with Lebesgue measure.
\end{theorem}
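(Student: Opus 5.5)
The plan is the classical argument, reducing to the case of a Borel probability measure $\mu$ on a complete separable metric space $X$, which by the definition of standard we may assume after a mod $0$ isomorphism.

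First I would isolate the atoms. Let $A=\{x\in X:\mu(\{x\})>0\}$. Since the masses of distinct atoms are summable and bounded by $1$, $A$ is countable. (In the completed space, an atom is a set of positive measure all of whose measurable subsets have measure $0$ or full. In a separable metric space, bisecting such a set using a countable base of balls shows it is a point up to a null set.) Put $a=1-\mu(A)$ and $\nu=\mu|_{X\setminus A}$. Then $\nu$ is atomless, and it remains to show that $(X\setminus A,\nu)$ is mod $0$ isomorphic to $[0,a]$ with Lebesgue measure. The case $a=0$ is trivial.

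Second, I would replace $X\setminus A$ by a nicer space. Every uncountable Polish space is Borel isomorphic to $[0,1]$ by Kuratowski's theorem, and $X\setminus A$ is Borel in a Polish space. I would therefore transport $\nu$ along a Borel isomorphism to an atomless Borel measure on $[0,1]$ of total mass $a$; if $X\setminus A$ is countable, then $a=0$. Then take the distribution function $F(x)=\nu([0,x])$. Because $\nu$ has no atoms, $F$ is continuous and nondecreasing from $[0,1]$ onto $[0,a]$, and $F_*\nu$ is Lebesgue measure on $[0,a]$. The map $F$ fails to be injective only on intervals of constancy. There are countably many of these, and each is $\nu$-null, so removing their union (a null set) together with the countable set of their image values (Lebesgue null) makes $F$ a measurable bijection between conull sets. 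Its inverse is measurable because it is the restriction of the monotone generalised inverse. Measure preservation holds on intervals, hence on all Borel sets by the $\pi$--$\lambda$ theorem, and then on completions.

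Third, I would prove uniqueness. The atoms are determined intrinsically, as the equivalence classes of atomic sets together with their masses, and a mod $0$ isomorphism carries atoms to atoms. The atomless part is therefore determined up to mod $0$ isomorphism, and its mass $a$ is determined by it, so the interval $[0,a]$ is unique.

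I expect the main obstacle to be the measurability of the inverse and the bookkeeping with completions: the null sets have to be chosen inside the completed $\sigma$-algebras. Alternatively, one could cite Kuratowski's Borel isomorphism theorem (Souslin: a Borel injection has Borel image and inverse) to dispatch this step, and I would invoke it.
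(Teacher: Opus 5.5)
The paper does not prove this statement; it quotes it from Rokhlin and uses it as a black box. Your argument is a correct, self-contained proof, and it is tailored to the paper's own definition of a standard space as a Borel probability measure on a Polish space, up to mod~$0$ isomorphism. The steps all hold:
\begin{enumerate}[(i)]
\item Atoms are points up to null sets, by bisecting with a countable base.
\item The atomless remainder is moved into $[0,1]$ by Kuratowski's theorem; if it is countable, it has mass $0$.
\item The distribution function becomes a measure-preserving bijection onto $[0,a]$ once you discard the countably many $\nu$-null intervals of constancy and their countably many values.
\end{enumerate}
Your uniqueness step is also sound. The one sentence worth adding is that, in any decomposition of the stated form, the listed atoms must exhaust all atoms of the space. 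An atom meeting the atomless piece in positive measure would itself be an atom of that piece. Hence the atomless component is, mod~$0$, the complement of the union of all atoms, and any mod~$0$ isomorphism carries it onto its counterpart.

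Rokhlin's own treatment is intrinsic. It works with a countable separating basis of measurable sets, which is his definition of a Lebesgue space, and uses that basis rather than a Polish topology to place the space inside an interval before the same monotone rearrangement. Your route is shorter given the paper's definition, but it imports descriptive set theory: Kuratowski's theorem, or Lusin--Souslin for the measurability of inverses. Rokhlin's route gives a framework that is invariant under mod~$0$ isomorphism. That framework also underlies related facts the paper uses without comment, such as the existence of standard Borel realizations for complete sub-$\sigma$-algebras. Your argument by itself does not supply those facts, although none of them is needed for the displayed statement.
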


In particular, the only atomless standard probability space up to mod 0 isomorphism is $\underline{\Omega}^\can$.

We will not be interested in the atoms in probability spaces. This motivates the following definition.

Let $\underline{\Omega}^A$ be a filtered probability space with time points ${\cal T}$. Define ${\underline{\Omega}}^{A+}:=\underline{\Omega}^{\can}_{\cal T} \times \underline{\Omega}^A$. Write $\pi^{\Omega^A}:\underline{\Omega}^{A+} \to \underline{\Omega}^{A}$ for the projection onto the second factor.

\begin{definition}
We say that $\underline{\Omega}^A$ is {\em stably equivalent} to $\underline{\Omega}^B$ if $\underline{\Omega}^{A+} \cong \underline{\Omega}^{B+}$ by a filtration preserving mod 0 isomorphism.
\end{definition}

\begin{definition}
A filtered probability space $(\Omega,{\cal F},({\cal F}_t)_{t\in{\cal T}},{\mathbb Q})$ is a {\em standard filtered probability space} if $(\Omega,{\cal F},{\mathbb Q})$ is standard, each ${\cal F}_t$ is complete, and ${\cal F}_T={\cal F}$ modulo null sets.
\end{definition}
The final condition costs no generality: only random variables adapted to the filtration will play a role in our arguments, so one may always replace ${\cal F}$ by ${\cal F}_T$.

\begin{lemma}
Let $\underline{\Omega}$ be a standard filtered probability space, then $\underline{\Omega}$ is stably equivalent to $\underline{\Omega}^{\can}_{\cal T}$.
\label{lemma:stableEquivalence}
\end{lemma}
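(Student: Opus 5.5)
We must show that $\underline{\Omega}^{+}=\underline{\Omega}^{\can}_{\cal T}\times\underline{\Omega}$ is isomorphic, by a filtration preserving mod $0$ isomorphism, to $\underline{\Omega}^{\can}_{\cal T}\times\underline{\Omega}^{\can}_{\cal T}$. Since $\underline{\Omega}^{\can}\times\underline{\Omega}^{\can}\cong\underline{\Omega}^{\can}$ by Rokhlin's Theorem (the product is standard and atomless), the latter is isomorphic to $\underline{\Omega}^{\can}_{\cal T}$ via a time-by-time isomorphism, which is filtration preserving. So it suffices to construct a filtration preserving isomorphism $\underline{\Omega}^{+}\cong\underline{\Omega}^{\can}_{\cal T}$. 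I would build it by induction on $t$, peeling off one coordinate at a time.

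\textbf{Step 1: conditional structure.} Because $\underline{\Omega}$ is standard and ${\cal F}_T={\cal F}$, each ${\cal F}_t$ is, mod null sets, generated by a countable family of sets. Hence there is a standard Borel space $X_t$ and a measurable map $\xi_t:\Omega\to X_t$ with $\sigma(\xi_t)={\cal F}_t$ mod $0$. We may take $\xi_t=(\xi_{t-1},\eta_t)$, where $\eta_t$ records the new information. Writing $\mu_t$ for the law of $(\xi_0,\dots,\xi_t)$, we disintegrate: the conditional law of $\eta_t$ given ${\cal F}_{t-1}$ is a regular conditional probability $\kappa_t(\xi_{t-1},\cdot)$ on a standard Borel space. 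Such conditional probabilities exist because the spaces are standard.

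\textbf{Step 2: measurable parametrised Rokhlin.} The key tool is a measurable-in-parameter version of Rokhlin's Theorem. Fix the base point $x=\xi_{t-1}$. The space $(X'_t,\kappa_t(x,\cdot))\times\underline{\Omega}^{\can,t}$ is standard and atomless, because the extra Lebesgue factor kills the atoms. It is therefore isomorphic to $[0,1]$ with Lebesgue measure. The isomorphism can be chosen jointly measurably in $x$, for instance by the explicit construction:
\begin{itemize}
\item order the atoms of $\kappa_t(x,\cdot)$ measurably and assign them consecutive intervals;
\item use the conditional distribution function along a Borel embedding of $X'_t$ into $[0,1]$ for the continuous part, splitting with the uniform coordinate $u_t\in\Omega^{\can,t}$ (a randomised probability integral transform).
\end{itemize}
This gives a map $U_t=F_t(\xi_{t-1},\eta_t,u_t)$ that is uniform on $[0,1]$ and independent of ${\cal F}^{+}_{t-1}$. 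It generates ${\cal F}^{+}_t$ together with ${\cal F}^{+}_{t-1}$ mod $0$, provided we also keep the leftover randomness of $u_t$. We keep it by first splitting $u_t$ into two independent uniforms, using $\underline{\Omega}^{\can}\cong\underline{\Omega}^{\can}\times\underline{\Omega}^{\can}$, and recording the second one as an extra uniform. Combining it with $U_t$ via a fixed isomorphism $[0,1]^2\cong[0,1]$ yields a single uniform $V_t$.

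\textbf{Step 3: assemble.} The map $\omega^+\mapsto(V_0,\dots,V_T)$ sends $\underline{\Omega}^+$ to $[0,1]^{\cal T}$. The $V_t$ are independent uniforms, so the map pushes ${\mathbb Q}^+$ to product Lebesgue measure. It is filtration preserving because $\sigma(V_0,\dots,V_t)={\cal F}^+_t$ mod $0$ by construction. Since both sides are standard, a measure-preserving map that generates the full $\sigma$-algebra mod $0$ is a mod $0$ isomorphism.

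\textbf{Main obstacle.} The hard step is Step 2: making the conditional Rokhlin isomorphism measurable in the conditioning variable, and verifying that the recovered $\sigma$-algebras coincide mod $0$, i.e.\ that $\eta_t$ is recoverable from $(\xi_{t-1},V_t)$. I would handle recoverability by the explicit quantile construction. On the continuous part the transform $F_t(x,\cdot,u)$ is injective. On each atom, the atom's identity is read off from which interval $U_t$ lies in, and $u_t$ is recovered by the second recorded uniform. This makes invertibility a direct check rather than an appeal to abstract measurable-selection theorems.
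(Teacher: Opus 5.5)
Your overall plan matches the paper's. The paper derives this lemma from Lemma~\ref{lemma:enrichmentStable}(i) with a one-point base. That lemma builds, date by date, uniforms $V_t$ independent of ${\cal F}^+_{t-1}$ with ${\cal F}^+_t=\overline{\sigma({\cal F}^+_{t-1},V_t)}$, and then concludes with Lemma~\ref{lemma:pointIsomorphism}. Your reduction via $\underline{\Omega}^{\can}\times\underline{\Omega}^{\can}\cong\underline{\Omega}^{\can}$ is an acceptable substitute for the paper's second application of the lemma to $\underline{\Omega}^{\can}_{\cal T}$.

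There is, however, a genuine gap in Step~2, at exactly the recoverability check you single out. Write $u_t\leftrightarrow(u'_t,u''_t)$ for your splitting. The randomised transform $U_t=F(\eta_t^-\mid x)+u'_t\bigl(F(\eta_t\mid x)-F(\eta_t^-\mid x)\bigr)$ retains $u'_t$ only when $\eta_t$ lands on an atom of $\kappa_t(x,\cdot)$. On the diffuse part the jump is zero, $U_t=F(\eta_t\mid x)$, and nothing of $u'_t$ survives. Since you record only $u''_t$, on the diffuse event $u'_t$ is still uniform and independent of $(V_0,\dots,V_t)$. It is therefore ${\cal F}^+_t$-measurable but not $\sigma(V_0,\dots,V_t)$-measurable mod $0$.

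For instance, take $t=0$, $\underline{\Omega}=([0,1],\lambda)$ and ${\cal F}_0$ Borel. Then your $V_0=\phi(\omega,u''_0)$ forgets $u'_0$, so the map in Step~3 is measure preserving but not injective mod $0$. Your sentence that on atoms ``$u_t$ is recovered by the second recorded uniform'' also has the bookkeeping backwards. On atoms it is $u'_t$ that is read off from the position of $U_t$ within the atom's interval. The information is lost on the diffuse part, not the atomic part.

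The gap can be repaired. Within your scheme, record $W_t:=u''_t$ on the atomic event and $W_t:=u_t$ on the diffuse event; that event is determined by $(x,\eta_t)$, hence by $(x,U_t)$. Conditionally on ${\cal F}^+_{t-1}$ and $U_t$, $W_t$ is still uniform, so $V_t:=\phi(U_t,W_t)$ is uniform and independent of ${\cal F}^+_{t-1}$. Moreover $(\eta_t,u_t)$ is recoverable from $(x,U_t,W_t)$.

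The paper's route is cleaner. Take a single real variable $Z_t$ generating ${\cal F}^+_t$; a Borel encoding of $(\eta_t,u_t)$ would serve equally. Its conditional law given the ${\cal F}^+_{t-1}$-state $h_{t-1}$ is atomless \emph{because} of the fresh coordinate $u_t$. So the ordinary, unrandomised conditional transform $V_t=F_t(Z_t\mid h_{t-1})$ is uniform, independent of ${\cal F}^+_{t-1}$, and satisfies $Z_t=F_t^{-1}(V_t\mid h_{t-1})$ a.s.\ (Lemma~\ref{lemma:independentComplement}). This gives generation and joint measurability in one step, with no atom enumeration and no splitting of $u_t$.
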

We prove this in Appendix~\ref{appendix:finiteFiltrationClassification}.

The last result is useful because one typically is not interested in a probability space directly, but only in the random variables that it supports. Given a filtered probability space $\underline{\Omega}$, all the random variables defined on $\underline{\Omega}$ are naturally included in the set of random variables defined on $\underline{\Omega} \times \underline{\Omega}^{\can}_{\cal T}$: one can simply choose to ignore the  
$\underline{\Omega}^{\can}_{\cal T}$ component of the state. As a result, we will be able to safely assume that the state of our agents is modelled by an element of a standard-filtered space.

\medskip

\begin{definition}
Let $\underline{\Omega}^1$ be a standard filtered probability space.
For each $s\in{\cal T}$, choose a standard Borel realization
\[
\underline{\Omega}^1_s
=
(\Omega^1_s,
{\cal B}(\Omega^1_s),
{\mathbb Q}^1_s)
\]
of the probability space
\[
(\Omega^1,{\cal F}^1_s,{\mathbb Q}^1).
\]
We write
\[
\omega^1_s:\Omega^1\to\Omega^1_s
\]
for the corresponding coordinate map. Thus
\[
{\cal F}^1_s
=
(\omega^1_s)^{-1}
{\cal B}(\Omega^1_s)
\]
modulo null sets.
\end{definition}

The importance of standard probability spaces is that they admit
regular conditional probabilities.

In particular, for each
$s\in{\cal T}$ there exists a measurable family of probability
measures
\[
(\mu_x)_{x\in\Omega^1_s}
\]
on $\Omega^1$ such that
\[
{\mathbb Q}^1(A)
=
\int_{\Omega^1_s}
\mu_x(A)\,
{\mathbb Q}^1_s(dx)
\]
for every measurable set $A$, and
\[
\mu_x\!\left(
(\omega^1_s)^{-1}(\{x\})
\right)
=1
\]
for ${\mathbb Q}^1_s$-almost every $x\in\Omega^1_s$.

The measure $\mu_x$ may be interpreted as the probability measure
obtained by conditioning on the event that the time-$s$ state is
equal to $x$.

As a consequence, any random variable
\[
X_t\in L^0({\mathbb R},{\cal F}^1_t,{\mathbb Q}^1)
\]
admits a conditional realization
\[
X_t(x)
:=
\int_{\Omega^1}
X_t(\omega)\,
\mu_x(d\omega),
\]
defined for ${\mathbb Q}^1_s$-almost every
$x\in\Omega^1_s$ whenever the integral exists.

Similarly, conditional expectations can be written as functions of
the state variable:
\[
{\mathbb E}(X_t\mid {\cal F}^1_s)(\omega)
=
X_t(\omega^1_s(\omega))
\]
for a suitable measurable function
\[
x\mapsto X_t(x).
\]

In this paper we will routinely use the notation $X_t(x)$ to denote
the value of a random quantity conditional on the time-$s$ state
being equal to $x$. The underlying theory is that of measure
disintegration and regular conditional probabilities. We refer the
reader to Chang and Pollard \cite{changPollard} or Fremlin
\cite{fremlin41} for further details.

We will later need to identify the conditional state and conditional
law on one standard filtered probability space with those on another,
transported along a filtration-preserving mod 0 isomorphism between
them. The following lemma records that the disintegration described
above is natural with respect to such isomorphisms.

\begin{lemma}[Naturality of disintegration]
\label{lemma:disintegrationNatural}
Let $\underline{\Omega}^1$, $\underline{\Omega}^2$ be standard filtered
probability spaces with time set ${\cal T}$, and let
$\Theta:\underline{\Omega}^1\to\underline{\Omega}^2$ be a
filtration-preserving mod 0 isomorphism. Fix $s\in{\cal T}$ and choose
standard Borel realizations $\Omega^1_s$, $\Omega^2_s$ of
$(\Omega^1,{\cal F}^1_s,{\mathbb Q}^1)$,
$(\Omega^2,{\cal F}^2_s,{\mathbb Q}^2)$ with coordinate maps
$\omega^1_s,\omega^2_s$ and disintegrations
$(\mu^1_x)_{x\in\Omega^1_s}$, $(\mu^2_y)_{y\in\Omega^2_s}$ as above.

Then there is a mod 0 isomorphism
$\Theta_s:\Omega^1_s\to\Omega^2_s$ of the Borel probability spaces
$(\Omega^1_s,{\mathbb Q}^1_s)$ and $(\Omega^2_s,{\mathbb Q}^2_s)$,
unique up to modification on a ${\mathbb Q}^1_s$-null set, such that
\[
\omega^2_s\circ\Theta=\Theta_s\circ\omega^1_s
\qquad
{\mathbb Q}^1\text{-a.s.},
\]
and the disintegration fibres correspond under $\Theta$: for
${\mathbb Q}^1_s$-almost every $x\in\Omega^1_s$,
\[
\Theta_*\mu^1_x=\mu^2_{\Theta_s(x)},
\]
where $\Theta_*\mu^1_x$ denotes the pushforward of $\mu^1_x$ under
$\Theta$. We call $\Theta_s$ the map {\em induced by $\Theta$ on the
time-$s$ disintegration}, and write $\Theta_*$ for $\Theta_s$ when
$s$ is clear from context.
\end{lemma}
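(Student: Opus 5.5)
The plan is to build $\Theta_s$ out of the $\sigma$-algebra isomorphism that $\Theta$ induces, and then obtain the statement about fibres from the essential uniqueness of disintegrations.

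\emph{Construction of $\Theta_s$.} Because $\Theta$ preserves the filtration, $\Theta^{-1}$ carries ${\cal F}^2_s$ onto ${\cal F}^1_s$ modulo null sets, and it preserves measure. Consider the map $\omega^2_s\circ\Theta:\Omega^1\to\Omega^2_s$.
\begin{itemize}
\item It is ${\cal F}^1_s$-measurable.
\item It pushes ${\mathbb Q}^1$ forward to ${\mathbb Q}^2_s$.
\end{itemize}
Since $\Omega^1_s$ is a standard Borel realization of $(\Omega^1,{\cal F}^1_s,{\mathbb Q}^1)$, the Doob--Dynkin factorization lemma gives a Borel map $\Theta_s:\Omega^1_s\to\Omega^2_s$ with
\[
\omega^2_s\circ\Theta=\Theta_s\circ\omega^1_s \quad {\mathbb Q}^1\text{-a.s.}
\]
Applying the same argument to $\Theta^{-1}$ produces $\Psi_s:\Omega^2_s\to\Omega^1_s$. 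Then $\Psi_s\circ\Theta_s\circ\omega^1_s=\omega^1_s$ almost surely, so $\Psi_s\circ\Theta_s=\mathrm{id}$ holds ${\mathbb Q}^1_s$-a.e., and symmetrically $\Theta_s\circ\Psi_s=\mathrm{id}$ a.e. Measure preservation of $\Theta_s$ follows from pushing forward ${\mathbb Q}^1$. Hence $\Theta_s$ is a mod 0 isomorphism.

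\emph{Uniqueness of $\Theta_s$.} Suppose $\Theta_s'$ also satisfies the intertwining relation. Then $\Theta_s\circ\omega^1_s=\Theta_s'\circ\omega^1_s$ almost surely. Since $\omega^1_s$ pushes ${\mathbb Q}^1$ onto ${\mathbb Q}^1_s$, it follows that $\Theta_s=\Theta_s'$ ${\mathbb Q}^1_s$-a.e.

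\emph{Fibres.} Define $\nu_y:=\Theta_*\mu^1_{\Theta_s^{-1}(y)}$ for $y\in\Omega^2_s$. The plan is to check that $(\nu_y)$ is a disintegration of ${\mathbb Q}^2$ along $\omega^2_s$, in three steps.
\begin{enumerate}
\item \emph{Measurability in $y$.} This holds because $\Theta_s^{-1}$ is Borel and $x\mapsto\mu^1_x(\Theta^{-1}B)$ is measurable.
\item \emph{Integration to ${\mathbb Q}^2$.} For measurable $B$,
\[
\int\nu_y(B)\,{\mathbb Q}^2_s(dy)=\int\mu^1_x(\Theta^{-1}B)\,{\mathbb Q}^1_s(dx)={\mathbb Q}^1(\Theta^{-1}B)={\mathbb Q}^2(B).
\]
\item \emph{Concentration on fibres.} We have $\nu_y\big((\omega^2_s)^{-1}\{y\}\big)=\mu^1_x\big((\Theta_s\circ\omega^1_s)^{-1}\{y\}\big)$ with $x=\Theta_s^{-1}(y)$. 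This is at least $\mu^1_x\big((\omega^1_s)^{-1}\{x\}\big)=1$, up to the null set where the intertwining fails.
\end{enumerate}
Regular conditional probabilities on a standard space are essentially unique: two disintegrations along the same map agree for a.e. $y$. This uniqueness is checked on a countable generating algebra, which exists because the spaces are standard. It then gives $\nu_y=\mu^2_y$ for ${\mathbb Q}^2_s$-a.e. $y$, which is exactly the claim $\Theta_*\mu^1_x=\mu^2_{\Theta_s(x)}$.

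\emph{Expected obstacle.} The main difficulty is the null-set bookkeeping in step 3. The intertwining relation holds only ${\mathbb Q}^1$-a.s., so it may fail on a null set $M$, and we need $\mu^1_x(M)=0$ for a.e. $x$. This follows from $\int\mu^1_x(M)\,{\mathbb Q}^1_s(dx)={\mathbb Q}^1(M)=0$. A related point is that $\Theta$ is only defined off null sets, so the pushforwards $\Theta_*\mu^1_x$ must be taken on the conull domain. The same integral identity shows that this domain has $\mu^1_x$-measure $1$ for almost every $x$.
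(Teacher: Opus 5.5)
Your proposal is correct and follows the paper's proof: you obtain $\Theta_s$ by Doob--Dynkin applied to $\omega^2_s\circ\Theta$, get an a.e.\ inverse by the same argument applied to $\Theta^{-1}$, and deduce the fibre correspondence from essential uniqueness of disintegrations. The only difference is in the fibre step. The paper verifies the test-function identity $\int h(\omega^2_s)f\,d{\mathbb Q}^2=\int h(\Theta_s(x))\int f\,d(\Theta_*\mu^1_x)\,{\mathbb Q}^1_s(dx)$ and changes variables, whereas you check the defining properties of a disintegration (total mass and fibre concentration) for the transported family $\nu_y$ directly, with the null-set bookkeeping handled correctly.
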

A proof is given in Appendix~\ref{appendix:finiteFiltrationClassification}.

In time-dependent problems it is natural to consider stochastic
automorphisms. This motivates our next definition.

\begin{definition}[Adapted randomized automorphism]
\label{def:adaptedRandomizedAutomorphism}
Let $\underline{\Omega}^1$ be a standard filtered probability
space, let $s\in{\cal T}$, and write
$({\cal F}^{1+}_t)_{t\in{\cal T}}$ for the filtration of
$\underline{\Omega}^{1+}$, where ${\cal F}^{1+}_t$ is the
completion of ${\cal F}^{\can}_t\times{\cal F}^1_t$. A jointly
measurable map
\[
\Phi:
\Omega^{\can}_{\cal T}
\times
\Omega^1
\to
\Omega^{\can}_{\cal T}
\]
is an {\em adapted randomized automorphism at time $s$} if
$\Phi(\,\cdot\,,\omega^1)$ is a filtration-preserving mod $0$
automorphism of $\underline{\Omega}^{\can}_{\cal T}$ for every
$\omega^1\in\Omega^1$, and, for every $t\in{\cal T}$, the time-$t$
coordinates of
$(\omega^{\can},\omega^1)\mapsto\Phi(\omega^{\can},\omega^1)$ and of
the map assembled from the fibrewise inverses are measurable with
respect to ${\cal F}^{1+}_{t\vee s}$. We write
\[
\hat\Phi(\omega^{\can},\omega^1)
:=
(\Phi(\omega^{\can},\omega^1),\omega^1).
\]
\end{definition}

Probability spaces routinely arise carrying more information than the
problem at hand requires. We record the relevant notion.

\begin{definition}[Enrichment]
\label{def:enrichment}
Let $\underline{\Omega}^1$ and $\underline{\Omega}^3$ be standard
filtered probability spaces with the same time set ${\cal T}$. An
{\em enrichment} of $\underline{\Omega}^1$ by $\underline{\Omega}^3$ is
a measure-preserving map
$\pi:\underline{\Omega}^3\to\underline{\Omega}^1$ such that, for every
$t\in{\cal T}$, $\pi^{-1}({\cal F}^1_t)\subseteq{\cal F}^3_t$ modulo null
sets and ${\cal F}^3_t$ is conditionally independent of
$\pi^{-1}({\cal F}^1_T)$ given $\pi^{-1}({\cal F}^1_t)$.
\end{definition}

The first condition says that $\underline{\Omega}^3$ has resolved by time
$t$ everything $\underline{\Omega}^1$ has; the second that it has learned
nothing further about the base, so that the adjoined information does not
anticipate $\underline{\Omega}^1$. The second condition is the
conditional-independence formulation of hypothesis $({\cal H})$ of
Br\'emaud and Yor \cite{bremaudYor}: the filtration
$(\pi^{-1}({\cal F}^1_t))_{t\in{\cal T}}$ is immersed in
$({\cal F}^3_t)_{t\in{\cal T}}$, every martingale of the base remaining
a martingale of the enrichment. Both hold whenever
$\underline{\Omega}^3$ is a product of $\underline{\Omega}^1$ with a
further filtered space. More general enrichments are possible, but up to
stable equivalence over $\underline{\Omega}^1$ every enrichment is of
this form.

\begin{lemma}[Stable equivalence of enrichments]
\label{lemma:enrichmentStable}
Let $\pi:\underline{\Omega}^3\to\underline{\Omega}^1$ be an enrichment
(Definition~\ref{def:enrichment}), and write
$p:=\pi\circ\pi^{\Omega^3}:\underline{\Omega}^{3+}\to
\underline{\Omega}^1$ and ${\cal B}_t:=p^{-1}({\cal F}^1_t)$.
\begin{enumerate}[(i)]
\item There is a filtration-preserving mod $0$ isomorphism
\[
\Xi:\underline{\Omega}^{1+}\to\underline{\Omega}^{3+}
\qquad\text{with}\qquad
p\circ\Xi=\pi^{\Omega^1};
\]
that is, $\underline{\Omega}^{3+}$ and $\underline{\Omega}^{1+}$ are
isomorphic over $\underline{\Omega}^1$.
\item If $\Xi'$ is a second such isomorphism then
$\Xi^{-1}\circ\Xi'=\hat\Phi$, where $\Phi$ is an adapted randomized
automorphism at time $0$
(Definition~\ref{def:adaptedRandomizedAutomorphism}), and hence at
every $s\in{\cal T}$.
\item Let $u\in{\cal T}$ and let ${\cal G}$ be a complete
$\sigma$-algebra with
\[
{\cal B}_u
\subseteq
{\cal G}
\subseteq
{\cal F}^3_u.
\]
Then every ${\cal G}$-state $y$ determines a time-$u$ state
$\pi_u(y)\in\Omega^1_u$, the map $p$ pushes ${\mathbb Q}^{3+}_y$
forward to ${\mathbb Q}^1_{\pi_u(y)}$, and there are isomorphisms
$\Xi_y$ as in~(i) between $\underline{\Omega}^{1+}$ conditioned on the
$(\pi^{\Omega^1})^{-1}({\cal F}^1_u)$-state $\pi_u(y)$ and
$\underline{\Omega}^{3+}$ conditioned on the ${\cal G}$-state $y$,
which may be chosen jointly measurably in $y$.
\end{enumerate}
\end{lemma}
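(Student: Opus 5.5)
The plan is to prove (i) by a relative version of the stable classification in Lemma~\ref{lemma:stableEquivalence}, working one time step at a time and disintegrating over the base $\underline{\Omega}^1$. For each $t$, set ${\cal B}_t=p^{-1}({\cal F}^1_t)$ on $\underline{\Omega}^{3+}$, and write ${\cal F}^{3+}_t$ for its filtration. By the enrichment property, ${\cal F}^{3+}_t$ is conditionally independent of ${\cal B}_T$ given ${\cal B}_t$; the canonical factor preserves this because it is independent of everything in $\underline{\Omega}^3$. Hence ${\cal F}^{3+}_t\vee{\cal B}_T$ is, fibrewise over $\Omega^1$, a product of the information ${\cal F}^{3+}_{t-1}\vee{\cal B}_T$ with fresh randomness. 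Concretely, conditioning on ${\cal F}^{3+}_{t-1}\vee {\cal B}_T$ and disintegrating via regular conditional probabilities (standardness), the conditional law of the new information at time $t$ depends only on the ${\cal F}^{3+}_{t-1}\vee{\cal B}_t$-state. Because of the adjoined copy of $\underline{\Omega}^{\can,t}$, the fibres at each step are atomless standard spaces. Rokhlin's Theorem, applied in a measurable-selection (parametrised) form, then gives for almost every parameter an isomorphism of the fibre with $[0,1]$, chosen jointly measurably. This is the standard Maharam-type parametrised isomorphism; one can, for example, use a conditional quantile transform after fixing a Borel isomorphism of the fibre spaces with subsets of $[0,1]$.

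Composing these fibrewise isomorphisms for $t=0,1,\dots,T$ gives a map
\[
\Omega^{3+}\to \Omega^{\can}_{\cal T}\times\Omega^1 .
\]
Its time-$t$ coordinate is ${\cal F}^{3+}_t$-measurable, it pushes ${\mathbb Q}^{3+}$ to $\lambda^{\cal T}\times{\mathbb Q}^1$, and it commutes with the projection to $\Omega^1$. It is injective mod 0 because each step is a fibrewise isomorphism, and ${\cal F}^3={\cal F}^3_T$. Taking the inverse defines $\Xi$. Filtration preservation in both directions follows from the step-by-step construction together with the conditional independence: the time-$t$ base information enters only through ${\cal B}_t$. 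This step is the main obstacle. One must check that the conditional-independence hypothesis really lets the fibre over the full path $\omega^1$ be built adaptedly, using only ${\cal B}_t$ at time $t$. If it did not, $\Xi^{-1}$ would anticipate the base. I would handle this by disintegrating with respect to ${\cal B}_T$ from the outset and then using conditional independence to replace ${\cal B}_T$-measurable kernels by ${\cal B}_t$-measurable versions.

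For (ii), $\Psi=\Xi^{-1}\circ\Xi'$ is a filtration-preserving automorphism of $\underline{\Omega}^{1+}$ commuting with $\pi^{\Omega^1}$. It therefore has the form $(\omega^{\can},\omega^1)\mapsto(\Phi(\omega^{\can},\omega^1),\omega^1)$. Disintegrating over $\omega^1$ with Lemma~\ref{lemma:disintegrationNatural} shows that $\Phi(\cdot,\omega^1)$ is a mod 0 automorphism for almost every $\omega^1$; we redefine it as the identity on a null set. The time-$t$ coordinates of $\Phi$ and of its fibrewise inverse are measurable with respect to ${\cal F}^{1+}_t$ because $\Psi^{\pm1}$ are filtration preserving. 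So $\Phi$ is adapted at time $0$, and hence at every $s$, since $t\vee s\ge t$.

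For (iii), ${\cal G}\supseteq{\cal B}_u$ means that any ${\cal G}$-state $y$ determines a ${\cal B}_u$-state, and this gives $\pi_u(y)$. Since $p$ is measure preserving, the pushforward of ${\mathbb Q}^{3+}_y$ averages correctly. Conditional independence of ${\cal F}^3_u\supseteq{\cal G}$ from ${\cal B}_T$ given ${\cal B}_u$ then yields $p_*{\mathbb Q}^{3+}_y={\mathbb Q}^1_{\pi_u(y)}$. Conditioned on $y$, the map $p$ is again an enrichment of the conditioned base: adaptedness is inherited, and conditional independence survives conditioning on a $\sigma$-algebra between ${\cal B}_u$ and ${\cal F}^3_u$ for times $t\ge u$. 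We can therefore rerun the construction of (i) with all kernels taken jointly measurable in the parameter $y$, which gives the measurable family $\Xi_y$.
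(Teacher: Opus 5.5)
Your proposal is correct and follows essentially the paper's own proof. The paper likewise builds each $V^t$ by a conditional probability integral transform of a generator of ${\cal F}^{3+}_t$ relative to ${\cal H}_t=\overline{\sigma({\cal B}_t,{\cal F}^{3+}_{t-1})}$, uses the canonical coordinate $U^t$ for atomlessness and the enrichment's conditional independence to get independence from ${\cal B}_T$, sets $\Xi^{-1}=((V^t)_t,p)$, and argues (ii) and (iii) as you do. One small omission: in (iii) you also need the conditioned independence for $t<u$. This follows at once from your $t=u$ case, since ${\cal F}^{3+}_t\subseteq{\cal F}^{3+}_u$ and $\sigma({\cal B}_t,{\cal G})={\cal G}$.
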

Lemma~\ref{lemma:stableEquivalence} is the case in which
$\underline{\Omega}^1$ is trivial.

The two inclusions in~(iii) say respectively that the conditioning
state determines where one is in $\underline{\Omega}^1$, and that it
leaves the canonical factor of the lift free. A proof is given in
Appendix~\ref{appendix:finiteFiltrationClassification}.

\subsection{Group actions on probability spaces}

Let $G$ be a group.
We write $\id_G$ for the identity element of $G$. More generally, for
a set or space $X$ we write $\id_X$ for the identity map on $X$.

Define \(\Aut_{\mathbb Q}(\underline{\Omega})\) to be the group of
measure-preserving
invertible measurable transformations
\(T:\Omega\to\Omega\),
modulo \({\mathbb Q}\)-almost sure equality.

\begin{definition}
A {\em measure-preserving $G$-action} is a homomorphism
$\rho:G\to \Aut_{\mathbb Q}(\underline{\Omega})$.
\end{definition}

For brevity, we write \(g\omega\) for a representative of the
automorphism \(\rho(g)\).
All equalities involving \(g\omega\) are understood modulo null sets. 

The action of $G$ on $\Omega$ induces a covariant action on mod 0 equivalence classes of subsets $A \subseteq \Omega$ defined by $g[ A]=[\{g \cdot a \mid a \in A\}]$, where $[A]$ denotes the equivalence class of $A$. It induces a covariant action on complete $\sigma$-algebras ${\cal H}$ by
$g {\cal H} = \overline{\sigma(\{ gA \mid A \in {\cal H} \})}^{\, {\mathbb Q}}$.

It induces a contravariant action on random variables $X \in L^0(V,{\cal F},{\mathbb Q})$ by $g X = X \circ g^{-1}$. It also induces a contravariant action on measures equivalent to ${\mathbb Q}$ by
\[
(g {\mathbb P})(A) = {\mathbb P}(g^{-1}A).
\]

These definitions ensure that $g$ preserves the scalar values in the sense that
\[
(g X)(g \omega) = X( \omega) \qquad \text{ a.s.},
\]
and
\begin{equation}
(g \mathbb P)(g A) = {\mathbb P}(A).
\label{eq:transformOfMeasure}
\end{equation}

\medskip

Let $\rho_i$ be measure-preserving actions of a group $G$ on probability spaces $\underline{\Omega}^i$ for indices $i \in I=\{1,\ldots,n\}$. 
Given
$(\alpha_i)_{i=1\ldots n}$ with $\alpha_i \in \{0,1\}$ we define
\[
\rho^{\alpha_1, \alpha_2, \ldots, \alpha_n} : G \to \Aut_{[{\mathbb Q}^I]}\left(\underline{\Omega}^I\right)
\]
by defining
\[
\rho^{1}(g,\omega^i):=\rho_i(g)\omega^i
\]
for $\omega^i \in \Omega^i$ and
\[
\rho^{0}(g,\omega^i):=\omega^i.
\]
We then define
\[
\rho^{\alpha_1,\ldots,\alpha_n}(g)
(\omega^1,\ldots,\omega^n)
:=
(\rho^{\alpha_1}(g,\omega^1),
 \ldots,
 \rho^{\alpha_n}(g,\omega^n)).
 \]
It is immediate that
$\rho^{\alpha_1,\ldots,\alpha_n}$
defines a measure-preserving $G$-action on
$\underline{\Omega}^I$.

In the event that $G$ is a finite group, we write $\underline{G}$ for the discrete probability space on $G$ with the uniform measure.
If a component $\underline{\Omega}^i=\underline{G}$, we take the standard left action $\rho^1(g)h=\rho_1(g)h:=gh$. We define $\rho^{-1}(g)h:=hg^{-1}$, which is also a measure-preserving $G$ action as is verified by the calculation
\[
\rho^{-1}(g_1)\rho^{-1}(g_2)h=\rho^{-1}(g_1)(hg_2^{-1})=hg_2^{-1}g_1^{-1}=h(g_1g_2)^{-1}=\rho^{-1}(g_1g_2)h.
\]
This allows us to extend our definition of 
$
\rho^{\alpha_1, \alpha_2, \ldots, \alpha_n}
$ to elements $\alpha_i \in \{-1,0,1\}$ whenever the component $\underline{\Omega}^i$ equals $\underline{G}$. A component $\underline{\Omega}^i$ need not be equipped with a
$G$-action provided that $\alpha_i=0$.
In this case
$\rho^{\alpha_1,\ldots,\alpha_n}$
acts trivially on the \(i\)-th component.

We now relate such actions to each other.
Let $\rho$ be a measure-preserving $G$-action on $\underline{\Omega}$ for a finite group $G$. Consider the product space $\underline{G} \times \underline{G} \times \underline{\Omega}$.
Given $(\alpha,\beta) \in \{-1,0,1\}$ define a map
\[
\phi^{\alpha,\beta}: G \times G \times \Omega 
\to  G \times G \times \Omega 
\]
by
\[
\phi^{\alpha, \beta}(g_1,g_2,\omega)=(\rho^\alpha(g_2)g_1,\rho^\beta(g_1)g_2,\omega).
\]
These maps can be used to provide isomorphisms between different actions on the product.
We use the term isomorphism rather than the ergodic-theoretic conjugacy, to match the algebraic viewpoint adopted in this paper.

We will only need one such isomorphism, but the more general construction motivates our notation. The result we need is given by the next lemma.
\begin{lemma}
The map $\phi^{-1,0}$ defines an isomorphism between the action $\rho^{0,1,0}$ and the action $\rho^{-1,1,0}$. That is to say
\[
\phi^{-1,0} \circ \rho^{0,1,0}=\rho^{-1,1,0} \circ (\id_G \times \phi^{-1,0}).
\]
\label{lemma:isomorphism}
\end{lemma}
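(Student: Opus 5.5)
The plan is a direct verification: compute both sides of the identity on an arbitrary point of $G\times G\times\Omega$, and separately check that $\phi^{-1,0}$ is a measure-preserving bijection, so that the identity really says $\phi^{-1,0}$ intertwines two measure-preserving $G$-actions. I read the statement as an equality of maps $G\times(G\times G\times\Omega)\to G\times G\times\Omega$. The left-hand side is $(g,x)\mapsto\phi^{-1,0}(\rho^{0,1,0}(g)x)$ and the right-hand side is $(g,x)\mapsto\rho^{-1,1,0}(g)\,\phi^{-1,0}(x)$. Equivalently, $\phi^{-1,0}\circ\rho^{0,1,0}(g)=\rho^{-1,1,0}(g)\circ\phi^{-1,0}$ for every $g\in G$. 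All equalities are understood modulo null sets, as the paper stipulates. Since $\phi^{-1,0}$ acts as the identity on the $\Omega$ factor, the computation is in fact exact.

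First I unwind the definitions. Using $\rho^{-1}(g_2)g_1=g_1g_2^{-1}$ and $\rho^0(g_1)g_2=g_2$, the map is
\[
\phi^{-1,0}(g_1,g_2,\omega)=(g_1g_2^{-1},\,g_2,\,\omega).
\]
The two actions are
\[
\rho^{0,1,0}(g)(g_1,g_2,\omega)=(g_1,\,gg_2,\,\omega),
\qquad
\rho^{-1,1,0}(g)(g_1,g_2,\omega)=(g_1g^{-1},\,gg_2,\,\omega).
\]
The third component of each action is trivial because its index is $0$, so $\rho$ on $\Omega$ plays no role.

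Next I compute both sides. The left-hand side gives
\[
\phi^{-1,0}(g_1,gg_2,\omega)=(g_1(gg_2)^{-1},\,gg_2,\,\omega)=(g_1g_2^{-1}g^{-1},\,gg_2,\,\omega).
\]
The right-hand side gives
\[
\rho^{-1,1,0}(g)(g_1g_2^{-1},g_2,\omega)=(g_1g_2^{-1}g^{-1},\,gg_2,\,\omega).
\]
These agree identically. The only point needing care is the order of multiplication, $(gg_2)^{-1}=g_2^{-1}g^{-1}$. This is exactly why the right action $\rho^{-1}$ appears in the first slot rather than the left action $\rho^1$.

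Finally I check that $\phi^{-1,0}$ is an isomorphism of probability spaces. It has the explicit inverse $(h_1,h_2,\omega)\mapsto(h_1h_2,h_2,\omega)$. For each fixed $\omega$ it is a bijection of the finite set $G\times G$, so it preserves the product of uniform measures. Taking the product with ${\mathbb Q}$ on $\Omega$ then shows it is a measure-preserving measurable bijection with measurable inverse. I expect no real obstacle: the only potential pitfall is the bookkeeping of left versus right multiplication and the interpretation of the notation $\id_G\times\phi^{-1,0}$, both of which are settled above.
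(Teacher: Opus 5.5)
Your proposal is correct and is essentially the paper's own argument: a direct pointwise evaluation of both sides using $\phi^{-1,0}(g_1,g_2,\omega)=(g_1g_2^{-1},g_2,\omega)$ and $(gg_2)^{-1}=g_2^{-1}g^{-1}$, followed by the observation that $\phi^{-1,0}$ is a measure-preserving bijection of the finite uniform factor $G\times G$ acting as the identity on $\Omega$. Writing down the explicit inverse $(h_1,h_2,\omega)\mapsto(h_1h_2,h_2,\omega)$ just spells out the paper's final remark in slightly more detail.
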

\begin{proof}
By abuse of notation we identify an action
\(\rho:G\to\Aut_{[{\mathbb Q}]}(\underline{\Omega})\)
with the associated map
\((g,\omega)\mapsto\rho(g)\omega\).
For \(h\in G\) and \((g_1,g_2,x)\in G\times G\times\Omega\),
\[
\begin{split}
\phi^{-1,0} \rho^{0,1,0} (h,g_1,g_2, x)
&=\phi^{-1,0}( g_1, hg_2, x) \\
&=( g_1 g_2^{-1} h^{-1}, hg_2, x) \\
&=\rho^{-1,1,0}(h,  g_1 g_2^{-1}, g_2,  x) \\
&=\rho^{-1,1,0}(\id_G \times \phi^{-1,0})( h, g_1, g_2, x )
\end{split}
\]
Finally, $\phi^{-1,0}$ is a mod 0 isomorphism: it acts on the finite uniform factor $G \times G$ by the bijection $(g_1,g_2)\mapsto(g_1g_2^{-1},g_2)$ and on $\Omega$ by the identity, so it is a measure-preserving bijection.
\end{proof}

All free measure-preserving $G$-actions of a finite group $G$ on the standard probability space $\underline{\Omega}^\can$ are isomorphic to a product, as is shown by the next Lemma. The underlying fact is standard (finite Borel equivalence relations admit Borel transversals, by the Luzin--Novikov theorem \cite[Theorem~18.10]{kechrisCDST}) but we record an explicit construction because later arguments use the resulting maps.

\begin{lemma}
\label{lemma:sectionExists}
Let $G$ act freely on $\underline{\Omega}^\can$ by a measure-preserving action $\rho$. 
Let $\underline{G}$ be the uniform discrete probability space
with elements $G$.
Let $\pi:\underline{\Omega}^\can\to \underline{\Omega}^\can/ G$ be the projection.
Then there exists a measurable right inverse $\xi$ for $\pi$.
Define $\zeta:\underline{\Omega}^\can \to \underline{G}$ by
\[
\rho(\zeta(\omega), \xi\pi(\omega))=\omega.
\]
Then the map $\Xi:\underline{G} \times (\underline{\Omega}^\can/G) \to \underline{\Omega}^\can$
defined by
\[
\Xi( g, \delta )=\rho(g,\xi(\delta))
\]
is a mod 0 isomorphism which maps the action $\rho^{1,0}$
to the action $\rho$ and has inverse
\[
\Xi^{-1}( \omega ) = (\zeta(\omega),\pi(\omega)).
\]
\end{lemma}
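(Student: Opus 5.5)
The plan is to build the section $\xi$ by hand, using the fact that $\underline{\Omega}^\can=[0,1]$ carries a Borel linear order, and then to check the claims about $\zeta$ and $\Xi$ directly.

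\textbf{Step 1: choose the representatives.} Fix Borel representatives of the finitely many automorphisms $\rho(g)$. After discarding the null set where the group relations or freeness fail, we obtain a $G$-invariant conull Borel set $\Omega_0\subseteq[0,1]$ on which $\rho$ is a genuine free Borel action. To make $\Omega_0$ invariant, take the countable intersection of the $G$-translates of the good set; each translate is conull because every $\rho(g)$ preserves measure.

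On $\Omega_0$ each orbit $\{\rho(g)\omega : g\in G\}$ has exactly $|G|$ points. We define the transversal
\[
S:=\{\omega\in\Omega_0 : \omega\le \rho(g)\omega \text{ for all } g\in G\},
\]
which picks the minimum of each orbit in the usual order of $[0,1]$. Since $G$ is finite and each $\rho(g)$ is Borel, $S$ is Borel. We identify $\underline{\Omega}^\can/G$ with $S$: the map $\pi(\omega)$ sends $\omega$ to its orbit minimum $\min_g\rho(g)\omega$, which is Borel as a finite minimum, and we set $\xi=\mathrm{id}_S$. The quotient measure is the pushforward $\pi_*\lambda$, and $\pi\circ\xi=\mathrm{id}$ holds by construction.

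\textbf{Step 2: define $\zeta$ and check bijectivity.} Freeness means that for each $\omega\in\Omega_0$ there is a unique $g$ with $\rho(g)\xi\pi(\omega)=\omega$. Hence
\[
\zeta(\omega)=\sum_g g\,\indicator\{\rho(g)\pi(\omega)=\omega\},
\]
and each indicator set is Borel. The map $\Xi(g,\delta)=\rho(g)\delta$ is Borel. The relations $\Xi\circ(\zeta,\pi)=\mathrm{id}$ and $(\zeta,\pi)\circ\Xi=\mathrm{id}$ follow immediately from uniqueness, using that $\pi(\rho(g)\delta)=\delta$ for $\delta\in S$. Equivariance is then the one-line computation
\[
\Xi(hg,\delta)=\rho(h)\Xi(g,\delta),
\]
so $\Xi$ intertwines $\rho^{1,0}$ with $\rho$.

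\textbf{Step 3: measure preservation (the main obstacle).} We must show that $\Xi$ carries the product of the uniform measure and $\pi_*\lambda$ to $\lambda$. For Borel $B\subseteq S$, the sets $\rho(g)B$ for $g\in G$ are pairwise disjoint by freeness and the transversal property. Each has measure $\lambda(B)$ because $\rho(g)$ preserves $\lambda$. Their union is $\pi^{-1}(B)$. Therefore
\[
\pi_*\lambda(B)=|G|\,\lambda(B),
\]
and so
\[
\lambda(\Xi(\{g\}\times B))=\lambda(B)=\tfrac{1}{|G|}\pi_*\lambda(B),
\]
which is exactly the product measure of $\{g\}\times B$. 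Rectangles of this form generate the product $\sigma$-algebra, so the two measures agree by uniqueness of extension.

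The remaining care is bookkeeping for null sets, which is why $\Omega_0$ was made invariant at the start. The quotient $\underline{\Omega}^\can/G$ is thereby realized as $(S,\pi_*\lambda)$, which is standard and carries total mass $1$ under the normalisation above.
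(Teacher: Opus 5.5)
Your proposal is correct and follows essentially the same route as the paper: the section is the orbit minimum in the order of $[0,1]$, $\zeta$ is well defined and measurable by freeness and finiteness of $G$, equivariance is the one-line computation $\Xi(hg,\delta)=\rho(h)\Xi(g,\delta)$, and measure preservation is checked on rectangles $\{g\}\times B$ using the disjoint translates $\rho(g)B$, which together make up $\pi^{-1}(B)$. Two steps the paper leaves implicit are made precise in your version: you pass to an invariant conull set $\Omega_0$ on which $\rho$ is a genuine free Borel action, and you realise the quotient as the transversal $S$ with $\xi=\mathrm{id}_S$.
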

\begin{proof}
Recall that $\underline{\Omega}^{\can}$ is simply $[0,1]$ equipped with the Lebesgue measure. 
Define
\[
\bar{\xi}(\omega)= \min \{ \rho(g, \omega) \mid g \in G \}. 
\]
This is measurable since $G$ is finite. Since 
$\bar{\xi}(\omega)=\bar{\xi}(\rho(g, \omega))$ for all $g \in G$,
it induces a well-defined measurable map $\xi$ on $\underline{\Omega}^\can/G$ satisfying $\xi (\pi(\omega)) = \bar{\xi}(\omega)$. Since the action $\rho$ is free,
$\zeta$ is well defined. Since $G$ is finite and acts measurably, $\zeta$ is also measurable. Thus
$\Xi$ is a bijection with the given inverse. We compute
\[
\Xi( \rho^{1,0}(h,g,\delta))=
\Xi( (hg,\delta))=
\rho(hg,\xi(\delta))
=\rho(h, \rho(g,\xi(\delta)))
=\rho(h, \Xi(g,\delta)).
\]
Finally, $\Xi$ is measure preserving: for measurable
$A\subseteq\underline{\Omega}^{\can}/G$, freeness and invariance of
the action give
\[
\mathbb{Q}(\Xi(\{g\}\times A))
=
\mathbb{Q}(\rho(g,\xi(A)))
=
\mathbb{Q}(\xi(A))
=
\frac{1}{|G|}\,\mathbb{Q}(\pi^{-1}(A))
=
\frac{1}{|G|}\,\mu(A),
\]
where $\mu(A):={\mathbb Q}(\pi^{-1}(A))$ is the quotient measure;
this is the value of the product measure on a generating family of
sets.
\end{proof}

\subsection{Group actions on infinite product spaces}
\label{sec:infiniteProductActions}

We will follow the convention that natural numbers start at zero, so ${\mathbb N}=\{0,1,\ldots \}$. Correspondingly, for $n \geq 1$, we write $S_n$ for the permutation group of $\{0, \ldots, n-1\}$.
Let $S_\infty=\bigcup_{N=1}^\infty S_N$. $S_\infty$ acts on ${\mathbb N}$. Given $g \in S_\infty$ and $i \in {\mathbb N}$ we write $gi \in {\mathbb N}$ for the action of $g$ on $i$.

Given a probability space $\underline{\Omega}^A$,
define $\underline{\Omega}^{A,\infty}:=\prod_{i \in {\mathbb N}} \underline{\Omega}^{A}$. Given $g \in S_\infty$ and $\omega  \in \Omega^{A, \infty}$,
we define $g \omega$ by
$g \omega = (\omega^{g^{-1}i})_{i \in {\mathbb N}}$. This may be written equivalently as $(g \omega)^{i}=\omega^{g^{-1}i}$.
We use a contravariant action because one can think of an element of the Cartesian product as defining a function from the index set to $\Omega^A$ and the natural action of a group on functions is contravariant. Checking this explicitly, we confirm that this defines an $S_\infty$-action on $\underline{\Omega}^{A,\infty}$:
\[
((g_1 g_2) \omega)^i = \omega^{g_2^{-1}g_1^{-1} i}
= (g_2 \omega)^{g_1^{-1}{i}} = (g_1 (g_2 \omega))^i.
\]
This construction is measure-preserving since each element of $S_\infty$ permutes only a finite number of coordinates.

For random variables defined on a Cartesian product, the natural action is $(gX)(\omega)=X( g^{-1}(\omega))$ so that
\[
(gX)((\omega^{i})_{i \in {\mathbb N}})
=X((\omega^{gi})_{i \in {\mathbb N}}).
\]

Define ${\cal E}^{\underline{\Omega}^A,N}$ to be the completion under
${\mathbb Q}^{A,\infty}$ of the exchangeable $\sigma$-algebra of order $N$.
The exchangeable $\sigma$-algebra of order $N$ is the collection of
measurable sets which are invariant under the action of $S_N$.
If $S^{N,i}$ is the stabiliser of $i$ in $S_N$, write ${\cal E}^{\underline{\Omega}^A,N,i}$ for the completion of the $\sigma$-algebra of sets which are invariant under $S^{N,i}$.
If ${\cal F}^a$ is a sub-$\sigma$-algebra of
$\underline{\Omega}^{A,\infty}$, write
\[
{\cal E}^{N}({\cal F}^a)
:=
{\cal F}^a \cap {\cal E}^{\underline{\Omega}^A,N},
\qquad \text{and} \qquad
{\cal E}^{N,i}({\cal F}^a)
:=
{\cal F}^a \cap {\cal E}^{\underline{\Omega}^A,N,i}.
\]

If $\rho$ is a measurable $S_N$-action on a probability space
$\underline{\Omega}$, write
\[
{\cal E}^N_{\rho}
\]
for the completion of the $\sigma$-algebra of sets invariant under
$\rho$.
More generally, if ${\cal F}$ is a sub-$\sigma$-algebra of
$\underline{\Omega}$, write
\[
{\cal E}^N_{\rho}({\cal F})
:=
{\cal F}\cap {\cal E}^N_{\rho}.
\]
We extend this notation in the obvious way to the variants
${\cal E}^{N,i}$ and to product actions.

We conclude this section with a classical fact that we will need
later. Write
\[
{\cal E}^{\underline\Omega^A,\infty}:=\bigcap_{N\ge1}{\cal
E}^{\underline\Omega^A,N}
\]
for the $\sigma$-algebra of sets invariant under the full action of
$S_\infty$: this is the exchangeable $\sigma$-algebra of the coordinate
sequence $(\omega^i)_{i\in{\mathbb N}}$.

In the results below, the space $\Omega^{A,\infty}$ may carry a
probability measure ${\mathbb M}$ other than ${\mathbb Q}^{A,\infty}$.
In that case all completions, including those implicit in
${\cal F}^{A,\infty}$ and ${\cal E}^{\underline\Omega^A,\infty}$, are
understood to be taken with respect to ${\mathbb M}$.

The Hewitt--Savage theorem relates exchangeable sequences to i.i.d. random variables. For a comprehensive account of the classical theory of exchangeability, see \cite[Chapter~1]{kallenberg2005}.

\begin{theorem}[Hewitt--Savage Theorem \cite{hewittSavage}]
\label{thm:hewittSavage}
Let $\underline\Omega^A$ be a standard probability space, and let
${\mathbb M}$ be a probability measure on
$(\Omega^{A,\infty},{\cal F}^{A,\infty})$ invariant under the action of
$S_\infty$ on $\underline\Omega^{A,\infty}$, i.e.\ the
coordinate sequence $(\omega^i)_{i\in{\mathbb N}}$ is exchangeable
under ${\mathbb M}$. Then disintegrating ${\mathbb M}$ with respect to
${\cal E}^{\underline\Omega^A,\infty}$ yields, for each fibre, a measure
under which $(\omega^i)_{i\in{\mathbb N}}$ are i.i.d.
\end{theorem}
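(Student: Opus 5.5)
We prove the Hewitt--Savage Theorem (Theorem~\ref{thm:hewittSavage}) by the classical de Finetti route: identify the conditional law of the coordinates given the exchangeable $\sigma$-algebra, and show that, conditionally, the coordinates factorise.

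Since $\underline\Omega^A$ is standard, we first reduce to the case where $\Omega^A$ is a Polish space with its Borel $\sigma$-algebra. Null sets do not affect the statement, because completions are taken with respect to ${\mathbb M}$. The infinite product is then Polish, so ${\mathbb M}$ admits a regular conditional probability $({\mathbb M}_\omega)$ given ${\cal E}:={\cal E}^{\underline\Omega^A,\infty}$. We must show that for ${\mathbb M}$-a.e.\ $\omega$ the measure ${\mathbb M}_\omega$ is a product $\nu_\omega^{\otimes{\mathbb N}}$. Since finite-dimensional cylinder functions $f_1(\omega^{0})\cdots f_k(\omega^{k-1})$, with $f_j$ ranging over a countable measure-determining class of bounded continuous functions, determine a measure on the product, it suffices to prove that for each such $k$-tuple
\[
{\mathbb E}_{\mathbb M}\bigl[f_1(\omega^{0})\cdots f_k(\omega^{k-1})\mid{\cal E}\bigr]=\prod_{j=1}^k {\mathbb E}_{\mathbb M}\bigl[f_j(\omega^{0})\mid {\cal E}\bigr]\quad\text{a.s.}
\]
The identity then holds simultaneously off a single null set, and the fibre measure is the product of its marginal, with the marginal $\nu_\omega$ the same in every coordinate.

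The key step is a reverse-martingale argument. Let ${\cal E}_N$ be the $\sigma$-algebra generated by the empirical measure of the first $N$ coordinates together with the tail coordinates $(\omega^i)_{i\ge N}$. These are decreasing in $N$, and their intersection equals ${\cal E}$ modulo ${\mathbb M}$-null sets, which is the standard identification of the exchangeable $\sigma$-algebra. By $S_N$-invariance of ${\mathbb M}$, the conditional expectation of $f_1(\omega^{0})\cdots f_k(\omega^{k-1})$ given ${\cal E}_N$ is the average over injective maps $\{0,\ldots,k-1\}\to\{0,\ldots,N-1\}$, namely
\[
\frac{(N-k)!}{N!}\sum_{i_1,\ldots,i_k\ \text{distinct}} f_1(\omega^{i_1})\cdots f_k(\omega^{i_k}).
\]
This average differs from $\prod_j \frac1N\sum_{i<N} f_j(\omega^i)$ by $O(k^2\|f\|_\infty^k/N)$. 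The reverse martingale convergence theorem gives a.s.\ convergence of the left-hand side to the conditional expectation given ${\cal E}$. Applying the same argument with $k=1$ shows that each empirical average $\frac1N\sum_{i<N}f_j(\omega^i)$ converges a.s.\ to ${\mathbb E}[f_j(\omega^0)\mid{\cal E}]$. Combining the two limits yields the factorisation.

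I expect the main obstacle to be the measure-theoretic bookkeeping rather than the probability. Specifically, one must verify that $\bigcap_N {\cal E}_N$ coincides with the completed exchangeable $\sigma$-algebra, and that the regular conditional probabilities can be chosen so that the countably many a.s.\ identities hold on a common full-measure set. Standardness of $\underline\Omega^A$ is precisely what makes both points routine. Alternatively, the whole argument may simply be cited from \cite[Chapter~1]{kallenberg2005}.
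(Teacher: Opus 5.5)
Your argument is correct. The paper does not prove this theorem: it cites Hewitt--Savage and \cite[Chapter~1]{kallenberg2005}. What you have written is the standard reverse-martingale proof of de~Finetti's theorem: the identity ${\mathbb E}[F\mid{\cal E}_N]=\bar F_N$ by averaging over $S_N$, L\'evy's downward theorem, and the $O(k^2/N)$ comparison between injective and unrestricted averages.

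The step you flag as the main obstacle can be sidestepped. The almost sure limit of $\bar F_N$ is $\bigcap_N{\cal E}_N$-measurable, hence $S_\infty$-invariant. For every bounded $S_\infty$-invariant $G$, the same averaging gives ${\mathbb E}[FG]={\mathbb E}[\bar F_N G]$ for all $N\ge k$. Bounded convergence then identifies the limit with ${\mathbb E}[F\mid{\cal E}]$, using only the easy inclusion $\bigcap_N{\cal E}_N\subseteq{\cal E}$.

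Two small repairs are needed. First, take your countable class of test functions closed under multiplication and containing the constant $1$. Only then does the monotone class theorem identify the $k$-dimensional marginals of ${\mathbb M}_\omega$ with $\nu_\omega^{\otimes k}$.

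Second, your justification for passing to a Polish space does not work as stated. The null sets discarded by standardness are ${\mathbb Q}^A$-null, and the marginals of ${\mathbb M}$ need not vanish on them. What your argument actually requires is that the measurable space carrying ${\mathbb M}$ be standard Borel; some such hypothesis is essential, as the classical counterexample of Dubins and Freedman shows. This is the situation in which the paper uses the theorem, in Step~3 of the proof of Theorem~\ref{thm:generalInfiniteFundLimit}, where $\underline\Omega^A$ is a Polish space of type-and-stream values and ${\mathbb M}$ is a Borel law.

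Finally, you obtain i.i.d.\ fibres only for almost every fibre, which is all a disintegration can deliver. Redefining it on the exceptional null set gives the statement's ``each fibre''.
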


We will now explain how this yields a structural result that describes all ``exchangeable sequences'' of random variables, defined below.

\begin{definition}[Exchangeable sequence]
\label{def:exchangeableSequence}
Let $\underline\Omega^A$ be a standard probability space. A sequence
$(X^i)_{i\in{\mathbb N}}$ of $\Omega^A$-valued random variables,
defined on a probability space $(\Omega,{\cal F},{\mathbb P})$, is
{\em exchangeable} if the law of $(X^i)_{i\in{\mathbb N}}$ (that is,
the pushforward of ${\mathbb P}$ under
$\omega\mapsto(X^i(\omega))_{i\in{\mathbb N}}\in\Omega^{A,\infty}$) is
invariant under the action of $S_\infty$ on $\underline\Omega^{A,\infty}$.
\end{definition}

\begin{definition}[Mixing space and de~Finetti mixing variable]
\label{def:mixingSpace}
Let $(X^i)_{i\in{\mathbb N}}$ be an exchangeable sequence
(Definition~\ref{def:exchangeableSequence}), with law
${\mathbb M}:={\rm Law}\bigl((X^i)_{i\in{\mathbb N}}\bigr)$. By
Theorem~\ref{thm:hewittSavage}, since
${\cal E}^{\underline\Omega^A,\infty}$ is a completed, countably
generated sub-$\sigma$-algebra of the standard space
$(\Omega^{A,\infty},{\cal F}^{A,\infty},{\mathbb M})$, it is itself,
modulo null sets, the pullback of the Borel $\sigma$-algebra of a
standard probability space $\underline\Omega^{\rm mix}$ under some
measurable map
\[
Y:\underline\Omega^{A,\infty}\to\underline\Omega^{\rm mix},
\qquad
{\cal E}^{\underline\Omega^A,\infty}=Y^{-1}\bigl({\cal
B}(\Omega^{\rm mix})\bigr)
\]
modulo null sets. We call $\underline\Omega^{\rm mix}$ the {\em mixing
space} and $Y$ the {\em de~Finetti mixing variable} of
$(X^i)_{i\in{\mathbb N}}$; both are unique up to mod~$0$ isomorphism.
\end{definition}

\begin{definition}[De~Finetti representation]
\label{def:deFinettiRepresentation}
Let $(X^i)_{i\in{\mathbb N}}$ be exchangeable, with law ${\mathbb M}$,
mixing space $\underline\Omega^{\rm mix}$, and mixing variable $Y$ as
in Definition~\ref{def:mixingSpace}. Disintegrating ${\mathbb M}$ with
respect to $Y$ gives, for ${\rm Law}(Y)$-almost every
$y\in\Omega^{\rm mix}$, a conditional measure $\mu_y$ on
$\underline\Omega^{A,\infty}$, under which the coordinate sequence is
i.i.d.\ by Theorem~\ref{thm:hewittSavage}; write
\[
\nu_y:=(\pi^0)_*\mu_y
\]
for the common marginal of this i.i.d.\ sequence, i.e.\ the
pushforward of $\mu_y$ under the $0$th coordinate projection
$\pi^0:\Omega^{A,\infty}\to\Omega^A$. We call the family
$(\nu_y)_{y\in\Omega^{\rm mix}}$, together with the law ${\rm Law}(Y)$
of the mixing variable, the {\em de~Finetti representation} of
$(X^i)_{i\in{\mathbb N}}$.
\end{definition}

The Hewitt--Savage theorem now tells us that all exchangeable sequences
can be realised as families of
i.i.d. random variables parameterised by
the mixing variable $Y$. The next theorem makes this precise.

\begin{theorem}[Canonical representation of an exchangeable sequence]
\label{thm:canonicalDeFinetti}
Let $(X^i)_{i\in{\mathbb N}}$ be exchangeable, defined on
$(\Omega,{\cal F},{\mathbb P})$, with mixing variable $Y$ and
de~Finetti representation $(\nu_y)_{y\in\Omega^{\rm mix}}$ as in
Definitions~\ref{def:mixingSpace} and~\ref{def:deFinettiRepresentation}.
Write $X:=(X^i)_{i\in{\mathbb N}}:\Omega\to\Omega^{A,\infty}$ for the
sequence viewed as a single map. Define a measure on
$\underline\Omega^{\rm mix}\times\underline\Omega^{A,\infty}$ by
\[
{\mathbb M}^\circ(\cdot)
:=
\int_{\Omega^{\rm mix}}
\bigl(\delta_y\times\nu_y^{\otimes{\mathbb N}}\bigr)(\cdot)
\,d\,{\rm Law}(Y)(y),
\]
where $\nu_y^{\otimes{\mathbb N}}$ is the i.i.d.\ product measure on
$\underline\Omega^{A,\infty}$ with marginal $\nu_y$. Then:
\begin{enumerate}[(i)]
\item ${\mathbb M}^\circ$ is the law of $(Y(X(\omega)),X(\omega))$,
i.e.\ the pushforward of ${\mathbb P}$ under
$\omega\mapsto(Y(X(\omega)),X(\omega))$.
\item The map
\[
\Xi:(\Omega^{A,\infty},{\mathbb M})
\to
\bigl(\Omega^{\rm mix}\times\Omega^{A,\infty},{\mathbb M}^\circ\bigr),
\qquad
\Xi(\omega^\prime):=(Y(\omega^\prime),\omega^\prime),
\]
is a mod~$0$ isomorphism respecting $Y$ and the coordinate sequence,
and it is fibre-preserving: it commutes with the projection onto
$\underline\Omega^{\rm mix}$, i.e.\ it identifies same-$y$ fibres with
same-$y$ fibres.
\end{enumerate}
In this sense, every exchangeable sequence, viewed through its law, is
fibre-isomorphic to its own canonical de~Finetti reconstruction:
sample the mixing parameter, then generate an i.i.d.\ sequence with
the corresponding conditional marginal.
\end{theorem}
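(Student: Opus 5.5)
Both parts of Theorem~\ref{thm:canonicalDeFinetti} follow from the disintegration of ${\mathbb M}$ along the mixing variable $Y$, combined with Theorem~\ref{thm:hewittSavage}. It is enough to work on the law space: since ${\mathbb M}$ is the pushforward of ${\mathbb P}$ under $X$, the law of $(Y(X),X)$ under ${\mathbb P}$ is the pushforward of ${\mathbb M}$ under $\Xi:\omega'\mapsto(Y(\omega'),\omega')$. Part~(i) therefore reduces to the identity $\Xi_*{\mathbb M}={\mathbb M}^\circ$ on $\Omega^{\rm mix}\times\Omega^{A,\infty}$.

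For this identity, I will test both measures on measurable rectangles $B\times C$, with $B\subseteq\Omega^{\rm mix}$ Borel and $C\subseteq\Omega^{A,\infty}$ measurable. Rectangles form a $\pi$-system generating the product $\sigma$-algebra, so agreement on them gives agreement everywhere, and completions then agree automatically. On one side, $\Xi_*{\mathbb M}(B\times C)={\mathbb M}(Y^{-1}(B)\cap C)$. The disintegration $(\mu_y)$ of ${\mathbb M}$ with respect to $Y$ satisfies $\mu_y(Y^{-1}(\{y\}))=1$ for ${\rm Law}(Y)$-a.e.\ $y$, which gives
\[
{\mathbb M}(Y^{-1}(B)\cap C)=\int_{B}\mu_y(C)\,d\,{\rm Law}(Y)(y).
\]
Theorem~\ref{thm:hewittSavage} says $\mu_y$ makes the coordinates i.i.d., and Definition~\ref{def:deFinettiRepresentation} identifies the common marginal as $\nu_y$, so $\mu_y=\nu_y^{\otimes{\mathbb N}}$. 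An i.i.d.\ law is determined by its marginal on the cylinder $\pi$-system, so this identification is legitimate. Hence the integral equals $\int \indicator_B(y)\,\nu_y^{\otimes{\mathbb N}}(C)\,d\,{\rm Law}(Y)(y)={\mathbb M}^\circ(B\times C)$, which is (i).

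The step I expect to need the most care is the measurability bookkeeping. One must check that $y\mapsto\nu_y^{\otimes{\mathbb N}}(C)$ is measurable, so that ${\mathbb M}^\circ$ is well defined. This follows because $y\mapsto\mu_y(C)$ is measurable by the construction of regular conditional probabilities, and $\mu_y$ and $\nu_y^{\otimes{\mathbb N}}$ agree off a ${\rm Law}(Y)$-null set. One must also check that $Y$ can be taken Borel after modification on an ${\mathbb M}$-null set. Both points use the standardness of $\underline\Omega^{A,\infty}$ and $\underline\Omega^{\rm mix}$.

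For (ii), $\Xi$ is measurable because $Y$ is, and it is measure preserving by (i). It is injective, and its left inverse is the projection $p_2:(y,\omega')\mapsto\omega'$, which is measurable. To see that $\Xi$ is a bijection mod~$0$, I will show that ${\mathbb M}^\circ$ is concentrated on the graph $\Gamma=\{(y,\omega'):Y(\omega')=y\}$. The graph is measurable because $\Omega^{\rm mix}$ is standard Borel, so the diagonal of $\Omega^{\rm mix}$ is Borel. By (i), ${\mathbb M}^\circ(\Gamma)={\mathbb M}(\Xi^{-1}\Gamma)={\mathbb M}(\Omega^{A,\infty})=1$. Restricted to the full-measure sets $\Omega^{A,\infty}$ and $\Gamma$, the map $\Xi$ is therefore a measurable bijection with measurable, measure-preserving inverse $p_2$, hence a mod~$0$ isomorphism. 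It respects the coordinate sequence because $p_2\circ\Xi=\id$, and it respects $Y$ because $p_1\circ\Xi=Y$, which is precisely the fibre-preservation statement: the fibre $\{Y=y\}$ is carried onto $\{y\}\times\Omega^{A,\infty}$.
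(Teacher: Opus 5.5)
Your proof is correct. Part (i) follows the paper: disintegrate ${\mathbb M}$ along $Y$ and use Theorem~\ref{thm:hewittSavage} to identify each fibre measure with $\nu_y^{\otimes{\mathbb N}}$. You also spell out the check on rectangles $B\times C$, which the paper leaves implicit.

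Part (ii) takes a different route. The paper notes that $\Xi$ is measure preserving by (i), and that the pullback under $\Xi$ of the product $\sigma$-algebra is all of ${\cal F}^{A,\infty}$, since the second component is the identity. It then invokes Lemma~\ref{lemma:pointIsomorphism}, which rests on von Neumann's theorem that an isomorphism of measure algebras of standard spaces is induced by a point map. You instead show that ${\mathbb M}^\circ$ is carried by the Borel graph $\Gamma$ of (a Borel version of) $Y$, and you write the inverse down explicitly as $p_2|_\Gamma$.

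Your route is more elementary and more explicit. It needs only a Borel version of $Y$ and measurability of the diagonal of $\Omega^{\rm mix}$, which follows from countable separation. It produces an honest bijection onto the full-measure set $\Gamma$ with a named measurable inverse. It also never requires knowing that $(\Omega^{\rm mix}\times\Omega^{A,\infty},{\mathbb M}^\circ)$ is a standard space.

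The paper's route is shorter on the page because it reuses Lemma~\ref{lemma:pointIsomorphism}. The paper needs that lemma anyway, in the proofs of Lemma~\ref{lemma:finiteFiltrationClassification} and Lemma~\ref{lemma:enrichmentStable}.
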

A proof is given in Appendix~\ref{appendix:finiteFiltrationClassification}.

The de Finetti representation collapses the arbitrary correlation structure among infinitely many exchangeable coordinates into a single mixing variable, together with an i.i.d.\ draw from the distribution it determines. This is exactly the reduction in degrees of freedom that our ergodic result will exploit.

\section{Financial modelling}

\subsection{Invariant preferences}

We wish to model the investment decisions involving a heterogeneous group of agents who may have differing beliefs. We now define a suitable notion of preferences 
that will allow us to model such investment decisions.

\begin{definition}
\label{def:invariantPreferences}
Let
\[
\underline{\Omega}^1
=
(\Omega^1,
{\cal F}^1,
({\cal F}^1_t)_{t\in{\cal T}},
{\mathbb Q}^1)
\]
be a standard filtered probability space.

A {\em family of $\underline{\Omega}^1$ preferences} is a collection
of maps
\[
{\cal J}_s(\,\cdot\,\mid x):
L^1(
{\mathbb R}_{\geq0},
({\cal F}^{1+}_t)_{t\in{\cal T}}
)
\to
{\mathbb R}\cup\{-\infty,+\infty\},
\qquad
s\in{\cal T},
\quad
x\in\Omega^1_s,
\]
satisfying the following properties.

\begin{enumerate}[(i)]

\item {\bf (Measurability)}
For every family
\[
(W_r)_{r\in\Omega^{\can}}
\subseteq
L^1(
{\mathbb R}_{\geq0},
({\cal F}^{1+}_t)_{t\in{\cal T}}
)
\]
with $(r,\omega)\mapsto(W_r)_t(\omega)$ jointly measurable for each
$t\in{\cal T}$, the map
$
(r,x)\mapsto{\cal J}_s(W_r\mid x)
$
is Borel measurable. In particular, taking a family constant in $r$,
the map $x\mapsto{\cal J}_s(X\mid x)$ is Borel measurable for every
stream $X$.

\item {\bf (Monotonicity)}
If
\[
X^1_t\geq X^2_t
\qquad
\forall\, t\in{\cal T}
\]
almost surely, then
\[
{\cal J}_s(X^1\mid x)
\geq
{\cal J}_s(X^2\mid x)
\]
for every
$x\in\Omega^1_s$.

\item {\bf (Upper semicontinuity)}
If
$
X_n\to X
$ in $L^1$, then
\[
\limsup_{n\to\infty}
{\cal J}_s(X_n\mid x)
\leq
{\cal J}_s(X\mid x)
\]
for every
$
x\in\Omega^1_s.
$

\item {\bf (Monotone continuity)}
If
$
X_n\to X
$
almost surely, where either
$X_n\leq X_{n+1}\leq X$ for every $n$, or
$X_n\geq X_{n+1}\geq X$ for every $n$,
and all of $X_n$, $X$ lie in
$L^1({\mathbb R}_{\geq0},({\cal F}^{1+}_t)_{t\in{\cal T}})$,
then
\[
{\cal J}_s(X_n\mid x)
\to
{\cal J}_s(X\mid x)
\]
for every
$
x\in\Omega^1_s.
$

\item {\bf (Invariance)}
For every adapted randomized automorphism $\Phi$ at time $s$
(Definition~\ref{def:adaptedRandomizedAutomorphism}),
every
$
x\in\Omega^1_s,
$
and every
\[
X
\in
L^1(
{\mathbb R}_{\geq0},
({\cal F}^{1+}_t)_{t\in{\cal T}}
),
\]
we have
\[
{\cal J}_s(X \mid x)
=
{\cal J}_s
\bigl(
X\circ\hat\Phi
\,\big|\,
x
\bigr).
\]
In particular, taking $\Phi$ independent of $\omega^1$, the value is
unchanged by precomposition with $\phi\times\id_{\Omega^1}$ for every
filtration-preserving mod $0$ automorphism $\phi$ of
$\underline{\Omega}^{\can}_{\cal T}$.

\item {\bf (Continuation consistency)}
Let
$
s<u.
$

Suppose that
\[
X,Y
\in
L^1(
{\mathbb R}_{\geq0},
({\cal F}^{1+}_t)_{t\in{\cal T}}
)
\]
agree at all times \(t<u\), and that
\[
{\cal J}_u(X \mid x')
\geq
{\cal J}_u(Y \mid x')
\]
for every
$
x'\in\Omega^1_u.
$
Then
\[
{\cal J}_s(X\mid x)
\geq
{\cal J}_s(Y\mid x)
\]
for every
$
x\in\Omega^1_s.
$

\item {\bf (Lottery tolerance)}
For $\omega^{\can}\in\Omega^{\can}_{{\cal T}_{\leq s}}$ and a
stream $X$, write $X^{[\omega^{\can}]}$ for the stream obtained by
composing $X$ with the map that replaces the coordinates of the
canonical factor at times ${\leq}\,s$ by $\omega^{\can}$. Let
$(Z^n)$ be streams in
$L^1({\mathbb R}_{\geq0},({\cal F}^{1+}_t)_{t\in{\cal T}})$, let
$(A_n)$ be measurable subsets of
$\Omega^{\can}_{{\cal T}_{\leq s}}$ with
${\mathbb Q}^{\can}_{{\cal T}_{\leq s}}(A_n)\to0$, and suppose
there are $M,m\in{\mathbb R}$ such that, for every $n$ and every
$x\in\Omega^1_s$,
\[
{\cal J}_s(Z^{n,[\omega^{\can}]}\mid x)\ge M
\quad\text{for a.e.\ } \omega^{\can}\notin A_n,
\]
\[
{\cal J}_s(Z^{n,[\omega^{\can}]}\mid x)\ge m
\quad\text{for a.e.\ } \omega^{\can}\in A_n.
\]
Then
$\liminf_n{\cal J}_s(Z^n\mid x)\ge M$ for every $x\in\Omega^1_s$.
\end{enumerate}
\end{definition}

All the key economic assumptions other than the self-enforcement, market-completeness and no-satiation assumptions listed in the Introduction correspond to the axioms above; the no-satiation assumption involves the participation times as well as the preferences, and is formalised at the level of a problem in Section~\ref{sec:compactness}. The upper semicontinuity, monotone continuity, measurability and lottery tolerance axioms exist to rule out mathematical pathologies rather than economically interesting examples. The value $+\infty$ is permitted so that unbounded utility functions may be paired with equivalent changes of measure; the well-posedness conditions of Section~\ref{sec:compactness} will exclude it on feasible sets. The Markovianity assumption and finite state-space assumptions are not required until Section~\ref{sec:finiteFunds}.

\begin{definition}
Let
\[
{\cal J}_s(\,\cdot\,\mid x),
\qquad
x\in\Omega^1_s,
\]
be a family of $\underline{\Omega}^1$ preferences.
For
$
X\in
L^1(
{\mathbb R}_{\geq0},
({\cal F}^{1+}_t)_{t\in{\cal T}}
),
$
define
$
{\cal J}_s(X)
\in
L^0(
{\mathbb R}\cup\{-\infty,+\infty\},
{\cal F}^1_s
)
$
by
$
{\cal J}_s(X)(\omega)
:=
{\cal J}_s
\bigl(
X
\mid
\omega^1_s(\omega^1)
\bigr),
$
where
\[
\omega=(\omega^{\can},\omega^1)
\in
\Omega^{\can}_{\cal T}
\times
\Omega^1.
\]
\end{definition}

The measurability axiom implies that ${\cal J}_s(X)$
is ${\cal F}^1_s$-measurable.

The interpretation of such a family, ${\cal J}$, is that, at time $s$, an agent prefers the income stream $X^1$ to the income stream $X^2$, (both lying in $L^1( {\mathbb R}_{\geq 0}, ({\cal F}^{1+}_t)_{t \in {\cal T}})$) if ${\cal J}_s(X^1)(\omega)>{\cal J}_s(X^2)(\omega)$. The measurability of ${\cal J}_s(X)$ ensures that this preference is known at time $s$.

\medskip

We formulate preferences on an $L^1$ domain rather than the larger
space $L^0$. This excludes models in which finite income streams may have infinite utility. However, we continue to allow the value
$-\infty$, which may be interpreted as representing an unacceptable or infeasible outcome.

Unlike the von Neumann--Morgenstern preference framework, our preferences take income streams, not lotteries, as the primitive
objects. In frameworks whose
primitive objects are lotteries, the mixture operation is part of
the formulation, and linearity in the mixture bounds the value of a
lottery in terms of the values of its branches. Taking streams as
primitive, a replacement for such bounds is needed.

The Lottery tolerance axiom supplies a minimal one: a $\liminf$ bound by a level
attained on almost all branches, robust to exceptional branches of
vanishing probability whose values are uniformly bounded below. Lottery tolerance
is automatic for von Neumann--Morgenstern preferences.

The upper semicontinuity axiom provides the
complementary $\limsup$ bound and is more demanding: von
Neumann--Morgenstern preferences satisfy it only under an additional
condition such as linear growth-domination of the utility, as
implied by concavity.

The continuation-consistency condition serves a similar role to the
time-consistency axiom of Kreps and Porteus
\cite{krepsPorteus}. However, our formulation is expressed directly
in terms of continuation values and is tailored to the recursive
definition of acceptable strategies introduced below.

However, there is an important difference between our formulation and that of Kreps and Porteus. Kreps and Porteus seek to define preferences over income streams in a probability model where the measure is fixed. Our notion of invariance depends upon the choice of measure on $\underline{\Omega}^{\can}_{\cal T}$ but only on the equivalence class of the measure on $\underline{\Omega}^1$.
Our next example shows the importance of this distinction.

\begin{example}[Robust recursive preferences]
\label{example:robustRecursivePreferences}
Let
\[
{\cal P}:=\{{\mathbb P}^{1,i}\}_{i\in{\mathbb I}}
\]
be a family of probability measures equivalent to
${\mathbb Q}^1$, indexed by a finite set
${\mathbb I}$.
Let
\[
u:{\mathbb R}_{\geq 0}\to{\mathbb R}
\]
be a continuous increasing utility function.

Define a family of $\underline{\Omega}^1$ preferences recursively by
\[
{\cal J}^{\cal P}_s(X\mid x)
=
\begin{cases}
u(X_T(x)),
&
s=T,
\\[1ex]
u\!\left(X_s(x)\right)
+
\displaystyle
\min_{i\in{\mathbb I}}
{\mathbb E}_{{\mathbb P}^{1,i}}
\!\left(
{\cal J}^{\cal P}_{s+1}
\bigl(
X
\mid
\omega^1_{s+1}
\bigr)
\,\middle| \,
\omega^1_s=x
\right),
&
s<T.
\end{cases}
\]
Then
\[
{\cal J}^{\cal P}
=
\bigl(
{\cal J}^{\cal P}_s
\bigr)_{s\in{\cal T}}
\]
defines a family of $\underline{\Omega}^1$ preferences,
provided
that $u$ is chosen so that the Upper semicontinuity axiom is
satisfied. The Monotone continuity axiom holds by the monotone
convergence theorem, applied through the recursion, since a minimum
over the finite family ${\cal P}$ preserves monotone limits, and
the Lottery tolerance axiom holds by dominated convergence within
each conditional expectation.

Upper semicontinuity is not automatic. It may fail when $u$ has
asymptotically linear growth and some density
$d{\mathbb P}^{1,i}/d{\mathbb Q}^1$ is unbounded.
Lemma~\ref{lemma:robustUpperSemicontinuity} below provides
sufficient conditions for it to hold.

When ${\cal P}$ contains a single probability measure, this reduces to
a von Neumann--Morgenstern expected-utility preference.
In the general case it may be interpreted as the preference system of
a robust optimiser who evaluates future outcomes under the most adverse
probability measure in the family ${\cal P}$.
\end{example}

\begin{lemma}[Upper semicontinuity of robust recursive preferences]
\label{lemma:robustUpperSemicontinuity}
In Example~\ref{example:robustRecursivePreferences}, suppose that
$u$ is concave and that either
\begin{enumerate}[(i)]
\item $u$ is bounded above; or
\item there are $C>0$ and $\theta\in(0,1)$ such that
$u(x)\le u(0)+Cx^\theta$ for all $x\ge0$, and the one-step
conditional densities of each ${\mathbb P}^{1,i}$ with respect to
${\mathbb Q}^1$ have uniformly bounded conditional moments of order
$\frac{1}{1-\theta}$.
\end{enumerate}
Then ${\cal J}^{\cal P}$ satisfies the Upper semicontinuity axiom.
\end{lemma}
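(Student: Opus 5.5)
The plan is a backward induction on $s$, proving at each stage the slightly stronger statement needed to feed the next step. Fix $X_n\to X$ in $L^1$. Since $\limsup$ is attained along a subsequence, we may pass to a subsequence realising $\limsup_n{\cal J}^{\cal P}_s(X_n\mid x)$. By a diagonal argument we may then pass to a further subsequence along which two things hold. First, $X_{n,t}\to X_t$ almost surely for every $t\in{\cal T}$. Second, the conditional $L^1$ distances $\int|X_{n,t}-X_t|\,d\mu_x$ tend to $0$; they tend to zero in $L^1({\mathbb Q}^1_s)$ in $x$, hence a.e.\ along a subsequence. The conditional realizations $X_{n,t}(x)$ and the fibre measures $\mu_x$ of the disintegration are the tools here. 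I would read ``every $x$'' as ``every $x$ outside a ${\mathbb Q}^1_s$-null set, with the version of ${\cal J}_s$ fixed by the disintegration''. I flag this as a point where the statement must be interpreted modulo null sets.

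The base case $s=T$ is immediate: $u$ is continuous and $X_{n,T}(x)\to X_T(x)$ for a.e.\ $x$. For the inductive step, the induction hypothesis gives, along the chosen subsequence, $\limsup_n{\cal J}^{\cal P}_{s+1}(X_n\mid\omega^1_{s+1})\le{\cal J}^{\cal P}_{s+1}(X\mid\omega^1_{s+1})$ almost surely. The term $u(X_{n,s}(x))$ converges by continuity. For the minimum over the finite set ${\mathbb I}$, I use $\limsup_n\min_i a^i_n\le\min_i\limsup_n a^i_n$. So it suffices to show, for each fixed $i$,
\[
\limsup_n{\mathbb E}_{{\mathbb P}^{1,i}}\bigl({\cal J}^{\cal P}_{s+1}(X_n)\mid\omega^1_s=x\bigr)\le{\mathbb E}_{{\mathbb P}^{1,i}}\bigl({\cal J}^{\cal P}_{s+1}(X)\mid\omega^1_s=x\bigr).
\]
This is a reverse Fatou lemma, and it needs the positive parts $\bigl({\cal J}^{\cal P}_{s+1}(X_n)\bigr)^+$ to be uniformly integrable under the conditional law of ${\mathbb P}^{1,i}$.

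In case (i), where $u\le K$, the recursion gives ${\cal J}^{\cal P}_{s+1}\le(T-s)K$ uniformly, so reverse Fatou applies directly. In case (ii), the recursion together with $u(y)\le u(0)+Cy^\theta$ and monotonicity of the minimum (bound the minimum by the ${\mathbb P}^{1,i}$-term) gives an upper bound
\[
{\cal J}^{\cal P}_{s+1}(X_n)\le (T-s)|u(0)|+C\sum_{t>s}{\mathbb E}_{{\mathbb P}^{1,i}}\bigl(X_{n,t}^\theta\mid{\cal F}^1_{s+1}\bigr).
\]
So it suffices to show that the family $\{X_{n,t}^\theta\}_n$ is uniformly integrable under ${\mathbb P}^{1,i}$ conditioned on $\omega^1_s=x$. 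Write $Z$ for the conditional density over the steps from $s$ to $t$. Hölder with exponents $1/\theta$ and $1/(1-\theta)$ gives, for any event $A$,
\[
{\mathbb E}_{\mathbb P}\bigl(X_{n,t}^\theta\indicator_A\mid x\bigr)\le{\mathbb E}_{\mathbb Q}\bigl(X_{n,t}\indicator_A\mid x\bigr)^{\theta}\,{\mathbb E}_{\mathbb Q}\bigl(Z^{1/(1-\theta)}\mid x\bigr)^{1-\theta}.
\]
The conditional $L^1$ convergence of $X_{n,t}$ makes the family $\{X_{n,t}\}_n$ conditionally uniformly integrable under ${\mathbb Q}$, and then the right-hand side is small uniformly in $n$ when ${\mathbb Q}(A\mid x)$ is small.

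I expect the main obstacle to be controlling the multi-step density $Z$. The hypothesis only bounds one-step conditional moments of order $1/(1-\theta)$, while $Z$ is a product of one-step densities, and moments of a product need not be controlled by one-step moments at the same order. My first attempt would avoid the multi-step density entirely by proving a stronger induction hypothesis. The claim to carry is that the positive parts of ${\cal J}^{\cal P}_{s+1}(X_n)$ are dominated by $c+C'\sum_{t>s}\bigl({\mathbb E}_{\mathbb Q}(X_{n,t}\mid{\cal F}^1_{s+1})\bigr)^\theta$. One peels off one step at a time, using the one-step Hölder bound and then conditional Jensen to absorb the $\theta$-power inside a ${\mathbb Q}$-conditional expectation. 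The uniform bound on one-step moments makes the constants independent of $n$ and $x$, and $L^1$-convergence of the ${\mathbb Q}$-conditional expectations then gives the required uniform integrability. A secondary technical point is that the subsequence extraction must be done once for all $t$ and all inductive levels, which the diagonal argument handles because ${\cal T}$ and ${\mathbb I}$ are finite.
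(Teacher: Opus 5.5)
Your case (i) is the paper's argument. In case (ii) you take a different route. The paper does not push reverse Fatou through the recursion there. It uses concavity of $u$ to get the subadditive bound $u(x')\le u(x)+u(|x'-x|)-u(0)\le u(x)+C|x'-x|^\theta$. It then propagates backward, by one-step conditional H\"older (exponents $\frac{1}{1-\theta}$, $\frac1\theta$) and conditional Jensen, the one-sided estimate
\[
{\cal J}^{\cal P}_s(X'\mid x)\le{\cal J}^{\cal P}_s(X\mid x)+\sum_{t\ge s}K_t\,{\mathbb E}_{{\mathbb Q}^1}\bigl(|X'_t-X_t|\,\big|\,\omega^1_s=x\bigr)^{\theta},
\]
which holds for every pair of streams. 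That gives an explicit modulus of continuity, and it avoids carrying almost-sure $\limsup$ bounds and uniform integrability from level $s+1$ to level $s$. Your route is more general: it never uses concavity, only continuity of $u$ and the growth bound. Your strengthened induction hypothesis is in effect the paper's one-step propagation, applied to positive parts rather than to differences. The obstacle you anticipated is not real, however. Write $p=\frac{1}{1-\theta}$ and suppose the one-step densities satisfy ${\mathbb E}_{{\mathbb Q}^1}(D_{r+1}^{p}\mid{\cal F}^1_r)\le M$. Then the tower property gives ${\mathbb E}_{{\mathbb Q}^1}(Z^{p}\mid{\cal F}^1_s)\le M^{t-s}$ for $Z=\prod_{r=s}^{t-1}D_{r+1}$, so your first H\"older bound with the multi-step density already closes the argument.

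One step should be changed. You first choose a subsequence realising $\limsup_n{\cal J}^{\cal P}_s(X_n\mid x)$ at a fixed $x$. Only then do you extract a further subsequence along which the conditional $L^1$ distances vanish for ${\mathbb Q}^1_s$-a.e.\ state. That exceptional null set depends on $x$, so nothing ensures $x$ avoids it.

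No repair can give the full-sequence bound, even for a.e.\ $x$. Suppose $\Omega^1_T$ is atomless, and let $B_n$ be a typewriter sequence of sets with ${\mathbb Q}^1_T(B_n)\to0$ but almost every point lying in infinitely many $B_n$. Set $X_{n,T}:=\indicator_{B_n}(\omega^1_T)$ and zero at earlier dates. Then $X_n\to0$ in $L^1$, while $\limsup_n{\cal J}^{\cal P}_T(X_n\mid x)=u(1)$ for a.e.\ $x$. This exceeds ${\cal J}^{\cal P}_T(0\mid x)=u(0)$ whenever $u$ is strictly increasing.

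What your argument does prove, once a single subsequence is fixed independently of $x$, is this: every subsequence has a further subsequence along which the inequality holds for a.e.\ $x$. Equivalently, $\bigl({\cal J}^{\cal P}_s(X_n\mid\cdot)-{\cal J}^{\cal P}_s(X\mid\cdot)\bigr)^+\to0$ in ${\mathbb Q}^1_s$-probability. The paper's proof yields no more than this. Its case (i) works ``along almost surely convergent subsequences'', and the right-hand side of the estimate above tends to zero only in probability in $x$. So this is a limitation of the axiom as literally stated rather than of your method, but you should state your conclusion in this subsequential form rather than as a pointwise a.e.\ bound.
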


\begin{proof}
Under (i), let $X_n\to X$ in $L^1$. Applying the reverse Fatou
lemma along almost surely convergent subsequences at each step of
the recursion, together with
$\limsup_n\min_i\le\min_i\limsup_n$, gives
$\limsup_n{\cal J}^{\cal P}_s(X_n\mid x)
\le{\cal J}^{\cal P}_s(X\mid x)$.

Under (ii), since the increments of a concave function decrease,
$u$ satisfies the subadditive bound
$u(x^\prime)\le u(x)+u(|x^\prime-x|)-u(0)$. The conditional
H\"older inequality with exponents $\frac{1}{1-\theta}$ and
$\frac{1}{\theta}$, together with the conditional Jensen
inequality, then propagates the estimate
\[
{\cal J}^{\cal P}_s(X^\prime\mid x)
\le
{\cal J}^{\cal P}_s(X\mid x)
+
\sum_{u\in{\cal T}_{\geq s}}
K_u
\,
{\mathbb E}_{{\mathbb Q}^1}
\bigl(
|X^\prime_u-X_u|
\,\big|\,
\omega^1_s=x
\bigr)^{\theta}
\]
backward through the recursion, for constants $K_u$: the moment
hypothesis bounds the change of measure at each step, and a minimum
over the finite family only improves the estimate. Upper
semicontinuity along $L^1$ follows.
\end{proof}

Kreps and Porteus fix a probability measure and study
preferences over income streams relative to that measure.
By contrast, a family of $\underline{\Omega}^1$ preferences may
encode a more general model of belief formation.

The robust optimisation example above illustrates this point:
the preference is determined by a family of probability measures
rather than a single measure. Consequently, our notion of
$\underline{\Omega}^1$ preference should be viewed as encoding both
attitudes towards risk and a rule for evaluating uncertainty in the
evolution of the state.

While we do not insist that agents believe in any particular measure for $\underline{\Omega}^1$, we do insist that they agree that the $\underline{\Omega}^{\can }_{\cal T}$ component is independent of $\underline{\Omega}^1$ and irrelevant to their process of preference formation.

\medskip

We conclude this section with some basic definitions that
allow us to relate preferences on different probability spaces.

\begin{definition}[Preference isomorphism]
	\label{def:preferenceIsomorphism}
	Let $\underline\Omega^1,\underline\Omega^{1^\prime}$ be standard
	filtered probability spaces, and let
	$
	\Xi:\underline\Omega^{1+}\to\underline\Omega^{1^\prime+}
	$
	be a filtration-preserving mod $0$ isomorphism witnessing that
	$\underline\Omega^1$ and $\underline\Omega^{1^\prime}$ are stably
	equivalent. We say
	\[
	X_1\in L^1({\mathbb R}_{\ge0},({\cal F}^{1+}_t)_{t\in{\cal T}}),
	\qquad
	X_{1^\prime}\in L^1({\mathbb R}_{\ge0},({\cal
		F}^{1^\prime+}_t)_{t\in{\cal T}})
	\]
	are {\em preference isomorphic (via $\Xi$)} if
	$
	X_{1^\prime}=X_1\circ\Xi^{-1}.
	$
\end{definition}

A family of $\underline{\Omega}^1$ preferences induces
a family on any product with $\underline{\Omega}^1$ as follows.
\begin{definition}[Preferences induced on a product]
\label{def:inducedPreferences}
Let
\[
{\cal J}_s(\,\cdot\,\mid x),
\qquad
x\in\Omega^1_s,
\]
be a family of $\underline{\Omega}^1$ preferences and let
$\underline{\Omega}^2$ be a standard filtered probability space.
Let
\[
\underline{\Omega}^{1,2}
:=
\underline{\Omega}^1
\times
\underline{\Omega}^2.
\]
The projection $\pi:\underline{\Omega}^{1,2}\to\underline{\Omega}^1$ is an
enrichment (Definition~\ref{def:enrichment}); choose a
filtration-preserving mod $0$ isomorphism
\[
\Xi:
\underline{\Omega}^{1+}
\to
\underline{\Omega}^{(1,2)+}
\]
over $\underline{\Omega}^1$ as in Lemma~\ref{lemma:enrichmentStable}.

The induced family of
$\underline{\Omega}^{1,2}$ preferences is defined by
\[
{\cal J}_s(X\mid (x,y))
:=
{\cal J}_s(X\circ\Xi\mid x),
\]
for all
\[
(x,y)\in\Omega^1_s\times\Omega^2_s
\]
and all
$X\in L^1({\mathbb R}_{\geq0},({\cal F}^{(1,2)+}_t)_{t\in{\cal T}})$.
\end{definition}
The right-hand side does not depend on the choice of $\Xi$: by
Lemma~\ref{lemma:enrichmentStable}(ii) two choices differ by
$\hat\Phi$ for an adapted randomized automorphism $\Phi$, so the two
values agree by the Invariance axiom. The axioms for the induced family follow from those
for ${\cal J}$; in particular, an adapted randomized automorphism for
the induced family transports through $\Xi$ to an adapted randomized
automorphism for ${\cal J}$, giving the induced Invariance axiom.
In these induced preferences, the agent is indifferent to the outcome in ${\underline{\Omega}}^2$.

\medskip

Definition~\ref{def:inducedPreferences} uses only that
$\underline{\Omega}^{1,2}$ enriches $\underline{\Omega}^1$
(Definition~\ref{def:enrichment}), not that the enrichment is a product.
The general form is as follows.

\begin{definition}[Preferences induced along an enrichment]
\label{def:inducedPreferencesEnrichment}
Let ${\cal J}$ be a family of $\underline{\Omega}^1$ preferences and let
$\pi:\underline{\Omega}^3\to\underline{\Omega}^1$ be an enrichment.
Choose $\Xi$ as in Lemma~\ref{lemma:enrichmentStable}. The induced
family of $\underline{\Omega}^3$ preferences is defined by
\[
{\cal J}^\pi_s(X\mid z)
:=
{\cal J}_s\bigl(X\circ\Xi\,\big|\,\pi_s(z)\bigr),
\qquad
z\in\Omega^3_s,
\]
for all
$X\in L^1({\mathbb R}_{\geq0},({\cal F}^{3+}_t)_{t\in{\cal T}})$, where
$\pi_s:\Omega^3_s\to\Omega^1_s$ is the map induced by $\pi$ on time-$s$
states, which exists because $\pi^{-1}({\cal F}^1_s)\subseteq{\cal F}^3_s$.
\end{definition}
The right-hand side does not depend on the choice of $\Xi$: by
Lemma~\ref{lemma:enrichmentStable}(ii) two choices differ by
$\hat\Phi$ for an adapted randomized automorphism $\Phi$, so the two
values agree by the Invariance axiom. The axioms for the induced family
follow from those for ${\cal J}$ as before.
Definition~\ref{def:inducedPreferences} is the case
$\underline{\Omega}^3=\underline{\Omega}^{1,2}$ with $\pi$ the
projection; the induced value there does not depend on the
$\underline{\Omega}^2$ component of the state because that component
belongs to the enrichment rather than to the base.

If the value is required at states of a $\sigma$-algebra
${\cal G}_s\subseteq{\cal F}^3_s$ containing $\pi^{-1}({\cal F}^1_s)$,
the same formula applies with the conditioned isomorphisms of
Lemma~\ref{lemma:enrichmentStable}(iii). What is essential throughout is that
the conditioning state leave the canonical factor of the lift free.
Where instead it resolves that factor, the value must be obtained by
freezing the resolved coordinates, in the manner of the Lottery
tolerance axiom.
\subsection{The financial model for our problems}

As discussed further in Section~\ref{sec:symmetryArgument} below, the financial market is modelled by a filtered probability space
$\underline{\Omega}^M$.  We
choose units so that our numeraire asset
\[
{\cal N} \in L^0({\mathbb R}_{\geq 0}, ({\cal F}^M_t)_{t \in {\cal T}}, \mathbb Q^M)
\]
satisfies ${\cal N}_t=1$ for all $t \in {\cal T}$.

 Again, as discussed in Section~\ref{sec:symmetryArgument} below, the idiosyncratic experience of an agent is modelled by a filtered probability space $\underline{\Omega}^I$. We assume that $\underline{\Omega}^M$ and $\underline{\Omega}^I$ are standard filtered probability spaces.

We assume we have the following additional structures.
Let $\tau$ and $\tau_d$ be stopping times defined on $\underline{\Omega}^M \times \underline{\Omega}^I$ taking values in ${\cal T}$ with $\tau<\tau_d$. The first stopping time $\tau$ represents the first time at which an
agent starts participating in the fund.
They are not permitted to enter into contracts before this time.
The second stopping time $\tau_d$ represents the time at which the
agent leaves the fund.
They are not permitted to enter into new contracts on or after this
time.

Let
\[
\eta \in L^1({\mathbb R}_{\geq 0}, \iota({\cal F}^M_t \times {\cal F}^I_t)_{t \in {\cal T}}, \mathbb Q^M \times \mathbb Q^I)
\]
be a stochastic process representing the amount an agent pays in at each time. We require $\eta_t = 0$ unless $\tau \leq t < \tau_d$, so that contributions are made only during participation.

We assume that the preferences of an agent over consumption are given by a family of $(\underline{\Omega}^M \times \underline{\Omega}^I)$ preferences.

The total spaces for our problems will be given by product spaces which include a factor $\prod_{i \in \mathbb N} \underline{\Omega}^I$. We will label the pull-backs of $\eta$ and $\tau$ on each component as $\eta^i$ and $\tau^i$. We will write ${\cal J}^i$ for the preferences of agent $i$.

Thus we are assuming that all agents have identical
preferences. Agent $i$ cares only about their own state and
history, the state of the market, and their income stream.

Because we do not assume invariance under transformations of the
market factor, different agents need not agree on the
physical measure governing the market. By contrast, our invariance
assumptions imply indifference to relabellings and other
transformations of the states of other agents. In this sense,
all agents agree on the modelling of idiosyncratic risk.

This setting is very abstract; the following example shows how our results may be applied in practice.

\begin{example}[von Neumann agents in a Black--Scholes model]
\label{example:blackScholes}
Let ${\mathbb I}$ be a finite index set describing the possible
agent types. We assume that all agents pay in for $T_1$
years and then retire.

Let $\underline{\Omega}^M$ be the filtered probability space generated
by a Brownian motion $W$, sampled only at times
$t\in{\cal T}$. We write ${\mathbb Q}^M$ for the corresponding pricing
measure.

For each type $\xi\in{\mathbb I}$, choose:
\begin{enumerate}[(i)]
\item a market price of risk $m^\xi$ and the associated physical
measure ${\mathbb P}^\xi$ for the Black--Scholes model;
\item a risk-aversion parameter $
\alpha^\xi\in(-\infty,1)\setminus\{0\}$;
\item a deterministic entry time $\tau^\xi\in{\cal T}$;
\item a mortality model represented by a stopping time
$\tau^\xi_d>\tau^\xi+T_1$.
\end{enumerate}

We assume that, once the type $\xi$ is known, the only state variable
required to describe future idiosyncratic uncertainty is whether the
agent is alive or dead. Since the time horizon ${\cal T}$ is
finite, the probability space
$\underline{\Omega}^\xi$ generated by $\tau^\xi_d$ is finite.

Choose weights
$
\pi^\xi\in[0,1]$
with $
\sum_{\xi\in{\mathbb I}}\pi^\xi=1.
$
For each time $t\in{\cal T}$ define
$
{\mathbb I}_t
:=
\{\xi\in{\mathbb I}\mid\tau^\xi=t\},
$
and
$
\pi_t
:=
\sum_{\xi\in{\mathbb I}_t}\pi^\xi.
$
Let $t_{\max}:=\max\{t\in{\cal T}\mid \pi_t>0\}$.

Let $\underline{\Omega}^{1}$ and
$\underline{\Omega}^{2}$ be copies of
$\underline{\Omega}^{\can}_{\cal T}$.
Write
$\omega^1_t$,
and
$\omega^2_t$,
for the canonical uniform random variables at time $t$ on these
spaces.
Define an auxiliary space
$
\underline{\Omega}^{I'}
:=
\underline{\Omega}^{1}
\times
\underline{\Omega}^{2}
\times
\prod_{\xi\in{\mathbb I}}
\underline{\Omega}^{\xi}$.
The first canonical factor is used to select a cohort.
If the type of an agent has not yet been assigned by time
$t-1$, then we assign a type at time $t$ whenever either
$t=t_{\max}$ or
\[
\omega^1_t\leq \frac{\pi_t}{1-\sum_{s<t}\pi_s}.
\]
Conditional on the decision to assign a type at time $t$, we use $\omega^2_t$ to
choose a type $\xi\in{\mathbb I}_t$
with probability $\frac{\pi^\xi}{\pi_t}$. With these choices, the
probability of remaining unassigned before time $t$ is
$1-\sum_{s<t}\pi_s$, so an agent's type is $\xi$ with probability
$\pi^\xi$.

Let $\underline{\Omega}^{I}$ be the filtered probability space
generated by the random variables
$\xi$ and $\tau_d^\xi$. Thus the idiosyncratic state records both the type of the agent and
their subsequent mortality experience. Define $\tau:=\tau^\xi$.

We assume that the preferences of a type-$\xi$ agent are described
by a time-consistent family of $\underline{\Omega}^M\times
\underline{\Omega}^I$ preferences induced by the CRRA utility
\[
u^\xi(x)=\frac{x^{\alpha^\xi}}{\alpha^\xi}.
\]
Let
\[
Y^\xi_u
:=
\left.
\frac{d{\mathbb Q}^M}{d{\mathbb P}^\xi}
\right|_{{\cal F}^M_u}
\]
denote the state-price density corresponding to the physical
measure ${\mathbb P}^\xi$.
For $x\in\Omega^{M\times I}_t$, define
\[
{\cal J}^\xi_t(X\mid x)
=
{\mathbb E}_{{\mathbb Q}^M\times{\mathbb Q}^I}
\!\left(
\sum_{\tau^\xi+T_1\leq u<\tau^\xi_d}
\frac{1}{Y^\xi_u}\,
\frac{X_u^{\alpha^\xi}}{\alpha^\xi}
\,\middle| \,
\omega^{M\times I}_t=x
\right),
\]
where
\[
\omega^{M\times I}_t
:
\Omega^M\times\Omega^I
\to
\Omega^{M\times I}_t
\]
is the time-$t$ state variable associated with
$\underline{\Omega}^M\times\underline{\Omega}^I$.
The induced preference function is therefore
\[
{\cal J}^\xi_t(X)(\omega)
=
{\cal J}^\xi_t
\bigl(
X
\mid
\omega^{M\times I}_t(\omega)
\bigr).
\]
Thus beliefs about market dynamics are encoded through the family of
measures ${\mathbb P}^\xi$, while risk preferences are encoded through
the parameter $\alpha^\xi$.

Finally, let $c^\xi\geq0$
be a deterministic payment level and define
\[
\eta^\xi_t
=
\begin{cases}
c^\xi,
&
\tau^\xi\leq t<\tau^\xi+T_1,
\\[1ex]
0,
&
\text{otherwise}.
\end{cases}
\]
This includes the case in which the agent's type has not yet been
revealed and so no payments will be made.

\medskip
The continuity axioms hold as in
Example~\ref{example:robustRecursivePreferences}: each $u^\xi$ is
concave and satisfies condition (i) of
Lemma~\ref{lemma:robustUpperSemicontinuity} when $\alpha^\xi<0$ and
condition (ii) with $\theta=\alpha^\xi$ when $\alpha^\xi\in(0,1)$,
the moment condition holding because the state-price densities
$Y^\xi_u$ are lognormal.
\end{example}

As the number of independent agents, $N$, tends to infinity,
the proportion of agents of type $\xi$ converges almost surely
to $\pi^\xi$. Thus our model allows us to
investigate the behaviour of large funds in which the relative sizes
of different cohorts are approximately known.

An alternative approach would specify cohorts of a deterministic
size and study a sequence of funds by scaling them; our approach of
a single idiosyncratic state variable for both cohort membership and
idiosyncratic risk minimises book-keeping and simplifies the proofs.

Our framework also does not require different agent types to share a common model of market dynamics (for example, some agents may use a jump-diffusion model and others a stochastic-volatility model), and allows preferences and the participation times $\tau,\tau_d$ to depend on the market, accommodating products that insure systematic longevity risk or have benefits depending jointly on market and demographic outcomes. More generally, it extends to joint-life insurance, health-insurance contracts, and other collective contracting problems unrelated to pensions.

\section{The infinite fund problem}
\label{sec:infiniteFund}

\subsection{Admissible strategies}
We will now write down the optimal investment problem when we have a continuum of agents. We will later show how this problem arises as the limit as $N \to \infty$ for an associated finite-fund problem.

The definitions given in this section can be posed
relative to a choice of preferences, participation times, and income
other than the ambient one: we call a tuple $({\cal J}',\tau',\tau'_d,\eta')$
a {\em problem}. We will write $\Prob=({\cal J}, \tau, \tau_d, \eta)$ for the ambient problem.
When we wish to emphasise the dependence of a definition on the problem, we relativise the definition by inserting $\Prob$ to the left of a semicolon. See Definition~\ref{def:admissibleNoIdiosyncratic} below for an example. We will not need this notation in this section, but it will prove useful later.
Since ${\cal J}'$ is itself a family of preferences on some underlying
filtered probability space, posing a definition relative to a
non-ambient $\Prob'$ implicitly reinterprets every
space, $\sigma$-algebra, and measure appearing in it on the space underlying ${\cal J}'$. When no problem is specified, we use the ambient one. As a result for most of this section the ambient problem is assumed.

\medskip

We will call the probability space for the infinite-fund problem $\underline{\Omega}^\infty:=\underline{\Omega}^M \times \underline{\Omega}^I$. Instead of thinking of the $I$ component as representing a probability model for an agent, one should think of it as describing the continuum of different types present in the infinite fund. In the infinite fund we will assume all agents of the same type within a fund are treated the same, so payments can be described by random variables on $\underline{\Omega}^I$.

We will want to consider two different investment problems. The first is the conditional investment problem for a homogeneous collective of agents of a known type at time $s \in {\cal T}$. These problems are naturally indexed by $x \in \Omega^{M \times I}_s$ and are associated with a conditional measure ${\mathbb Q}^\infty_x$. The second problem we will want to consider is for an entire
collective. These problems are naturally indexed by
\[
x\in\Omega^M_s,
\]
and are associated with a different conditional measure.

To avoid repeating each definition twice, we shall write
\[
x\in\Omega^M_s\sqcup\Omega^{M\times I}_s
\]
to indicate that the corresponding statement applies both to
collective states in $\Omega^M_s$ and to agent states in
$\Omega^{M\times I}_s$.
The associated conditional measure, obtained by disintegrating
${\mathbb Q}^\infty$ with respect to the relevant state variable,
will always be denoted by ${\mathbb Q}^\infty_x$.

We will write down the possible investment strategies for a collective of agents at a time $s$ given their budgets, $b$
and their consumption history $\overline{X}$. In the formulation below, $\gamma_{u,t}$
for $t>u$ represents the total cashflow that agents
whose state at time $u$ is represented by $x$
will receive at time $t$ given
the contracts agreed at time $u$.
We define $\gamma_u$ to be a stochastic process, and write $\gamma_{u,t}:=(\gamma_u)_t$.
The cashflow $\gamma_{u,t}$ for $t \leq u$ represents
the consumption history at time $u$ and so must satisfy a consistency
condition.

\begin{definition}
\label{def:admissibleNoIdiosyncratic}
Let $s \in {\cal T}$ and let $x \in \Omega^{M}_s \sqcup \Omega^{M \times I}_s$.
Let $b \in L^1({\mathbb R}, {\cal F}^\infty_s,{\mathbb Q}^\infty_x)$ and
$\overline{X} \in  L^1({\mathbb R}_{\geq 0}, ({\cal F}^\infty_t)_{t \in {\cal T}_{< s}},{\mathbb Q}^\infty_x)$.
Let $\Prob'=({\cal J}',\tau',\tau'_d,\eta')$ be a
problem, understood to be the ambient problem
$\Prob=({\cal J},\tau,\tau_d,\eta)$ when omitted.

For $Z \in L^1({\mathbb R},{\cal F}^\infty,{\mathbb Q}^\infty_x)$ and $w \in {\cal T}$ we write
\[
m_w(Z):={\mathbb E}_{{\mathbb Q}^\infty_x}(Z \mid \iota^{\Omega^\infty}{\cal F}^M_w)
\]
for the per-capita value of the cashflow $Z$ given the market
information at time $w$: conditioning on the market factor alone
mutualises idiosyncratic risk across the collective.

We say that
\[
\gamma \in \prod_{u \in {\cal T}_{\geq s}}L^1({\mathbb R}_{\geq 0}, ({\cal F}^\infty_t)_{t \in {\cal T}},{\mathbb Q}^\infty),
\]
is an admissible
strategy for the investment problem $(\Prob';s,x, b, \overline{X})$
if it satisfies
\begin{enumerate}[(i)]
\item The consistency conditions
\begin{equation}
\gamma_{u,t}=\overline{X}_{t} \quad \forall u \in {\cal T}_{\geq s},t \in  {\cal T}_{<s} \qquad {\mathbb Q}^\infty_x\text{-a.s.}
\label{eq:consistencyCondition}
\end{equation}
\begin{equation}
\gamma_{u,t}=\gamma_{t,t} \quad \forall u \in {\cal T}_{\geq s},t \in  {\cal T}_{<u} \cap {\cal T}_{\geq s} \qquad {\mathbb Q}^\infty_x\text{-a.s.}
\label{eq:consistencyConditionRealised}
\end{equation}
\item The non-participation conditions
\begin{equation}
\gamma_{u,t} = 0
\qquad
\forall\, t \in {\cal T}_{>u}, \text{ whenever } u < \tau' \text{ or } u \ge \tau'_d
\qquad {\mathbb Q}^\infty_x\text{-a.s.}
\label{eq:nonParticipation}
\end{equation}
\begin{equation}
\gamma_{u,t}=0
\quad
\text{on } \{t < \tau'\} \cup \{t \ge \tau'_d\}
\qquad
\forall\, u \in {\cal T}_{\geq s},\ t \in {\cal T}_{\geq u}
\qquad {\mathbb Q}^\infty_x\text{-a.s.}
\label{eq:noConsumptionNonParticipation}
\end{equation}
\item The issuance-funding condition
\begin{equation}
\sum_{s \leq t \leq T} m_s( \gamma_{s,t})
\leq
m_s(b) + m_s(\eta'_s)
\qquad {\mathbb Q}^\infty_x\text{-a.s.}
\label{eq:budgetJ}
\end{equation}
\item The self-financing renegotiation conditions
\begin{equation}
\sum_{u \leq t \leq T} m_u( \gamma_{u,t})
\leq
\sum_{u \leq t \leq T} m_u( \gamma_{u-1,t})
+ m_u(\eta'_u)
\qquad {\mathbb Q}^\infty_x\text{-a.s.}
\label{eq:selfFinancing}
\end{equation}
for all $u \in {\cal T}_{>s}$.
\end{enumerate}

The process $\gamma$ is defined on the full probability space
$\underline{\Omega}^\infty$, while admissibility is evaluated under
the conditional measure ${\mathbb Q}^\infty_x$.

We write ${\cal A}^x_s(\Prob';b, \overline{X})$ for
the space of admissible strategies, omitting $\Prob'$
when it is the ambient problem.
If
\[
{\mathbb Q}^\infty_x(b<0)>0,
\]
then no admissible strategy exists because the investment problem
begins with negative wealth on a set of positive conditional
probability
so we define ${\cal A}^x_s(\Prob';b, \overline{X}) = \emptyset$.
\end{definition}

The issuance-funding condition~\eqref{eq:budgetJ} states that the
value of the contract agreed at time $s$ is covered by the initial
budget together with the contribution received at time $s$. The
self-financing renegotiation condition~\eqref{eq:selfFinancing}
states that the contract agreed at time $u$ is purchased by
surrendering the remaining payments promised under the contract
agreed at time $u-1$, together with the contribution received at time
$u$. Thus each contract is funded purely by income received to date.
Indeed, applying $m_s$ to \eqref{eq:selfFinancing} for
$r=s+1,\ldots,u$ and combining with \eqref{eq:budgetJ}, using
\eqref{eq:consistencyConditionRealised} to identify $\gamma_{r,r}$
with realised consumption, gives
\begin{equation}
\sum_{u \leq t \leq T} m_s( \gamma_{u,t})
\leq
m_s(b) + \sum_{s \leq r \leq u} m_s(\eta'_r)
- \sum_{s \leq t < u} m_s(\gamma_{t,t})
\qquad {\mathbb Q}^\infty_x\text{-a.s.}
\label{eq:receivedToDate}
\end{equation}
for every $u \in {\cal T}_{\geq s}$: valued at time $s$, the contract
in force at time $u$ together with the consumption already taken is
funded by the initial budget and the contributions received by time
$u$. Since every payment $\gamma_{u,t}$ is nonnegative, the fund's
position is at all times a portfolio of nonnegative claims, so its
net wealth can never be negative.

Since the only restrictions on contracts are given by the funding conditions, this definition assumes the market is complete.

If $\gamma$ is an admissible strategy and $u \in {\cal T}_{>s}$, the
members of the fund enter time $u$ holding the outstanding payments
$(\gamma_{u-1,t})_{t \geq u}$ promised under the contract agreed at
time $u-1$. This motivates the following definition of the
continuation budget. For $u \in {\cal T}_{>s}$ and
$x^\prime \in \Omega^M_u \sqcup \Omega^{M \times I}_u$ we define
\begin{equation}
\beta_{x^\prime,u}(\gamma)
:=
\sum_{u\le t\le T}
{\mathbb E}_{{\mathbb Q}^\infty}
\!\left(
\gamma_{u-1,t}
\,\middle| \,
\omega^\infty_u=x^\prime
\right).
\label{eq:defBeta}
\end{equation}

The quantity $\beta_{x^\prime,u}(\gamma)$ represents the budget
available at time $u$, before the time-$u$ contract is agreed, to an
agent whose state is $x^\prime$: it is the value of the payments
still owed to that agent under the contract currently in force. The
self-financing renegotiation condition~\eqref{eq:selfFinancing}
states precisely that, per capita, the contract agreed at time $u$ is
affordable from these holdings together with the time-$u$
contribution.

Given a sequence of contracts $\gamma$, we will define the realised
consumption stream $X(\gamma)$ by
\[
X(\gamma)_t:=\gamma_{t,t} \quad \text{for } t \in {\cal T}_{\geq s},
\qquad
X(\gamma)_t:=\overline{X}_t \quad \text{for } t \in {\cal T}_{< s},
\]
where $\overline{X}$ is the consumption history of the problem for
which $\gamma$ is admissible.

\subsection{Acceptable strategies}

\begin{definition}
\label{def:acceptableNoIdiosyncratic}

Let
$
s \in {\cal T}$,
$
x \in \Omega^{M}_s \sqcup \Omega^{M\times I}_s$,
\[
\beta
\in
L^1(
{\mathbb R},
{\cal F}^\infty_s,
{\mathbb Q}^\infty_x
),
\]
and 
\[
\overline X
\in
L^1(
{\mathbb R}_{\ge0},
({\cal F}^\infty_t)_{t\in{\cal T}_{<s}},
{\mathbb Q}^\infty_x
).
\]

We define the sets of {\em acceptable strategies}
$
{\cal G}^x_s(\beta,\overline X)
$
recursively, backwards in time, together with sets of
{\em screened strategies}
${\cal H}^x_s(\beta,\overline X)$.

Suppose that ${\cal G}^{x'}_u$ has been defined for every
$u\in{\cal T}_{>s}$. We define
${\cal H}^x_s(\beta,\overline X)$
to be the set of
$
\gamma
\in
{\cal A}^x_s(\beta,\overline X)
$
such that, for every
$u\in{\cal T}_{>s}$ and every measurable family
$x'\mapsto\gamma'^{,x'}$, defined for
${\mathbb Q}^\infty_x$-almost every $x'\in\Omega^{M\times I}_u$ and
satisfying
$
\gamma'^{,x'}
\in
{\cal G}^{x'}_u
\bigl(
\beta_{x',u}(\gamma),
X(\gamma)
\bigr),
$
we have
\begin{equation}
{\cal J}_u(X(\gamma)\mid x')
\ge
{\cal J}_u(X(\gamma'^{,x'})\mid x')
\qquad
\text{for ${\mathbb Q}^\infty_x$-almost every }x'.
\label{eq:screeningCriterion}
\end{equation}

We then define
$
{\cal G}^x_s(\beta,\overline X)
$
to be the set of
$
\gamma
\in
{\cal H}^x_s(\beta,\overline X)
$
such that, for every measurable family
$x'\mapsto\gamma'^{,x'}$, defined for
${\mathbb Q}^\infty_x$-almost every $x'\in\Omega^{M\times I}_s$ and
satisfying
$
\gamma'^{,x'}
\in
{\cal H}^{x'}_s
\bigl(
\beta(x'),
\overline X
\bigr),
$
where $\beta(x')$ denotes the conditional realization of $\beta$ at
$x'$, we have
\begin{equation}
{\cal J}_s(X(\gamma)\mid x')
\ge
{\cal J}_s(X(\gamma'^{,x'})\mid x')
\qquad
\text{for ${\mathbb Q}^\infty_x$-almost every }x'.
\label{eq:acceptabilityCriterion}
\end{equation}

We will refer to
\eqref{eq:screeningCriterion}
as the {\em screening criterion} and to
\eqref{eq:acceptabilityCriterion}
as the {\em acceptability criterion}.
\end{definition}

At the terminal date the screening criterion is vacuous, so that
${\cal H}^x_T(\beta,\overline X)={\cal A}^x_T(\beta,\overline X)$ and
${\cal G}^x_T(\beta,\overline X)$ consists of the statewise utility
maximisers. The acceptability criterion compares against
${\cal H}^{x'}_s$ rather than ${\cal G}^{x'}_s$ in order to keep the
recursion well-founded; since the comparison is against every
element, this does not change the implied comparison of values. The
criterion expresses the fact that the states of the agents are
already observable when a contract is formed: a coalition of agents
in a common state at time $s$ will not enter a contract if they
could do better by managing their own resources $\beta(x')$,
together with their own contributions, separately.

For a state
$
x'\in\Omega^{M\times I}_u,
$
the quantity
$
\beta_{x',u}(\gamma)
$
is the value of the payments still owed to an agent in state $x'$
under the contract in force when time \(u\) begins. The continuation
problem faced by a breakaway coalition of agents in state \(x'\) is
therefore naturally parameterised by
\(
(\beta_{x',u}(\gamma),X(\gamma))
\),
representing the resources the coalition can walk away with together
with the realised consumption history.

The interpretation is that agents are free to leave a collective
and form a new collective whenever their realised state becomes known.
A strategy is acceptable only if, at every future time and in every
future state, the resulting continuation is at least as desirable as
any acceptable strategy that could be implemented by such a breakaway
collective.

\begin{remark}
In the one-period case, this immunity-to-breakaway requirement is the
standard notion of the {\em core} in cooperative game theory.
Definition~\ref{def:acceptableNoIdiosyncratic}'s dynamic strengthening,
requiring immunity to breakaway at every future time and state, is
what Becker and Chakrabarti \cite{beckerChakrabarti1995} call the {\em
recursive core}.
\end{remark}

Both criteria quantify over spaces of comparison strategies. We
have phrased them in terms of measurable families: each individual
comparison then involves only measurable functions of the
conditioning state, so the almost-everywhere quantifiers are
meaningful and the recursive definition introduces no measurability
difficulties. Here a measurable family is a map into the separable
strategy space of Lemma~\ref{lemma:separabilityA} (Appendix~\ref{appendix:infiniteFund}) which is measurable in the conditioning
state; the measurable-selection machinery of
Section~\ref{sec:compactness} produces comparison families of
exactly this form.

It is useful to record the value associated with
Definition~\ref{def:acceptableNoIdiosyncratic}.

\begin{definition}[Contract-space value]
\label{def:contractValue}
For $s\in{\cal T}$, $x'\in\Omega^{M\times I}_s$, a budget $\beta'$
and a history $\overline X$, define
\[
\bar v_s(x',\beta',\overline X)
:=
\sup\bigl\{
{\cal J}_s(X(\gamma')\mid x')
\;:\;
\gamma'\in{\cal H}^{x'}_s(\beta',\overline X)
\bigr\},
\]
with the convention $\sup\emptyset:=-\infty$.
\end{definition}

The acceptability criterion \eqref{eq:acceptabilityCriterion}
states that an acceptable strategy delivers at least the utility of
every measurable family of alternatives available at formation.
Under the assumptions introduced in the next subsection, this is the
same as delivering utility $\bar v_s(x',\beta(x'),\overline X)$ at
almost every state $x'$: acceptable strategies are the statewise
maximisers, within
${\cal H}^x_s(\beta,\overline X)$, of the contract-space value.

Definition~\ref{def:acceptableNoIdiosyncratic} is recursive and
therefore presupposes the existence of appropriate maximisers at
each stage; in particular, we must ensure that the sets of
acceptable strategies are non-empty. The next subsection introduces
a classical consumption--investment problem, together with the
well-posedness assumption we shall use, and relates the two
problems: the classical value function equals
$\bar v$, and acceptable strategies correspond to optimisers of the
classical problem.

\subsection{A classical reformulation}
We have formulated the problem in terms of an evolving series of
contracts. This is economically natural and necessary in order to
model the purchase of illiquid claims, such as long-term insurance
contracts.

We will now define two continuation problems, formulated directly in
terms of consumption streams. The self-financing problem mirrors the
acceptability criterion, and so does not allow wealth to be
redistributed between states; the redistributive problem mirrors only
the admissibility criterion, and so does.

In a complete-market setting the self-financing problem reduces the
contracting problem for homogeneous collectives to a classical
consumption-investment problem in which agents need trade only liquid
assets, namely assets whose payoffs are realised in the following
period. It is this problem that carries the economic content, and it is
the focus of our arguments. The redistributive problem is introduced
only because it will supply a technical upper bound on utility, which we
use in the proof of Theorem~\ref{thm:generalInfiniteFundLimit}.

\begin{definition}[Classical homogeneous continuation problem]
\label{def:classicalContinuationProblem}

Let
$
s\in{\cal T}$,
$
x\in
\Omega^M_s
\sqcup
\Omega^{M\times I}_s,
$
$
\beta
\in
L^1
(
{\mathbb R},
{\cal F}_s^\infty,
{\mathbb Q}_x^\infty
),
$
and
$
\overline X
\in
L^1
(
{\mathbb R}_{\ge0},
({\cal F}_t^\infty)_{t\in{\cal T}_{<s}},
{\mathbb Q}_x^\infty
).
$
Let $\Prob'=({\cal J}',\tau',\tau'_d,\eta')$ be a
problem in the sense of the discussion before
Definition~\ref{def:admissibleNoIdiosyncratic}, understood to be the
ambient problem when omitted.
A {\em stream} is an element
\[
X
\in
L^1
\!\Bigl(
{\mathbb R}_{\ge0},
({\cal F}_t^\infty)_{t\in{\cal T}},
{\mathbb Q}_x^\infty
\Bigr),
\]
i.e.\ a consumption process indexed by all of \({\cal T}\), matching the
domain on which \({\cal J}'_s(\,\cdot\mid x)\) is defined. For a consumption stream
\(X\), define
\[
C_u(X)
:=
{\mathbb E}_{{\mathbb Q}^\infty}
\!\left(
\sum_{t=u}^T
X_t
\,\middle|\,
{\cal F}_u^\infty
\right),
\qquad
u\in{\cal T}_{\ge s},
\]
the conditional value of the remaining consumption. Define the
{\em self-financing pot process} of $X$ recursively by
$R^X_{T+1}:=0$ and
\[
R^X_u
:=
\Bigl(
X_u
+
{\mathbb E}_{{\mathbb Q}^\infty_x}\bigl(R^X_{u+1}\,\big|\,{\cal F}^\infty_u\bigr)
-
\eta'_u
\Bigr)^{\!+},
\qquad
u\in{\cal T}_{>s},
\]
and, with $m_w$ as in
Definition~\ref{def:admissibleNoIdiosyncratic}, the
{\em redistributive pot process} of $X$ recursively by
$\hat R^X_{T+1}:=0$ and
\[
\hat R^X_u
:=
\Bigl(
m_u(X_u)
+
m_u\bigl(\hat R^X_{u+1}\bigr)
-
m_u(\eta'_u)
\Bigr)^{\!+},
\qquad
u\in{\cal T}_{>s}.
\]

We consider streams satisfying
\begin{enumerate}[(i)]
\item the consistency constraint
\begin{equation}
X_t=\overline X_t
\qquad
\forall\,t\in{\cal T}_{<s}
\qquad
{\mathbb Q}_x^\infty\text{-a.s.};
\label{eq:classicalConsistency}
\end{equation}
\item the non-participation constraint
\begin{equation}
X_t=0
\qquad
\forall\,t<\tau'\text{ or }t\ge\tau'_d
\qquad
{\mathbb Q}_x^\infty\text{-a.s.}
\label{eq:classicalNonParticipation}
\end{equation}
\end{enumerate}

We define the set of {\em self-financing streams}
$
{\cal X}_s^x(\Prob';\beta,\overline X)
$
to consist of all streams satisfying
\eqref{eq:classicalConsistency},
\eqref{eq:classicalNonParticipation}, the self-financing funding
constraint
\begin{equation}
X_s
+
{\mathbb E}_{{\mathbb Q}^\infty_x}\bigl(R^X_{s+1}\,\big|\,{\cal F}^\infty_s\bigr)
\leq
\beta
+
\eta'_s
\qquad
{\mathbb Q}_x^\infty\text{-a.s.},
\label{eq:homogeneousBudget}
\end{equation}
and the requirement that no pot is carried outside participation:
\begin{equation}
{\mathbb E}_{{\mathbb Q}^\infty_x}\bigl(R^X_{u+1}\,\big|\,{\cal F}^\infty_u\bigr)
=
0
\quad
\text{on }
\{u<\tau'\}\cup\{u\ge\tau'_d\},
\quad
\forall\, u \in {\cal T}_{\geq s}
\qquad
{\mathbb Q}_x^\infty\text{-a.s.}
\label{eq:noPotOutsideParticipation}
\end{equation}

We define the set of {\em redistributively feasible streams}
$
\hat{\cal X}_s^x(\Prob';\beta,\overline X)
$
to consist of all streams satisfying
\eqref{eq:classicalConsistency},
\eqref{eq:classicalNonParticipation} and the redistributive funding constraint
\begin{equation}
m_s(X_s)
+
m_s\bigl(\hat R^X_{s+1}\bigr)
\leq
m_s(\beta)
+
m_s(\eta'_s)
\qquad
{\mathbb Q}_x^\infty\text{-a.s.}
\label{eq:redistributiveBudget}
\end{equation}

The associated value function is
\[
V_s(\Prob';x,\beta,\overline X)
:=
\sup_{X\in{\cal X}_s^x(\Prob';\beta,\overline X)}
{\cal J}'_s(X\mid x).
\]
We write $\hat V_s(\Prob';x,\beta,\overline X)$ for the corresponding
supremum over the redistributively feasible set
$\hat{\cal X}_s^x(\Prob';\beta,\overline X)$.

As with ${\cal A}^x_s(\Prob';b,\overline X)$, the
problem argument is omitted when it is the ambient problem, recovering
$V_s(x,\beta,\overline X)$, ${\cal X}_s^x(\beta,\overline X)$, and
$\tau,\tau_d,\eta$ in place of $\tau',\tau'_d,\eta'$ above.
By~\eqref{eq:classicalConsistency}, every
\(X\in{\cal X}_s^x(\Prob';\beta,\overline X)\) already has
\((X_t)_{t<s}=\overline X\); consequently, given a stream \(X\) first
(for instance the realised consumption stream of an admissible strategy),
we write \({\cal X}_s^x(\Prob';\beta,X)\) as shorthand for
\({\cal X}_s^x\bigl(\Prob';\beta,(X_t)_{t<s}\bigr)\), and similarly
for $\hat{\cal X}_s^x$; membership amounts to
$X$ satisfying~\eqref{eq:classicalNonParticipation} together with
the relevant funding constraints, which involve only
\((X_t)_{t\ge s}\), since
\eqref{eq:classicalConsistency} then holds automatically. Thus only the
history of \(X\) prior to time \(s\) is relevant to the continuation
problem.
\end{definition}

The self-financing constraints still permit insurance: the payment
promised for time $u+1$ may be contingent on the state realised then,
so risks not yet resolved may be pooled.

Applying $m_u$ to the defining
inequalities shows that $m_u(R^X_u)\geq\hat R^X_u$ for every $u$, so
every self-financing stream is redistributively feasible whenever $\beta$ is
$\iota^{\Omega^\infty}{\cal F}^M_s$-measurable modulo
${\mathbb Q}^\infty_x$-null sets; this holds in particular for every
homogeneous state.
Hence at homogeneous states
${\cal X}\subseteq\hat{\cal X}$, so $V_s\le\hat V_s$.

The state variable of the self-financing problem of
Definition~\ref{def:classicalContinuationProblem}
consists only of the current continuation budget together with the
realised consumption history.

We now state the well-posedness assumption that we will use.
\begin{definition}
Let
$s\in{\cal T}$,
$x\in\Omega^M_s\sqcup\Omega^{M\times I}_s$,
\[
\beta\in L^1({\mathbb R},{\cal F}^\infty_s,{\mathbb Q}^\infty_x),
\qquad
\overline X
\in
L^1(
{\mathbb R}_{\ge0},
({\cal F}^\infty_t)_{t\in{\cal T}_{<s}},
{\mathbb Q}^\infty_x
).
\]

An {\em arg max} for the problem
$(s,x,\beta,\overline X)$
is a measurable map
\[
y\mapsto X^y,
\]
defined for every $y\in\Omega^{M\times I}_s$ at which the supremum
defining $V_s(y,\beta(y),\overline X)$ is attained, such that
$
X^y
\in
{\cal X}^y_s(\beta(y),\overline X)
$
and
\[
{\cal J}_s(X^y\mid y)
=
V_s(y,\beta(y),\overline X)
\]
for every such $y$, where $\beta(y)$ denotes the conditional
realization of $\beta$ at $y$. Because the family is defined at
every attainment state, rather than merely almost everywhere with
respect to a fixed measure, it may be used under any conditional
measure.
\end{definition}

\begin{definition}
We say that a problem $\Prob:=({\cal J}, \tau, \tau_d, \eta)$
{\em admits regular maximisers} if, for every
$s\in{\cal T}$,
$x\in\Omega^M_s\sqcup\Omega^{M\times I}_s$,
\[
\beta\in L^1({\mathbb R},{\cal F}^\infty_s,{\mathbb Q}^\infty_x),
\qquad
\overline X
\in
L^1(
{\mathbb R}_{\ge0},
({\cal F}^\infty_t)_{t\in{\cal T}_{<s}},
{\mathbb Q}^\infty_x
),
\]
the following hold.

\begin{enumerate}[(i)]

\item
There exists an arg max for
$(s,x,\beta,\overline X)$. This also holds when $\beta$ is replaced
by any ${\cal F}^\infty_s$-measurable integrable random budget
$y\mapsto\beta(y)$; moreover a single map, jointly measurable in the
state, the budget and the history, may be chosen to serve every
problem at the date $s$ at once.

\item
For ${\mathbb Q}^\infty$-almost every
$x\in\Omega^{M\times I}_s$:
${\cal J}_s(X\mid x)<+\infty$ for every redistributively feasible stream
$X\in\hat{\cal X}^x_s(\beta,\overline X)$, and
$V_s(x,\beta,\overline X)<+\infty$.

\item
For ${\mathbb Q}^\infty$-almost every
$x\in\Omega^{M\times I}_s$
satisfying
\[
{\mathbb Q}^\infty_x(\tau\le s<\tau_d)=1,
\]
the map
\[
c\mapsto V_s(x,\beta+c,\overline X)
\]
is strictly increasing and right-continuous on
$\{c:{\cal X}^x_s(\beta+c,\overline X)\neq\emptyset\}$, and
continuous at every interior point of that set; its supremum equals
\[
\sup\bigl\{{\cal J}_s(Y\mid x)
\;:\;
Y\in L^1({\mathbb R}_{\geq0},({\cal F}^\infty_t)_{t\in{\cal T}},{\mathbb Q}^\infty_x)
\bigr\},
\]
the supremum of the utilities of all integrable streams whatsoever;
and this supremum is not attained by any integrable stream.

\item
For ${\mathbb Q}^\infty$-almost every
$x\in\Omega^{M\times I}_s$
satisfying
${\mathbb Q}^\infty_x(\tau\le s<\tau_d)=1$:
\[
V_s(x,\beta,\overline X)>-\infty
\quad
\text{whenever }
{\cal X}^x_s(\beta,\overline X)\neq\emptyset
\text{ and }
\beta(x)+H^x_s>0,
\]
where
$H^x_s:={\mathbb E}_{{\mathbb Q}^\infty}\bigl(\sum_{t\ge s}\eta_t\,\big|\,\omega^\infty_s=x\bigr)$
denotes the value of the remaining contributions.

\end{enumerate}
\end{definition}

Each part of these definitions plays a clear role in
identifying optimal strategies through a backward induction argument.
We will identify sufficient conditions for
this property to hold in the next section using
function-analytic arguments. The aim of this section
is to focus on the algebraic consequences of our definitions.

The assumption that the map in (iii) is strictly increasing excludes
preferences under which an agent
may be willing to sacrifice part of their available budget in order
to improve the outcomes of other agents. The final condition in
(iii) states that a sufficiently large budget can fund utilities
arbitrarily close to the best achievable by any stream; together
with the non-attainment of the supremum, this reflects the
no-satiation assumption of the Introduction.

\begin{proposition}[Self-financing representation]
\label{prop:selfFinancingRepresentation}
Suppose ${\mathbb Q}^\infty_x(\tau'\leq s)=1$ or, more generally,
that $X$ satisfies \eqref{eq:noPotOutsideParticipation}. Given a
stream $X$, define $\Gamma(X)$ by
\begin{gather*}
\Gamma(X)_{u,t}:=X_t \quad (t \leq u),
\\
\Gamma(X)_{u,u+1}
:=
R^X_{u+1},
\\
\Gamma(X)_{u,t}:=0 \quad (t>u+1),
\end{gather*}
for $u \in {\cal T}_{\geq s}$. If
$X\in{\cal X}^x_s(\beta,\overline X)$ then
$\Gamma(X)\in{\cal A}^x_s(\beta,\overline X)$,
$X(\Gamma(X))=X$, and for every $u\in{\cal T}_{>s}$ and
${\mathbb Q}^\infty_x$-almost every $y\in\Omega^{M\times I}_u$,
\[
\beta_{y,u}\bigl(\Gamma(X)\bigr)
=
R^X_u(y).
\]
Under $\Gamma(X)$ the payment promised for time $u+1$ is
contingent on the full time-$(u+1)$ state: each state receives
exactly the pot required to finance its own continuation. This is
the representative through which acceptable strategies will be
constructed.
\end{proposition}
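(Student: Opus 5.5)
The plan is to verify the admissibility conditions of Definition~\ref{def:admissibleNoIdiosyncratic} one by one for $\gamma:=\Gamma(X)$, using only the recursive definition of the pot $R^X$ and the tower property of conditional expectation. First, integrability and nonnegativity: $R^X_{u+1}\ge 0$ by the positive part in its definition, and $R^X_{u+1}\in L^1$ by backward induction, since $X_u,\eta'_u\in L^1$ and conditional expectation preserves $L^1$. Second, the consistency conditions \eqref{eq:consistencyCondition} and \eqref{eq:consistencyConditionRealised} hold by construction, since $\Gamma(X)_{u,t}=X_t$ for $t\le u$, together with \eqref{eq:classicalConsistency}. This also gives $X(\Gamma(X))_t=\Gamma(X)_{t,t}=X_t$ for $t\ge s$ and $=\overline X_t$ for $t<s$, so $X(\Gamma(X))=X$.

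For the non-participation conditions, \eqref{eq:noConsumptionNonParticipation} is \eqref{eq:classicalNonParticipation} for $t=u$. For $t=u+1$ one needs $R^X_{u+1}=0$ on the appropriate sets, and $t>u+1$ is trivially zero. On $\{u<\tau'\}\cup\{u\ge\tau'_d\}$, \eqref{eq:noPotOutsideParticipation} gives ${\mathbb E}(R^X_{u+1}\mid{\cal F}^\infty_u)=0$. Since $R^X_{u+1}\ge0$ and the set is ${\cal F}^\infty_u$-measurable (the times are stopping times), this forces $R^X_{u+1}=0$ a.s. there, which yields \eqref{eq:nonParticipation}. For the part of \eqref{eq:noConsumptionNonParticipation} with $t=u+1$ on $\{u+1<\tau'\}\cup\{u+1\ge\tau'_d\}$, I would use $\tau<\tau_d$ and a case split. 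If $u+1<\tau'$, then $u<\tau'$ and we are in the previous case. If $u+1\ge\tau'_d$ but $u<\tau'_d$, then on that ${\cal F}_{u+1}$-set we have $X_{u+1}=\eta'_{u+1}=0$, and by the same argument at $u+1$ the term ${\mathbb E}(R^X_{u+2}\mid{\cal F}_{u+1})=0$, so $R^X_{u+1}=(0+0-0)^+=0$. When ${\mathbb Q}^\infty_x(\tau'\le s)=1$, I expect \eqref{eq:noPotOutsideParticipation} to follow by the same backward recursion from \eqref{eq:classicalNonParticipation} after $\tau'_d$, which justifies the "more generally" clause.

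For the funding conditions, in \eqref{eq:budgetJ} the left side at $s$ is $m_s(X_s)+m_s(R^X_{s+1})$. Because $\iota^{\Omega^\infty}{\cal F}^M_s\subseteq{\cal F}^\infty_s$, the tower property gives $m_s(R^X_{s+1})=m_s({\mathbb E}(R^X_{s+1}\mid{\cal F}^\infty_s))$. Applying $m_s$ to \eqref{eq:homogeneousBudget} then gives the result. For \eqref{eq:selfFinancing} at $u>s$, the right side is $m_u(\gamma_{u-1,u})+m_u(\eta'_u)=m_u(R^X_u)+m_u(\eta'_u)$. The left side is $m_u(X_u+{\mathbb E}(R^X_{u+1}\mid{\cal F}^\infty_u))$. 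The defining inequality $R^X_u\ge X_u+{\mathbb E}(R^X_{u+1}\mid{\cal F}_u)-\eta'_u$, together with monotonicity of $m_u$, finishes the step.

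Finally, $\beta_{y,u}(\Gamma(X))=\sum_{t\ge u}{\mathbb E}(\Gamma(X)_{u-1,t}\mid\omega^\infty_u=y)$ reduces to the single term ${\mathbb E}(R^X_u\mid\omega^\infty_u=y)=R^X_u(y)$, since $R^X_u$ is ${\cal F}^\infty_u$-measurable.

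The main obstacle is the conditioning mismatch. $R^X$ is defined using ${\mathbb E}_{{\mathbb Q}^\infty_x}$, while $C_u$ and $\beta$ use ${\mathbb E}_{{\mathbb Q}^\infty}$. I would handle this by noting that, for $u\ge s$ and $x$ a time-$s$ state, the regular conditional measure ${\mathbb Q}^\infty_x$ has conditional expectations given ${\cal F}^\infty_u$ agreeing ${\mathbb Q}^\infty_x$-a.s. with those of ${\mathbb Q}^\infty$. This is the disintegration consistency from the product and standard-space setup.
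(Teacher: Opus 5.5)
Your proposal is correct and follows the paper's proof essentially step for step. You get nonnegativity, integrability and consistency by construction; the carry leg vanishes outside participation by \eqref{eq:noPotOutsideParticipation} together with $R^X\ge0$ (the paper handles the post-exit part by backward induction through the pot recursion instead); \eqref{eq:budgetJ} comes from applying $m_s$ to \eqref{eq:homogeneousBudget}; \eqref{eq:selfFinancing} comes from the defining inequality of $R^X_u$; and $\beta_{y,u}(\Gamma(X))=R^X_u(y)$ follows from ${\cal F}^\infty_u$-measurability. The tower-property and disintegration points you flag are left implicit in the paper.

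One caveat applies equally to the paper. The definition sets ${\cal A}^x_s(\beta,\overline X)=\emptyset$ whenever ${\mathbb Q}^\infty_x(\beta<0)>0$, while \eqref{eq:homogeneousBudget} only forces $\beta\ge-\eta'_s$. So the conclusion $\Gamma(X)\in{\cal A}^x_s(\beta,\overline X)$ tacitly needs $\beta\ge0$ ${\mathbb Q}^\infty_x$-a.s.
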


A matching result holds for the redistributive problem: every
redistributively feasible stream is likewise realised by an admissible
strategy, through a redistributive representative $\hat\Gamma(X)$ built from
$\hat R^X$ in place of $R^X$. Since it plays no part in the economic
argument, we defer it to
Proposition~\ref{prop:redistributiveRepresentation} in
Appendix~\ref{appendix:infiniteFund}.

\begin{remark}[Carried-forward form]
\label{rem:carriedForward}
A stream $X$ satisfies \eqref{eq:homogeneousBudget} if and only if
there exists a process $(P_u)_{u\in{\cal T}_{>s}}$, with each $P_u$
nonnegative, integrable and ${\cal F}^\infty_u$-measurable, such that,
setting $P_s:=\beta$ and $P_{T+1}:=0$,
\[
X_u+{\mathbb E}_{{\mathbb Q}^\infty_x}\bigl(P_{u+1}\,\big|\,{\cal F}^\infty_u\bigr)
\leq P_u+\eta'_u
\qquad
{\mathbb Q}_x^\infty\text{-a.s.}
\]
for every $u\in{\cal T}_{\geq s}$: the pot carried into time $u+1$ is
what remains of the pot entering time $u$, plus the contribution
received, after paying for time-$u$ consumption. Indeed, backward
induction shows that any such process satisfies $P_u\geq R^X_u$, while
$(R^X_u)_{u\in{\cal T}_{>s}}$ itself has the stated properties whenever
\eqref{eq:homogeneousBudget} holds. A stream is thus equivalently
described, one period at a time, by its current consumption together
with the pot it hands on to the next period. In particular, if
$X\in{\cal X}^x_s(\beta,\overline X)$ then, for every
$u\in{\cal T}_{>s}$, the deferred stream $X$ satisfies the
constraints of ${\cal X}^y_u\bigl(R^X_u(y),X\bigr)$ for
${\mathbb Q}^\infty_x$-almost every $y\in\Omega^{M\times I}_u$: a
self-financing stream is self-financing along its own tails, with
budgets given by its own self-financing pots. The analogous statement
holds for the redistributive pot process, with $m_u$ in place of the
conditional expectations
${\mathbb E}_{{\mathbb Q}^\infty_x}(\,\cdot\,|\,{\cal F}^\infty_u)$
and \eqref{eq:redistributiveBudget} in place of
\eqref{eq:homogeneousBudget}.
\end{remark}

For
$
X
\in
{\cal X}_s^x(\beta,\overline X)
$
and
$
u>s,
$
suppose we are given a nonnegative, integrable,
${\cal F}^\infty_u$-measurable random budget $y\mapsto B(y)$ on
$\Omega_u^{M\times I}$ and a measurable family
\[
y
\mapsto
Y^y
\in
{\cal X}_u^y
\bigl(
B(y),
X
\bigr),
\qquad
y\in\Omega_u^{M\times I},
\]
defined for ${\mathbb Q}^\infty_x$-almost every compatible $y$.
Define the {\em continuation splice}
$
X\sqcup_{u}Y
$
by
\[
(X\sqcup_{u}Y)_t(\omega)
=
\begin{cases}
X_t(\omega),
&
t<u,
\\[0.5ex]
Y^{\omega^\infty_u(\omega)}_t(\omega),
&
t\ge u.
\end{cases}
\]

\begin{lemma}[Continuation splicing]
\label{lemma:continuationSplicing}
In the setting above, suppose additionally that
\[
B\leq R^X_u
\qquad
{\mathbb Q}^\infty_x\text{-almost surely}.
\]
Then
\[
X\sqcup_{u}Y
\in
{\cal X}_s^x(\beta,\overline X),
\]
and, for ${\mathbb Q}^\infty_x$-almost every $y$, the tail of
$X\sqcup_u Y$ at $y$ is the tail of $Y^y$, so that
\[
{\cal J}_u\bigl(X\sqcup_u Y\,\big|\,y\bigr)
=
{\cal J}_u\bigl(Y^y\,\big|\,y\bigr).
\]
\end{lemma}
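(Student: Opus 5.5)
The plan is to use the carried-forward characterisation of Remark~\ref{rem:carriedForward}. Write $Z:=X\sqcup_u Y$. Rather than computing $R^Z$ directly, I will exhibit a nonnegative, integrable, adapted process $(P_r)_{r\in{\cal T}_{>s}}$ with $P_s:=\beta$, $P_{T+1}:=0$, and
\[
Z_r+{\mathbb E}_{{\mathbb Q}^\infty_x}\bigl(P_{r+1}\,\big|\,{\cal F}^\infty_r\bigr)\le P_r+\eta_r
\qquad {\mathbb Q}^\infty_x\text{-a.s.}
\]
for each $r\ge s$. The Remark then gives \eqref{eq:homogeneousBudget} for $Z$ and also $R^Z_r\le P_r$ for all $r$. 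The consistency constraint \eqref{eq:classicalConsistency} holds because $u>s$ and $Z_t=X_t=\overline X_t$ for $t<s$. The non-participation constraint \eqref{eq:classicalNonParticipation} holds on $t<u$ from $X$, and on $t\ge u$ from $Y^y$, since $Y^y$ satisfies it under ${\mathbb Q}^\infty_y$ and the disintegration glues these statements into an a.s.\ statement under ${\mathbb Q}^\infty_x$.

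I will build $P$ by splicing the two pot processes. For $s<r<u$ set $P_r:=R^X_r$. For $r\ge u$ set $P_r:=R^{Y}_r$, meaning the pot process of $Y^{\omega^\infty_u(\omega)}$ computed under ${\mathbb Q}^\infty_y$ and reassembled along the disintegration. Joint measurability of this reassembly comes from measurability of the family $y\mapsto Y^y$ together with measurability of conditional expectations in the disintegration parameter; this is where the regular conditional probabilities of Section~\ref{sec:notation} are used.

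The verification then splits into three ranges of $r$.
\begin{itemize}
\item For $r<u-1$ the inequality is just the defining recursion of $R^X$.
\item For $r\ge u$ it is the recursion for $R^{Y^y}$ under ${\mathbb Q}^\infty_y$. Conditioning on ${\cal F}^\infty_r\supseteq{\cal F}^\infty_u$ commutes with the disintegration at time $u$, so the inequality transfers to ${\mathbb Q}^\infty_x$. At $r=u$ the required bound $P_u+\eta_u$ is replaced by $B+\eta_u$ via $Y^y$'s own budget constraint; since $P_u=R^{Y}_u\le B\le R^X_u$, I will instead take $P_u:=B$ at time $u$, which is $\ge R^{Y}_u$ by \eqref{eq:homogeneousBudget} for $Y^y$.
\item The junction $r=u-1$ (or $r=s$ when $u=s+1$) is the key step. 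Here $Z_{u-1}=X_{u-1}$ and ${\mathbb E}(P_u\mid{\cal F}^\infty_{u-1})={\mathbb E}(B\mid{\cal F}^\infty_{u-1})\le{\mathbb E}(R^X_u\mid{\cal F}^\infty_{u-1})$ by the hypothesis $B\le R^X_u$ and monotonicity of conditional expectation. The inequality is therefore inherited from $X$'s recursion, or from $X$'s budget constraint \eqref{eq:homogeneousBudget} when $u-1=s$.
\end{itemize}

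Condition \eqref{eq:noPotOutsideParticipation} follows the same split. On $\{r\ge u\}$ it is inherited from $Y^y$. On $\{r<u\}$ we have $0\le R^Z_{r+1}\le P_{r+1}$, and at the junction $P_u=B\le R^X_u$, so the condition is inherited from $X$. The identity ${\cal J}_u(Z\mid y)={\cal J}_u(Y^y\mid y)$ holds because under ${\mathbb Q}^\infty_y$ the streams $Z$ and $Y^y$ agree at all times: for $t\ge u$ by definition, and for $t<u$ because $Y^y\in{\cal X}^y_u(B(y),X)$ forces $Y^y_t=X_t$. Preferences at $y$ depend only on the stream up to ${\mathbb Q}^\infty_y$-null sets, and the canonical factor is handled by the induced preferences of Definition~\ref{def:inducedPreferences}.

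The main obstacle is measure-theoretic rather than economic. I need to justify that the spliced objects, defined fibrewise for almost every $y$, are genuine integrable adapted processes under ${\mathbb Q}^\infty_x$, and that fibrewise conditional-expectation inequalities reassemble into global ones. My first approach is to use the time-$u$ disintegration $(\mu_y)$, together with the fact that ${\cal F}^\infty_r$ for $r\ge u$ is generated modulo null sets compatibly with the fibres. Integrability of $P_r$ follows from $0\le{\mathbb E}(P_r)\le{\mathbb E}(B)+\sum_t{\mathbb E}(\eta_t)<\infty$.
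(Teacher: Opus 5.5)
Your proposal is correct and is essentially the paper's argument. The paper glues the tails via Lemma~\ref{lemma:gluingConditionalContinuations} to get $R^Z_u\le B\le R^X_u$, then runs the monotone pot recursion backwards to obtain $R^Z_t\le R^X_t$ for $t\le u$; your dominating process $P$ (equal to $R^X$ before $u$, to $B$ at $u$, and to the glued pots after $u$) certifies exactly this bound through the backward induction of Remark~\ref{rem:carriedForward}. The only slip is cosmetic: for $r=s<u-1$ the required inequality comes from the funding constraint \eqref{eq:homogeneousBudget} of $X$, not from the recursion defining $R^X$.
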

See Appendix~\ref{appendix:infiniteFund} for the proofs of
Proposition~\ref{prop:selfFinancingRepresentation} and
Lemma~\ref{lemma:continuationSplicing}.

The lemma formalises the idea of replacing all the continuations at
a given date, without modifying the past: if each state's new
continuation is self-financing from a pot no larger than the pot the
original stream would have carried into that state, then the spliced
stream is again self-financing.

The next proposition is the point at which the
continuation-consistency axiom enters the theory. The self-financing
representation identifies continuation budgets with self-financing pots,
while the splicing lemma allows
continuations to be replaced without altering the realised past.
Continuation consistency then implies that optimality of every
future continuation is equivalent to optimality of the current
continuation.

Consequently the recursive notion of acceptability is reduced to a
classical dynamic programming problem.

\begin{proposition}[Reduction to the classical continuation problem]
\label{prop:reductionClassical}

Assume that $\Prob$ admits regular maximisers.

Let
$
s\in{\cal T}$,
$x\in\Omega_s^{M\times I},
$
and suppose that
${\mathbb Q}^\infty_x(\tau\le s<\tau_d)=1$.
For
$
\gamma
\in
{\cal A}_s^x(\beta,\overline X),
$
the following are equivalent.
\begin{enumerate}[(i)]

\item
$
\gamma
\in
{\cal G}_s^x(\beta,\overline X).
$
\item
$
\gamma
\in
{\cal H}_s^x(\beta,\overline X)
$
and
$
{\cal J}_s(X(\gamma)\mid x)
=
V_s
\bigl(
x,
\beta,
\overline X
\bigr).
$
\end{enumerate}
Moreover, there exists an optimiser
$
X^\circ
\in
{\cal X}_s^x(\beta,\overline X)
$
which is statewise optimal at every future date, and every such
optimiser induces an acceptable strategy through its self-financing
representative
\(\Gamma(X^\circ)\).

\end{proposition}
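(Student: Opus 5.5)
The proof is a backward induction on $s$, proving the equivalence and the existence statement together. Alongside them we prove an auxiliary identity: for every $\gamma\in{\cal H}^x_s(\beta,\overline X)$ the realised stream $X(\gamma)$ lies in ${\cal X}^x_s(\beta,\overline X)$, so that $\bar v_s(x,\beta,\overline X)=V_s(x,\beta,\overline X)$.

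\emph{Base case $s=T$.} The screening criterion is vacuous, so ${\cal H}^x_T={\cal A}^x_T$. An admissible strategy at $T$ is just a terminal consumption $X_T$ with $X_T\le\beta+\eta_T$ in per-capita value. Two steps reduce the statement to a statewise utility comparison:
\begin{itemize}
\item Restricting attention to self-financing streams only loses freedom to redistribute across states.
\item Monotonicity, together with the fact that breakaway comparisons in \eqref{eq:acceptabilityCriterion} are taken statewise with budget $\beta(x')$, then reduces acceptability to ${\cal J}_T(X(\gamma)\mid x')\ge V_T(x',\beta(x'),\overline X)$ almost everywhere.
\end{itemize}
The arg max of regular maximisers (i) supplies an element of ${\cal X}^{x'}_T$, and Proposition~\ref{prop:selfFinancingRepresentation} turns it into an element of ${\cal A}^{x'}_T$. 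This gives (i)$\Leftrightarrow$(ii) and existence.

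\emph{Inductive step.} Fix $s<T$ and assume the claims at all later dates. For (i)$\Rightarrow$(ii), take $\gamma\in{\cal G}^x_s$ and compare against the family $x'\mapsto\Gamma(X^{x'})$, where $X^{x'}$ is the arg max at $(s,x',\beta(x'),\overline X)$.
\begin{itemize}
\item We first show $\Gamma(X^{x'})\in{\cal H}^{x'}_s$. By Remark~\ref{rem:carriedForward}, its tails are self-financing with budgets $R^{X}_u$. Choose $X^{x'}$ statewise optimal at every future date; this is the ``moreover'' optimiser, built below. By the induction hypothesis any ${\cal G}$-continuation realises at most $V_u(y,R^X_u(y),X)$, which $X^{x'}$ attains, so the screening criterion holds.
\item The acceptability criterion then gives ${\cal J}_s(X(\gamma)\mid x)\ge V_s$.
\end{itemize}
For the reverse inequality we use the auxiliary identity. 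Given $\gamma\in{\cal H}^x_s$, take any $u>s$ at which $X(\gamma)$ violates the statewise self-financing constraint, i.e.\ the realised continuation of some state costs more than its pot $\beta_{y,u}(\gamma)$. Screening forces $X(\gamma)$ to dominate the optimal breakaway value $V_u(y,\beta_{y,u}(\gamma),X(\gamma))$, and by the induction hypothesis this value is attained by an acceptable strategy. Strict monotonicity and non-satiation (regular maximisers (iii)) imply that a stream doing at least as well as the statewise optimum cannot be strictly cheaper. Summing per-capita budgets via \eqref{eq:receivedToDate} then forces equality state by state. Hence the pots of $X(\gamma)$ satisfy the carried-forward inequalities of Remark~\ref{rem:carriedForward}, so $X(\gamma)\in{\cal X}^x_s$ and ${\cal J}_s(X(\gamma)\mid x)\le V_s$.

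Conversely, for (ii)$\Rightarrow$(i): if $\gamma\in{\cal H}^x_s$ attains $V_s$ statewise, then every comparison family in ${\cal H}^{x'}_s$ realises a stream in ${\cal X}^{x'}_s$ by the auxiliary identity. Its utility is therefore at most $V_s$, so the acceptability criterion holds.

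\emph{Existence of a time-consistent optimiser.} We build it by splicing backwards. Start from the arg max at $s$, then at each $u>s$ replace its continuation by the jointly measurable arg max at budget $R^X_u(y)$. Lemma~\ref{lemma:continuationSplicing} keeps the spliced stream in ${\cal X}^x_s$ and makes its tail the optimiser at $u$. Continuation consistency shows the new time-$s$ value is no lower than before, so it is still $V_s$. Iterating over $u=T,T-1,\dots,s+1$ yields $X^\circ$, and $\Gamma(X^\circ)$ is acceptable by the argument for (ii)$\Rightarrow$(i).

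\emph{Main obstacle.} The hardest step is the auxiliary identity, i.e.\ showing that screening forbids cross-subsidies between states. The problem is that non-strict domination of the breakaway value only yields the cost bound after invoking (iii) carefully, including at boundary budgets where $V_u$ may equal $-\infty$; there we would use regular maximisers (iv) together with right-continuity. Measurability of the comparison families is supplied by the joint measurability in regular maximisers (i).
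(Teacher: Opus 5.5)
Your argument is correct in outline, but you reach the decisive upper bound by a different route from the paper. That bound says every $\gamma'\in{\cal H}^x_s(\beta,\overline X)$ delivers at most $V_s(x,\beta,\overline X)$.

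The paper does not show that $X(\gamma')$ is self-financing. It applies Lemma~\ref{lemma:deliveredValueBound}(iii), which works with the abstract statewise cost $\kappa_u(x')$: the least budget at which $V_u$ reaches the delivered utility ${\cal J}_u(X\mid x')$. Screening bounds $\kappa_u$ below by $\beta_{x',u}(\gamma')$. A backward chain bounds ${\mathbb E}(\kappa_s)$ above, using the verification inequality (Lemma~\ref{lemma:verification}, where continuation consistency enters) and conservation of continuation wealth (Lemma~\ref{lemma:continuationWealthConservation}).

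You instead use the stream's own pots. The tail of $X:=X(\gamma')$ is feasible at the untruncated budget $X_u+{\mathbb E}(R^X_{u+1}\mid{\cal F}^\infty_u)-\eta_u$. Screening, strict monotonicity of $V_u$, and part (iv) of regular maximisers on the $-\infty$ fibres then force this quantity to be at least $\beta_{x',u}(\gamma')\ge0$. So the positive part in the pot recursion never bites and the pots telescope. The lifetime budget, \eqref{eq:receivedToDate} at $u=T$, then gives $X\in{\cal X}^x_s(\beta,\overline X)$ outright.

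Your route is more elementary: the verification inequality is not needed, and continuation consistency enters only in building time-consistent optimisers. It also proves the stronger fact that every screened strategy realises a self-financing stream. The paper's $\kappa$-chain needs only lower bounds on delivered utility, not the pot structure of a realised strategy. That is exactly what is available when the same chain is rerun on the limit stream $X^{\infty,1}$ in Theorem~\ref{thm:generalInfiniteFundLimit}, and it is what the paper gains by taking the longer route.

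Two details need repair.
\begin{enumerate}
\item Your ``equality state by state'' does not follow from \eqref{eq:receivedToDate}, which only values aggregates from time $s$. It needs the renegotiation conditions \eqref{eq:selfFinancing} telescoped from each $u$. As shown above, you do not actually need that equality.
\item Splicing plain arg maxes in the order $u=T,T-1,\dots,s+1$ fails. Each subsequent splice, at an earlier date, overwrites the tails already made optimal. Splice in increasing order of $u$ instead, or splice once at $s+1$ using the time-consistent optimisers your induction hypothesis supplies, as in Step~1 of the proof of Proposition~\ref{prop:acceptableFromOptimisers}.
\end{enumerate}
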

See Appendix~\ref{appendix:infiniteFund} for the proof.

\begin{corollary}[Coincidence of the contract-space and classical values]
\label{cor:valueCoincidence}
Under the hypotheses of
Proposition~\ref{prop:reductionClassical},
\[
\bar v_s(x,\beta,\overline X)
=
V_s(x,\beta,\overline X),
\]
and the supremum defining $\bar v_s$ is attained, exactly by the
elements of ${\cal G}^x_s(\beta,\overline X)$.
\end{corollary}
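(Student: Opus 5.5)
The plan is to prove the two inequalities $\bar v_s\le V_s$ and $\bar v_s\ge V_s$ separately, and to read off attainment from Proposition~\ref{prop:reductionClassical}. Throughout, the hypotheses are those of that proposition: $\Prob$ admits regular maximisers, $x\in\Omega^{M\times I}_s$, and ${\mathbb Q}^\infty_x(\tau\le s<\tau_d)=1$.

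\emph{Lower bound.} Proposition~\ref{prop:reductionClassical} supplies an optimiser $X^\circ\in{\cal X}^x_s(\beta,\overline X)$ that is statewise optimal at every future date. It also says that $\Gamma(X^\circ)$ is acceptable, so $\Gamma(X^\circ)\in{\cal G}^x_s(\beta,\overline X)\subseteq{\cal H}^x_s(\beta,\overline X)$. By Proposition~\ref{prop:selfFinancingRepresentation}, $X(\Gamma(X^\circ))=X^\circ$. Hence
\[
\bar v_s(x,\beta,\overline X)\ \ge\ {\cal J}_s(X^\circ\mid x)\ =\ V_s(x,\beta,\overline X).
\]
The supremum is therefore at least $V_s$, and it is attained at $\Gamma(X^\circ)$ once the upper bound is in hand.

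\emph{Upper bound.} This is the main step. Take $\gamma'\in{\cal H}^x_s(\beta,\overline X)$; we must show ${\cal J}_s(X(\gamma')\mid x)\le V_s(x,\beta,\overline X)$. I would use the family comparison in the acceptability criterion, applied to the acceptable strategy $\gamma:=\Gamma(X^\circ)$. Since $\gamma\in{\cal G}^x_s$, inequality \eqref{eq:acceptabilityCriterion} gives ${\cal J}_s(X(\gamma)\mid x')\ge{\cal J}_s(X(\gamma'^{,x'})\mid x')$ for every measurable family $\gamma'^{,x'}\in{\cal H}^{x'}_s(\beta(x'),\overline X)$. The difficulty is that this holds only for ${\mathbb Q}^\infty_x$-a.e.\ $x'$, whereas we want a pointwise statement at $x$.

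Because $x$ is already an agent state in $\Omega^{M\times I}_s$, the conditional measure ${\mathbb Q}^\infty_x$ pushed to $\Omega^{M\times I}_s$ is $\delta_x$. The a.e.\ statement therefore collapses to the state $x$ itself, with $\beta(x')=\beta$. We can then take the constant family $x'\mapsto\gamma'$; its measurability is trivial, since it is constant in the separable strategy space of Lemma~\ref{lemma:separabilityA}. This gives ${\cal J}_s(X(\gamma')\mid x)\le{\cal J}_s(X^\circ\mid x)=V_s$. Taking the supremum yields $\bar v_s\le V_s$, and so equality holds.

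The point I would check most carefully is exactly this degeneracy: that ${\mathbb Q}^\infty_x$-a.e.\ $x'$ genuinely reduces to $x'=x$. This follows from the disintegration property $\mu_x((\omega_s)^{-1}\{x\})=1$. If it failed, I would instead appeal to the regular-maximiser clause~(i), which gives everywhere-defined arg max families, and compare at $x$ directly.

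\emph{Attainment.} For attainment exactly by the elements of ${\cal G}^x_s$, we now have $\bar v_s=V_s$. By the equivalence (i)$\Leftrightarrow$(ii) of Proposition~\ref{prop:reductionClassical}, $\gamma\in{\cal H}^x_s$ attains $\bar v_s$, that is ${\cal J}_s(X(\gamma)\mid x)=V_s$, if and only if $\gamma\in{\cal G}^x_s(\beta,\overline X)$. Nonemptiness of ${\cal G}^x_s$ comes from $\Gamma(X^\circ)$.
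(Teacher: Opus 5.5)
Your proof is correct. It differs from the paper's only in where the upper bound $\bar v_s\le V_s$ comes from.

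\textbf{The paper's route.} The paper takes the upper bound directly from Lemma~\ref{lemma:deliveredValueBound}(iii). At the homogeneous state $x$, the cost of the utility delivered by any $\gamma'\in{\cal H}^x_s(\beta,\overline X)$ is at most $\beta$, so that utility is at most $V_s(x,\beta,\overline X)$. Attainment then comes from Proposition~\ref{prop:acceptableFromOptimisers}, and the characterisation of the maximisers from Proposition~\ref{prop:reductionClassical}, exactly as in your last paragraph.

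\textbf{Your route.} You derive the upper bound from the defining property of ${\cal G}^x_s$. You apply the acceptability criterion of the single acceptable strategy $\Gamma(X^\circ)$, tested against the constant family $\gamma'$. This is legitimate because ${\mathbb Q}^\infty_x$ concentrates on the fibre over $x$. The paper makes exactly this move when proving (i)$\Rightarrow$(ii) in Proposition~\ref{prop:reductionClassical}, so your fallback via arg max families is unnecessary.

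\textbf{What each buys.} Your route makes the corollary a purely formal consequence of the statement of Proposition~\ref{prop:reductionClassical} together with $X(\Gamma(X^\circ))=X^\circ$, without citing the appendix lemma. The paper's route bounds every screened strategy without needing an acceptable strategy to exist, and it exhibits the actual source of the bound, namely the cost-of-delivered-utility chain.

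The two routes are not logically independent. The acceptability of $\Gamma(X^\circ)$ that you invoke is itself verified, in Step~4 of the proof of Proposition~\ref{prop:acceptableFromOptimisers}, by the same Lemma~\ref{lemma:deliveredValueBound}(iii).
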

\begin{proof}
Lemma~\ref{lemma:deliveredValueBound}(iii), applied at the
homogeneous state $x$, bounds the utility delivered by any element
of ${\cal H}^x_s(\beta,\overline X)$ by $V_s(x,\beta,\overline X)$,
so $\bar v_s\leq V_s$. By
Proposition~\ref{prop:acceptableFromOptimisers} the bound is
attained by an element of
${\cal G}^x_s(\beta,\overline X)\subseteq{\cal H}^x_s(\beta,\overline X)$,
so $\bar v_s=V_s$; and by
Proposition~\ref{prop:reductionClassical} the elements of
${\cal G}^x_s(\beta,\overline X)$ are exactly the members of
${\cal H}^x_s(\beta,\overline X)$ attaining it.
\end{proof}

The proposition identifies the role of continuation consistency: it
is precisely the property that allows the recursive contracting
problem to collapse to a classical dynamic programming problem. In
the notation of the proof outline in the Introduction, this
establishes $N_0H_0T_0=N_0H_0T_1$.

\subsection{No mutually beneficial contracts between states}

The admissible strategy space permits arbitrary long-dated contingent
commitments and transfers of continuation wealth across future
states. Since such contracts are unavailable in a classical
consumption-investment problem, a priori one would expect them to generate
additional welfare gains. The previous section shows that for the homogeneous problem this additional flexibility brings no welfare gains.

The next theorem shows that the same applies to the inhomogeneous problem and, moreover, there is no welfare gain from allowing agents in different states to form contracts with one another.

\begin{theorem}[No mutually beneficial contracts across states]
\label{thm:noMutuallyBeneficialContracts}

Assume that $\Prob$ admits regular maximisers.

Let
$
s\in{\cal T}$,
$
x\in
\Omega^M_s
\sqcup
\Omega^{M\times I}_s$,
and let
$
\gamma
\in
{\cal G}_s^x(\beta,\overline X)$.
Suppose that
${\mathbb Q}^\infty_x(\tau\le s<\tau_d)=1$.
Then:
\begin{enumerate}[(i)]
\item
the issuance-funding condition~\eqref{eq:budgetJ} and every
self-financing renegotiation condition~\eqref{eq:selfFinancing} hold
with equality;
\item
for
\(
{\mathbb Q}_x^\infty
\)-almost every compatible state
$
y\in\Omega_s^{M\times I},
$
\[
{\cal J}_s(X(\gamma)\mid y)
=
V_s\bigl(y,\beta(y),\overline X\bigr);
\]
\item
for every $u\in{\cal T}_{>s}$ and
\(
{\mathbb Q}_x^\infty
\)-almost every compatible state
$
x'\in\Omega_u^{M\times I},
$
\[
{\cal J}_u(X(\gamma)\mid x')
=
V_u\bigl(x',\beta_{x',u}(\gamma),X(\gamma)\bigr).
\]
\end{enumerate}

Consequently the utility delivered to every state, at formation and
at every subsequent date, equals the value of the corresponding
self-financing homogeneous continuation problem. In particular, no
contractual rearrangement of continuation
wealth across states can generate a strict welfare
improvement over statewise self-financing planning.

\end{theorem}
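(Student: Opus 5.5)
The plan is to sandwich the utility delivered to each state between a lower bound coming from the acceptability (respectively screening) criterion and an upper bound coming from the cost of the stream that state actually receives. Strict monotonicity of $c\mapsto V_s(y,\beta(y)+c,\overline X)$, from regular maximisers (iii), then forces the budgets to coincide. Write $X:=X(\gamma)$.

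\emph{Lower bound at formation.} By regular maximisers (i), choose a single jointly measurable arg max $y\mapsto X^y\in{\cal X}^y_s(\beta(y),\overline X)$. By Proposition~\ref{prop:reductionClassical}, take the optimisers statewise optimal at every future date, so that each self-financing representative $\Gamma(X^y)$ lies in ${\cal G}^y_s(\beta(y),\overline X)\subseteq{\cal H}^y_s(\beta(y),\overline X)$. This is an admissible comparison family in~\eqref{eq:acceptabilityCriterion}, and it gives
\[
{\cal J}_s(X\mid y)\ \ge\ V_s\bigl(y,\beta(y),\overline X\bigr)
\]
for ${\mathbb Q}^\infty_x$-a.e.\ $y$.

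\emph{Upper bound via the implicit budget.} Let $R^X$ be the self-financing pot process of $X$ from Definition~\ref{def:classicalContinuationProblem}, extended to time $s$ by the same recursion, and set $b(y):=R^X_s(y)$. By Remark~\ref{rem:carriedForward}, $X\in{\cal X}^y_s(b(y),\overline X)$, so ${\cal J}_s(X\mid y)\le V_s(y,b(y),\overline X)$. Combining the two bounds gives $V_s(y,b(y),\overline X)\ge V_s(y,\beta(y),\overline X)$.

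Two cases need care. Values $+\infty$ are excluded by (ii). The value $-\infty$ can only occur where $\beta(y)+H^y_s\le0$, by (iv); on that set the budget is already minimal, since $X\ge0$ and $R^X\ge0$. Away from these cases, strict monotonicity yields $b\ge\beta$ a.e.

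Next I show $m_s(b)\le m_s(\beta)$, which is the step I expect to be the main obstacle. It requires comparing the self-financing pot $R^X$ with the contract values $\sum_t\gamma_{u-1,t}$. I would argue by backward induction, using that all $\gamma_{u,t}\ge0$ and the consistency conditions. First, $\sum_{t\ge u}\gamma_{u-1,t}$ dominates a valid carried-forward pot after conditioning on ${\cal F}^\infty_u$; by Remark~\ref{rem:carriedForward}, such a pot lies above $R^X_u$. Taking $m$ of these inequalities, and noting that $m$ commutes with conditional expectation and is monotone, the positive parts cause no trouble because the contract side is already nonnegative. Together with~\eqref{eq:budgetJ} and~\eqref{eq:selfFinancing}, this chain gives $m_s(b)\le m_s(\beta)$. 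It is exactly here that the ability to redistribute across states is neutralised: the funding conditions only control per-capita market values. Then $b-\beta\ge0$ with $m_s(b-\beta)\le0$ forces $b=\beta$ ${\mathbb Q}^\infty_x$-a.s. Hence every inequality in the chain is an equality, which gives~(i), and ${\cal J}_s(X\mid y)=V_s(y,\beta(y),\overline X)$, which gives~(ii).

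\emph{Later dates.} For (iii), repeat the argument at each $u>s$, replacing the acceptability criterion by the screening criterion~\eqref{eq:screeningCriterion} with comparison family $\Gamma(X^{x'})$, where $X^{x'}$ is an arg max for $(u,x',\beta_{x',u}(\gamma),X)$. This gives ${\cal J}_u(X\mid x')\ge V_u(x',\beta_{x',u}(\gamma),X)$. The upper bound uses $X\in{\cal X}^{x'}_u(R^X_u(x'),X)$ from Remark~\ref{rem:carriedForward}. The equalities established in~(i) identify $R^X_u(x')$ with $\beta_{x',u}(\gamma)$ a.e., by the same $b\ge\beta$ plus equal-$m_u$ argument. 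States outside participation are handled by the non-participation conditions and~\eqref{eq:noPotOutsideParticipation}. The concluding sentence of the theorem is then just a restatement of~(ii) and~(iii).
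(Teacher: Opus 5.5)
Your lower bound (acceptability tested against acceptable, statewise-optimal comparison strategies) and your use of strict monotonicity are fine. The step you flag as the main obstacle, however, fails as described, and it carries the whole content of the theorem.

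The funding conditions \eqref{eq:budgetJ} and \eqref{eq:selfFinancing} hold only after applying the market-conditional operators $m_u$; they impose nothing statewise. So $\beta_{\cdot,u}(\gamma)$ need not dominate any carried-forward pot of $X$, and $m_s(R^X_s)\le m_s(\beta)$ does not follow from nonnegativity, consistency and funding. Indeed, applying $m_u$ to the pot recursion gives $m_u(R^X_u)\ge\hat R^X_u$, which is the wrong direction, and admissibility controls only the redistributive pot (Proposition~\ref{prop:redistributiveRepresentation}(i)).

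Here is a concrete example. Take a trivial market, homogeneous $x$, $\beta=0$, $\eta_s=0$, all agents participating at $s+1$, and nothing paid or contributed afterwards. Let there be two equally likely states $A,B$ at $s+1$, with $\eta_{s+1}=2$ on $A$ and $0$ on $B$. Set $\gamma_{s,t}=0$ for all $t$ and $\gamma_{s+1,s+1}=1$. This $\gamma$ is admissible, yet $\beta_{B,s+1}(\gamma)=0<1=R^X_{s+1}(B)$ and $R^X_s=\tfrac12>0=\beta$. State $A$'s surplus has been moved to $B$: the per-capita conditions cannot see this, but the positive part in $R^X$ does. What rules it out is the screening criterion at $s+1$, since agents in state $A$ would walk away with their contribution. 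Your proposal invokes screening only afterwards, for (iii), and even there closes with the same ``equal-$m_u$'' step. So the breakaway constraints at later dates never enter the budget comparison where they are needed.

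The missing idea is to exploit screening at every later date before closing the budget at $s$. The paper does this with the cost of delivered utility $\kappa_u$ of Lemma~\ref{lemma:deliveredValueBound}, rather than the cost $R^X$ of the stream itself:
\begin{enumerate}[(1)]
\item Screening and strict monotonicity give $\kappa_u\ge\beta_{\cdot,u}(\gamma)\ge0$ for $u>s$.
\item The verification inequality (Lemma~\ref{lemma:verification}, where continuation consistency enters) gives $\kappa_u\le X_u+{\mathbb E}(\kappa_{u+1}\mid{\cal F}^\infty_u)-\eta_u$, with no positive part.
\item Integrating and comparing with conservation (Lemma~\ref{lemma:continuationWealthConservation}) gives ${\mathbb E}(\kappa_s)\le{\mathbb E}(\beta)$.
\item Acceptability gives $\kappa_s\ge\beta$, and equality then propagates through the chain.
\end{enumerate}

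Your own route can also be repaired by reordering it as a backward induction from $T$, with $\beta_{\cdot,T+1}:=0$:
\begin{enumerate}[(1)]
\item Suppose $R^X_{u'}=\beta_{\cdot,u'}(\gamma)$ is known for all $u'>u$. Then the tail of $X$ is feasible at the \emph{unclipped}, possibly negative, budget $Q_u:=X_u+{\mathbb E}(\beta_{\cdot,u+1}(\gamma)\mid{\cal F}^\infty_u)-\eta_u$.
\item Screening and strict monotonicity give $Q_u\ge\beta_{\cdot,u}(\gamma)$ statewise.
\item The funding condition at $u$ is exactly $m_u(Q_u)\le m_u(\beta_{\cdot,u}(\gamma))$, which forces $Q_u=\beta_{\cdot,u}(\gamma)=R^X_u$.
\end{enumerate}
Only when this induction reaches $u=s$, with acceptability in place of screening, do (i)--(iii) follow.
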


The proof is given in Appendix~\ref{appendix:noMutuallyBeneficial}. We
outline the argument. The acceptability criterion, together
with the existence of acceptable strategies built from statewise
optimisers (Proposition~\ref{prop:reductionClassical}), shows that
every state receives at least its self-financing value. For the
reverse inequality, we measure the cost of what each state actually
receives: for each future state, the statewise budget needed to
deliver the utility that \(\gamma\) delivers there. The breakaway
constraints bound this cost below by the value of the claims the
contract assigns to that state, while a backward induction through
the funding conditions, using the dynamic-programming structure of
\(V\), bounds the total cost above by the resources actually
available. Since the measure is positive, the two bounds force
statewise equality and binding funding conditions simultaneously.

The theorem identifies the precise sense in which the contractual
structure is redundant in a complete market. Although admissible
strategies permit arbitrary long-dated contingent commitments, every
acceptable strategy already coincides almost surely with the statewise
solution of the corresponding homogeneous continuation problems.
Thus the recursive contracting problem and the classical
consumption--investment problem have the same continuation values.

Combining
Theorem~\ref{thm:noMutuallyBeneficialContracts}
with
Proposition~\ref{prop:reductionClassical},
the collective investment problem reduces to a family of classical
dynamic programming problems indexed by the current state. In the
notation of the proof outline in the Introduction, this establishes
$N_0H_1T_1=N_0H_0T_0$, and hence that all of $N_0H_*T_*$ are equal.
Consequently, existence and characterisation results may be obtained
using standard methods from stochastic control. Since our primary
interest is the collective investment problem and its economic
implications, we do not pursue these calculations here.

For the problem described in
Example~\ref{example:blackScholes},
a recursive analytical formula for the optimal investment strategy is
obtained in \cite{armstrong_buescu_dalby}. The authors further extend
the analysis to preferences of Epstein--Zin type. Because the market
is complete, the resulting backward recursion is particularly simple:
at each time step the optimal investment strategy may be recovered by
a duality argument, after which the value function is obtained by
backward iteration.

Finally, the problem decomposes fibrewise across agent types.
Each homogeneous continuation problem may be solved independently, so
agents need not share a common model of market dynamics.
Differences in beliefs are absorbed into the corresponding preference
functionals, and the problem for a given agent type may be solved
without reference to the beliefs of any other agents.

\section{Well-posedness}
\label{sec:compactness}
\subsection{Compact problems}
\label{sec:calculusVariations}

The previous section reduced the recursive contracting problem to a
family of homogeneous continuation problems. The remaining question
is whether these continuation problems are well posed.

The existence argument follows the classical direct method of the
calculus of variations. One first constructs a maximising sequence,
then extracts a convergent subsequence, and finally shows that the
limit remains feasible and attains the value.

The existence theory concerns only the homogeneous continuation
problems introduced in
Definition~\ref{def:classicalContinuationProblem}.
The compactness condition introduced below captures the preference
properties required for these problems to be well posed.

\begin{definition}[Compact problem]
\label{def:compactProblem}

We say that
$
\Prob
=
({\cal J},\tau,\tau_d,\eta)
$ is compact if the following hold for every homogeneous continuation
problem
$
(s,x,\beta,\overline X).
$
\begin{enumerate}[(i)]

\item
{\bf (Fatou property)} Whenever
$
X^n
\in
\hat{\cal X}_s^x(\Prob;\beta,\overline X)
$
with
$
X^n\to X
$
almost surely and
$
\inf_n
{\cal J}_s(X^n\mid x)
>
-\infty,
$
we have
\[
\limsup_{n\to\infty}
{\cal J}_s(X^n\mid x)
\leq
{\cal J}_s(X\mid x).
\]

\item
{\bf (Attainment and stability)} If
${\cal X}_s^x(\beta,\overline X)\neq\emptyset$ and
$V_s(x,\beta,\overline X)>-\infty$, the supremum defining
$V_s(x,\beta,\overline X)$ is attained. Moreover, whenever
$(\beta_n,\eta'_n)\to(\beta,\eta')$ in $L^1$, where $\eta'_n$ are
admissible payment streams perturbing the problem as in
Definition~\ref{def:compactIndependentEta}, with each constraint set
${\cal X}_s^x(\Prob(\eta'_n);\beta_n,\overline X)\neq\emptyset$ and
each value exceeding $-\infty$, every sequence of optimisers
$X^n$ for the perturbed problems with
$\inf_n{\cal J}_s(X^n\mid x)>-\infty$ admits a subsequence
converging in $L^1$ to an optimiser of the limiting problem;
in particular the limiting value exceeds $-\infty$ in this case.

\item
{\bf (Measurability)}
There exists a family
\[
y\mapsto X^y,
\qquad
y\in\Omega_s^{M\times I},
\]
such that
\[
X^y
\in
\argmax_{Z\in{\cal X}_s^y(\Prob;\beta,\overline X)}
{\cal J}_s(Z\mid y)
\]
for every state
\(y\in\Omega_s^{M\times I}\)
at which the supremum is attained; moreover a single such family may
be chosen for all continuation problems at the date $s$ at once,
jointly measurably in the state $y$, the conditional realization
$\beta(y)$ of the budget, and the history $\overline X$, the
histories ranging over the separable space
$L^1({\mathbb R}_{\ge0},({\cal F}^\infty_t)_{t<s},{\mathbb Q}^\infty)$;

\item
{\bf (Non triviality)}
for every homogeneous continuation problem
$(s,x,\beta,\overline X)$, both the self-financing and the
redistributive problems have finite value:
\[
\sup_{X\in{\cal X}^x_s(\beta,\overline X)}
{\cal J}_s(X\mid x)
<
+\infty
\qquad\text{and}\qquad
\sup_{X\in\hat{\cal X}^x_s(\beta,\overline X)}
{\cal J}_s(X\mid x)
<
+\infty .
\]
At homogeneous states ${\cal X}\subseteq\hat{\cal X}$, so the second
is the stronger requirement and implies the first; we state both
because the two are appealed to separately below.

\end{enumerate}

\end{definition}

\begin{definition}[Compactness independent of $\eta$]
\label{def:compactIndependentEta}
We say that $\Prob=({\cal J},\tau,\tau_d,\eta)$ is \emph{compact
independent of $\eta$} if $({\cal J},\tau,\tau_d,\eta')$ is compact
(Definition~\ref{def:compactProblem}) for every admissible
payment stream $\eta'$, i.e.\ every nonnegative $\eta'\in
L^1({\mathbb R}_{\ge0},\iota({\cal F}_t^M\times{\cal
F}_t^I)_{t\in{\cal T}},{\mathbb Q}^M\times{\mathbb Q}^I)$ with
$\eta'_t=0$ unless $\tau\le t<\tau_d$.
\end{definition}

The compactness conditions are analytic properties of the market
and the preferences. The no-satiation assumption of the Introduction
is formalised as follows; it involves the participation window as
well as the preferences.

\begin{definition}[No satiation]
\label{def:noSatiation}
For $s\in{\cal T}$ write $e^s$ for the participation-window stream
$e^s_t:=\indicator_{\{\tau\le t<\tau_d\}}\indicator_{\{t\ge s\}}$.
We say that $\Prob=({\cal J},\tau,\tau_d,\eta)$ satisfies
{\em no satiation} if, for every $s\in{\cal T}$ and
${\mathbb Q}^\infty$-almost every $x\in\Omega^{M\times I}_s$ with
${\mathbb Q}^\infty_x(\tau\le s<\tau_d)=1$:
\begin{enumerate}[(i)]
\item
${\cal J}_s(X+c\,e^s\mid x)>{\cal J}_s(X\mid x)$
for every integrable stream $X$ and every $c>0$;
\item
the supremum of ${\cal J}_s(Y\mid x)$ over all integrable streams
$Y$ is not attained;
\item
${\cal J}_s(c\,e^s\mid x)>-\infty$ for every $c>0$, and
$V_s(x,\beta,\overline X)>-\infty$ for every homogeneous
continuation problem at $x$ with
${\cal X}^x_s(\beta,\overline X)\neq\emptyset$ and
$\beta(x)+H^x_s>0$.
\end{enumerate}
Condition (i) states that additional wealth consumed during the
participation window is always strictly valued; condition (ii) that
no finite consumption stream is fully satiating; and condition (iii)
that only an agent with neither wealth nor future income may face
unboundedly bad utility.
\end{definition}

Compactness axiomatises the outputs of the classical direct method
of the calculus of variations: attainment, stability of the
optimisers, and measurable optimal selections. In
Appendix~\ref{appendix:calculusVariations} we confirm that these
conditions, together with no satiation, are all that is required to
establish the existence of regular maximisers, the proof combining
value-function regularity in the budget variable with standard
measurable selection machinery.

Verifying that a concrete preference system is compact mirrors the
classical direct method itself. Moment bounds on the state-price
density provide the bound classically supplied by coercivity. In the
verification theorems of
Appendix~\ref{appendix:duality}, Koml\'os' theorem \cite{Komlos1967}
together with convexity
of the constraint sets and these bounds supplies the compactness.
Lower semicontinuity of constraints passes feasibility to the limit
via Fatou's lemma, and the Fatou property supplies the upper
semicontinuity of the objective. The
combination of Koml\'os' theorem with convexity is standard in
mathematical finance in the form of the convex-combination lemma of
Delbaen and Schachermayer \cite{DelbaenSchachermayer1994}.

\begin{theorem}[Compact problems admit regular maximisers]
	\label{thm:regularMaximisers}
	
	Every compact problem
	\[
	\Prob
	=
	({\cal J},\tau,\tau_d,\eta)
	\]
	satisfying the no-satiation condition admits regular maximisers.
\end{theorem}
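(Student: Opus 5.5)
The plan is to verify the four clauses of the definition of regular maximisers one at a time. Clauses (ii) and (iv) are essentially restatements of hypotheses, (i) is a measurable-selection statement, and (iii) carries the real analytic content.

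Clause (ii) follows from the Non triviality condition of Definition~\ref{def:compactProblem}. Since $\sup_{\hat{\cal X}}{\cal J}_s<+\infty$, every redistributively feasible stream has finite utility. Because ${\cal X}\subseteq\hat{\cal X}$ at homogeneous states, we also get $V_s<+\infty$. Clause (iv) is exactly no satiation (iii) of Definition~\ref{def:noSatiation}.

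For clause (i), I would combine Attainment with the Measurability condition. Measurability already provides a single family $y\mapsto X^y$, jointly measurable in $(y,\beta(y),\overline X)$, that selects an argmax at every state where the supremum is attained. It remains to check two things: the selection serves arbitrary ${\cal F}^\infty_s$-measurable random budgets, and the attainment set is measurable. For the first, I would compose the joint map with the measurable map $y\mapsto(y,\beta(y),\overline X)$. For the second, I would use the measurability axiom of Definition~\ref{def:invariantPreferences}. Applying it to the family $r\mapsto X^{y(r)}$ shows that $y\mapsto{\cal J}_s(X^y\mid y)$ is Borel. Since $V_s(y,\beta(y),\overline X)$ equals this quantity on the attainment set, and Attainment says the set contains every $y$ with ${\cal X}^y_s\neq\emptyset$ and $V_s>-\infty$, the attainment set is described by measurable conditions.

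Clause (iii) is where I expect the main work. Fix $x$ with ${\mathbb Q}^\infty_x(\tau\le s<\tau_d)=1$ and set $f(c):=V_s(x,\beta+c,\overline X)$ on $D:=\{c:{\cal X}^x_s(\beta+c,\overline X)\neq\emptyset\}$. Since constraints only loosen as $c$ grows, $D$ is an interval unbounded above.

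\textbf{Strict monotonicity.} Take $c<c'$ in $D$, with $f(c)>-\infty$ (otherwise use no satiation (iii)), and let $X$ attain $f(c)$. Then $X+(c'-c)e^s$ is feasible for budget $\beta+c'$: the extra amount is spent at time $s$, and the pot is unchanged. No satiation (i) then gives $f(c')>f(c)$.

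\textbf{Right-continuity.} Let $c_n\downarrow c$ and pick optimisers $X^n$. Stability in Definition~\ref{def:compactProblem}(ii), with $\eta'_n=\eta'$ and $\beta_n=\beta+c_n$, gives a subsequence converging in $L^1$ to an optimiser $X$ for $c$. Upper semicontinuity then gives $\limsup f(c_n)\le f(c)$, and monotonicity gives the reverse inequality.

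\textbf{Left-continuity at interior points.} For $c_n\uparrow c$ in the interior, I would use monotone continuity rather than stability. If $X$ is optimal at $c$, the stream $(X-(c-c_n)e^s)^+$ needs a little care, because $X_s$ may be smaller than $c-c_n$. Instead I would scale: $\lambda_nX$ with $\lambda_n\uparrow1$ is feasible at $c_n$ once the pot scales linearly and $\eta'$ is dominated. If that fails because of $\eta'$, I would use a convex combination of $X$ with a feasible stream at an interior point below $c$. In either case monotone continuity gives ${\cal J}_s(\lambda_nX)\to{\cal J}_s(X)$, which yields continuity from the left.

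\textbf{Supremum and non-attainment.} The supremum of $f$ should equal the supremum over all integrable streams. Given any integrable $Y$, I would truncate to $Y\wedge k$ and restrict to the participation window. Taking $c$ large enough that $C_s(Y\wedge k)\le\beta+c$ makes this stream feasible: it can be financed by a pot of conditional value $\le C_s$, and I would need to check that the carried-forward form of Remark~\ref{rem:carriedForward} permits any such stream when the budget is large. Monotone continuity as $k\to\infty$ then recovers ${\cal J}_s(Y\mid x)$, using monotonicity to dispose of the consumption outside the window. Non-attainment is no satiation (ii).

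I expect the left-continuity step to be the main obstacle, because constructing feasible approximants from below without breaking the self-financing structure requires this care.
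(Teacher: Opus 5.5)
Your treatment of clauses (i), (ii), (iv) and of right-continuity follows the paper. The strict-monotonicity step, however, misreads $e^s$. By Definition~\ref{def:noSatiation}, $e^s_t=\indicator_{\{\tau\le t<\tau_d\}}\indicator_{\{t\ge s\}}$ is the whole remaining participation window, not a unit at time $s$. So $X+(c'-c)e^s$ adds consumption at every later participating date, and its pots rise by up to $(c'-c)\rho^e$, where $\rho^e$ is the self-financing pot of $e^s$ computed with zero income. The budget it needs can therefore exceed that of $X$ by $(c'-c)P_s(x)$, where $P_s:=e^s_s+{\mathbb E}(\rho^e_{s+1}\mid{\cal F}^\infty_s)\ge1$, with equality only if the agent exits at $s+1$ almost surely. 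Hence feasibility at $\beta+c'$ fails in general.

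Adding $c'-c$ at time $s$ alone would be feasible, but then no-satiation~(i), which concerns only increments $c\,e^s$, does not apply, and Monotonicity gives only a weak inequality. The repair, which is what the paper does, is to add $\lambda e^s$ with $\lambda:=(c'-c)/P_s(x)$. Subadditivity and positive homogeneity of the pot recursion give $R^{X+\lambda e^s}_u\le R^X_u+\lambda\rho^e_u$, so this stream is feasible at $\beta+c'$.

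Your left-continuity step is fine and is the paper's Lemma~\ref{lemma:budgetPerturbation}; the fallback is unnecessary. Since $\eta'\ge0$, one has $R^{\lambda X}_u\le\lambda R^X_u$ for $\lambda\in[0,1]$. So scaling only the dates $t\ge s$ by $\lambda_n:=(\beta(x)+c_n+\eta_s(x))/(\beta(x)+c+\eta_s(x))$ gives a stream feasible at $c_n$; the denominator is positive because $c$ is interior.

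The supremum clause, by contrast, is not established by what you write. Monotonicity points the wrong way: restricting $Y$ to the participation window can only lower ${\cal J}_s(\cdot\mid x)$. You would also have to replace the history of $Y$ by $\overline X$, which you do not address. Your argument therefore shows only that $\sup_cV_s(x,\beta+c,\overline X)$ dominates ${\cal J}_s(Y\mid x)$ for streams $Y$ that already satisfy the consistency and non-participation constraints. For those, the truncation is unnecessary, since $R^Y_u\le C_u(Y)$ and $C_s(Y)(x)<\infty$.

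Passing to all integrable streams needs ${\cal J}_s(\cdot\mid x)$, at a participating state, to ignore consumption before $s$ and on $\{t\ge\tau_d\}$. None of the axioms of Definition~\ref{def:invariantPreferences} provides this. The paper's own proof also dismisses this step as routine, so you should state that insensitivity as an explicit hypothesis rather than appeal to Monotonicity.
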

The proof is given in Appendix~\ref{appendix:calculusVariations}.

Throughout the remainder of the paper, assuming that a
problem is compact should be interpreted as excluding
pathological preference systems for which optimal continuation plans
fail to exist or depend discontinuously on continuation wealth.

This assumption encodes our requirements on the market. The results of Kramkov and Schachermayer allow us to compute when the market is compact in our sense. In particular, if our agents are utility maximisers, the requirement for compactness can be interpreted as a condition on the market's state-price density. We view this requirement as essentially equivalent to well-posedness of our optimisation problems, which justifies our earlier claim that the assumptions needed for our main Theorem to hold are rather mild.

\subsection{Establishing compactness}
We now give examples to show how compactness can be established in practice.

\subsubsection{Examples}
\label{sec:compactnessExamples}

This subsection states compactness for two key test-cases and then describes
the general machinery one can use to obtain compactness results by domination arguments.

\begin{definition}[State-price density]
\label{def:statePriceDensity}

The state-price density on the fibre \(x\) is the Radon--Nikodým
density process
\[
Y_t^x
=
E_x\!\left[
\frac{d{\mathbb Q}}{d{\mathbb P}_x}
\,\middle|\,
{\cal F}_t
\right],
\]
where \({\mathbb P}_x\) is a measure equivalent to ${\mathbb Q}$.
\end{definition}

\begin{definition}[Epstein--Zin preferences]
\label{def:EZPreferences}

Let
\[
\beta_t^x\in(0,1),
\qquad
\alpha_t^x<1,\ \alpha_t^x\ne0,
\qquad
\rho_t^x<1,\ \rho_t^x\ne0,
\]
be fibre- and time-dependent parameters. An Epstein--Zin preference
system \cite{epsteinZin1} is defined recursively by
\[
{\cal J}_T(X\mid x)=X_T,
\]
and, for \(t<T\),
\[
{\cal J}_t(X\mid x)
=
\Bigl(
(1-\beta_t^x)X_t^{\rho_t^x}
+
\beta_t^x
M_t^x(X)^{\rho_t^x}
\Bigr)^{1/\rho_t^x},
\]
\[
M_t^x(X)
=
\Bigl(
E_x
\!\left[
{\cal J}_{t+1}(X\mid x)^{\,\alpha_t^x}
\,\middle|\,
{\cal F}_t
\right]
\Bigr)^{1/\alpha_t^x}.
\]
\end{definition}

In Appendix~\ref{appendix:duality} we establish a moment
bound on the state price density which allows one to compute
whether a market is compact for these preferences. The result is summarised
as Theorem~\ref{thm:EZDuality}. The approach is to compactify the
classical, additively-separable CRRA comparison recursion directly,
by the Koml\'os direct method under a moment bound on the
state-price density (the analogue, in our dynamic setting, of the
dual-side well-posedness conditions of Kramkov and Schachermayer
\cite{kramkov1999asymptotic}) and then to transfer this to
Epstein--Zin preferences by a domination argument, valid whenever
$\rho_t^x>0$ (elasticity of intertemporal substitution greater than
$1$) and $\alpha_t^x\le\rho_t^x$: more risk averse than the CRRA
utility matched to the same elasticity of intertemporal substitution
would dictate, i.e.\ a preference for early resolution of uncertainty
in the terminology of \cite{epsteinZin1}. Both conditions hold in the
empirically standard case in the long-run-risk asset-pricing
literature, where they are required to generate realistic risk premia
\cite{bansalYaron}.

For simplicity, for now we only state
the following Corollary.

\begin{corollary}
	\label{cor:EZBlackScholes}
	Suppose that the preference system is Epstein--Zin
	(Definition~\ref{def:EZPreferences}), that $\rho_t^x>0$ and
	$\alpha_t^x\le\rho_t^x$ at every fibre (the empirically relevant
	case discussed above), and that $Y^x_t$ is a lognormal state-price
	density, then $\Prob$ is compact. This covers the case of the
	Black--Scholes model.
\end{corollary}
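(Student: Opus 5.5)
The plan is to derive the corollary from the general Epstein--Zin compactness result (Theorem~\ref{thm:EZDuality} of Appendix~\ref{appendix:duality}). That theorem reduces compactness to a moment bound on the state-price density. So what remains is (a) to check the hypotheses on the parameters, which are assumed directly ($\rho_t^x>0$, $\alpha_t^x\le\rho_t^x$), and (b) to show that a lognormal state-price density satisfies the required moment bound.

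For (b), I expect the bound to take the form
\[
\sup_{t,x} E_x\bigl[(Y^x_t)^{-q}\bigr]<\infty ,
\qquad
\sup_{t,x} E_x\bigl[(Y^x_t)^{q'}\bigr]<\infty ,
\]
for exponents $q,q'$ determined by the CRRA comparison exponents. These are conjugate to $\rho_t^x$, roughly $q=\rho/(1-\rho)$ as in condition (ii) of Lemma~\ref{lemma:robustUpperSemicontinuity} with $\theta=\rho$. Since $\log Y^x_t$ is Gaussian, every real moment is finite and equals $\exp(\text{mean}\cdot p+\tfrac12\text{var}\cdot p^2)$. The uniformity in $t$ is automatic because ${\cal T}$ is finite. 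The same holds for the one-step conditional densities $Y^x_{t+1}/Y^x_t$, which are also lognormal and independent of ${\cal F}_t$; this gives the conditional moment bounds used in the backward recursion.

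Uniformity in the fibre $x$ needs care. In the Black--Scholes case the fibres correspond to finitely many types, as in Example~\ref{example:blackScholes}, so taking a maximum over finitely many parameter sets suffices. In general I would read ``lognormal'' as including bounded drift and volatility parameters, so that the moments are bounded uniformly.

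For the Black--Scholes statement, I would note that with market price of risk $m^\xi$ the density is $Y_t=\exp(m W_t+\tfrac12 m^2 t)$, which is lognormal at the sampled times.

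The main obstacle is matching the exact form of the moment condition in Theorem~\ref{thm:EZDuality}, in particular whether it demands conditional moments uniformly over fibres and both negative and positive powers. I would first check that every exponent appearing there is a finite real number whenever $\rho_t^x\in(0,1)$. Once that holds, the Gaussian moment formula closes the argument immediately.
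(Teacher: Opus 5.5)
Your proposal is correct and follows the paper's route. The corollary is Theorem~\ref{thm:EZDuality}, with its moment hypothesis $K_s^x<\infty$ verified by the finiteness of every real moment of a lognormal variable. The exact condition is the fibrewise bound $E_x\bigl[\sum_{r\ge s}(\zeta^x_r)^{\rho^x_r/(\rho^x_r-1)}\,\big|\,{\cal F}_s\bigr]<\infty$ on the ratios $\zeta^x_r=Y^x_r/Y^x_s$. This is a single negative moment and matches your exponent $-q$ with $q=\rho/(1-\rho)$, so you need no uniformity over fibres and no positive moments.
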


\begin{remark}[CRRA von Neumann--Morgenstern utilities]
	\label{rem:CRRAFromEZ}
	Specialising $\rho_t^x=\alpha_t^x=:p$ in
	Definition~\ref{def:EZPreferences}, the recursion collapses to
	\[
	{\cal J}_t^p=(1-\beta_t^x)X_t^p+\beta_t^x E_x[{\cal J}_{t+1}^p\mid{\cal
		F}_t],
	\]
	so $W_t:={\cal J}_t^p/p$ satisfies the classical time-additive
	discounted CRRA recursion, for which compactness holds for every
	$p<1$, $p\ne0$ (Theorem~\ref{thm:CRRACompact}), with no
	restriction on the sign of $p$.
\end{remark}

\begin{definition}[Entropic preferences]
\label{def:entropicPreferences}

Let
\[
\rho_t^x<1,\ \rho_t^x\ne0,
\qquad
\beta_t^x\in(0,1),
\qquad
\theta_t^x<0,
\]
be fibre- and time-dependent parameters. An entropic preference
system is defined recursively by
\[
{\cal J}_T(X\mid x)=\frac{X_T^{\rho_T^x}}{\rho_T^x},
\]
and, for $t<T$,
\[
{\cal J}_t(X\mid x)
=
(1-\beta_t^x)\frac{X_t^{\rho_t^x}}{\rho_t^x}
+
\beta_t^x M_t^x(X),
\]
\[
M_t^x(X)
=
\frac1{\theta_t^x}
\log
E_x
\!\left[
e^{\theta_t^x{\cal J}_{t+1}(X\mid x)}
\,\middle|\,
{\cal F}_t
\right].
\]
\end{definition}

\begin{remark}[Entropic preferences and mortality risk]
Bommier~\cite{bommier} argues, from an axiomatic standpoint, that
preferences of this entropic form are the appropriate choice for
life-cycle problems in which mortality is itself a source of risk.
\end{remark}

We establish a moment bound for entropic preferences in Appendix~\ref{appendix:duality}. This proceeds by a dominance argument which shows
that this case follows from the von Neumann--Morgenstern case for CRRA utility.
The result is summarised as Theorem~\ref{thm:entropicDominationCompact}.
Again, for now we simply state the following Corollary.

\begin{corollary}
	\label{cor:entropicBlackScholes}
	Suppose that the preference system is entropic
	(Definition~\ref{def:entropicPreferences}) and $Y^x_t$ is a lognormal
	state-price density, then $\Prob$ is compact.
\end{corollary}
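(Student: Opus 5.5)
The plan is to derive Corollary~\ref{cor:entropicBlackScholes} from the CRRA von Neumann--Morgenstern case by domination, in the same way that Corollary~\ref{cor:EZBlackScholes} is obtained from the Epstein--Zin domination theorem. Concretely, we would invoke Theorem~\ref{thm:entropicDominationCompact}. Its hypothesis is, essentially, the moment bound on the state-price density that makes the time-additive CRRA recursion with exponents $\rho_t^x$ compact (Theorem~\ref{thm:CRRACompact}). So the work splits into two parts: verify that a lognormal $Y^x_t$ satisfies that moment bound, and check that the entropic system with $\theta_t^x<0$ satisfies the structural comparison conditions the domination theorem needs.

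For the moment bound, we will use the fact that if $\log Y^x_t$ is Gaussian with finite mean and variance, then every power $(Y^x_t)^q$, $q\in{\mathbb R}$, has finite expectation. The same holds for the one-step conditional moments, since the increments $\log Y^x_{t+1}-\log Y^x_t$ are Gaussian independent of ${\cal F}_t$ (or at least have deterministic conditional variance). The CRRA bound needed will have the form ${\mathbb E}_x[(Y^x_t)^{-p/(1-p)}]<\infty$ for the relevant exponents $p=\rho_t^x$, possibly together with the dual exponent, and all of these are finite for lognormal variables. This is the same check already made in Example~\ref{example:blackScholes} and Corollary~\ref{cor:EZBlackScholes}.

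For the domination step, the key facts about the entropic aggregator $M_t^x$ with $\theta_t^x<0$ are the following. By Jensen's inequality, $M_t^x(X)\le E_x[{\cal J}_{t+1}\mid{\cal F}_t]$, so the entropic value is dominated above by the additive CRRA recursion. This gives the non-triviality condition (finiteness of both self-financing and redistributive values) and, through the recursion, the Fatou property from the reverse Fatou lemma as in Lemma~\ref{lemma:robustUpperSemicontinuity}. Attainment and stability would then follow the Koml\'os route: take convex combinations of a maximising sequence. The constraint sets are convex, and $M_t^x$ is concave, since $-\frac{1}{|\theta|}\log E[e^{-|\theta| Z}]$ is concave in $Z$, so the objective does not decrease under forward convex combinations. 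The CRRA moment bound then supplies uniform integrability, hence $L^1$ convergence. Measurability of the argmax family comes from the general measurable-selection machinery once attainment and continuity in the budget hold.

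We expect the main obstacle to be the lower side. Entropic utility can equal $-\infty$ when $\rho_t^x<0$ and consumption is small, and the exponential inside the logarithm can destroy integrability. So the hardest step is to check no satiation (iii), i.e.\ ${\cal J}_s(c\,e^s\mid x)>-\infty$, together with the lower bound needed in the stability clause. To handle it, we would first note that deterministic consumption streams make every ${\cal J}_{t+1}$ bounded on the participation window, so the entropic certainty equivalent is finite. We would then rely on the domination theorem's hypothesis that a stream with finite CRRA value has finite entropic value, which holds because $e^{\theta{\cal J}}$ is bounded when ${\cal J}$ is bounded below.
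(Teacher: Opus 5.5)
Your core argument is the paper's. A lognormal state-price density makes every real power of $\zeta^x_r=Y^x_r/Y^x_s$ integrable, so $K^x_s<\infty$, and Theorem~\ref{thm:entropicDominationCompact} then gives compactness (indeed independent of $\eta$); the structural checks you list for the entropic certainty equivalent (monotonicity, domination by the mean, the Fatou inequality, concavity) are already carried out inside that theorem's proof, via Lemma~\ref{lemma:CEDomination}. Your final paragraph is unnecessary, and part of it is inaccurate: no satiation is not part of Definition~\ref{def:compactProblem}, the lower bound in the stability clause is a hypothesis of that clause rather than something to be proved, and the domination theorem has no hypothesis that finite CRRA value forces finite entropic value, since domination only bounds ${\cal J}$ from above.
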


\subsubsection{Domination}
\label{sec:dominationMainText}

Both compactness results above are proved by comparison to the
classical, additively-separable CRRA comparison recursion of
Remark~\ref{rem:CRRAFromEZ}, which is compactified as Theorem~\ref{thm:CRRACompact} of
Appendix~\ref{appendix:duality}. Two independent forms of domination
effect this comparison: domination of a \emph{preference system}
against the CRRA baseline, and domination of a \emph{market} against
a reference market. Both express the same qualitative principle: greater pessimism only ever makes a problem easier to compactify,
never harder. We give the statements here, but defer the  proofs to Appendix~\ref{appendix:duality}.

\begin{lemma}[Domination]
\label{lemma:domination}
Suppose a preference system ${\cal J}$ is built, in place of the
linear conditional expectation in the CRRA comparison recursion, from
a certainty-equivalent operator $M_t^x$ that is monotone and
dominated by conditional expectation (as holds for the entropic
certainty equivalent of Definition~\ref{def:entropicPreferences}):
that is,
\[
{\cal J}_t(X\mid x)
=
(1-\beta_t^x)\frac{X_t^{\rho_t^x}}{\rho_t^x}+\beta_t^xM_t^x\bigl({\cal
J}_{t+1}(X\mid\cdot)\bigr),
\qquad
{\cal J}_T(X\mid x)=\frac{X_T^{\rho_T^x}}{\rho_T^x}.
\]
Then ${\cal J}_t(X\mid x)\le{\cal J}_t^{\rm CRRA}(X\mid x)$ for every
admissible $X$ and every $t$. Consequently condition~(iv) of
Definition~\ref{def:compactProblem} holds for
$\Prob=({\cal J},\tau,\tau_d,\eta)$ whenever it holds for
$\Prob^{\rm CRRA}$, independently of $\eta$.
\end{lemma}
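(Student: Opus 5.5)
The inequality ${\cal J}_t(X\mid x)\le{\cal J}^{\rm CRRA}_t(X\mid x)$ will be proved by backward induction on $t$, the two recursions being compared term by term. At $t=T$ both are defined by $X_T^{\rho_T^x}/\rho_T^x$, so they coincide. For the inductive step, assume ${\cal J}_{t+1}(X\mid\cdot)\le{\cal J}^{\rm CRRA}_{t+1}(X\mid\cdot)$ almost surely for a fixed admissible stream $X$. Then
\[
M_t^x\bigl({\cal J}_{t+1}(X\mid\cdot)\bigr)
\le
M_t^x\bigl({\cal J}^{\rm CRRA}_{t+1}(X\mid\cdot)\bigr)
\le
E_x\bigl[{\cal J}^{\rm CRRA}_{t+1}(X\mid\cdot)\,\big|\,{\cal F}_t\bigr],
\]
where the first inequality uses monotonicity of $M_t^x$ and the second uses domination by conditional expectation. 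Multiplying by $\beta_t^x>0$ and adding the common current-period term $(1-\beta_t^x)X_t^{\rho_t^x}/\rho_t^x$ gives the claim at time $t$.

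There are two points to check. First, the terms must be well defined in $[-\infty,+\infty)$. For negative $\rho$ the utilities can equal $-\infty$, so monotonicity and domination have to be read in the extended sense. For the entropic operator the domination is Jensen's inequality applied to the concave map $z\mapsto\theta^{-1}\log z$ with $\theta<0$. This gives $\theta^{-1}\log E[e^{\theta J}]\le E[J]$, and the inequality remains valid, with the value $-\infty$ allowed, whenever $E[J^+]<\infty$. Second, the CRRA conditional expectation must make sense. I will check that its positive part is integrable, which follows from the $L^1$ integrability of $X$ and the growth of $x^\rho/\rho$. When $\rho<0$ the power utility is bounded above by $0$, so this case is immediate.

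For the consequence, recall that condition~(iv) of Definition~\ref{def:compactProblem} asks that the suprema over ${\cal X}^x_s(\beta,\overline X)$ and $\hat{\cal X}^x_s(\beta,\overline X)$ be finite. These feasible sets are determined by the budget, $\tau$, $\tau_d$ and $\eta$ alone, and do not depend on the preferences. So for any fixed $\eta$ they are the same sets for $\Prob$ and for $\Prob^{\rm CRRA}$. The pointwise bound then gives $\sup{\cal J}_s\le\sup{\cal J}^{\rm CRRA}_s<+\infty$. Because the comparison holds stream by stream and the feasible sets agree for each $\eta'$, the conclusion transfers for every admissible $\eta'$, which is the ``independently of $\eta$'' clause.

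I expect the only delicate step to be the handling of the extended values $\pm\infty$ in the inductive step. The rest is a routine two-line induction.
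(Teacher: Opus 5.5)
Your proposal is correct and follows the paper's proof. It uses the same backward induction: monotonicity of $M_t^x$, then domination by $E_x[\,\cdot\mid{\cal F}_t]$, then scaling by $\beta_t^x>0$ and adding the common felicity term. Condition~(iv) transfers in both because the feasible sets do not depend on the preferences; you invoke the hypothesis on $\Prob^{\rm CRRA}$ directly, whereas the paper substitutes the explicit CRRA bound of Lemmas~\ref{lemma:relaxedContainment} and~\ref{lemma:relaxedTightness}(a).

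There are two minor slips in your asides, neither of which affects the lemma.
\begin{enumerate}[(i)]
\item For $\theta<0$ the map $z\mapsto\theta^{-1}\log z$ is convex, not concave. It is convexity that makes Jensen yield $\theta^{-1}\log E[e^{\theta J}]\le E[J]$.
\item $E_x$ is taken under ${\mathbb P}_x$, while $X$ is only ${\mathbb Q}$-integrable. So integrability of the positive parts does not follow from $X\in L^1$ alone; it follows from a moment bound such as $K^x_s<\infty$ via H\"older. The paper's proof does not address this point either.
\end{enumerate}
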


The remaining
compactness conditions are verified for each preference family by
adapting the direct arguments used for the CRRA recursion, with
monotonicity and domination of the certainty equivalent entering
through the same inequalities.

This is what underlies Theorem~\ref{thm:entropicDominationCompact}.
Epstein--Zin preferences are aggregated through a CES
combination of current consumption and the certainty equivalent,
rather than additively, so this Lemma does not apply to ${\cal
J}$ directly: ${\cal J}_t$ and ${\cal J}_t^{\rm CRRA}$ are homogeneous
of degree $1$ and $\rho_t^x$ respectively, and so scale differently in
$X$. When $\rho_t^x>0$ at every fibre, however, raising ${\cal J}_t$
to the power $\rho_t^x$ restores an additively separable recursion of
exactly this Lemma's form, and Theorem~\ref{thm:EZDuality}
is proved by applying it to this reparametrisation
(Appendix~\ref{appendix:duality}).

There is a second, independent form of domination, comparing
\emph{markets} rather than preferences. For a positive random
variable $Z$ (here, a one-period state-price ratio) write $G_Z$
for its quantile function, the increasing generalised inverse of its
distribution function, so that $Z\stackrel{d}{=}G_Z(U)$ for
$U\sim\mathrm{Unif}[0,1]$ and $E[Z^q]=\int_0^1G_Z(x)^q\,\ed x$ for
every $q$ (the standard probability integral transform). The
one-period state-price ratios underlying the compactness theorems of
Appendix~\ref{appendix:duality} are exactly of this form, so the
moment condition those theorems require is a statement purely about
the growth of the corresponding $G_Z$, and we may compare markets by
comparing this growth.

\begin{definition}[Market domination]
\label{def:marketDomination}
Let $G_1,G_2:[0,1]\to(0,\infty)$ be increasing. We say the market with
quantile function $G_1$ is \emph{dominated} by the market with
quantile function $G_2$, written $G_1\preceq G_2$, if
$G_1(x)=O(G_2(x))$ as $x\to1^-$, i.e.\
$\limsup_{x\to1^-}G_1(x)/G_2(x)<\infty$.
\end{definition}

\begin{proposition}[Market domination transfers the moment condition]
\label{prop:marketDomination}
Suppose $G_1\preceq G_2$ and $q>0$. Then $\int_0^1G_2(x)^q\,\ed
x<\infty$ implies $\int_0^1G_1(x)^q\,\ed x<\infty$.
\end{proposition}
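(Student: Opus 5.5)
The plan is to split $[0,1]$ at a point close to $1$ and use the $O$-bound on the upper piece. Since $G_1\preceq G_2$, we have $\limsup_{x\to1^-}G_1(x)/G_2(x)<\infty$. Hence there exist a constant $C<\infty$ and a point $x_0\in(0,1)$ such that
\[
G_1(x)\le C\,G_2(x)\qquad\text{for all }x\in[x_0,1).
\]
I will then write
\[
\int_0^1G_1(x)^q\,\ed x=\int_0^{x_0}G_1(x)^q\,\ed x+\int_{x_0}^1G_1(x)^q\,\ed x
\]
and bound the two pieces separately.

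For the lower piece, the key observation is that $G_1$ is increasing and takes values in $(0,\infty)$. On $[0,x_0]$ this gives $0<G_1(x)\le G_1(x_0)<\infty$. Since $q>0$, the integrand $G_1(x)^q$ is therefore bounded by $G_1(x_0)^q$, so
\[
\int_0^{x_0}G_1(x)^q\,\ed x\le x_0\,G_1(x_0)^q<\infty .
\]
Measurability is not an issue here, because monotone functions are Borel measurable.

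For the upper piece, I will use that $t\mapsto t^q$ is increasing on $(0,\infty)$ for $q>0$. This turns the pointwise bound into $G_1(x)^q\le C^qG_2(x)^q$ on $[x_0,1)$, and so
\[
\int_{x_0}^1G_1(x)^q\,\ed x\le C^q\int_0^1G_2(x)^q\,\ed x<\infty
\]
by hypothesis. Adding the two estimates gives the claim.

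There is no real obstacle in this argument. The only point needing care is that the domination is purely asymptotic near $1$. Finiteness away from $1$ therefore has to come from monotonicity together with the fact that $G_1(x_0)$ is finite, which is exactly where the hypothesis that $G_1$ takes values in $(0,\infty)$ and is increasing is used.
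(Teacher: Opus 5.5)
Your proposal is correct and is essentially the paper's own proof. The paper likewise extracts a constant and a neighbourhood of $1$ on which $G_1\le KG_2$, bounds $G_1$ on the complementary interval by monotonicity and finiteness, and uses monotonicity of $t\mapsto t^q$ for $q>0$ to control the upper piece by $K^q\int_0^1G_2^q$.
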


\begin{corollary}[Compactness transfers from a dominating market]
\label{cor:marketDomination}
Suppose that, at every date $r$, the one-period market realising the
relevant state-price ratio is dominated (Definition~\ref{def:marketDomination},
applied to $\zeta_r^x$ if $\rho_r^x<0$ or to $1/\zeta_r^x$ if
$0<\rho_r^x<1$) by a reference one-period market with the required
moment finite (for instance a Black--Scholes market, whose
state-price density $\ed{\mathbb Q}/\ed{\mathbb P}$ is lognormal, and
hence so is its reciprocal, so both have finite moments of every
order). Then the moment condition of
Theorem~\ref{thm:CRRACompact}, Theorem~\ref{thm:EZDuality} (for
which $0<\rho_r^x<1$ is the case needed for an empirically realistic
elasticity of intertemporal substitution) and
Theorem~\ref{thm:entropicDominationCompact} holds, and $\Prob$ is
compact.
\end{corollary}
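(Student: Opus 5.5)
The plan is to verify the three places where compactness enters: the moment condition of the CRRA theorem (Theorem~\ref{thm:CRRACompact}) and of its Epstein--Zin and entropic transfers (Theorems~\ref{thm:EZDuality} and~\ref{thm:entropicDominationCompact}). All three are hypotheses of the form ``a certain power moment of the one-period state-price ratio $\zeta_r^x$, or of its reciprocal, is finite at every date $r$''. Once that moment condition is established for each date, compactness of $\Prob$ follows by invoking whichever of these theorems matches the preference system. So the whole argument reduces to transferring a finite moment from the reference market to the given market, date by date.

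First, fix a date $r$ and let $Z$ be the relevant random variable: $\zeta_r^x$ if $\rho_r^x<0$, and $1/\zeta_r^x$ if $0<\rho_r^x<1$. Let $q>0$ be the exponent required in the moment condition of the applicable theorem; this is where the sign convention in the statement is used to keep $q$ positive. Let $G_1:=G_Z$, and let $G_2$ be the quantile function of the corresponding variable in the reference market, which by hypothesis satisfies $\int_0^1 G_2(x)^q\,\ed x=E[Z_{\rm ref}^q]<\infty$.

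Next, apply Proposition~\ref{prop:marketDomination}. Since $G_1\preceq G_2$, there are $K<\infty$ and $x_0<1$ with $G_1\le K G_2$ on $[x_0,1)$, so
\[
\int_{x_0}^1 G_1^q\le K^q\int_0^1 G_2^q<\infty .
\]
On $[0,x_0]$, monotonicity gives $G_1\le G_1(x_0)<\infty$, so that part of the integral is also bounded. By the probability integral transform, $E[Z^q]=\int_0^1 G_1^q<\infty$. For the Black--Scholes reference market, both $\ed{\mathbb Q}/\ed{\mathbb P}$ and its reciprocal are lognormal, so every moment is finite and the hypothesis on the reference market is automatic.

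The main obstacle is bookkeeping rather than analysis. I must check that the moment condition in each appendix theorem is stated one period at a time, possibly conditionally on the fibre $x$, and is exactly a moment of $\zeta_r^x$ or of $1/\zeta_r^x$ with positive exponent. If the condition is instead conditional on ${\cal F}_{r}$, I would apply the same quantile argument fibrewise to the regular conditional law. The domination constant $K$ then has to be uniform over fibres, or at least measurable with integrable $K^q$, so that the conditional bounds integrate. I would first try to read the domination hypothesis as holding uniformly in the conditioning state, which is the natural interpretation, and note this explicitly.
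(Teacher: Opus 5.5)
Your proposal is correct and follows the paper's route. The paper's proof applies Proposition~\ref{prop:marketDomination} when $\rho_r^x<0$, and its lower-tail reflection Corollary~\ref{cor:marketDominationLowerTail} when $0<\rho_r^x<1$, date by date through the identity $K_s^x=\sum_r\int_0^1 G_r^x(v)^{q_r}\,\ed v$; that reflection is exactly your substitution $Z=1/\zeta_r^x$ with exponent $-q_r>0$. Your concern about uniformity of the domination constant across fibres is unnecessary: the theorems only require $K_s^x<\infty$ fibre by fibre, and the domination hypothesis is itself imposed fibrewise on $\zeta_r^x$.
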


This makes precise the informal claim
that markets in which adverse events occur more often than in the
Black--Scholes model can still be handled: any market whose
state-price density has an upper tail no heavier (in the big-O sense
above) than a lognormal one inherits compactness from Black--Scholes,
without needing to be lognormal itself. The elementary
comparisons underlying Lemma~\ref{lemma:domination} and
Proposition~\ref{prop:marketDomination} are well known in their own
right; we are not, however, aware of a reference that packages them, as
here, into a transfer principle for the compactness of
Definition~\ref{def:compactProblem}, so we have proved
Lemma~\ref{lemma:domination},
Proposition~\ref{prop:marketDomination} and
Corollary~\ref{cor:marketDomination} directly in
Appendix~\ref{appendix:duality}.

\begin{remark}[Extending to other optimisation paradigms]
\label{rem:dominationBeyond}
The domination principle above is not tied to these three specific
families. The robust recursive preferences of
Example~\ref{example:robustRecursivePreferences} are governed by the
same mechanism, since a minimum of expectations is monotone and
dominated by any single one of them. The same idea should extend to
prospect-theory-style preferences that overweight bad outcomes (a
sufficiently pessimistic rank-dependent or Choquet-style distortion
is again dominated by plain expectation, and S-shaped value
functions themselves by comparing their growth and decay against a
CRRA reference), though we do not develop this here. What the
framework cannot accommodate is the standard, time-inconsistent
formulation of prospect theory itself: this paper rests on
continuation-consistency (Kreps--Porteus time-consistency), which
the usual dynamic extension of probability weighting is well known
to violate.
\end{remark}

\section{Finite funds}
\label{sec:finiteFunds}

\subsection{Admissible strategies}

From this point on we make two assumptions, which hold for
the remainder of the paper. First we assume that $\Omega^I$ is a
finite probability space: this is required to make it meaningful to
talk about agents being in the same state.

Second we assume that
preferences are {\em Markovian}, by which we mean that ${\cal J}_t(X)$
does not depend on the components of $(X_s)_{s<t}$. Since consumption
histories are therefore irrelevant to preferences, we no longer need to
track a history $\overline X$. This is made purely to ease the notational burden; we believe the key results hold without it.

More significantly, we now change our interpretation of the probability space $\Omega^I$. We will view it with the conventional probabilistic interpretation as determining stochastically the possible outcomes for a given agent.
The augmented space ${\underline{\Omega}}^{I+}$ contains additional information about the agent that is not relevant for determining their preferences.

Let us now define the total probability space for our finite-fund problem. It is defined by  $\underline{\Omega}^*:=\underline{\Omega}^M \times \underline{\Omega}^\can_{\cal T} \times \prod_{i \in {\mathbb N}} \underline{\Omega}^{I+}$. The cashflows of financial contracts will be modelled using random variables on this total space. The factor $\underline{\Omega}^\can_{\cal T}$ allows contracts to use information from outside the market but which is not tied to any agent.

Write $(\omega^{\can,i},\omega^{I,i}) \in \Omega^{I+} \cong \Omega^{\can}_{\cal T} \times \Omega^I$ for the coordinates giving the state of agent $i$.

We define a $\sigma$-algebra which isolates the financially irrelevant information: that carried neither by the market nor by the idiosyncratic state of any agent. It is defined by
\begin{equation}
{\cal F}^{\perp,M,I}_t:=\iota^{\perp,{\Omega}^*}( {\cal F}^M_t \times \prod_{i \in {\mathbb N}} \iota^{\Omega^{I+}}({\cal F}^I_t)).
\label{eq:defFPerpMI}
\end{equation}
We also define a $\sigma$-algebra which determines which contracts can be purchased at time $s$ with payoff at a later time $t$.

Define
\begin{equation}
{\cal F}^M_{s,t} = \iota^{\Omega^*}\left(
{\cal F}^M_t \times {\cal F}^\can_s \times \prod_{i \in {\mathbb N}}{\cal F}^{I+}_s\right).
\label{eq:defFMst}
\end{equation}

The acceptability criteria and value functions of this section
evaluate an agent's preferences conditionally on states carrying
more information than the agent's own market-and-path state: full
states of $\Omega^*_u$, or states of coarser $\sigma$-algebras,
recording other agents' coordinates and the auxiliary coordinates.

The map recording the market and agent $i$'s own path is an enrichment
(Definition~\ref{def:enrichment}) of $\underline{\Omega}^{M\times I}$
whose $+$-lift is $\underline{\Omega}^*$, and we write ${\cal J}^i$ for
the family of $\underline{\Omega}^*$ preferences induced along it
(Definition~\ref{def:inducedPreferencesEnrichment}). Explicitly, let
$u\in{\cal T}$, let ${\cal G}_u$ be a complete $\sigma$-algebra
generated by market and idiosyncratic information alone and containing
the market information and agent $i$'s own path to time $u$, and let
$y$ be a ${\cal G}_u$-state, with
$x''(i,y)\in\Omega^{M\times I}_u$ the associated market-and-own-path
state. Then, for a stream $Z$ on $\Omega^*$,
\begin{equation}
{\cal J}^i_u(Z\mid y)
:=
{\cal J}_u\bigl(Z\circ\Xi_y\,\big|\,x''(i,y)\bigr),
\label{eq:enrichedValue}
\end{equation}
where $\Xi_y$ is a transport as in
Lemma~\ref{lemma:enrichmentStable}(iii).

A full state $y'\in\Omega^*_u$ refining such a $y$ resolves, under
$\Xi_y$, the canonical coordinates of the $+$-lift at times
${\leq}\,u$, at some value
$\omega^{\can}(y')\in\Omega^{\can}_{{\cal T}_{\leq u}}$, so that
\eqref{eq:enrichedValue} no longer applies. For such states we freeze
the resolved coordinates and define
\begin{equation}
{\cal J}^i_u(Z\mid y')
:=
{\cal J}_u\bigl((Z\circ\Xi_y)^{[\omega^{\can}(y')]}\,\big|\,x''(i,y)\bigr),
\label{eq:frozenEnrichedValue}
\end{equation}
in the notation of the Lottery tolerance axiom. By the Invariance
axiom this depends on neither the choice of $\Xi_y$ nor the choice of
the state $y$ refined.

All expressions of the form ${\cal J}^i_u(Z\mid y)$ in this section,
including those defining the value function of
Definition~\ref{def:homogeneousFinite} and the acceptability
criteria, are understood in this sense. Information resolved by the
conditioning state is frozen in the evaluation: an agent who has
just lost a lottery values the future as the loser.

\begin{definition}
	\label{def:contractSpace}
	Given $N \in {\mathbb Z}_{\geq 1}$ and $s \in {\cal T}$ we define the {\em contract space} ${\cal C}(N,s)$ to be the subspace of
	\[
	\prod_{u \in {\cal T}_{\geq s}}L^1({\mathbb R}^N_{\geq 0}, ({\cal F}^*_t)_{t \in {\cal T}},{\mathbb Q}^*),
	\]
	consisting of elements $\gamma^i_{u,t}$ satisfying the non-participation conditions
	\begin{equation}
		\gamma^i_{u,t} = 0
		\qquad
		\forall\, i \in {\cal I},\ t \in {\cal T}_{\geq u}, \text{ whenever } u < \tau^i \text{ or } u \ge \tau^i_d
		\qquad {\mathbb Q}^*\text{-a.s.}
		\label{eq:nonParticipationI}
	\end{equation}
	\begin{equation}
		\gamma^i_{u,t}=0
		\quad
		\text{on } \{t < \tau^i\} \cup \{t \ge \tau^i_d\}
		\qquad
		\forall\, i \in {\cal I},\ u \in {\cal T}_{\geq s},\ t \in {\cal T}_{\geq u}
		\qquad {\mathbb Q}^*\text{-a.s.}
		\label{eq:noConsumptionNonParticipationI}
	\end{equation}
	We write ${\cal C}^M(N,s)$ for the space of {\em myopic}
	contracts which additionally satisfy
	\begin{equation}
		\gamma^i_{u,t}=0
		\qquad
		\forall\, i \in {\cal I}, u \in {\cal T}_{\geq s}, t \in {\cal T}_{>u+1}
		\qquad {\mathbb Q}^*\text{-a.s.}
		\label{eq:myopicContract}
	\end{equation}
	We write ${\cal C}^S(N,s)$ for the space of {\em symmetric}
	contracts which additionally satisfy
	\begin{equation}
	(\gamma^i_{u,t}-\gamma^j_{u,t})\indicator_{\{\omega^{I,i}=\omega^{I,j}\}}=0 \qquad
	\forall\, i,j \in {\cal I}, u \in {\cal T}_{\geq s}, t \in {\cal T}_{\geq u}
	\qquad {\mathbb Q}^*\text{-a.s.}
	\label{eq:symmetricContract}
	\end{equation}
\end{definition}

\begin{definition}
	\label{def:admissibleIdiosyncratic}
	Let $N \in {\mathbb Z}_{\geq 1}$ be given. Let ${\cal I}=\{0, \ldots, N-1\}$.
	Let $s \in {\cal T}$ and let $x \in \Omega^*_s$.
	Let $b \in L^1({\mathbb R}^N, {\cal F}^*_s,{\mathbb Q}^*_x)$.

	Given
	\[
	\gamma \in {\cal C}(N,s)
	\]
	define, for $u \in {\cal T}_{\geq s}$ and $w \in {\cal T}_{\geq u}$,
	the net position of the fund created by the contract revision at
	time $u$, viewed up to the horizon $w$:
	\begin{equation}
		c_{u,w}:=
		\begin{cases}
		\displaystyle
		\sum_{i \in {\cal I}}\left(
		b^i + \eta^i_s
		- \sum_{s \leq t \leq w} \gamma^i_{s,t} \right),
		& u = s,
		\\[2.5ex]
		\displaystyle
		\sum_{i \in {\cal I}}\left(
		\sum_{u \leq t \leq w} \gamma^i_{u-1,t} + \eta^i_u
		- \sum_{u \leq t \leq w} \gamma^i_{u,t} \right),
		& u \in {\cal T}_{>s}.
		\end{cases}
		\label{eq:defC}
	\end{equation}
	At time $s$ the fund's new contract is funded by the initial
	budgets and the contributions received at time $s$; at each later
	time $u$ it is funded by surrendering the remaining payments
	promised under the contract agreed at time $u-1$, together with the
	contributions received at time $u$. Thus every position is funded
	purely by income received to date.

	We say that $\gamma$ is an admissible
	strategy for the investment problem $(s,x, b)$
	if it satisfies
	\begin{enumerate}[(i)]
		\item The measurability condition $c_{u,w} \in L^0({\mathbb R},{\cal F}^M_{u,w})$,
		with ${\cal F}^M_{u,w}$ as defined in \eqref{eq:defFMst},
		for all $u \in {\cal T}_{\geq s}$ and $w \in {\cal T}_{\geq u}$.
		\item  The funding conditions
		\begin{equation}
			\mathbb{E}_{{\mathbb Q}^*}(c_{u,T} \mid {\cal F}^*_u)\geq 0
			\label{eq:budgetStar}
		\end{equation}
		for all $u \in {\cal T}_{\geq s}$.
	\end{enumerate}
	Since every contract leg is nonnegative,
	${\mathbb E}_{{\mathbb Q}^*}(c_{u,w}\mid{\cal F}^*_u)$ is decreasing
	in the number of new legs counted, so \eqref{eq:budgetStar} for
	$w=T$ is the binding case; the truncated positions $c_{u,w}$ appear
	only in the measurability condition, which expresses the
	complete-market hypothesis at the level of the fund: the net
	position of the fund, at every horizon, must be an exchange-tradable
	claim.

	The contract set $\gamma$ is defined on the full probability space
	$\underline{\Omega}^*$, while admissibility is evaluated under
	the conditional measure ${\mathbb Q}^*_x$.
	
	We write ${\cal A}^{N,x}_s(b)$ for the space of admissible strategies.
	We define the consumption process by $X^i_t(\gamma):=\gamma^i_{t,t}$.
\end{definition}

The interpretation is that $\gamma$ describes the full set of contracts between $N$ agents and their evolution over time. The most interesting condition is the measurability assumption, which captures the complete market hypothesis at the level of the fund rather than of any single contract: it requires the fund's aggregate net position $c_{u,w}$, not each agent's contract $\gamma^i_{u,t}$, to be exchange tradable. The constituent contracts for each member can be bespoke and idiosyncratic to a single agent.

It is helpful to think of $\gamma$ as being a proposal for managing a mutual insurance fund.

To compare outcomes for different proposals $\gamma$, we wish to define some
form of value function that measures the quality of a proposal $\gamma$. This should
depend upon the state of the agent and the market when they first began to participate, but nothing else. In particular it should not depend on the identity of the agent. The next definition identifies an appropriate space ${\Omega}^J$ to parameterise the state of an agent when they join the fund. To ensure this, we build a larger probability space $\Omega^V$ by adjoining factors which allow us to select a random agent and a random state of all the irrelevant factors. This results in a definition of the value function which is invariant under permutations of identity and of irrelevant factors.

This motivates the formal definition below.

\begin{definition}
\label{def:valueFunctionFinite}
Write $\rho$ for the standard action of $S_\infty$ on $\underline{\Omega}^*$ and $(i_1,i_2) \in S_\infty$ for the permutation of ${\cal I}$ swapping the elements $i_1$ and $i_2$.

Choose a filtration-preserving mod 0 isomorphism
\[
\Psi:\underline{\Omega}^M \times \underline{\Omega}^{I+} \times \underline{\Omega}^\can_{\cal T} \to
\underline{\Omega}^*
\]
satisfying
\[ \omega^M=\omega^M \circ \Psi, \]
\[\omega^{I+}=\omega^{I+,i} \circ \rho((0,i)) \circ \Psi \quad \forall i \in {\mathbb N}. \]

Let $\underline{\Omega}^N$ be a standard probability space supporting a uniformly distributed random variable $i^N$ taking values in ${\cal I}$.

Let
\[
\underline{\Omega}^V:=\underline{\Omega}^N \times \underline{\Omega}^M \times \underline{\Omega}^{I+} \times \underline{\Omega}^\can_{\cal T}
\]
and let $\pi^{\perp, N}$ be the projection from $\Omega^V$ onto all the factors orthogonal to $N$. 

Let 
$\underline{\Omega}^J$ be the Borel probability space with underlying Polish space ${\Omega}^J$
obtained by disintegrating ${\mathbb Q}^V$ with respect to the random variable $J:=(\omega^M_\tau, \omega^I_\tau)$. Write ${\mathbb Q}^V_j$ for the associated measure for each $j \in \Omega^J$.

Suppose
\[
\gamma \in \prod_{u \in {\cal T}_{\geq s}}L^1({\mathbb R}^N_{\geq 0}, ({\cal F}^*_t)_{t \in {\cal T}},{\mathbb Q}^*),
\]
and $j \in \Omega^J$, we define the {\em value function} ${\cal V}^\gamma$ by
\[
{\cal V}^\gamma(j):={\cal J}_\tau( X^{i^N}(\gamma) \circ \rho((0,i^N)) \circ \Psi \circ \pi^{\perp,N} \mid j).
\]
\end{definition}

\begin{lemma}[Independence of the choice of $\Psi$]
\label{lem:valueFunctionWellDefined}
The value function ${\cal V}^\gamma$ does not depend on the choice of $\Psi$ in the preceding definition.
\end{lemma}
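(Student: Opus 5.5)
Compare two admissible choices $\Psi$, $\Psi'$ and show they differ by something the preferences cannot detect. Set $\Theta:=\Psi^{-1}\circ\Psi'$, a filtration-preserving mod 0 automorphism of $\underline{\Omega}^M\times\underline{\Omega}^{I+}\times\underline{\Omega}^\can_{\cal T}$. Both maps satisfy $\omega^M=\omega^M\circ\Psi$ and $\omega^{I+}=\omega^{I+,i}\circ\rho((0,i))\circ\Psi$ for each $i$. Hence $\Theta$ fixes the $\omega^M$ coordinate and, taking $i=0$, the $\omega^{I+}$ coordinate. So $\Theta$ acts only on the last factor $\underline{\Omega}^\can_{\cal T}$, fibrewise over $(\omega^M,\omega^{I+})$. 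Write $\Theta(\omega^M,\omega^{I+},\omega^\can)=(\omega^M,\omega^{I+},\Phi(\omega^\can,\omega^M,\omega^{I+}))$. Because $\Theta$ and $\Theta^{-1}$ are filtration preserving, the time-$t$ coordinates of $\Phi$ and of its fibrewise inverse are measurable with respect to the time-$t$ $\sigma$-algebra. For each fixed base point, $\Phi(\cdot,\omega^M,\omega^{I+})$ is a filtration-preserving mod 0 automorphism of $\underline{\Omega}^\can_{\cal T}$, after modifying on a null set of base points via a measurable choice.

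Next, rewrite the two candidate streams. With $Y:=X^{i^N}(\gamma)\circ\rho((0,i^N))$, we have $Y\circ\Psi'\circ\pi^{\perp,N}=(Y\circ\Psi\circ\pi^{\perp,N})\circ(\id_{\Omega^N}\times\Theta)$. The extra factors $\underline{\Omega}^{\can}_{\cal T}$ and $\underline{\Omega}^{\can}$ inside $\underline{\Omega}^{I+}$, together with $\underline{\Omega}^N$, are exactly the adjoined enrichment factors over the base $\underline{\Omega}^M\times\underline{\Omega}^I$. So the value ${\cal J}_\tau(\,\cdot\mid j)$ is the induced preference of Definition~\ref{def:inducedPreferencesEnrichment}, evaluated at the state $j$ of $J=(\omega^M_\tau,\omega^I_\tau)$. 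Transport everything through an isomorphism $\Xi$ as in Lemma~\ref{lemma:enrichmentStable}(i). The map $\id\times\Theta$ fixes the base, and so does $\Xi^{-1}\circ(\id\times\Theta)\circ\Xi$. By Lemma~\ref{lemma:enrichmentStable}(ii), applied to the two isomorphisms $\Xi$ and $(\id\times\Theta)\circ\Xi$, this map has the form $\hat\Phi'$ for an adapted randomized automorphism $\Phi'$ at time $0$, hence at every time, and in particular at the random time $\tau$ on each level set $\{\tau=s\}$.

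Now apply the Invariance axiom of Definition~\ref{def:invariantPreferences} at $s=\tau$ and state $j$. It gives ${\cal J}_\tau(Z\circ\hat\Phi'\mid j)={\cal J}_\tau(Z\mid j)$, so ${\cal V}^\gamma$ computed from $\Psi$ and from $\Psi'$ agree. Because $\tau$ is a stopping time, I would do this separately on each event $\{\tau=s\}$, $s\in{\cal T}$. The conditioning state $j$ determines $s$, so this causes no difficulty.

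The main obstacle is the conditioning in this last step. The state $j$ is a disintegration point of ${\mathbb Q}^V$, not of the base measure alone, so I must check that conditioning on $J=j$ leaves the canonical factors of the lift free. Only then is the unfrozen Invariance axiom the applicable one, rather than the frozen form \eqref{eq:frozenEnrichedValue}. I would handle this with Lemma~\ref{lemma:enrichmentStable}(iii), taking ${\cal G}=\sigma(J)$ completed, which lies between the base and the time-$\tau$ $\sigma$-algebra. I would also use Lemma~\ref{lemma:disintegrationNatural} to see that $\id\times\Theta$ maps each fibre ${\mathbb Q}^V_j$ to itself, since $\Theta$ preserves $J$.
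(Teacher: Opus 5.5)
Your proof is correct. Its first step is the paper's: both set $\Theta=\Psi^{-1}\circ\Psi'$ (the paper's $\Xi$), read off from the $i=0$ condition that it fixes $\omega^M$ and $\omega^{I+}$, and conclude that it acts fibrewise on the extra $\underline{\Omega}^\can_{\cal T}$ factor.

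From there the routes diverge.
\begin{itemize}
\item The paper transports the automorphism to $\underline{\Omega}^*$ as $\Psi_2\circ\Psi_1^{-1}$. For each agent $i$ it conjugates by the swap $\rho((0,i))$ to obtain a map $\Phi_i$ fixing $\omega^M$ and $\omega^{I+,i}$. It then identifies the complementary factors $\underline{\Omega}^\can_{\cal T}\times\prod_{k\neq i}\underline{\Omega}^{I+,k}$ with $\underline{\Omega}^\can_{\cal T}$ by stable equivalence and applies Lemma~\ref{lemma:invarianceFibrewise}. Finally it recombines the cases $i^N=i$ using independence of $i^N$.
\item You stay on $\underline{\Omega}^V$. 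You observe that $\Psi'\circ\pi^{\perp,N}=\Psi\circ\pi^{\perp,N}\circ(\id_{\Omega^N}\times\Theta)$ and that $\id_{\Omega^N}\times\Theta$ fixes $i^N$. So the whole integrand, random index included, is precomposed with a single base-fixing automorphism. Lemma~\ref{lemma:enrichmentStable}(ii) then says $(\id\times\Theta)\circ\Xi$ is just another admissible transport. The lemma therefore reduces to the choice-independence of the induced preference of Definition~\ref{def:inducedPreferencesEnrichment}.
\end{itemize}

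Your version gains two things.
\begin{itemize}
\item It makes explicit what ${\cal J}_\tau(\,\cdot\mid j)$ means for a stream on $\underline{\Omega}^V$, which the paper leaves implicit.
\item It handles the random agent in one stroke. Since ${\cal J}_\tau$ is not an expectation, the paper's ``conditionally on $i^N=i$'' step tacitly requires gluing the $\Phi_i$ into one automorphism of $\underline{\Omega}^V$. After transport through $\rho((0,i^N))\circ\Psi_1$, that glued map is exactly your $\id_{\Omega^N}\times\Theta$.
\end{itemize}
The paper's version is more hands-on about the fibrewise family, and it sets up the per-agent conjugation that is reused in Lemma~\ref{lem:admissibilityValueInvariance}.

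Your ``main obstacle'' is milder than you fear. On $\{\tau=s\}$ the state $J$ is exactly the base time-$s$ state, so $\sigma(J)$ coincides there with ${\cal B}_s$. Hence Lemma~\ref{lemma:enrichmentStable}(iii) holds trivially, with the lower inclusion an equality, the canonical factors of the lift are left free, and the frozen evaluation \eqref{eq:frozenEnrichedValue} never enters.
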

This is proved by a naturality argument in Appendix~\ref{appendix:admissibleStrategies}: many of the invariance properties we use follow almost automatically from the naturality of our definitions with respect to relabelling and choice of encoding, in the sense that every construction involved commutes with the relevant group actions. Category theory could formalise and streamline such arguments; we do not pursue this here, but flag proofs of this kind as they arise.

\subsection{The homogeneous problem}

We will need to define an appropriate notion of acceptability for our finite fund problem.

The next definition provides an analogue of Definition~\ref{def:classicalContinuationProblem} but for finite funds.
We cannot reduce the problem to the same extent as we did in the infinite fund case because there is idiosyncratic risk that it is useful to insure against. However, we can assume that such insurance is myopic. We will also insist that the strategy is symmetric.

\medskip
We will want to define appropriate $\sigma$-algebras
to ensure that market-irrelevant information is not used in the strategy. This motivates the following definitions. 
Define ${\cal F}^{I+}_{s,t}:= \iota({\cal F}^{\can}_{s} \times {\cal F}^I_{t})$.
Define
\begin{equation}
{\cal F}^{M,I+}_{s,t} = \iota^{\Omega^*}\left(
{\cal F}^M_t \times {\cal F}^\can_s \times \prod_{i \in {\mathbb N}}{\cal F}^{I+}_{s,t}\right).
\label{eq:defFMIst}
\end{equation}

We may now state our definition for the homogeneous problem for finite funds. It is a classical investment-consumption problem, with investment recorded in $\gamma$ and consumption recorded in $X$. It is homogeneous in the sense that the strategies must be symmetric and all agents start in the same state.

\begin{definition}[The homogeneous problem for finite funds]
	\label{def:homogeneousFinite}
	Let $N \in {\mathbb Z}_{\geq 1}$ be given. Let ${\cal I}=\{0, \ldots, N-1\}$.
	Let $s \in {\cal T}$.
	
	Let $x\in
	\Omega^*_s,
	$
	satisfying
	$
	\omega^{I,i}=\omega^{I,j}
	$
	${\mathbb Q}^*_x$-a.s. for all $i, j \in {\cal I}$.
	
	Let
$
\beta
\in
L^1
(
{\mathbb R},
({\cal F}_t^*)_{t\geq s},
{\mathbb Q}_x^*
)
$. Note that $\beta$ is a process in this definition unlike $b$ in Definition~\ref{def:admissibleIdiosyncratic}.
	
	Given
	\[
	\gamma \in {\cal C}^M(N,s) \cap {\cal C}^S(N,s)
	\]
	define, for $u \in {\cal T}_{\geq s}$ and $w \in {\cal T}_{\geq u}$,
	\begin{equation}
		c^h_{u,w}:=
		\begin{cases}
		\displaystyle
		\beta_s +
		\sum_{i \in {\cal I}}\left(
		\eta^i_s
		- \sum_{s \leq t \leq w} \gamma^i_{s,t} \right),
		& u = s,
		\\[2.5ex]
		\displaystyle
		\beta_u +
		\sum_{i \in {\cal I}}\left(
		\sum_{u \leq t \leq w} \gamma^i_{u-1,t} + \eta^i_u
		- \sum_{u \leq t \leq w} \gamma^i_{u,t} \right),
		& u \in {\cal T}_{>s},
		\end{cases}
		\label{eq:defCHomogeneous}
	\end{equation}
	the net position created by the revision at time $u$, funded by the
	exogenous injection $\beta_u$, the surrendered remaining payments of
	the previous contract, and the contributions received at time $u$.
	We say that $\gamma$ is an admissible
	strategy for the homogeneous problem $(s,x, \beta)^h$
	if it satisfies
	\begin{enumerate}[(i)]
		\item The measurability conditions: $c^h_{u,w} \in L^0({\mathbb R},{\cal F}^M_{u,w})$
		for all $u \in {\cal T}_{\geq s}$ and $w \in {\cal T}_{\geq u}$; and
		\begin{equation}
			\gamma^i_{u,t} \in L^0({\mathbb R},\iota^{\Omega^*}({\cal F}^{M,I+}_{s,t}))
			\label{eq:MImeasurabilityHomogeneous}
		\end{equation}
		for all $i \in {\cal I}$, $u \in {\cal T}_{\geq s}$, and $t \in {\cal T}$,
		with ${\cal F}^{M,I+}_{s,t}$ as defined in \eqref{eq:defFMIst},
		ensuring that no leg of the contract uses canonical, market-irrelevant
		information beyond what is already available at time $s$.
		\item  The funding conditions
		\begin{equation}
			\mathbb{E}_{{\mathbb Q}^*}(c^h_{u,T} \mid {\cal F}^*_u)\geq 0
			\label{eq:budgetStarHomogeneous}
		\end{equation}
		for all $u \in {\cal T}_{\geq s}$.
	\end{enumerate}
	We write ${\cal H}^{N,x}_s(\beta)$ for the space of admissible strategies.
	We define the consumption process by $X^i_t(\gamma):=\gamma^i_{t,t}$. The associated value function is
	\[
	V^N_s(x,\beta)
	:=
	\sup_{\gamma\in{\cal H}^{N,x}_s(\beta)}
	{\cal J}_s(X^0(\gamma)\mid x).
	\]
\end{definition}

We wish to relate our finite-fund problems to infinite-fund problems.

Recall from the preliminaries that $\underline\Omega^{\infty+}:=\underline
\Omega^\can_{\cal T}\times\underline\Omega^\infty$, where
$\underline\Omega^\infty=\underline\Omega^M\times\underline\Omega^I$.

We will see that finite-fund problems associated with ${\Prob}$ do not quite correspond to infinite-fund problems associated with ${\Prob}$. Instead they correspond to infinite-fund problems for ${\Prob}^+$ as defined below.

\begin{definition}[The problem $\Prob^+(\tilde{\eta})$]
	\label{def:probPlus}
	Let $\Prob=({\cal J},\tau,\tau_d,\eta)$ be the ambient
	problem.
	
	Let ${\cal J}^+$ be the induced
	family of $\underline\Omega^{\infty+}$ preferences obtained from ${\cal J}$
	by Definition~\ref{def:inducedPreferences}, taking $\underline\Omega^1:=
	\underline\Omega^{M\times I}$ and $\underline\Omega^2:=\underline\Omega^\can_{\cal
		T}$: in particular, ${\cal J}^+_s(X\mid(x,y))$ does not depend on
	$y\in\Omega^\can_{{\cal T},s}$.
	
	Let
	$\tau^+:=\tau\circ\pi^{\Omega^{M\times I}}$,
	$\tau_d^+:=\tau_d\circ\pi^{\Omega^{M\times I}}$, and
	$\eta^+:=\eta\circ\pi^{\Omega^{M\times I}}$.
	
	Given $\tilde{\eta} \in L^1({\mathbb R}_{\geq 0}, ({\cal F}^{\infty+}_t)_{t \in \cal T}, {\mathbb Q}^{\infty+})$ with $\tilde\eta_t=0$ for $t<\tau^+$, we define
	$\Prob^+(\tilde{\eta}):=({\cal J}^+,\tau^+,\tau_d^+,\tilde{\eta})$.
	
	When we write $\Prob^+$ without specifying the parameter $\tilde{\eta}$ we are referring to $\Prob^+(\eta^+)$.
\end{definition}

When working with the homogeneous problem, we will want to use
exchangeable random variables. The following Lemma will therefore be useful.

\begin{lemma}
Define
$
\underline{\Omega}^{{\cal E}^N}:=(\Omega^{\can,\infty}_{\cal T}, {\cal E}^N( {\cal F}^{\can}_{\cal T}),
({\cal E}^N ({\cal F}^{\can}_t))_{t \in {\cal T}}, {\mathbb Q}^{\can,\infty}_{\cal T}).
$

	There exists a measure-preserving map
	$\theta^N:\Omega^{\can,\infty}_{\cal T}\to\Omega^{\can}_{\cal T}$
	satisfying
	\[
	(\theta^N)^{-1}({\cal F}^{\can}_t)={\cal E}^N({\cal F}^{\can}_t)
	\]
	modulo null sets for every $t\in{\cal T}$. Consequently, composition
	with $\theta^N$ defines a bijection, up to almost sure equality,
	between random variables on $\underline{\Omega}^{\can}_{\cal T}$ and
	random variables on $\underline{\Omega}^{{\cal E}^N}$; this bijection
	is law preserving and matches the filtrations.
	\label{lemma:standardFiltered}
\end{lemma}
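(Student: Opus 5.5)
The plan is to show that the filtered space $\underline{\Omega}^{{\cal E}^N}$ is a standard filtered probability space with atomless fibres at every stage. Lemma~\ref{lemma:stableEquivalence}, or more precisely its proof in Appendix~\ref{appendix:finiteFiltrationClassification}, then identifies it with $\underline{\Omega}^{\can}_{\cal T}$ directly, without the need for stabilisation. The map $\theta^N$ is the resulting factor map.

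The construction goes as follows. The space $\Omega^{\can,\infty}_{\cal T}=\prod_{i}\prod_{t}[0,1]$ is a countable product of standard spaces, so it is standard. The group $S_N$ is finite and acts measurably by permuting the first $N$ agent coordinates. The invariant $\sigma$-algebra ${\cal E}^N({\cal F}^{\can}_t)$ is therefore countably generated modulo null sets. Indeed, it is generated by the symmetrisations $\sum_{g\in S_N}\indicator_{gA}$ of a countable generating family $\{A\}$ of ${\cal F}^{\can}_t$, which works because conditional expectation onto the invariant algebra is averaging over $S_N$.

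Concretely, I would realise the quotient as follows. At each time $t$, map the block of coordinates $(\omega^{\can,i,t})_{i<N}\in[0,1]^N$ to its order statistics, which give a Borel section of the $S_N$-orbits in the spirit of Lemma~\ref{lemma:sectionExists}. The coordinates with $i\ge N$ are kept unchanged. Order statistics alone are not enough, because the invariant algebra at time $t$ must record the joint permutation across all times $s\le t$. I would therefore take the vector $Z_t:=\bigl((\omega^{\can,i,s})_{s\le t}\bigr)_{i<N}\in([0,1]^{t+1})^N$ and sort it lexicographically. This is a Borel section of the $S_N$-action on the first $N$ blocks, and ${\cal E}^N({\cal F}^{\can}_t)$ is generated by the sorted $Z_t$ together with the tail coordinates. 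These sorted variables are increasing in $t$, in the sense that sorted $Z_t$ is a function of sorted $Z_{t+1}$ by truncation.

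Next I would build $\theta^N$ inductively in $t$. At $t=0$ the space generated by ${\cal E}^N({\cal F}^{\can}_0)$ is standard and atomless, so Rokhlin's theorem gives a measure-preserving map onto the first factor $[0,1]$ with the right preimage $\sigma$-algebra. At the step from $t$ to $t+1$, I would disintegrate the law of sorted $Z_{t+1}$ over sorted $Z_t$. Conditional on the sorted $Z_t$, the ties among the first $N$ blocks occur on a null set, so the ordering of the blocks is determined. The new information is then just the $N$ fresh uniforms $\omega^{\can,i,t+1}$, $i<N$, which are independent of the past, together with the remaining tail uniforms. The conditional law is therefore atomless and the same for almost every fibre. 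A fixed Borel isomorphism of this conditional space with $[0,1]$ supplies the time-$(t+1)$ coordinate of $\theta^N$, and it is independent of the earlier coordinates as the product structure of $\underline{\Omega}^\can_{\cal T}$ requires.

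The bijection of random variables then follows, by Doob--Dynkin, from $(\theta^N)^{-1}({\cal F}^{\can}_t)={\cal E}^N({\cal F}^{\can}_t)$ modulo null sets together with measure preservation. I expect the main obstacle to be the inductive step, specifically verifying that the invariant filtration has independent increments with an atomless, fibre-independent conditional law. This is what lets the time-$(t+1)$ coordinate be chosen independent of ${\cal E}^N({\cal F}^{\can}_t)$. It uses that the newly revealed uniforms are independent of the already-sorted data and that ties are null.
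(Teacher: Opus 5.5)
Your proposal is correct, but it takes a different route from the paper's proof.

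\textbf{The paper's route.} The paper never describes generators of the invariant filtration. It passes to the orbit space $\Delta^N$, with the pushforward measure and with $\hat{\cal F}_t$ the $\sigma$-algebra of sets whose preimages lie in ${\cal E}^N({\cal F}^{\can}_t)$. It proves standardness using an orbit-minimum Borel transversal; your lexicographic sort is one such transversal. It then checks the atomlessness hypotheses of Lemma~\ref{lemma:finiteFiltrationClassification} with a single witness: the uniform $U_t$ of the agent with index $N$, which $S_N$ does not move and which is independent of the past. That lemma, through conditional probability integral transforms, supplies $\Lambda^N$, and $\theta^N:=\Lambda^N\circ\pi^N$.

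\textbf{Your route.} You instead prove the stronger fact that the invariant filtration has independent, identically distributed increments. To make this airtight, write the increment explicitly. Let $\sigma$ be the almost surely unique sorting permutation; it is ${\cal F}^{\can}_0$-measurable because ties are null. Set $W_{t+1}:=\bigl((\omega^{\can,\sigma(k),t+1})_{k<N},(\omega^{\can,i,t+1})_{i\ge N}\bigr)$. Then:
\begin{enumerate}[(i)]
\item $W_{t+1}$ is $S_N$-invariant;
\item $W_{t+1}$ is independent of all of ${\cal F}^{\can}_t$, because $\sigma$ is ${\cal F}^{\can}_t$-measurable and the fresh uniforms are i.i.d.\ and independent of ${\cal F}^{\can}_t$;
\item ${\cal E}^N({\cal F}^{\can}_{t+1})={\cal E}^N({\cal F}^{\can}_t)\vee\sigma(W_{t+1})$ modulo null sets.
\end{enumerate}
In the inductive step, condition on the full invariant time-$t$ state, including the tail coordinates $i\ge N$, not just on sorted $Z_t$. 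With this, $\theta^N$ is given coordinatewise by one fixed mod $0$ isomorphism $[0,1]^{\mathbb N}\cong[0,1]$, and Rokhlin's theorem is needed only at $t=0$.

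\textbf{Comparison.} Your route gives an explicit formula for $\theta^N$ and bypasses the general classification lemma. The paper's route is shorter: it needs neither an explicit description of ${\cal E}^N({\cal F}^{\can}_t)$ nor fibre-independence of the conditional laws, only their atomlessness.

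\textbf{A small correction.} In your first paragraph, the unstabilised identification you want is Lemma~\ref{lemma:finiteFiltrationClassification}. The proof of Lemma~\ref{lemma:stableEquivalence} goes through Lemma~\ref{lemma:enrichmentStable} and gives an isomorphism only after adjoining a canonical factor. Your explicit construction makes either reference unnecessary.
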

The proof is given in Appendix~\ref{appendix:finiteHomogeneous}.

\subsection{Tontine strategies}
\label{sec:tontineStrategies}

A basic connection between the finite-fund problem and the infinite-fund problem is given by using the outcome of the infinite-fund strategy as closely as possible in the finite fund case by providing insurance pools. Because this is closely analogous to the use of a tontine to hedge mortality risk we call these strategies tontine strategies. We define two different strategies which differ only in the size of the insurance pools.

\begin{definition}[Tontine strategies]
	\label{def:tontineStrategy}
	Let $N$, ${\cal I}$, be as in Definition~\ref{def:homogeneousFinite}.

	Choose a filtration preserving mod 0 isomorphism
	\[
	\Psi^N: \underline{\Omega}^* \cong \underline{\Omega}^M \times 
	\underline{\Omega}^\can_{\cal T} \times \prod_{i=0}^\infty {\Omega}^{I+} \to 
	\underline{\Omega}^M \times 
	\underline{\Omega}^\can_{\cal T} \times \underline{\Omega}^I
	\cong \underline{\Omega}^{\infty+}
	\]
	which acts trivially on the $\underline{\Omega}^M \times \underline{\Omega}^{I,0}$ components and exchangeably on the first $N$ components of the product.

	We define the {\em restricted tontine strategy} and the {\em broad tontine strategy} recursively. We will use the same symbol $\gamma \in {\cal C}(N,0)$ for both strategies in the recursion step.
	
	{\em Recursion step:}
	
	Suppose that
	$s \in {\cal T}_{<T}$ and that
	$x \in \underline{\Omega}^*_s$ are given
	and that 	
	\[
	\beta^i_s
	\in
	L^1
	(
	{\mathbb R},
	({\cal F}_t^*)_{t\geq s},
	{\mathbb Q}_x^*
	)
	\]
	is given for each $i \in {\cal I}$.
	Divide ${\cal I}$ into a disjoint union of
	sets $({\cal I}^{k}_x)_{k \in {\cal K}_x}$ consisting of agents who
	are all in the same state conditioned on $x$.
	Throughout, we assume, as is standing elsewhere in this section, that $\eta^i_s$ depends on $i$ only through $\omega^{I,i}$; consequently the quantity defined below depends on $i \in {\cal I}^k_x$ only through the class $k$. For each $k \in {\cal K}_x$ define
	\[
	\tilde{\eta}^{(k)}_s=\left(\frac{1}{|{\cal I}^k_x|} \sum_{j \in {\cal I}^k_x} \beta^j_s + \eta^i_s\right) \circ (\Psi^N)^{-1}, \qquad i \in {\cal I}^k_x,
	\]
	and let ${\Prob}^{+,k}_s = ({\cal J}^+, \tau^{+}, \tau^{+}_d, \tilde{\eta}^{(k)}_s)$: this is the problem $\Prob^+$ (Definition~\ref{def:probPlus}) with payment stream $\tilde\eta^{(k)}_s$ in place of the ambient $\eta^+$, i.e.\ the infinite-fund problem faced by a representative member of pool $k$ once payments are pooled and mutualised within the pool.

	{\bf Case 1:}
	If $\tau_d \leq s$ given $x$, set $\gamma_{u,t}=0$ for all $u, t \geq s$.

	{\bf Case 2:}
	Otherwise, write $y=\Psi^N_*(x)$, where $\Psi^N_*$ is the mod 0 isomorphism induced by $\Psi^N$ on the time-$s$ disintegration, in the sense of Lemma~\ref{lemma:disintegrationNatural}. For each $k \in {\cal K}_x$, let $X^k(\beta_s)$ be an arg max for $(\Prob^{+,k}_s; s, y, 0,0 )$.
	Define, for $i \in {\cal I}^k_x$,
	\[
	\gamma^i_{s,s}(\beta_s \mid x):={\mathbb E}_{{\mathbb Q}^{\infty+}_{y}} ( X^k_s(\beta_s))
	\]
	\begin{gather*}
	\tilde{\gamma}^i_{s,s+1}(\beta_s \mid x):=h^k_{s+1}\bigl(\omega^M,\omega^{I,i}\bigr),
	\\
	h^k_{s+1}
	:=
	\sum_{t\in {\cal T}_{> s}} {\mathbb E}_{{\mathbb Q}^{\infty+}_{y}} \bigl( X^k_t(\beta_s) \,\big|\, \iota^{\Omega^{\infty+}}({\cal F}^{M}_{s+1}\times{\cal F}^{I}_{s+1})\bigr),
	\end{gather*}
	the value, given the market information and the representative
	agent's own state at time $s+1$, of the remaining consumption
	prescribed by the infinite-fund optimiser. Here $h^k_{s+1}$ is
	realised, by the Doob--Dynkin lemma, as a measurable function of
	the market path and the representative agent's own path to time
	$s+1$, and is evaluated at agent $i$'s coordinates; conditioning on
	the market and the agent's own state alone ensures that no
	canonical or third-party information enters any leg.
	Divide ${\cal I}$ into disjoint insurance pools $\{P^k_x\}_{k \in {\cal P}_x}$ according to the design of the tontine: for a restricted tontine take $\{P^k_x\}_{k \in {\cal P}_x}
	=\{{\cal I}^k_x\}_{k \in {\cal K}_x}$; for a broad tontine take
	$\{P^k_x\}_{k \in {\cal P}_x}
	=\{{\cal I}\}$.
		 
	For $i \in {\cal P}^k_x$ define the {\em payment} of agent $i$ by
	\[
	C^i_{s+1}(\beta_s \mid x ):= {\mathbb E}_{{\mathbb Q}^*_{x}}( \tilde{\gamma}^i_{s,s+1}(\beta_s \mid x ) \mid {\cal F}^M_{s+1})
	\]
	and set the income received by each member in proportion to the
	realised value of their own continuation, scaled so that the pool
	distributes exactly its resources:
	\[
	\gamma^i_{s,s+1}(\beta_s \mid x ):=
	\mathbb{E}_{{\mathbb Q}^*_x}\bigl( \tilde{\gamma}^i_{s,s+1}(\beta_s \mid x ) \,\big|\, {\cal F}^*_{s+1} \bigr)
	\cdot
	\frac{\sum_{j \in {\cal P}^k_x} C^j_{s+1}(\beta_s \mid x )}{\sum_{j \in {\cal P}^k_x} \mathbb{E}_{{\mathbb Q}^*_x}( \tilde{\gamma}^j_{s,s+1}(\beta_s \mid x ) \mid {\cal F}^*_{s+1} )}
	\]
	with the convention that this is taken to be zero if the denominator vanishes.
	The numerator of the correction ratio is the market value of the
	payments the pool collects; the denominator is the realised value of
	the continuations it must fund. A member whose own state develops
	adversely (in particular, a member who leaves the fund) receives
	the realised value of their own continuation, which is then zero,
	while the ratio spreads the resulting surplus or deficit across the
	pool in proportion to realised values.
	Define
	\[
	\beta_{s+1}^i(\beta \mid x):=\gamma^i_{s,s+1}(\beta_s \mid x)
	\]
	{\em Initialisation:}
	To start the recursion either take $\beta_0=0$ to obtain the restricted strategy $\gamma^{R}$ and broad strategy $\gamma^{B}$. More generally if an initialisation $\beta_s$ is given we obtain strategies $\gamma^{R}(\beta_s)$ and $\gamma^{B}(\beta_s)$.
\end{definition}

For $i,j\in{\cal I}$, write $i\sim_x j$ if $\omega^{I,i}=\omega^{I,j}$ ${\mathbb Q}^*_x$-a.s., so that, at each step $s$ of the recursion, the sets $({\cal I}^k_x)_{k\in{\cal K}_x}$ of Definition~\ref{def:tontineStrategy} are precisely the equivalence classes of $\sim_x$.

The definition of these strategies relies on the existence (and measurable selection) of an argmax and hence we must assume compactness of the problem independent of $\eta$. The next Lemma establishes that, under these circumstances, these strategies are well defined.

\begin{lemma}[Well-posedness of tontine strategies]
\label{lemma:tontineWellPosed}
Suppose $\Prob^+$ is compact independent of $\eta$ (Definition~\ref{def:compactIndependentEta}). Then, for every $N\in{\mathbb Z}_{\ge1}$ and every $x\in\Omega^*_0$, the recursion of Definition~\ref{def:tontineStrategy} with $\beta_0=0$ is well defined: at every step $s\in{\cal T}_{<T}$ and for every $k\in{\cal K}_x$, an arg max $X^k(\beta_s)$ for $\Prob^{+,k}_s$ exists and may be chosen as a measurable function of the conditioning state. Consequently the restricted and broad tontine strategies $\gamma^R,\gamma^B\in{\cal C}(N,0)$ are well defined, and:
\begin{enumerate}[(i)]
\item $\gamma^R,\gamma^B\in{\cal C}^M(N,0)$, i.e.\ both strategies are myopic;
\item $\gamma^R,\gamma^B\in{\cal C}^S(N,0)$, i.e.\ both strategies are symmetric;
\item $\gamma^R,\gamma^B\in{\cal A}^{N,x}_0(0)$, i.e.\ both strategies are admissible for the investment problem $(0,x,0)$ in the sense of Definition~\ref{def:admissibleIdiosyncratic};
\item if $x$ is homogeneous, i.e.\ $i\sim_x j$ for all $i,j\in{\cal I}$ (so that ${\cal H}^{N,x}_0(0)$, Definition~\ref{def:homogeneousFinite}, is meaningful), then $\gamma^R,\gamma^B\in{\cal H}^{N,x}_0(0)$, i.e.\ both strategies are admissible for the homogeneous problem for finite funds $(0,x,0)^h$ in the sense of Definition~\ref{def:homogeneousFinite}, taking $\beta\equiv0$.
\end{enumerate}
\end{lemma}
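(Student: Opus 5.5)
The plan is to argue by forward induction on $s\in{\cal T}_{<T}$, showing simultaneously that the recursion step is well defined and that the legs it produces have the required structural properties. At step $s$, suppose the incoming budgets $\beta^i_s$ are integrable, nonnegative, and depend on $i$ only through the class of $i$ under $\sim_x$; at $s=0$ this is the budget $\beta_0=0$. Then the pooled stream $\tilde\eta^{(k)}_s$ is a nonnegative integrable payment stream on $\underline\Omega^{\infty+}$ vanishing before $\tau^+$, so it is admissible in the sense of Definition~\ref{def:compactIndependentEta}. By hypothesis $\Prob^{+,k}_s$ is therefore compact. It satisfies no satiation because ${\cal J}^+$ is induced from ${\cal J}$ and does not see the canonical factor; this is the standing assumption transported along Definition~\ref{def:inducedPreferences}. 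Theorem~\ref{thm:regularMaximisers} then gives regular maximisers, and part (i) of that notion yields an arg max $X^k(\beta_s)$ chosen jointly measurably in the state and the budget. Composing with the induced map $\Psi^N_*$ of Lemma~\ref{lemma:disintegrationNatural} makes the choice measurable in $x$. Case~1 is trivial. The remaining legs $\gamma^i_{s,s}$, $h^k_{s+1}$, $C^i_{s+1}$ and $\gamma^i_{s,s+1}$ are conditional expectations and ratios of measurable integrable quantities, so they are measurable. Integrability of $\gamma^i_{s,s+1}$ follows because the pool's total payout equals $\sum_j C^j_{s+1}$, which is integrable. This propagates the inductive hypothesis to $\beta_{s+1}$, including dependence on $i$ only through class, since $h^k$ is evaluated at agent $i$'s own coordinates.

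\textbf{Structural properties.} Myopia (i) holds by construction, since only $\gamma_{s,s}$ and $\gamma_{s,s+1}$ are defined and all other legs are zero. Non-participation holds because the infinite-fund arg max satisfies \eqref{eq:classicalNonParticipation}, so $h^k_{s+1}$ vanishes on $\{\tau_d\le s+1\}$, and Case~1 handles the later times. Symmetry (ii) follows because every leg is a fixed function, depending only on the pool and class data, evaluated at $(\omega^M,\omega^{I,i})$. Agents with equal idiosyncratic state therefore receive equal legs, and the correction ratio is common to the whole pool. For (iv), the extra condition \eqref{eq:MImeasurabilityHomogeneous} holds because $h^k_{s+1}$ uses only market information and own-path information at time $s+1$. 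The ratio involves only conditional expectations onto ${\cal F}^M_{s+1}$ and the agents' own states, and since $\Psi^N$ acts exchangeably, no canonical information beyond time $s$ enters.

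\textbf{Admissibility (iii)/(iv); the main obstacle.} I expect the main difficulty to be the measurability condition $c_{u,w}\in L^0({\mathbb R},{\cal F}^M_{u,w})$ together with the funding inequality \eqref{eq:budgetStar}. At $u=s$ with $w=s$, the net position is $\sum_i(\beta^i_s+\eta^i_s-\gamma^i_{s,s})$, which is ${\cal F}^*_s$-measurable and so lies in ${\cal F}^M_{s,s}$. At $w=s+1$ the key identity is
\[
\sum_{i\in P^k_x}\gamma^i_{s,s+1}=\sum_{i\in P^k_x}C^i_{s+1}
\]
on the event where the denominator is nonzero. The right-hand side is ${\cal F}^M_{s+1}$-measurable given time-$s$ information, so $c_{s,s+1}$ lies in ${\cal F}^M_{s,s+1}$. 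On the event where the denominator vanishes I must check separately that the $C^j$ also vanish; I would argue this from nonnegativity of the $\tilde\gamma^j$ and the fact that zero conditional expectation forces the variable to vanish a.s. For funding, I take the time-$s$ conditional expectation of $\sum_i C^i_{s+1}$. By the tower property this equals the market value of $\sum_i h^k_{s+1}$. By the self-financing budget \eqref{eq:homogeneousBudget} of the arg max, transported along $\Psi^N$ using that $\Psi^N$ is measure-preserving on the first $N$ exchangeable coordinates, this is at most $\sum_i(\beta^i_s+\eta^i_s)-\sum_i\gamma^i_{s,s}$. Pooling the $\beta^j$ within class makes this per-class average budget exactly $\tilde\eta^{(k)}_s$. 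At later $u$, the surrendered legs $\gamma_{u-1,u}$ are exactly $\beta_u$, so the condition repeats the same computation at step $u$. For (iv) with $\beta\equiv0$ the same computation shows $c^h_{u,w}=c_{u,w}$, which gives the homogeneous conditions.
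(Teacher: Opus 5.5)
The step that fails is your treatment of the event where the denominator of the sharing ratio vanishes. Since $\tilde\gamma^j_{s,s+1}=h^k_{s+1}(\omega^M,\omega^{I,j})$ is already ${\cal F}^*_{s+1}$-measurable, that denominator is $D_P:=\sum_{j\in P}\tilde\gamma^j_{s,s+1}$. Nonnegativity does give $\tilde\gamma^j_{s,s+1}=0$ on $\{D_P=0\}$.

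But $\{D_P=0\}$ is an idiosyncratic time-$(s+1)$ event (typically: every member of $P$ exits at $s+1$), not an event of ${\cal F}^M_{s,s+1}$. Meanwhile $C^j_{s+1}={\mathbb E}_{{\mathbb Q}^*_x}(\tilde\gamma^j_{s,s+1}\mid{\cal F}^M_{s+1})$ averages over the idiosyncratic outcomes. A variable vanishing on an event outside the conditioning $\sigma$-algebra tells you nothing about its conditional expectation there. In a mortality model $C^j_{s+1}$ is the survival-weighted continuation value, which is strictly positive whenever survivors are prescribed positive consumption. So in general
\[
\sum_{i\in P}\gamma^i_{s,s+1}=\indicator_{\{D_P>0\}}\sum_{j\in P}C^j_{s+1},
\]
and $c_{s,s+1}$ differs from an ${\cal F}^M_{s,s+1}$-measurable quantity by the nonnegative term $\sum_P\indicator_{\{D_P=0\}}\sum_{j\in P}C^j_{s+1}$. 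This term is not ${\cal F}^M_{s,s+1}$-measurable whenever $\{D_P=0\}$ has market-conditional probability strictly between $0$ and $1$, which is the generic case for finite $N$ with exits.

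No sharper argument repairs this under the stated convention. The payee condition \eqref{eq:noConsumptionNonParticipationI} forbids paying exited members, and for the broad tontine there is no other pool to absorb the residual. What you can actually prove is the marketability part of (iii) and (iv) in one of three forms: off $\{D_P=0\}$, under a hypothesis making that event null, or after relaxing Definition~\ref{def:admissibleIdiosyncratic} to allow free disposal of such a surplus. The funding part survives unchanged, since it only needs $\sum_{i\in P}\gamma^i_{s,s+1}\le\sum_{j\in P}C^j_{s+1}$.

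Otherwise your route is the paper's:
\begin{itemize}
\item forward induction carrying nonnegativity and class-constancy of $\beta_s$;
\item the arg max from compactness of $\Prob^{+,k}_s$;
\item myopia by construction;
\item symmetry from the common function $h^k_{s+1}$ and a pool-level ratio;
\item funding from the pool sum and the tower property;
\item $c_{u,w}=c^h_{u,w}$ for (iv).
\end{itemize}
The paper does not close the gap above either. It asserts that each leg $\gamma^i_{s,s+1}$ is ${\cal F}^M_{s+1}$-measurable, which is false for individual legs, because the leading factor depends on agent $i$'s own time-$(s+1)$ state. Your aggregate identity is the right mechanism, and the zero-denominator event is exactly where it breaks.

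Two smaller corrections.
\begin{itemize}
\item No satiation is not a standing assumption and is not hypothesised here. Nor does it transport from ${\cal J}$, since clause (iii) of Definition~\ref{def:noSatiation} involves the income stream. You do not need it: the zero stream is feasible, and conditions (ii) and (iii) of Definition~\ref{def:compactProblem} alone give a measurably selected arg max, exactly as in part (i) of the proof of Theorem~\ref{thm:regularMaximisers}.
\item Your justification of \eqref{eq:MImeasurabilityHomogeneous} misreads it. With start date $0$ it excludes canonical information after time $0$, not after the current step. Yet the step-$u$ arg max is selected at $\Psi^N_*(x)$ for a full time-$u$ state $x$, so you need a selection that ignores the canonical coordinate. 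Exchangeability of $\Psi^N$ does not supply this. Nor does the paper's remark that ${\cal F}^M_{u,t}$-measurability is stronger, since ${\cal F}^M_{u,t}$ contains the canonical information up to time $u$.
\end{itemize}
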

The proof is in Appendix~\ref{appendix:tontineStrategies}.

We now show that, as the fund grows, a member of either tontine achieves
essentially the same utility as an agent in the (fully diversified)
infinite-fund problem $\Prob^+$. The proof is a forward induction argument using the Strong Law of Large Numbers.

\begin{theorem}[Tontine convergence]

Recall the standing assumption of this section that $\Omega^I$ is finite and that ${\cal J}$ is Markovian.
	
Fix a homogeneous $x_0\in\Omega^*_0$ with common initial idiosyncratic type
$\iota_0$, i.e.\ $\omega^{I,i}_0=\iota_0$ ${\mathbb Q}^*_{x_0}$-a.s.\ for
every $i\in{\mathbb N}$, and write $x_0^+\in\Omega^{M\times I}_0$ for the
corresponding state $(\omega^M_0,\iota_0)$ of a single representative
agent. For each $N$, let $\gamma^R_N,\gamma^B_N\in{\cal C}(N,0)$ be the
restricted and broad tontine strategies of
Definition~\ref{def:tontineStrategy}, built from agents
$\{0,\ldots,N-1\}\subset{\mathbb N}$ of this fixed underlying i.i.d.\
population. For a state $y$ and budget $b$, write
$V^+_s(y,b):=V_s(\Prob^+;y,b,0)$.

\label{thm:tontineConvergence}
Suppose:
\begin{enumerate}[(a)]
\item $\Prob^+$ is compact independent of $\eta$
(Definition~\ref{def:compactIndependentEta}) and satisfies no
satiation, and $V^+_0(x_0^+,0)>-\infty$;
\item for every $s\in{\cal T}_{<T}$, the sharing ratios $R_{N,s+1}$
of both strategies (defined in the proof) are uniformly bounded:
$R_{N,s+1}\le M$ a.s.\ for some constant $M<\infty$ independent of
$N$.
\end{enumerate}
Then
\[
{\cal J}_0\bigl(X^0(\gamma^R_N)\mid x_0\bigr)
\;\to\;
V^+_0(x_0^+,0)
\qquad\text{and}\qquad
{\cal J}_0\bigl(X^0(\gamma^B_N)\mid x_0\bigr)
\;\to\;
V^+_0(x_0^+,0)
\]
as $N\to\infty$, along the entire sequence. The same conclusions
hold with the start $(0,x_0)$ replaced by any $(u,y_0)$ with $y_0$
homogeneous, for the strategies initialised there with
$\beta_u\equiv0$.
\end{theorem}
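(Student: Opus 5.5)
The plan is a forward induction over ${\cal T}$ that compares, step by step, the realised tontine consumption of agent $0$ with the consumption of the infinite-fund optimiser of $\Prob^+$. We then close the argument with the continuity axioms for ${\cal J}$.

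\textbf{Defining the sharing ratio.} At each step $s$, write the correction factor in Definition~\ref{def:tontineStrategy} as
\[
R_{N,s+1}:=\frac{\sum_{j\in P^k_x}C^j_{s+1}}{\sum_{j\in P^k_x}{\mathbb E}_{{\mathbb Q}^*_x}(\tilde\gamma^j_{s,s+1}\mid{\cal F}^*_{s+1})}.
\]
The members of a pool are exchangeable given $x$, and the legs $h^k_{s+1}(\omega^M,\omega^{I,i})$ are conditionally i.i.d.\ given the market path. By construction of $C^j_{s+1}$, numerator and denominator divided by $|P^k_x|$ have the same conditional mean given ${\cal F}^M_{s+1}$. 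The Strong Law of Large Numbers, applied conditionally on the market (via Theorem~\ref{thm:canonicalDeFinetti} or directly from independence of the $\omega^{I+,i}$), then gives $R_{N,s+1}\to1$ almost surely.

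This works because $\Omega^I$ is finite. The class sizes $|{\cal I}^k_x|$ grow linearly in $N$ almost surely, or the class is empty, so the law of large numbers applies within each restricted pool as well as within the broad pool.

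\textbf{Inductive hypothesis.} The induction claim is that for each $s$ the budget $\beta^0_{s}$ carried by agent $0$ converges a.s.\ and in $L^1$ to the infinite-fund pot $R^{X^\circ}_s$ of the optimiser $X^\circ$ of $\Prob^+$ started at $(0,x_0^+,0)$. Correspondingly, the per-pool average budget converges to the same limit. Uniform integrability comes from hypothesis (b), since $R_{N,s+1}\le M$ and the numerators are integrable.

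\textbf{Inductive step.} At step $s$, the payment stream $\tilde\eta^{(k)}_s$ of the pooled problem converges in $L^1$ to the stream of the limiting problem. The \emph{Attainment and stability} clause of Definition~\ref{def:compactProblem} then shows that the selected arg maxes $X^k(\beta_s)$ converge in $L^1$, along subsequences, to an optimiser of the limit. To obtain convergence along the whole sequence I would use the fact that the limit is statewise optimal at every future date (Proposition~\ref{prop:reductionClassical}), so the limiting value is identified uniquely even if the optimiser is not. Using the carried-forward form of Remark~\ref{rem:carriedForward}, $\gamma^0_{s,s}\to X^\circ_s$ and $\gamma^0_{s,s+1}=R_{N,s+1}\cdot(\text{own realised continuation})\to R^{X^\circ}_{s+1}$, which propagates the hypothesis.

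\textbf{Concluding the two inequalities.} The upper bound $\limsup_N{\cal J}_0(X^0(\gamma_N)\mid x_0)\le V^+_0$ follows from Upper semicontinuity applied to the $L^1$ convergence $X^0(\gamma_N)\to X^\circ$ (after transport along $\Psi^N$, Lemma~\ref{lemma:enrichmentStable}), together with ${\cal J}_0(X^\circ)=V^+_0$. The lower bound is the delicate direction. The realised stream is only asymptotically close to $X^\circ$, and the canonical lottery coordinates are frozen in the finite-fund evaluation. I would handle this by comparing against a slightly shrunk budget. By no satiation and the right-continuity in Theorem~\ref{thm:regularMaximisers}(iii), $V^+_0(x_0^+,-\epsilon)$ is close to $V^+_0$. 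On the event where all ratios $R_{N,r}\ge 1-\delta$, whose probability tends to $1$, the tontine stream dominates the corresponding optimiser by Monotonicity. On the exceptional set the utility is bounded below by property (iv) of regular maximisers together with the bound $R\le M$. The Lottery tolerance axiom then yields $\liminf\ge V^+_0-\epsilon'$.

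\textbf{Main obstacle.} I expect the main difficulty to be exactly this lower bound. It requires a uniform lower bound $m$ on the exceptional branches, which must handle the possibility of zero budget on those branches, and it requires converting a multiplicative shortfall into a budget shortfall through the Markov structure. The extension to starts $(u,y_0)$ is the same argument begun at $u$.
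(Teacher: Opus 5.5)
Your forward induction (the law of large numbers for $R_{N,s+1}$, stability of the stepwise arg maxes, hypothesis (b) with dominated convergence) is essentially the paper's Step~2. Your upper bound via Upper semicontinuity is also fine once the limit stream is identified. The gap is the lower bound: the mechanism you sketch would fail at four points.

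\textbf{Monotonicity.} Monotonicity needs almost-sure domination of one stream by another at every date. The tontine, however, re-optimises at each date from a random realised pot, and nothing in the framework makes optimisers pointwise monotone in the budget. So even on $\{R_{N,r}\ge 1-\delta\ \forall r\}$ there is no reason for $X^0(\gamma_N)$ to dominate an optimiser with budget $-\epsilon$. In any case, a multiplicative shortfall of the pot at a date $r\ge1$ is not a shortfall of the date-$0$ budget.

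\textbf{Lottery tolerance.} At time $0$ the axiom controls only exceptional sets of canonical coordinates resolved by time $0$. The event $\{R_{N,r}<1-\delta\}$ is resolved at $r\ge1$ by other agents' paths. Continuation consistency transmits only inequalities holding exactly at every state, so a high-probability comparison at date $r$ cannot be pushed back to date $0$.

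\textbf{The floor $m$.} Part (iv) of regular maximisers gives only pointwise finiteness, and (b) bounds $R$ from above. On bad branches the pot can be close to zero, where utility can be unboundedly negative.

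\textbf{Continuity direction.} $V^+_0(x_0^+,-\epsilon)\to V^+_0(x_0^+,0)$ is a left limit, which right-continuity does not give. The budget $-\epsilon$ may also be infeasible, e.g.\ if $\eta_0=0$ at $x_0^+$.

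The missing idea is to compare values rather than streams. The paper absorbs the sharing distortion into an income perturbation $\delta_{N,s+1}:=(R_{N,s+1}-1)C^0_{s+1}$. It identifies the delivered stream as the forward composition of stepwise arg maxes for $\Prob^+(\eta^++\delta_N)$. This composition is an optimiser by the continuation-consistency argument of Step~1 of the proof of Proposition~\ref{prop:acceptableFromOptimisers}, so ${\cal J}_0(X^0(\gamma_N)\mid x_0)=V_0(\Prob^+(\eta^++\delta_N);x_0^+,0,0)$.

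Your induction then only needs to show $\delta_N\to0$ in $L^1$, and Lemma~\ref{lemma:valueContinuity} gives both inequalities at once. The liminf half comes from the ${\cal F}_s$-measurable rescaling of Lemma~\ref{lemma:budgetPerturbation} together with Monotone continuity, not from pointwise comparison of different optimisers. The limsup half comes from stability and the Fatou property.

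This also fixes a second weakness in your induction. You assume the stepwise limits are tails of a single fixed optimiser $X^\circ$, but stability only produces \emph{some} optimiser at each step. Reconciling them requires exactly this splicing-by-continuation-consistency argument.
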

The proof is given in Appendix~\ref{appendix:tontineStrategies}.

For the restricted tontine strategy, this result formalises the
common understanding of how one can implement theoretical strategies
derived from infinite-fund models in practice which operates as a standing assumption in the insurance literature. What is interesting to note is that the Strong Law of Large Numbers only supplies a lower bound on the utility that can be achieved in finite funds. Thus this argument is not sufficient in itself to justify restricting attention to the infinite-fund problem. In the notation of the proof outline in the Introduction, this establishes $N_1H_1T_1\geq N_0H_*T_*$.

The broad tontine strategy is less well known, and we believe it has not previously been described in this generality. A critical detail in its definition is that the payment is defined conditioned on knowledge of the market at time $s+1$. This means that agents who make poor investment choices will be penalised for their decision without this unduly affecting other members of the pool. The point of the broad tontine design is that it allows longevity risk to be pooled effectively even in small funds.

The special case of using the broad tontine strategy for insuring against idiosyncratic longevity risk has been proposed previously. A continuous-time version is described in \cite{donelly_bernhardt} and a discrete-time
version in \cite{armstrong_buescu_dalby}. Strictly speaking these are different strategies, since they only share the funds of the deceased among survivors, but they embody the same principle of separating investment risk from idiosyncratic risk, which we generalise here.

One problem with the broad tontine strategy is that it does not always converge: we require additional conditions for the broad strategy in Theorem~\ref{thm:tontineConvergence}. However if one adjusts the strategy slightly this can be remedied. The idea is to have an infinite sequence of pools stratified by wealth. Everyone participates in the first pool, wealth over a certain threshold is then only pooled in the second pool and so forth. This design is formalised in \cite{armstrong_dalby_hobbs} for idiosyncratic longevity risk and generalises straightforwardly.

\subsection{Gauge symmetries}

\label{sec:symmetryArgument}

\begin{definition}
Let $\underline{\Omega}^1$ and $\underline{\Omega}^2$ be filtered
probability spaces equipped with distinguished coordinate random
variables
\[
\omega^M
\qquad\text{and}\qquad
(\omega^{I,i})_{i\in{\mathbb N}}.
\]

An {\em $(M,I)$-map}
\[
\Theta:\underline{\Omega}^1\to\underline{\Omega}^2
\]
is a filtration-preserving mod $0$ isomorphism such that
\[
\omega^M \circ \Theta=\omega^M
\]
and
\[
\omega^{I,i}\circ\Theta=\omega^{I,i}
\]
for every $i\in{\mathbb N}$.

If we also have distinguished coordinates ${\omega^{\can,i}}$ on each of $\underline{\Omega}^1$ and $\underline{\Omega}^2$,
an {\em $(M,I+)$-map} is an $(M,I)$ map which additionally preserves
the coordinate
$\omega^{\can,i}$ for each $i$.
\end{definition}

We now wish to define what it means for two possible distributions of income among 
a set of $N$-agents to be financially equivalent. This definition will take into account the idea that the label assigned to an agent is unimportant and so a random relabelling should be unimportant. The definition will also consider two distributions to be equivalent if they are stably related by a mod 0 isomorphism that preserves financially relevant information.
This gives the interpretation of our next definition.

\begin{definition}[$(M,I)$-equivalence]
\label{def:MIequivalenceGeneral}
Let
\[
\Gamma_1,\Gamma_2 \in L^0({\mathbb R}^N,({\cal F}^*_t)_{t \in {\cal T}})
\]
be ${\mathbb R}^N$-valued stochastic processes. We say that $\Gamma_1$
and $\Gamma_2$ are {\em $(M,I)$-equivalent} if there exists a
uniformly distributed $S_N$-valued random variable $g$ on
$\underline{\Omega}^{*,+}$ and a mod $0$ automorphism $\phi$ for
$\underline{\Omega}^{*,+}$ such that
\begin{enumerate}[(i)]
\item
$
\Gamma_1^{g^{-1}i}\circ\rho(g^{-1})\circ\pi^{\Omega^*}\circ\phi=\Gamma_2^i\circ\pi^{\Omega^*}
$
for every $i\in{\cal I}$;
\item $g$ is $\iota^{\Omega^{*,+}}({\cal F}^{\perp,M,I}_0)$-measurable, with
${\cal F}^{\perp,M,I}_0$ as defined in \eqref{eq:defFPerpMI};
\item $\phi$ is an $(M,I)$-map.
\end{enumerate}
\end{definition}

We extend $(M,I)$-equivalence, using the same $g,\phi$ throughout, to
any finite collection of ${\mathbb R}^N$-valued random variables or
processes on $\underline{\Omega}^*$: to the continuation stream
$\gamma$ (viewing $\gamma_u:=(\gamma^i_{u,t})_{i,t}$ as one member of
the collection for each $u$), the budget $b$, the participation data
$(\tau^i,\tau_d^i,\eta^i)_{i\in{\cal I}}$, and the process
$X\in L^0({\mathbb R}^N,({\cal F}^*_t)_{t\in{\cal
T}})$ appearing in Theorem~\ref{theorem:gaugeSelection} below.

\begin{lemma}[Invariance under $(M,I)$-equivalence]
\label{lem:admissibilityValueInvariance}
Suppose $\gamma$ is admissible for $(s,x,b)$ with
participation data $(\tau^i,\tau_d^i,\eta^i)_{i\in{\cal I}}$, and that
$(\tau',\tau_d',\eta',b',\gamma')$ is $(M,I)$-equivalent
to $(\tau,\tau_d,\eta,b,\gamma)$ via $(g,\phi)$.
\begin{enumerate}[(i)]
\item $\gamma'$ is admissible for $(s,x,b')$ with
participation data $(\tau'^i,\tau_d'^i,\eta'^i)_{i\in{\cal I}}$.
\item If $g$ takes values in the stabiliser
$\mathrm{Stab}(0):=\{h\in S_N:h(0)=0\}$ of agent $0$, then
${\cal V}^{\gamma'}={\cal V}^\gamma$.
\end{enumerate}
\end{lemma}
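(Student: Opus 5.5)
The proof is a naturality argument: both admissibility and the value function are built from operations that commute with relabelling of agents and with $(M,I)$-maps. Write $\Lambda:=\rho(g^{-1})\circ\pi^{\Omega^*}\circ\phi$, so that condition (i) of Definition~\ref{def:MIequivalenceGeneral} reads $\Gamma'^i\circ\pi^{\Omega^*}=\Gamma^{g^{-1}i}\circ\Lambda$ for each member $\Gamma$ of the collection. Everything will be lifted to $\underline{\Omega}^{*,+}$, where $\Lambda$ acts. The plan is to check that each ingredient of Definition~\ref{def:admissibleIdiosyncratic} and of Definition~\ref{def:valueFunctionFinite} is transported correctly by $\Lambda$.

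For (i), I first treat the aggregate position $c_{u,w}$ of \eqref{eq:defC}. It is a sum over $i\in{\cal I}$ of expressions built from $b^i,\eta^i,\gamma^i_{u,t}$. Reindexing the sum by $i\mapsto g^{-1}i$, which for fixed $g$ is a bijection of ${\cal I}$, gives $c'_{u,w}\circ\pi^{\Omega^*}=c_{u,w}\circ\Lambda$. The fact that $g$ is random causes no difficulty, because a sum over all of ${\cal I}$ is invariant under any permutation.

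Next I show that $\Lambda$ preserves the $\sigma$-algebras ${\cal F}^M_{u,w}$ of \eqref{eq:defFMst} and ${\cal F}^*_u$, together with the measure. There are three pieces.
\begin{itemize}
\item $\phi$ is filtration preserving and fixes $\omega^M$ and each $\omega^{I,i}$, so it preserves ${\cal F}^*_u$ and $\iota({\cal F}^M_w)$.
\item $\rho(g^{-1})$ permutes the agent factors ${\cal F}^{I+,i}_u$ among themselves.
\item $g$ is measurable with respect to ${\cal F}^{\perp,M,I}_0$, which is time-$0$ information independent of market and idiosyncratic data. Mixing over $g$ therefore preserves time-$u$ measurability and ${\mathbb Q}^*$. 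Concretely, I condition on $g=h$, apply the fixed-permutation statement, and use independence of $g$ from the relevant factors to reassemble.
\end{itemize}
Given this, measurability of $c'_{u,w}$ follows. The funding condition ${\mathbb E}(c'_{u,T}\mid{\cal F}^*_u)\ge0$ follows because conditional expectation commutes with a measure-preserving map that preserves ${\cal F}^*_u$. The non-participation constraints \eqref{eq:nonParticipationI}--\eqref{eq:noConsumptionNonParticipationI} transfer index by index, since $\tau'^i$ and $\tau_d'^i$ are transformed by the same $(g,\phi)$.

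For (ii), $g(0)=0$ gives $X^0(\gamma')\circ\pi^{\Omega^*}=X^0(\gamma)\circ\Lambda$. Substituting into Definition~\ref{def:valueFunctionFinite} turns the argument of ${\cal J}_\tau(\cdot\mid j)$ for $\gamma'$ into the argument for $\gamma$ precomposed with the transported map. That map is $\Psi^{-1}\circ\rho((0,i^N))\circ\Lambda\circ\rho((0,i^N))\circ\Psi$, suitably lifted. The argument then has two steps.
\begin{itemize}
\item This composite fixes $\omega^M$ and agent $0$'s idiosyncratic path, since $g$ stabilises $0$ and $\phi$ is an $(M,I)$-map. It permutes and scrambles only the coordinates that $\Psi\circ\pi^{\perp,N}$ sends into the canonical factor.
\item Hence the composite is $\hat\Phi$ for an adapted randomized automorphism $\Phi$ at time $\tau$, in the sense of Definition~\ref{def:adaptedRandomizedAutomorphism}. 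The Invariance axiom then gives equality of values at every $j$.
\end{itemize}
Alternatively, the composite can be absorbed into a change of $\Psi$, and Lemma~\ref{lem:valueFunctionWellDefined} finishes.

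I expect the main obstacle to be this last identification. One must check that the conjugated map is filtration preserving with the correct $t\vee\tau$ measurability, and that the random $g$ and $i^N$ enter only through canonical coordinates independent of $(\omega^M,\omega^I)$. I would first try to realise it as a change of the isomorphism $\Psi$ and invoke Lemma~\ref{lem:valueFunctionWellDefined}, because this avoids verifying the adaptedness of the inverse by hand.
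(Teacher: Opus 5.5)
Your part (i) is essentially the paper's argument. You reindex the sum defining $c_{u,w}$ to get $c'_{u,w}=c_{u,w}\circ\rho(g^{-1})\circ\phi$, then use that $\phi$ and each fixed $\rho(h^{-1})$ are measure- and filtration-preserving. For the random $g$ you do not need independence: it suffices to work on each event $\{g=h\}$, which is time-$0$ information and so passes through ${\mathbb E}(\,\cdot\mid{\cal F}^*_u)$. Part (ii), however, has a genuine gap.

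\emph{The gap.} ${\cal V}^\gamma$ is not agent $0$'s utility. Definition~\ref{def:valueFunctionFinite} evaluates $X^{i^N}(\gamma)\circ\rho((0,i^N))\circ\Psi\circ\pi^{\perp,N}$ for a uniformly random index $i^N$. Your identity $X^0(\gamma')=X^0(\gamma)\circ\Lambda$ only covers the event $\{i^N=0\}$. On $\{i^N=k\}$ one has $X^k(\gamma')=X^{g^{-1}(k)}(\gamma)\circ\Lambda$, which is a \emph{different} agent's stream. So the $\gamma'$-integrand is $X^{g^{-1}(k)}(\gamma)\circ\Lambda\circ\rho((0,k))\circ\Psi\circ\pi^{\perp,N}$. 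Precomposing the $\gamma$-integrand with your composite instead gives $X^{k}(\gamma)\circ\Lambda\circ\rho((0,k))\circ\Psi\circ\pi^{\perp,N}$.

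Equivalently, your composite does not fix the evaluated agent. The slot-$0$ idiosyncratic coordinate of $\rho((0,k))\circ\Lambda\circ\rho((0,k))(\omega)$ is $\omega^{I,(0,k)(g(k))}$, which equals $\omega^{I,0}$ only when $g(k)=k$. Once you conjugate by the swap, the hypothesis $g\in\mathrm{Stab}(0)$ no longer helps. So neither the Invariance axiom nor a change of $\Psi$ applies to this map, and the adaptedness check you flag as the main obstacle is not where the difficulty lies.

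\emph{The missing idea} is to commute $g$ past the swap and then relabel the selected agent; this is the route the paper takes.
\begin{itemize}
\item Since $g^{-1}(0)=0$, we have $\rho(g^{-1})\circ\rho((0,k))=\rho((0,g^{-1}(k)))\circ\rho(g^{-1})$.
\item Write $\phi\circ\rho((0,k))=\rho((0,k))\circ\phi_k$, where $\phi_k$ is again an $(M,I)$-map.
\item With $m=g^{-1}(k)$, the $\gamma'$-integrand on $\{i^N=k\}$ becomes $X^m(\gamma)\circ\rho((0,m))\circ\rho(g^{-1})\circ\phi_k\circ\Psi\circ\pi^{\perp,N}$.
\item $\phi_k$ now drops out by Lemma~\ref{lemma:invarianceFibrewise}.
\item $\rho(g^{-1})\circ\Psi$ satisfies the defining conditions of an encoding: it fixes $\omega^M$, and it fixes slot $0$ because $g(0)=0$.
\item The selected index is now $g^{-1}(i^N)$ rather than $i^N$. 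This is removed by a measure-preserving relabelling of the $\underline{\Omega}^N$ factor, since $g^{-1}(i^N)$ is again uniform given the other coordinates.
\end{itemize}
Lemma~\ref{lem:valueFunctionWellDefined} then identifies the result with ${\cal V}^\gamma$.
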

This follows by a naturality argument. A full proof is given in Appendix~\ref{appendix:symmetryArgument}.

This supplies an infinite dimensional-symmetry group for our problem. Following the terminology of differential geometry and physics we will refer to this as a gauge symmetry. Our next result shows that we can find a choice of gauge such that a vector of stochastic processes is highly exchangeable.

\begin{theorem}[Gauge selection]
Given
\[
X \in L^0( {\mathbb R}^N, ({\cal F}^*_t)_{t \in {\cal T}})
\]
there exists
\[
X^* \in L^0( {\mathbb R}^N, ({\cal F}^*_t)_{t \in {\cal T}})
\]
which is $(M,I)$-equivalent to $X$ such that for each $i=0,\ldots, N-1$
\[
(X^*)^i \in L^0({\mathbb R},
\iota^{\Omega^*}(
{\cal F}^M_t\times{\cal F}^{\can}_t\times{\cal E}^{N,i}({\cal F}^{I+}_t))_{t \in {\cal T}}).
\]
Moreover, $X^*$ can be chosen such that
\[
(X^*)^i(\omega)=(X^*)^{g^{-1}i}(g^{-1}\omega)
\]
for all $g \in S_N$.
\label{theorem:gaugeSelection}
\end{theorem}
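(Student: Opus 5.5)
The plan is to build $X^*$ by randomising the agent labels and then absorbing that randomisation into the canonical coordinates of the agents. The construction is designed to commute with the $S_N$-action, so the equivariance identity $(X^*)^i(\omega)=(X^*)^{g^{-1}i}(g^{-1}\omega)$ comes out automatically; measurability then follows from it.

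\emph{Step 1 (symmetrise).} I first adjoin to $\underline{\Omega}^{*}$ a uniform $S_N$-valued variable $g$ taken from the auxiliary canonical factor of $\underline{\Omega}^{*,+}$ at time $0$. It is then $\iota^{\Omega^{*,+}}({\cal F}^{\perp,M,I}_0)$-measurable, as condition (ii) of Definition~\ref{def:MIequivalenceGeneral} requires. I set
\[
Y^i:=X^{g^{-1}i}\circ\rho(g^{-1}).
\]
On the product $\underline{G}\times\underline{\Omega}^*$, the family $(Y^i)$ is equivariant for the diagonal action that moves the $\underline{G}$ factor and permutes agents. Lemma~\ref{lemma:isomorphism} (the map $\phi^{-1,0}$) identifies the action $\rho^{0,1,0}$ with $\rho^{-1,1,0}$. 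I use it to convert equivariance of the relabelled process into genuine invariance of the pair (label, agents) under $S_N$.

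\emph{Step 2 (absorb the label).} The time-$0$ canonical coordinates $(\omega^{\can,i}_0)_{i<N}$ are atomless and a.s.\ pairwise distinct, so $S_N$ acts freely on them. Lemma~\ref{lemma:sectionExists} therefore gives a mod $0$ isomorphism of $\prod_{i<N}\underline{\Omega}^{\can}_0$ with $\underline{G}\times(\text{quotient})$ intertwining $\rho^{1,0}$ with the permutation action. Via Lemma~\ref{lemma:stableEquivalence}, the quotient is again a copy of $\underline{\Omega}^{\can}$, isomorphic to the exchangeable part ${\cal E}^N({\cal F}^{\can}_0)$ by Lemma~\ref{lemma:standardFiltered}.

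I would then define $\phi$ to act only on the canonical coordinates (agent and common). It trades the external label $g$ for the "which-labelling" coordinate $\zeta$ produced by Lemma~\ref{lemma:sectionExists}, and it pushes the remaining information in the common canonical factor and in the other agents' canonical coordinates into (a) the common factor ${\cal F}^\can_t$ and (b) symmetric functions of the agents. Because $\phi$ fixes $\omega^M$ and every $\omega^{I,i}$, it is an $(M,I)$-map. Doing this at each time layer separately, with time-$t$ coordinates going to time-$t$ coordinates, makes it filtration preserving. I define $X^*:=Y\circ\pi^{\Omega^*}\circ\phi$, which by construction is $(M,I)$-equivalent to $X$.

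\emph{Step 3 (measurability).} After Step 2, $X^*$ is fully $S_N$-equivariant. For $h$ in the stabiliser $S^{N,i}$, equivariance gives $(X^*)^i\circ h^{-1}=(X^*)^i$. Hence $(X^*)^i$ is invariant under $S^{N,i}$, and adaptedness is inherited from $X$ because $\phi$ and $g$ are time-$0$ or filtration preserving. It remains to show that an $S^{N,i}$-invariant, ${\cal F}^*_t$-measurable variable is measurable for $\iota^{\Omega^*}({\cal F}^M_t\times{\cal F}^\can_t\times{\cal E}^{N,i}({\cal F}^{I+}_t))$. I expect this to be the main obstacle. Invariance alone does not remove dependence on agents with index $\ge N$, nor the coupling between the common canonical factor and the agent factors.

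To handle agents $\ge N$, I would first absorb their coordinates into ${\cal F}^\can_t$ by a further $(M,I)$-map. This is allowed because the statement only concerns ${\cal I}$, and the product of the common canonical factor with the canonical coordinates of agents $\ge N$ is again standard and atomless, so Lemma~\ref{lemma:stableEquivalence} applies; the idiosyncratic coordinates of agents $\ge N$ would need the same treatment, and I would check whether the $(M,I)$ constraint on them forces a separate argument. With that done, invariance under $S^{N,i}$ of a function on a product whose agent factors are permuted is exactly measurability with respect to ${\cal E}^{N,i}$ of the agent product, tensored with the market and common-canonical factors. Checking this identification modulo completions, and verifying that the per-time-layer isomorphisms glue into a single filtration-preserving map, is where I expect most of the technical work.
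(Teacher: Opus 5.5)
Your strategy is essentially the paper's. You relabel by an independent uniform permutation and split off a ``which-labelling'' coordinate from a free permutation action with Lemma~\ref{lemma:sectionExists}. You then twist, in the manner of Lemma~\ref{lemma:isomorphism}, so that the diagonal symmetry $Y$ already has becomes a symmetry of $\underline{\Omega}^*$ alone. Finally you fold the auxiliary label into the common canonical factor by an $(M,I)$-map (the paper's $\Lambda^*$, which commutes with $\rho$) and read off measurability from stabiliser invariance. Your use of a single free $S_N$-action on $(\omega^{\can,j}_0)_{j<N}$ is leaner than the paper's route. The paper takes stabiliser-by-stabiliser quotients $\Delta^{N,i}$ and then appeals to naturality; your version gives the equivariance identity directly, with stabiliser invariance as a corollary.

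The twist, however, is the one step that must be written down, and ``trades'' leaves it open. Take $\zeta(h\omega)=h\zeta(\omega)$ and your convention $Y^i=X^{g^{-1}i}\circ\rho(g^{-1})$. Then compose with $(g,\omega)\mapsto(\zeta(\omega)g,\omega)$. This is the map $\phi^{1,0}$ of the family preceding Lemma~\ref{lemma:isomorphism}; $\phi^{-1,0}$ belongs with the opposite convention $X^{gi}\circ\rho(g)$. It is a time-$0$ $(M,I)$-map, measure preserving because left multiplication by $\zeta(\omega)$ permutes $S_N$, and
\[
Z^i(g,\omega)=X^{g^{-1}\zeta(\omega)^{-1}i}\bigl(g^{-1}\zeta(\omega)^{-1}\omega\bigr)
\quad\text{satisfies}\quad
Z^i(g,h\omega)=Z^{h^{-1}i}(g,\omega).
\]
Setting $X^*:=Z\circ(\Lambda^*)^{-1}$ then gives the stated identity. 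A literal interchange of $g$ and $\zeta$ is also an $(M,I)$-map, but it is not equivariant. Nothing beyond this twist and the final merge is needed, so you can drop the plan to push canonical data into ``symmetric functions of the agents''.

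Your ``main obstacle'' in Step~3 comes from a misreading. ${\cal E}^{N,i}$ means invariance under $S^{N,i}\subseteq S_N$, which fixes every index $\ge N$, so arbitrary dependence on agents $\ge N$ is already allowed. The repair you propose is in any case unavailable. An $(M,I)$-map must fix $\omega^{I,l}$ for every $l\in\mathbb N$, so the idiosyncratic coordinates of those agents cannot be moved into ${\cal F}^\can_t$.

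The coupling with the common canonical factor is routine, as in the paper, because $S^{N,i}$ acts on the agent factor alone. After composing with $\arctan$ to make $(X^*)^i$ bounded, approximate it in $L^1$ by finite sums of product indicators. Then apply $|S^{N,i}|^{-1}\sum_{h\in S^{N,i}}(\,\cdot\,)\circ\rho(h)$, which fixes $(X^*)^i$. This exhibits $(X^*)^i$ as an $L^1$-limit of $\iota^{\Omega^*}({\cal F}^M_t\times{\cal F}^\can_t\times{\cal E}^{N,i}({\cal F}^{I+}_t))$-measurable functions.

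No per-time gluing is needed either. The twist involves only time-$0$ coordinates, and $\Lambda$ may be taken as a product of per-time isomorphisms.
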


The idea behind the proof is first to construct a symmetrised vector
$\bar X$ using a random permutation $\hat g$.
To facilitate this construction we repeatedly enlarge the underlying
probability space.
The resulting process $\bar X$ is highly symmetric.
However, the symmetry is initially expressed in terms of the wrong group
action and therefore does not immediately yield the desired
exchangeability properties.
To correct this, we use Lemma~\ref{lemma:isomorphism}, which allows us
to pass between different group actions.
Furthermore, Lemma~\ref{lemma:sectionExists} allows us to realise these
actions in a canonical form, even when they are not initially presented
in the form required by Lemma~\ref{lemma:isomorphism}.
After enlarging the probability space by one further auxiliary factor,
we introduce an additional ``twist'' which converts the symmetry of
$\bar X$ into the desired exchangeable symmetry.
This produces the gauge-selected representative appearing in the
statement of the theorem.

\begin{proof}[Gauge selection, Theorem~\ref{theorem:gaugeSelection}]
For each $N$, let
\[
\theta^{N}:\Omega^{\can,\infty}_{\cal T} \to \Omega^{\can}_{\cal T}
\]
be a measure-preserving map as in Lemma~\ref{lemma:standardFiltered}.

Let ${g}^N$ be a uniformly distributed, ${\cal F}^{\can}_0$-measurable $S_N$-valued random variable defined on $\underline{\Omega}^{\can}_{\cal T}$; such a variable exists since $\underline{\Omega}^{\can,0}$ is atomless. Define
\[
\bar{g}^N:={g}^{N} \circ \theta^{N}.
\]
This ensures that $\bar{g}^N$
is an $N$-exchangeable, uniformly distributed, $S_{N}$-valued random variable defined on
$\Omega^{\can,\infty}_{\cal T}$.

We have
\[
X^i \circ (\id_M \times \theta^N \times \id_{I+,\infty}) \in L^0( {\mathbb R}, \iota({\cal F}^M_t \times {\cal E}^N({\cal F}_{t}^{\can,\infty}) \times {\cal F}^{I+,\infty}_t)_{t \in {\cal T}}).
\]

Let $\underline{\Omega}^{{\cal E}^N,1}$ and $\underline{\Omega}^{{\cal E}^N,2}$ be two copies of $\underline{\Omega}^{{\cal E}^N}$.
Let
\[
\pi^{{\cal E}^N,1}:\Omega^M \times \Omega^{{\cal E}^N,1} \times \Omega^{{\cal E}^N,2} \times \Omega^{I+,\infty}  \to 
\Omega^{{\cal E}^N,1}
\]
be the projection onto $\Omega^{{\cal E}^N,1}$. Let
\[
\pi^{{\cal E}^N,1}_{\perp}:\Omega^M \times \Omega^{{\cal E}^N,1} \times \Omega^{{\cal E}^N,2} \times \Omega^{I+,\infty} \to 
\Omega^M \times \Omega^{{\cal E}^N,2} \times \Omega^{I+,\infty}
\]
be the projection onto the complementary components.

Define
\[
(\hat{X}^N)^i := X^i \circ (\id_{M} \times \theta^{N} \times \id_{I+,\infty}) \circ \pi^{{\cal E}^N,1}_{\perp}.
\]
Then
\[
(\hat{X}^N)^i \in L^0( {\mathbb R}, \iota({\cal F}^M_t \times {\cal F}_t^{{\cal E}^N,1} \times {\cal F}_t^{{\cal E}^N,2} \times {\cal F}^{I+,\infty}_t)_{t \in {\cal T}}).
\]

Define $\hat{g}^N:=\bar{g}^{N} \circ \pi^{{\cal E}^N,1}$. We have
\[
\hat{g}^N \in L^0( S_N, {\cal E}^{N}_{\rho^{0,1,1,1}}({\cal F}^M_t \times {\cal F}_t^{{\cal E}^N,1} \times {\cal F}_t^{{\cal E}^N,2} \times {\cal F}^{I+,\infty}_t)_{t \in {\cal T}}).
\]
By construction $\hat{X}^N$ and $\hat{g}^N$ are independent,
$\hat{X}^N$ has the same distribution as $X$, $\hat{g}^N$ is uniformly distributed and $\hat{g}^N$ is invariant under the $S_N$-action $\rho^{0,1,1,1}$.

Define $\bar{X}^N$, the symmetrisation of $\hat{X}^N$ under $S_N$,
to be given by the standard action of the random group element
$\hat g^N$. To be precise,
decompose elements of the four-fold product space
as
\[
\omega^{(4)}:=(
\omega^M, (\omega^{1,j_1})_{j_1 \in {\mathbb N}},
(\omega^{2,j_2})_{j_2 \in {\mathbb N}},
(\omega^{3,j_3})_{j_3 \in {\mathbb N}} ),
\]
then
\[
\begin{split}
&(\bar{X}^N)^i (\omega^M, (\omega^{1,j_1})_{j_1 \in {\mathbb N}},
(\omega^{2,j_2})_{j_2 \in {\mathbb N}},
(\omega^{3,j_3})_{j_3 \in {\mathbb N}} ):= \\
&\quad
(\hat{X}^N)^{\hat{g}^{-1}i} (\omega^M, (\omega^{1,\hat{g}j_1})_{j_1 \in {\mathbb N}},
(\omega^{2,\hat{g}j_2})_{j_2 \in {\mathbb N}},
(\omega^{3,\hat{g}j_3})_{j_3 \in {\mathbb N}} )
\end{split}
\]
where we are writing $\hat{g}$
as an abbreviation for
\[
\hat{g}^N(\omega^{(4)}).
\]
Defining
\[
\underline{\Omega}^{(4)}: =
{\underline{\Omega}}^M \times {\underline{\Omega}}^{{\cal E}^N,1} \times {\underline{\Omega}}^{{\cal E}^N,2} \times {\underline{\Omega}}^{I+,\infty}
\]
we have
\[
\bar{X}^N \in \prod_{i=0}^{N-1} L^0( {\mathbb R}, ({\cal F}^{(4)}_t)_{t \in {\cal T}}).
\]

Let $S^{N,i}$ denote the stabiliser of $i$ in $S_N$.
Define
\[
\hat{i}:
\{0,\ldots,N-1\}\setminus\{i\}
\to
\{0,\ldots,N-2\}
\]
by
\[
\hat{i}(m)
=
\begin{cases}
m,   & m<i,\\
m-1, & m>i.
\end{cases}
\]
Given $g\in S_N$, define
\[
{\cal S}^i(g)(m)
:=\begin{cases}
i & m = i \\
({\hat{i}}^{-1} \circ 
\widehat{g i}
\circ g)m & m \not= i
\end{cases}
\]
Then
\[
{\cal S}^i(g)\in S^{N,i}.
\]

Recall that $(\bar X^N)^i$ is invariant under the action of
$S^{N,i}$ by $\rho^{0,1,1,1}$.

Let $\underline{S}^{N,i}$ denote the uniform discrete probability
space on $S^{N,i}$, equipped with the trivial filtration.

Let $\underline{\Delta}^{N,i}$ denote the quotient space of
$\underline{\Omega}^{(4)}$ under the action of $S^{N,i}$ by
$\rho^{0,1,1,1}$, and let
\[
\pi^{N,i}:\Omega^{(4)}\to\Delta^{N,i}
\]
be the quotient map.

Repeating the argument of Lemma~\ref{lemma:sectionExists},
and arguing as in the proof of Lemma~\ref{lemma:standardFiltered},
there exists a measurable filtration-preserving right inverse
\[
\xi^{N,i}:\Delta^{N,i}\to\Omega^{(4)}.
\]

Define
\[
\Xi^{N,i}:
\underline{S}^{N,i}\times\underline{\Delta}^{N,i}
\to
\underline{\Omega}^{(4)}
\]
by
\[
\Xi^{N,i}(h,\delta)
=
\rho^{0,1,1,1}(h,\xi^{N,i}(\delta)).
\]
Then $\Xi^{N,i}$ is a filtration-preserving mod $0$ isomorphism
between the action $\rho^{1,0}$ on
$\underline{S}^{N,i}\times\underline{\Delta}^{N,i}$
and the action $\rho^{0,1,1,1}$ on
$\underline{\Omega}^{(4)}$.
Define
\[
\zeta^{N,i}:\Omega^{(4)}\to S^{N,i}
\]
by requiring
\[
\rho^{0,1,1,1}
(
\zeta^{N,i}(\omega),
\xi^{N,i}(\pi^{N,i}(\omega))
)
=
\omega.
\]
Define
\[
\hat g^{N,i}
:=
{\cal S}^i(\hat g^N).
\]
Now define
\[
(\bar{X}^*)^i
\in
L^0(\mathbb R,({\cal F}^{(4)}_t)_{t\in{\cal T}})
\]
by
\[
\begin{split}
(\bar{X}^*)^i(\omega^{(4)})
:=
&
\ \rho^{0,1,1,1}
\circ
(\id\times (\bar X^N)^i)
\\
&
\circ
(\id\times\Xi^{N,i})
\circ
(\phi^{-1,0})^{-1}
\circ
(\id\times\Xi^{N,i})^{-1}
\bigl(
\hat g^{N,i}(\omega^{(4)}),
\omega^{(4)}
\bigr).
\end{split}
\]

We now claim that given  $i\in\{0,\ldots,N-1\}$,  $h \in S^{N,i}$ and $\omega^{(4)}\in\Omega^{(4)}$ we have
\[
(\bar X^*)^i
(
\rho^{0,1,1,1}(h)\omega^{(4)}
)
=
(\bar X^*)^i(\omega^{(4)}).
\]
This follows by construction and naturality, using Lemma~\ref{lemma:isomorphism}; a detailed argument is given as Lemma~\ref{lemma:gaugeInvariant}
in Appendix~\ref{appendix:symmetryArgument}.
The purpose of the conjugation
by
\[
(\id\times\Xi^{N,i})
\]
in the definition of $\bar X^*$ is precisely to place the
$S^{N,i}$-action into a form to which
Lemma~\ref{lemma:isomorphism} may be applied.

It follows that,
\[
(\bar X^*)^i
\in
L^0
\!\left(
\mathbb R,
{\cal E}^{N,i}(({\cal F}^{(4)}_t)_{t\in{\cal T}})
\right).
\]

Define
\[
\Theta^N
:=
\id_M
\times
\theta^N
\times
\theta^N
\times
\id_{I+,\infty}.
\]
Then
\[
\Theta^N:
\underline{\Omega}^{(4)}
\to
\underline{\Omega}^{*+}
\]
is a measure-preserving map which fixes the coordinates $\omega^M$
and $\omega^{I+,l}$ for every $l$, and which, by
Lemma~\ref{lemma:standardFiltered} applied in each of the two middle
coordinates, satisfies
\[
(\Theta^N)^{-1}({\cal F}^{*+}_t)={\cal F}^{(4)}_t
\]
modulo null sets for every $t\in{\cal T}$. By the Doob--Dynkin lemma,
every $V\in L^0({\mathbb R},({\cal F}^{(4)}_t)_{t\in{\cal T}})$ may
therefore be written as
\[
V=V^\flat\circ\Theta^N
\]
for a $V^\flat\in L^0({\mathbb R},({\cal F}^{*+}_t)_{t\in{\cal T}})$
which is unique up to almost sure equality. Moreover,
$\theta^N\circ\rho(g)=\theta^N$ almost surely for every $g\in S_N$,
since $\theta^N$ is
${\cal E}^N({\cal F}^{\can}_{\cal T})$-measurable, and hence
\[
\Theta^N\circ\rho^{0,1,1,1}(g)=\rho^{0,0,0,1}(g)\circ\Theta^N
\]
almost surely, where $\rho^{0,0,0,1}$ denotes the action of $S_N$ on
$\underline{\Omega}^M\times\underline{\Omega}^{\can}_{\cal T}\times\underline{\Omega}^{\can}_{\cal T}\times\underline{\Omega}^{I+,\infty}\cong\underline{\Omega}^{*+}$
which permutes only the factors of $\underline{\Omega}^{I+,\infty}$.
Consequently, if $V$ is invariant under the action $\rho^{0,1,1,1}$
of $S^{N,i}$, then $V^\flat$ is invariant under the action
$\rho^{0,0,0,1}$ of $S^{N,i}$.

Choose a filtration preserving mod 0 isomorphism
\[
\Lambda:\underline{\Omega}^{\can}_{\cal T} \times 
\underline{\Omega}^{\can}_{\cal T} \to \underline{\Omega}^{\can}_{\cal T}
\]
Extend $\Lambda$ by the identity on components other than $\Omega^{\can}_{\cal T}$ to obtain an $(M,I+)$-map
\[
\Lambda^*:\underline{\Omega}^{*+} \cong \underline{\Omega}^{\can}_{\cal T} \times \underline{\Omega}^* \to \underline{\Omega}^*.
\]
Define
\[
(X^*)^i
:=
((\bar X^*)^i)^\flat
\circ
(\Lambda^*)^{-1}
\]
Then
\[
X^*
=
\bigl(
(X^*)^0,\ldots,(X^*)^{N-1}
\bigr)
\in
\prod_{i=0}^{N-1}
L^0(
\mathbb R,
({\cal F}^{*}_t)_{t\in{\cal T}}
).
\]

Recall that $(\bar X^*)^i$ is invariant under the action
$\rho^{0,1,1,1}$ of $S^{N,i}$, so $((\bar X^*)^i)^\flat$ is adapted
and invariant under the action of $S^{N,i}$ on the factor
$\underline{\Omega}^{I+,\infty}$ alone; averaging over the finite
group $S^{N,i}$ shows that it is measurable with respect to
\[
\iota^{\Omega^{*+}}
(
{\cal F}^{\can}_t
\times
{\cal F}^M_t
\times
{\cal F}^{\can}_t
\times
{\cal E}^{N,i}({\cal F}^{I+}_t)
).
\]
Since $\Lambda$ is filtration preserving and $\Lambda^*$ fixes the
market and agent coordinates, $\Lambda^*$ carries this
$\sigma$-algebra onto
\[
\iota^{\Omega^*}
(
{\cal F}^M_t
\times
{\cal F}^{\can}_t
\times
{\cal E}^{N,i}({\cal F}^{I+}_t)
).
\]
Thus,
\[
(X^*)^i
\in
L^0
\!\left(
\mathbb R,
\iota^{\Omega^*}
(
{\cal F}^M_t
\times
{\cal F}^{\can}_t
\times
{\cal E}^{N,i}({\cal F}^{I+}_t)
)_{t\in{\cal T}}
\right).
\]
Finally, every step in the construction of $X^*$ from $X$ is
implemented by either

\begin{enumerate}[(i)]
\item $(M,I+)$-maps, which fix $\omega^M$ and every $\omega^{I,l}$;
\item the single genuine relabelling, by the uniformly distributed,
${\cal F}^{\perp,M,I}_0$-measurable element $\hat g^N$, that produces
$\bar X^N$ from $\hat X^N$: this transports the real agent-index
bundle $\underline{\Omega}^{I+,\infty}$ by $\rho((\hat g^N)^{-1})$,
together with the corresponding superscript change
$i\mapsto(\hat g^N)^{-1}(i)$;
\item enlargements of the probability space by factors measurable
with respect to ${\cal F}^{\perp,M,I}_T$, together with the
corresponding descents, such as $V\mapsto V^\flat$, along
measure-preserving maps fixing $\omega^M$ and every $\omega^{I,l}$.
\end{enumerate}

Conjugating the maps of (i) and (iii) past
$\rho((\hat g^N)^{-1})$ preserves the property of fixing $\omega^M$
and every $\omega^{I,l}$, so composing all the steps exhibits
$X^{(\hat g^N)^{-1}(i)}\circ\rho((\hat g^N)^{-1})\circ\phi=(X^*)^i$ for
a single combined $(M,I+)$-map $\phi$. Consequently $X^*$ is
$(M,I)$-equivalent to $X$: the construction genuinely relabels
agents via $\hat g^N$, and moves their real data with them,
rather than merely relabelling which formula is called $(X^*)^i$.

Finally, the claimed symmetry follows from the fact that the
construction of $\bar X^*$ from $\bar X^N$ is natural with respect to
relabelling by elements of $S_N$.

Indeed, the assignment
\[
(\bar X^N)^i \mapsto (\bar X^*)^i
\]
is built entirely from constructions that commute with the action of
$S_N$: passage to stabilisers, formation of quotients, choice of
sections, the maps $\Xi^{N,i}$, the induced random permutations
$\hat g^{N,i}$ and the isomorphism
$(\phi^{-1,0})^{-1}$.

Consequently the equivariance relation
\[
(\bar X^N)^i(\omega)
=
(\bar X^N)^{g^{-1}i}(g^{-1}\omega)
\]
is preserved by the construction and hence
\[
(\bar X^*)^i(\omega)
=
(\bar X^*)^{g^{-1}i}(g^{-1}\omega).
\]

Since $\Theta^N$ and $\Lambda^*$ are $(M,I+)$-maps, they also commute
with relabelling, and therefore the same identity holds for $X^*$.
\end{proof}

\begin{remark}
\label{rem:gaugeSelectionUniform}
The pair $(g,\phi)$ produced by the proof of
Theorem~\ref{theorem:gaugeSelection} is built entirely from auxiliary
structure ($\theta^N$, $\hat g^N$, $\Theta^N$, $\Lambda^*$, and the
quotient/section maps $\Xi^{N,i}$) none of which refer to the
values of $X$; $X$ enters only in the final step,
$(X^*)^i:=((\bar X^*)^i)^\flat\circ(\Lambda^*)^{-1}$.
Consequently, applying the identical formulas with any other
$Y\in\prod_{i=0}^{N-1}L^0({\mathbb R},({\cal F}^*_t)_{t\in{\cal T}})$
in place of $X$ produces a $Y^*$ satisfying the same measurability and
exchangeability conclusions, via the \emph{same} $(g,\phi)$: one pair
$(g,\phi)$ transforms every member of a finite collection of such
processes at once.
\end{remark}

Having symmetrised the strategy fully, one agent's strategy
will now determine the strategy of the entire fund.

\begin{corollary}[Reduction to one agent's strategy]
\label{cor:exchangeableFund}
Suppose $\gamma$ is admissible for $(0,x,0)$ with participation data
$(\tau^i,\tau_d^i,\eta^i)_{i\in{\cal I}}$. Then:
\begin{enumerate}[(a)]
\item There exists $\gamma^*$, with participation data
$(\tau^{*i},\tau_d^{*i},\eta^{*i})_{i\in{\cal I}}$, $(M,I)$-equivalent
to $\gamma$ (together with its participation data) via some
$g\in\mathrm{Stab}(0)$, such that:
\begin{enumerate}[(i)]
\item $\gamma^*$ is admissible for $(0,x,0)$ with participation data
$(\tau^{*i},\tau_d^{*i},\eta^{*i})_{i\in{\cal I}}$, and
${\cal V}^{\gamma^*}={\cal V}^\gamma$;
\item the data $(\gamma^{*i},\tau^{*i},\tau_d^{*i},\eta^{*i})$ of
agents $1,\ldots,N-1$ is exchangeable: for example
\[
(\gamma^*)^i(\omega)=(\gamma^*)^{h^{-1}i}(h^{-1}\omega),
\]
and likewise for $\tau^*,\tau_d^*,\eta^*$, for every $h\in S_{N-1}$
acting on $\{1,\ldots,N-1\}$ and fixing $0$;
\item for each $i=1,\ldots,N-1$, each of $\gamma^{*i}$, $\tau^{*i}$,
$\tau_d^{*i}$, $\eta^{*i}$ is measurable with respect to
$\iota^{\Omega^*}({\cal F}^M_t\times{\cal F}^{\can}_t\times{\cal E}^{N-1,i}({\cal
F}^{I+}_t))_{t\in{\cal T}}$.
\end{enumerate}
\item There exists $\gamma^\circ$, with participation data
$(\tau^{\circ i},\tau_d^{\circ i},\eta^{\circ i})_{i\in{\cal I}}$,
$(M,I)$-equivalent to $\gamma$ (together with its participation data),
such that for \emph{every} $i=0,\ldots,N-1$, each of $\gamma^{\circ
i}$, $\tau^{\circ i}$, $\tau_d^{\circ i}$, $\eta^{\circ i}$ is
measurable with respect to $\iota^{\Omega^*}({\cal F}^M_t\times{\cal F}^{\can}_t\times{\cal E}^{N,i}({\cal F}^{I+}_t))_{t\in{\cal T}}$. Consequently agents
$0,1,\ldots,N-1$ all have pairwise preference isomorphic contracts
(Definition~\ref{def:preferenceIsomorphism}).
\end{enumerate}
\end{corollary}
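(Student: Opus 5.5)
Both parts will be obtained by applying the Gauge selection Theorem~\ref{theorem:gaugeSelection} to the finite collection consisting of the contract legs $(\gamma_u)_{u\in{\cal T}}$, the zero budget, and the participation data $(\tau^i,\tau_d^i,\eta^i)$. Transfer of admissibility and value is then supplied by Lemma~\ref{lem:admissibilityValueInvariance}.

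Part (b) is the more direct one, so I will prove it first. I apply Theorem~\ref{theorem:gaugeSelection} to $\gamma_0$, obtaining a pair $(g,\phi)$. By Remark~\ref{rem:gaugeSelectionUniform}, the same pair $(g,\phi)$ transforms every other $\gamma_u$, and also $\tau,\tau_d,\eta$ (viewed as ${\mathbb R}^N$-valued processes, for instance via the indicators $\indicator_{\{t\ge\tau^i\}}$). Each transformed member is therefore measurable with respect to $\iota^{\Omega^*}({\cal F}^M_t\times{\cal F}^{\can}_t\times{\cal E}^{N,i}({\cal F}^{I+}_t))$, and all of them are jointly $(M,I)$-equivalent to the originals. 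Lemma~\ref{lem:admissibilityValueInvariance}(i) then gives admissibility of $\gamma^\circ$ for $(0,x,0)$; the zero budget is fixed by the transformation.

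For the final claim of (b), I use the equivariance $(X^\circ)^i(\omega)=(X^\circ)^{g^{-1}i}(g^{-1}\omega)$ from the theorem. Take $g$ to be the transposition $(0,i)$. This exhibits agent $i$'s contract as agent $0$'s contract composed with the measure-preserving relabelling $\rho((0,i))$, which is an $(M,I)$-type isomorphism exchanging the two agents' coordinates. Since $\tau,\tau_d,\eta$ are also carried along, this is exactly preference isomorphism in the sense of Definition~\ref{def:preferenceIsomorphism}, via the lifted stable equivalence.

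For part (a) I run the same construction, but only on agents $1,\dots,N-1$, with $S_{N-1}\cong\mathrm{Stab}(0)$ in place of $S_N$. Concretely, I repeat the proof of Theorem~\ref{theorem:gaugeSelection} with the random permutation $\hat g$ drawn uniformly from $\mathrm{Stab}(0)$, using $\theta^{N-1}$ built from exchangeability over the indices $1,\dots,N-1$, and with stabilisers $S^{N-1,i}\subset\mathrm{Stab}(0)$. Agent $0$'s coordinates are left untouched, so they behave as part of the ``market'' factor in the construction. The resulting $g$ takes values in $\mathrm{Stab}(0)$, so Lemma~\ref{lem:admissibilityValueInvariance}(ii) gives ${\cal V}^{\gamma^*}={\cal V}^\gamma$, and part (i) of that lemma gives admissibility. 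The properties (ii) and (iii) are the equivariance and measurability conclusions of the modified theorem.

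The main obstacle is justifying that the proof of Theorem~\ref{theorem:gaugeSelection} really goes through with $\mathrm{Stab}(0)$ in place of $S_N$ while agent $0$ is held fixed. There are three checks. First, the analogue of Lemma~\ref{lemma:standardFiltered} for ${\cal E}^{N-1}$ acting on indices $\ge1$ must hold. Second, the section construction of Lemma~\ref{lemma:sectionExists} must still apply, which needs freeness of the $S^{N-1,i}$-action on the canonical factors. Third, the naturality steps must be unaffected by carrying agent $0$'s coordinates as an inert factor. I would argue all three by observing that each step of the original proof uses only a finite group acting freely on auxiliary canonical copies, and that the group's action on ${\underline\Omega}^{I+,\infty}$ enters only through the one genuine relabelling. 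That relabelling now fixes index $0$, which is precisely what Lemma~\ref{lem:admissibilityValueInvariance}(ii) requires.
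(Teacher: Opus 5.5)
Your proposal is correct. For part~(a), and for the measurability half of part~(b), it is the paper's argument:
\begin{itemize}
\item apply Theorem~\ref{theorem:gaugeSelection} (with $N-1$ in place of $N$ for (a));
\item use Remark~\ref{rem:gaugeSelectionUniform} to move every $\gamma_u$, the zero budget and $(\tau,\tau_d,\eta)$ with one pair $(g,\phi)$;
\item transfer admissibility, and the value when $g\in\mathrm{Stab}(0)$, by Lemma~\ref{lem:admissibilityValueInvariance}.
\end{itemize}
The paper does not re-run the gauge-selection proof for (a). It applies the theorem verbatim after identifying $\{0,\ldots,N-2\}$ with $\{1,\ldots,N-1\}$ via $m\mapsto m+1$. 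Agent~$0$ is then just another coordinate of the infinite product lying outside the permuted block, so your three ``obstacle'' checks are automatic rather than something to verify.

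The genuine difference is the preference-isomorphism clause of (b).
\begin{itemize}
\item \emph{Your route.} You specialise the $S_N$-equivariance of the theorem to the transposition $(0,i)$. This gives $\gamma^{\circ i}=\gamma^{\circ 0}\circ\rho((0,i))$, and likewise for $\tau^\circ,\tau_d^\circ,\eta^\circ$, so $\id\times\rho((0,i))$ is an explicit witness for Definition~\ref{def:preferenceIsomorphism}.
\item \emph{The paper's route.} It realizes each coarse filtration $\iota^{\Omega^*}({\cal F}^M_t\times{\cal F}^{\can}_t\times{\cal E}^{N,i}({\cal F}^{I+}_t))_t$ on a common standard filtered space, using the argument of Lemma~\ref{lemma:standardFiltered} for the stabilisers $S^{N,i}$. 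It then composes with a stable-equivalence isomorphism from Lemma~\ref{lemma:stableEquivalence}.
\end{itemize}

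Your route is more concrete and shows the economic content directly. The witnessing map carries agent~$0$'s market-and-own-state coordinates onto agent~$i$'s. It is not an $(M,I)$-map, since it swaps $\omega^{I,0}$ and $\omega^{I,i}$, but it fixes $\omega^M$ and the common canonical factor, which is all you need.

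The cost is that your route rests entirely on the full-$S_N$ equivariance clause, which the paper argues only by a naturality sketch. The paper's route cites only the measurability conclusion, but it leaves implicit why the transported contracts actually coincide.
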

The details are given in Appendix~\ref{appendix:symmetryArgument}.

\begin{remark}
While the contracts $(\gamma^{\circ i})_{i=0}^{N-1}$ of part (b) are
preference isomorphic for every $i$, it is not possible to construct
these isomorphisms in a way that also preserves exchangeability for
every $i\ge0$.
\end{remark}

Note that this gauge selection depends on arbitrary choices
and cannot be performed in a uniform way for all $N$. In other words, it cannot be extended to an action of $S_\infty$.
There is a tension between these infinite degrees of freedom and the convergence of a subsequence of strategies implied by compactness. This can only be resolved if an ergodic theorem holds and these degrees of freedom vanish in the infinite limit.

\subsection{Acceptable strategies}

\begin{definition}
	Let $N \in {\mathbb Z}_{\geq 1}$ be given, and let ${\cal I}=\{0,\ldots,N-1\}$.

	Let $s \in {\cal T}$, $x \in \Omega^*_s$ and $b \in L^1({\mathbb R}^N, {\cal F}^*_s,{\mathbb Q}^*_x)$, and let $\gamma \in {\cal A}^{N,x}_s(b)$.

	Given $A \subseteq {\cal I}$, $u \in {\cal T}_{\geq s}$ and $y \in \Omega^*_u$, define the
	resources the coalition $A$ holds when the time-$u$ contract
	revision begins: for $t \in {\cal T}_{\geq u}$,
	\[
	\beta^{A,y,u}_{t} :=
	\begin{cases}
	\displaystyle
	\indicator_{\{t=s\}}\sum_{j \in A} b^j, & u = s,
	\\[2ex]
	\displaystyle
	\sum_{j \in A} \gamma^j_{u-1,t}, & u \in {\cal T}_{>s}.
	\end{cases}
	\]
	These are the payments still owed to the members of $A$ under the
	contract in force when time $u$ begins (at $u=s$, their initial
	budgets): the outside option of a breakaway coalition is to walk
	away with these claims and manage them separately.

	We say that $\gamma$ is either {\em acceptable} or {\em unacceptable} for $A$ at $(u,y)$ according to the following two cases.
	\begin{enumerate}[(a)]
	\item Suppose there is no $B$ with $A \subseteq B \subseteq {\cal I}$ such that $i \sim_y j$ for all $i,j \in B$ and, for every $g' \in S_N$ mapping $B$ to the first $|B|$ integers,
	\[
	{\cal J}^j_u(X^j(\gamma) \mid y) = V^{|B|}_u\bigl(\rho(g')_*(y), \beta^{B,y,u}\bigr)
	\]
	for all $j \in B$. Then $\gamma$ is unacceptable for $A$ at $(u,y)$ if
	\begin{enumerate}[(i)]
	\item $i \sim_y j$ for all $i, j \in A$;
	\item for every $g \in S_N$ mapping $A$ to the first $|A|$ integers,
	\[
	{\cal J}^j_u(X^j(\gamma) \mid y) \leq V^{|A|}_u\bigl(\rho(g)_*(y), \beta^{A,y,u}\bigr)
	\]
	for all $j \in A$, with a strict inequality for at
	least one $j$.
	\end{enumerate}
	\item Suppose instead such a $B$ exists. Then $\gamma$ is acceptable for $A$ at $(u,y)$.
	\end{enumerate}

	We write ${\cal G}^{N,x}_s(b)$ for the set of $\gamma \in {\cal A}^{N,x}_s(b)$
	that are ${\mathbb Q}^*_x$-almost surely acceptable for $A$ at $(u,y)$,
	for every $u \in {\cal T}_{\geq s}$, $y \in \Omega^*_u$ and
	$A \subseteq {\cal I}$.
\end{definition}

Case (b) ensures that once a group is already achieving the optimal value for its own size, we do not additionally require checking whether some subset of that group could improve by splitting off into a smaller fund.

This is precisely the self-enforcement condition described in the Introduction: contracts must be robust to profitable deviation by any coalition of agents in a common state. We do not assume that agents are risk averse; the finite-fund problem here is non-convex even if agents have risk-averse preferences, because this self-enforcement condition is not itself convex.

\begin{theorem}[Existence of acceptable strategies]
\label{thm:existenceOptimalContinuationsFinite}
Suppose $\Prob$ is compact (Definition~\ref{def:compactProblem}) and
that each ${\cal J}_s(\,\cdot\mid x)$ is concave. Let $N \in {\mathbb Z}_{\geq 1}$, $s \in {\cal T}$, and let $x \in \Omega^*_s$ be homogeneous, i.e.\ $\omega^{I,i}=\omega^{I,j}$ ${\mathbb Q}^*_x$-a.s.\ for all $i,j \in {\cal I}$. Let $\beta \in L^1({\mathbb R},({\cal F}^*_t)_{t \geq s},{\mathbb Q}^*_x)$ with $V^N_s(x,\beta)>-\infty$. Then the homogeneous problem for finite funds $(s,x,\beta)^h$ (Definition~\ref{def:homogeneousFinite}) admits an optimiser, i.e.\ the supremum defining $V^N_s(x,\beta)$ is attained. Moreover, if $\beta\equiv0$ and the values arising in the recursion below exceed $-\infty$, then ${\cal G}^{N,x}_s(0)\neq\emptyset$.

Concavity is used only for this existence statement; the remaining
results of this section take acceptable strategies as given.
\end{theorem}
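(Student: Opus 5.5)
The plan has two stages. First, the direct method gives attainment for the homogeneous finite problem $(s,x,\beta)^h$. Second, a backward recursion in time assembles optimisers for every coalition into an acceptable strategy.

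\emph{Attainment.} Fix a maximising sequence $\gamma^n\in{\cal H}^{N,x}_s(\beta)$ with ${\cal J}_s(X^0(\gamma^n)\mid x)\to V^N_s(x,\beta)>-\infty$. Because the contracts are myopic and symmetric, each $\gamma^n$ is determined by finitely many nonnegative legs $\gamma^{n,i}_{u,u}$ and $\gamma^{n,i}_{u,u+1}$. The funding conditions \eqref{eq:budgetStarHomogeneous} bound the ${\mathbb Q}^*_x$-expectations of these legs by $\|\beta\|_{L^1}+\sum_i\|\eta^i\|_{L^1}$, so the sequence is bounded in $L^1$. By Koml\'os' theorem, as in the direct method sketched before Theorem~\ref{thm:regularMaximisers}, I pass to forward convex combinations $\tilde\gamma^n$ converging almost surely.

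Every constraint is preserved by this step. The constraint set is convex, since the measurability conditions in Definition~\ref{def:homogeneousFinite} are linear and the funding inequalities are linear in $\gamma$. The measurability conditions pass to almost sure limits. The funding inequalities pass to the limit by conditional Fatou, because all legs are nonnegative. Concavity of ${\cal J}_s(\cdot\mid x)$ keeps the combinations near-maximising: ${\cal J}_s(X^0(\tilde\gamma^n)\mid x)\geq\min_{m\geq n}{\cal J}_s(X^0(\gamma^m)\mid x)\to V^N_s$.

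Passing the value to the limit requires an upper semicontinuity statement for almost sure limits. This is the one delicate point of the attainment step. I will obtain it from the Fatou property in Definition~\ref{def:compactProblem}(i), applied to agent $0$ viewed as an agent of $\Prob^+$. To justify this, I will map the fund's consumption of agent $0$, via the isomorphism $\Psi$ of Definition~\ref{def:valueFunctionFinite}, into the redistributive feasible set $\hat{\cal X}$ of an infinite-fund problem. The fund-level funding of the symmetric myopic strategy, divided by $N$, gives exactly a per-capita redistributive budget, and the no-satiation/non-triviality condition supplies the $\inf>-\infty$ hypothesis.

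\emph{Existence of acceptable strategies.} I will construct $\gamma\in{\cal G}^{N,x}_s(0)$ by backward induction on $u$, from $T$ down to $s$. At each date $u$ and state $y$, partition ${\cal I}$ into the $\sim_y$ classes $B$. On each class I use an optimiser of the homogeneous problem $V^{|B|}_u(\rho(g)_*(y),\beta^{B,y,u})$, relabelled, where the budget is the continuation inherited from time $u-1$. These optimisers exist by the first part, which is where the hypothesis that the recursion values exceed $-\infty$ enters. I select them measurably in $(y,\beta)$ using Definition~\ref{def:compactProblem}(iii). Splicing the class-wise optimisers into the previously constructed continuation gives an element of ${\cal C}^M\cap{\cal C}^S$. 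Summing the class funding conditions yields \eqref{eq:budgetStar}, so the spliced strategy is admissible in the sense of Definition~\ref{def:admissibleIdiosyncratic}.

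Acceptability then follows from case (b) of the definition. For any coalition $A$ of agents in a common state at $(u,y)$, take $B$ to be the full $\sim_y$ class containing $A$. That class attains $V^{|B|}_u$ by construction, so $\gamma$ is acceptable for $A$.

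I expect the main obstacle to be making the splice consistent across dates. An optimiser at date $u$ for class $B$ must coincide with the restriction, from date $u+1$ onward, of what was already built. This is a dynamic programming principle for $V^N$, which I would prove in the style of Proposition~\ref{prop:reductionClassical} and Lemma~\ref{lemma:continuationSplicing}. One point needs care: after time $u$ a class $B$ splits into subclasses, and the surrendered continuation $\beta^{B,y,u+1}$ must be allocated among them so that each subclass is again optimal for its share. I will handle this by defining the date-$u$ optimiser through the one-step problem whose continuation value is $\sum_{B'}V^{|B'|}_{u+1}$. Concavity, together with the measurable-selection argument, ensures that this one-step problem is attained.
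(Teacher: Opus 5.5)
Your attainment step is the paper's argument: Koml\'os averages of a maximising sequence, concavity to keep them maximising, and the Fatou property applied to $X^0$ once it is placed in a redistributively feasible set of $\Prob^+$ (the paper's Lemma~\ref{lem:perCapitaRedistributive}). Passing the stepwise funding conditions directly to the limit is legitimate, and spares you the paper's telescope-and-rebook step. The reason it works, though, is not nonnegativity alone. For myopic legs the only term inside ${\mathbb E}(\cdot\mid{\cal F}^*_u)$ is the outgoing leg $\gamma_{u,u+1}$, which sits on the debit side, while the incoming leg $\gamma_{u-1,u}$ is ${\cal F}^*_u$-measurable.

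The genuine gap is the dynamic programming principle in the second part. Your one-step problem finances each subclass $B'$ at $u+1$ separately from its own carried legs. (Its objective should also be ${\cal J}_u$ of the representative agent's spliced stream, not $\sum_{B'}V^{|B'|}_{u+1}$.) This is a restriction of the problem defining $V^{|B|}_u$. There the funding condition \eqref{eq:budgetStarHomogeneous} pools all $|B|$ agents at every later date, and legs may depend on every agent's path, so the class may keep sharing risk after it splits. Hence ``that class attains $V^{|B|}_u$ by construction'' does not follow. Without it, clause (b) is unavailable and $B$ itself satisfies clause (a). Proposition~\ref{prop:reductionClassical} and Lemma~\ref{lemma:continuationSplicing} do not transfer, because there each state funds its own continuation from its own pot $R^X_u$, so separating states costs nothing.

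As stated, this cannot be repaired. The paper's own proof splices later pool optimisers into a $V^{|B|}_u$-optimiser as in Step~1 of Proposition~\ref{prop:acceptableFromOptimisers}, and it fails at the same point: continuation consistency would need each subclass's stand-alone optimiser to dominate the pooled continuation.

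Here is a counterexample. Take $N=2$, a trivial market, ${\cal T}=\{0,1,2\}$, $\tau=0$ and $\tau_d=2$. Let ${\cal J}_t(X\mid x)={\mathbb E}\bigl(\sum_{r\geq t}u(X_r)\,\big|\,x\bigr)$ with $u(c)=1-e^{-c}$. Set $\eta_0=0$ and $\eta_1\in\{1,3\}$ with probability $\tfrac12$ each, revealed by the idiosyncratic state at $t=1$. This problem is compact and concave, and every value is finite.
\begin{enumerate}[(1)]
\item Funding forces every leg written at $t=0$ to vanish.
\item When the two contributions differ, the agents form singleton classes at $t=1$ with outside options $u(\eta^j_1)$. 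Acceptability therefore forces each agent to consume exactly its own contribution.
\item At $t=0$ the pair is in a common state, and $V^2_0(x,0)\geq{\mathbb E}\,u\bigl(\tfrac12(\eta^0_1+\eta^1_1)\bigr)>{\mathbb E}\,u(\eta^j_1)$.
\item So $\{0,1\}$ is unacceptable at $t=0$, and ${\cal G}^{2,x}_0(0)=\emptyset$.
\end{enumerate}

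The existence claim needs either outside options computed over continuations that are themselves acceptable, as in Definition~\ref{def:acceptableNoIdiosyncratic}, or an extra hypothesis ruling out such post-split pooling gains.
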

A proof is given in Appendix~\ref{appendix:acceptableStrategiesFinite}.

\begin{definition}
\label{def:valueFunctionInfinite}
Suppose $\Prob^+$ is compact independent of $\eta$ (Definition~\ref{def:compactIndependentEta}).

Let $\underline\Omega^C$ be an auxiliary copy of $\underline\Omega^\can_{\cal T}$, and choose $c \in \Omega^C$. Write $\omega^C_\tau(c) \in \Omega^\can_{{\cal T},\tau}$ for the time-$\tau$ state of $c$, and let $y(j,c) \in \Omega^{\infty+}_\tau$ be the state built from $j \in \Omega^J$ together with $\omega^C_\tau(c)$.

For $j \in \Omega^J$ we define the {\em value function} ${\cal V}^\infty$ by
\[
{\cal V}^\infty(j) := V^+_\tau\bigl(y(j,c),0\bigr).
\]
\end{definition}

\begin{lemma}[Independence of the choice of $c$]
\label{lem:valueFunctionInfiniteWellDefined}
The value function ${\cal V}^\infty$ does not depend on the choice of $c$ in the preceding definition.
\end{lemma}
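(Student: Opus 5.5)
The plan is to show that changing $c$ amounts to applying a filtration-preserving automorphism of the canonical factor, under which the induced preferences ${\cal J}^+$, the feasible sets and the budget are all invariant. Fix $j\in\Omega^J$ and two points $c,c'\in\Omega^C$, so that $y(j,c)$ and $y(j,c')$ have the same market-and-idiosyncratic component $j$ and differ only in the canonical time-$\tau$ coordinate. Since $\Prob^+$ is compact independent of $\eta$, the value $V^+_\tau(y,0)=V_\tau(\Prob^+;y,0,0)$ is well defined. It is the supremum of ${\cal J}^+_\tau(X\mid y)$ over the self-financing streams ${\cal X}^y_\tau(\Prob^+;0,0)$ of Definition~\ref{def:classicalContinuationProblem}. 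The goal is a bijection between ${\cal X}^{y(j,c)}_\tau$ and ${\cal X}^{y(j,c')}_\tau$ that preserves utility.

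First, I will construct a filtration-preserving mod 0 automorphism $\psi$ of $\underline\Omega^\can_{\cal T}$. It should carry the conditional law of the canonical coordinates given the time-$\tau$ state $\omega^C_\tau(c)$ to that given $\omega^C_\tau(c')$. Concretely, I will take $\psi$ to be the identity on the coordinates at times $>\tau$, and to send the fibre over $\omega^C_\tau(c)$ to the fibre over $\omega^C_\tau(c')$ on the coordinates at times $\le\tau$. Both conditional measures are point masses on those coordinates, so this is just a translation of the frozen values. Note that $\tau$ is a stopping time depending on $j$; it is fixed once $j$ is fixed.

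Next, I will set $\hat\psi:=\psi\times\id_{\Omega^{M\times I}}$ on $\Omega^{\infty+}$. This map preserves ${\cal F}^{\infty+}_t$ for every $t$ and carries ${\mathbb Q}^{\infty+}_{y(j,c')}$ to ${\mathbb Q}^{\infty+}_{y(j,c)}$; this is the naturality of disintegration, Lemma~\ref{lemma:disintegrationNatural}. The participation data $\tau^+,\tau^+_d,\eta^+$ do not depend on the canonical factor, so $\hat\psi$ leaves them unchanged. Conditional expectations commute with $\hat\psi$, so the self-financing pot process satisfies $R^{X\circ\hat\psi}=R^X\circ\hat\psi$. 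It follows that $X\mapsto X\circ\hat\psi$ is a bijection ${\cal X}^{y(j,c)}_\tau(\Prob^+;0,0)\to{\cal X}^{y(j,c')}_\tau(\Prob^+;0,0)$.

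The remaining step is to check that ${\cal J}^+_\tau(X\circ\hat\psi\mid y(j,c'))={\cal J}^+_\tau(X\mid y(j,c))$. By Definition~\ref{def:probPlus}, ${\cal J}^+_\tau(\cdot\mid(x,y))$ does not depend on $y$. I therefore only need ${\cal J}^+_\tau(X\circ\hat\psi\mid j)={\cal J}^+_\tau(X\mid j)$. This is the Invariance axiom applied to the induced family: as noted after Definition~\ref{def:inducedPreferences}, $\hat\psi$ transports through $\Xi$ to an adapted randomized automorphism of ${\cal J}$. Taking suprema over the two feasible sets then gives $V^+_\tau(y(j,c),0)=V^+_\tau(y(j,c'),0)$.

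I expect the main obstacle to be this last identification. The conditioning state resolves the canonical coordinates at times $\le\tau$, so the plain formula of Definition~\ref{def:inducedPreferences} does not directly apply. The value has to be read with the frozen coordinates, in the sense of the Lottery tolerance axiom, and I must check that freezing at $\omega^C_\tau(c)$ and then precomposing with $\psi$ agrees with freezing at $\omega^C_\tau(c')$. The identity $(X\circ\hat\psi)^{[\omega^C_\tau(c')]}=X^{[\omega^C_\tau(c)]}\circ\hat\psi'$, where $\psi'$ is $\psi$ restricted to the later coordinates, should handle this, followed by one more application of Invariance. Finally, measurability in $j$ of the chosen maps is not needed, since the claim is pointwise in $j$.
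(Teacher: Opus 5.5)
Your proposal is correct and follows essentially the paper's route. The paper just defers to the naturality argument of Lemma~\ref{lem:valueFunctionWellDefined}: the change of $c$ is realised by a filtration-preserving automorphism of the canonical factor fixing the market and idiosyncratic coordinates, and the Invariance axiom gives equality of values; you additionally supply the transport of the feasible sets, which the supremum here requires. The step you flag as the main obstacle is immediate once you orient $\psi$ consistently, so that $\psi$ maps $\omega^C_\tau(c')$ to $\omega^C_\tau(c)$, as your pushforward claim needs. Since $\psi$ is the identity at times $>\tau$, the frozen sections $(X\circ\hat\psi)^{[\omega^C_\tau(c')]}$ and $X^{[\omega^C_\tau(c)]}$ then coincide exactly, and no further appeal to Invariance is needed.
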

The proof that this is well-defined follows in exactly the same way as the proof that the value function for finite funds is well-defined (Lemma~\ref{lem:valueFunctionWellDefined}).

\begin{theorem}[Infinite fund limit]
\label{thm:infiniteFundLimit}
Let $(N_a)_{a \in {\mathbb N}}$ be a strictly increasing sequence in ${\mathbb Z}_{\geq 1}$ with $N_a \to \infty$. Let  $(\gamma^{N_a})_{a \in {\mathbb N}}$ be a corresponding sequence
of strategies satisfying $\gamma^{N_a} \in {\cal G}^{N_a,x}_0(0)$ for all $x \in \Omega^*_0$.

Then, for ${\mathbb Q}^J$-almost every $j \in \Omega^J$,
\[
\lim_{a \to \infty} {\cal V}^{\gamma^{N_a}}(j)={\cal V}^\infty(j).
\]
\end{theorem}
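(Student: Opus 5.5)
We follow the proof outline of the Introduction: first a lower bound $\liminf_a {\cal V}^{\gamma^{N_a}}(j)\ge{\cal V}^\infty(j)$, then a matching upper bound, with a backward induction in time used to reduce the upper bound to a one-step problem.

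\textbf{Lower bound.} We use the tontine strategies together with the self-enforcement condition. Fix $j$ and condition on the joining state. The coalition of agents sharing agent $0$'s state at time $\tau$ has budget $0$ at formation. Applying the acceptability definition to this coalition $A$ at $(\tau,y)$, case (a) rules out every agent of $A$ doing no better, with one doing strictly worse, than $V^{|A|}_\tau$. Theorem~\ref{thm:existenceOptimalContinuationsFinite} shows that $V^{|A|}_\tau$ is attained. Because optimisers can be symmetrised (Corollary~\ref{cor:exchangeableFund}), we may take them to give every member the same utility. The conclusion is that each member of $A$ receives at least $V^{|A|}_\tau(\cdot,0)$, or else a larger homogeneous $B$ is already achieving its own optimum.

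Next, Theorem~\ref{thm:tontineConvergence} applies because $V^{|A|}$ dominates the value of the restricted tontine, which is admissible for the homogeneous problem by Lemma~\ref{lemma:tontineWellPosed}(iv). Hence $\liminf V^{|A|}_\tau\ge V^+_\tau$. The strong law of large numbers gives $|A|\to\infty$ almost surely, since $\Omega^I$ is finite and types are i.i.d. Together these give $\liminf_a{\cal V}^{\gamma^{N_a}}(j)\ge{\cal V}^\infty(j)$.

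\textbf{Upper bound.} We argue by contradiction. Suppose that on a set of $j$ of positive measure some subsequence has $\limsup{\cal V}^{\gamma^{N_a}}(j)>{\cal V}^\infty(j)$.

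First apply Corollary~\ref{cor:exchangeableFund}(a) to replace each $\gamma^{N_a}$ by a gauge-fixed $\gamma^{*N_a}$. This leaves ${\cal V}$ unchanged by Lemma~\ref{lem:admissibilityValueInvariance}. Agents $1,\dots,N_a-1$ then have exchangeable data, measurable with respect to ${\cal E}^{N-1,i}$.

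Then proceed by backward induction on $t$. The induction hypothesis is that the theorem already holds for funds formed at later dates. By the lower-bound argument applied to continuations, every coalition at time $u>\tau$ receives at least $V^+_u$ evaluated at its continuation budget. By Theorem~\ref{thm:noMutuallyBeneficialContracts}(iii), in the limit these continuations can be replaced by self-financing optimisers. This reduces everything after the first step to a myopic problem ($N_1H_1T_0$).

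For the first step, use the compactness axiom (ii) (stability of optimisers) together with Koml\'os-type $L^1$ compactness. Along a subsequence, agent $0$'s realised consumption stream and the one-period claims $\gamma^{*i}_{\tau,\tau+1}$ converge. Embed the sequences $(\gamma^{*i})_{i\ge1}$ into $\Omega^{I+,\infty}$. Their limit law is $S_\infty$-exchangeable, so Theorem~\ref{thm:hewittSavage} and Theorem~\ref{thm:canonicalDeFinetti} give a mixing variable $Y$ and i.i.d.\ conditional fibres. The key point is that the fund's net position $c_{u,T}$ must be ${\cal F}^M_{u,T}$-measurable, that is, a market claim. Conditioning on $Y$ and applying the law of large numbers in each fibre turns the aggregate funding constraint \eqref{eq:budgetStar} into the per-capita constraint $m_\tau(\cdot)\le m_\tau(\eta)$ of the redistributive problem $\hat{\cal X}$ for $\Prob^+$, in each fibre of $Y$.

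The limit stream is therefore redistributively feasible. By the Fatou property (compactness (i)) its utility bounds the $\limsup$. Finally, Theorem~\ref{thm:noMutuallyBeneficialContracts} applies: every state already gets at least $V^+$ by the lower bound, and the positivity argument forces equality statewise. So $\limsup{\cal V}^{\gamma^{N_a}}\le{\cal V}^\infty$, which is the required contradiction.

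\textbf{Main obstacle.} The hardest step is the passage to the limit. The mixing variable $Y$ carries no-market, canonical information, and we must show it cannot be exploited. Concretely, we need the $Y$-fibrewise budget constraints to aggregate into the constraint for $\Prob^+$, which is posed on $\Omega^{\infty+}$ with its canonical factor. We also need the preferences ${\cal J}^i$, frozen on resolved lotteries as in \eqref{eq:frozenEnrichedValue}, to pass to the limit. Here the plan is to use the lottery tolerance axiom, with the Invariance axiom absorbing $Y$ into the $\underline\Omega^\can_{\cal T}$ factor via Lemma~\ref{lemma:enrichmentStable}.
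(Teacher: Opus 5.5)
\textbf{Lower bound.} Your lower bound rests on an invalid inference. Case (a) of the acceptability definition only forbids the configuration in which \emph{every} member of $A$ is weakly below $V^{|A|}_\tau$ and at least one is strictly below. Its negation allows a strategy that overpays some members of the class and underpays others. So acceptability for the whole class gives no lower bound for any individual member. Attainment of $V^{|A|}_\tau$ via Theorem~\ref{thm:existenceOptimalContinuationsFinite}, which also needs concavity, plays no role here.

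The missing idea is to take the coalition to be the short-changed agents themselves. Suppose $n_0$ agents of one class receive less than $V^+_t-\epsilon$ at some full state, where $n_0$ is the threshold of Corollary~\ref{cor:tontineValueLowerBound}. Then each is strictly below $V^{n_0}_t(\cdot,\beta^{A,y',t})\ge V^{n_0}_t(\cdot,0)>V^+_t-\epsilon$. No exempting $B\supseteq A$ can exist, because $|B|\ge n_0$ would make the common value of $B$ exceed theirs (Lemma~\ref{lemma:coalitionCounting}).

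This bounds only the \emph{number} of losers per class. You then need two further ingredients:
\begin{enumerate}[(i)]
\item exchangeability of the symmetrised block, so that a given block agent is a loser with probability at most $n_0/(N_a-1)$;
\item the Lottery tolerance axiom in the form of Lemma~\ref{lemma:lotteryToleranceEnriched}, with the singleton outside option $V^1>-\infty$ as the floor on the exceptional branches.
\end{enumerate}
These pass you from full states, where lotteries are resolved and agents of one class are treated differently, to the coarse state at which ${\cal V}$ is evaluated (Proposition~\ref{prop:noPersistentGap}).

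\textbf{Upper bound.} This part has three further gaps.
\begin{enumerate}[(i)]
\item The continuation bound ``at least $V^+_u$ at its continuation budget'' is not available. The only finite-to-infinite comparison is the budget-zero tontine bound, and monotonicity in the budget is used precisely to discard carried claims. Your induction hypothesis does not help either: continuations inside a fund carry random nonzero budgets and are not fresh budget-zero funds.
\item Theorem~\ref{thm:noMutuallyBeneficialContracts} applies only to elements of ${\cal G}$ for the infinite-fund problem. Nothing shows your limit object is acceptable for $\Prob^+$.
\item Compactness~(ii) and Koml\'os are the wrong tools. The finite-fund streams are not optimisers of any classical problem, and Ces\`aro averages carry no utility control without concavity. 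Moreover, agent $0$ is exactly the agent left outside the exchangeable block, whose value the counting argument does not control.
\end{enumerate}

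The paper proceeds as follows.
\begin{enumerate}[(i)]
\item It extracts limits in law of the exchangeable block, using tightness from the funding conditions, Lemma~\ref{lem:exchangeableLimits}, de~Finetti and Skorokhod.
\item It shows each prelimit representative stream already lies in $\hat{\cal X}^{y_0^+}_u(\Prob^+;0,0)$, with $L^1$-bounded market-adapted pots (Lemma~\ref{lem:perCapitaRedistributive}). Feasibility then passes to the limit by conditional Fatou.
\item It repeats the counting and lottery-tolerance argument at every date $t\ge u$ and uses the Fatou property, obtaining $w_t\ge V^+_t(\cdot,0)$ for the limit stream.
\item It runs the cost chain of Lemma~\ref{lemma:deliveredValueBound} directly on the limit. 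The ingredients are $\kappa_t\ge0$, the verification inequality, and the conservation bound ${\mathbb E}\sum_{t\ge u}X^{\infty,1}_t\le{\mathbb E}\sum_{t\ge u}\eta^+_t$, obtained from the telescoped funding conditions divided by $N_a$ plus the strong law. Together these force $\beta^\circ=0$, hence ${\cal J}_u(X^{\infty,1}\mid y_0)\le V^+_u(y_0^+,0)$.
\end{enumerate}
Only nonnegativity of the statewise costs is used. That is why budget-zero lower bounds suffice and no backward induction on the theorem is needed.

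Finally, the pools for each entry date and type are handled separately. The uniformly selected agent defining ${\cal V}$ lands in a large pool with probability tending to one.
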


A complete proof of this theorem is given in Appendix~\ref{appendix:acceptableStrategiesFinite}. We summarise the argument here.

The proof of Theorem~\ref{thm:infiniteFundLimit}
proceeds as follows.

\begin{itemize}
\item While Theorem~\ref{thm:infiniteFundLimit} is packaged as a
statement about the value function at time $0$, it is proved via the
more general Theorem~\ref{thm:generalInfiniteFundLimit}, stated at an
arbitrary time $u$ so that it can be applied to each cohort of agents
at their entry date.
\item We first symmetrise: Corollary~\ref{cor:exchangeableFund}(a)
replaces $\gamma^a$ by an $(M,I)$-equivalent $\gamma^{a*}$,
exchangeable in $i=1,\ldots,N_a-1$;
Lemma~\ref{lemma:acceptabilityInvariance} confirms that acceptability
transfers under symmetrisation. Exchangeability makes the delivered
utility $v^a$ common to every agent of the block.
\item The lower bound (Lemma~\ref{lemma:coalitionCounting} and
Proposition~\ref{prop:noPersistentGap}): at every date and every
realised state, at most a bounded number of agents can be delivered
less than the infinite-fund value of their own state, for otherwise
the short-changed agents would break away: their outside option
exceeds $V^+-\epsilon$ beyond a threshold coalition size, by the
tontine bound (Corollary~\ref{cor:tontineValueLowerBound}), and the
exemption clause cannot rescue the strategy, because any exempting
set would itself certify a value above theirs. Hence
$\liminf_a v^a\ge V^+_u$. This is where the self-enforcement
condition replaces the convexity conditions of the mean-field
literature.
\item Extraction: the funding conditions bound the per-capita price
of consumption, so the block laws are tight
(Lemma~\ref{lem:exchangeableLimits}); by the Hewitt--Savage theorem in
the form of Theorem~\ref{thm:canonicalDeFinetti}, the limit is
conditionally i.i.d.\ given its de~Finetti mixing variable, which is
absorbed into $\underline\Omega^{\infty+}$ via the stable-equivalence
isomorphism of Lemma~\ref{lemma:stableEquivalence}, giving a candidate
infinite-fund stream $X^{\infty,1}$.
\item The upper bound runs the cost-of-delivered-utility chain of
Lemma~\ref{lemma:deliveredValueBound} on the limit objects: the
statewise cost of what the limit stream delivers at each state is
nonnegative, by the coalition bounds; the verification inequality
propagates these costs backward through the dates; and per-capita
conservation of resources caps their total. The three force the
costs at the starting date to vanish, whence
$\limsup_a v^a\le{\cal J}_u(X^{\infty,1}\mid y_0)\le V^+_u$ by the
Fatou property of compactness.
\item Identification: attainment (condition (ii) of compactness)
together with Proposition~\ref{prop:acceptableFromOptimisers}
produces an acceptable strategy for $\Prob^+$ attaining $V^+_u$, with
binding funding conditions by
Theorem~\ref{thm:noMutuallyBeneficialContracts}(i); a
subsequence-of-every-subsequence argument upgrades convergence along
a subsequence to convergence along the whole sequence.
\item Finally, Theorem~\ref{thm:infiniteFundLimit} follows by
applying Theorem~\ref{thm:generalInfiniteFundLimit} to each pool of
agents sharing a common entry time and type (via
Lemma~\ref{lemma:restrictionAcceptable}, since entering the fund
fresh is exactly a start with budget $0$), and noting that the
randomly selected agent defining ${\cal V}^{\gamma^{N_a}}$ lands in
such a pool with probability $\to1$.
\end{itemize}

Together with the lower bound of Theorem~\ref{thm:tontineConvergence},
this shows $N_1H_1T_1\leq N_0H_*T_*$ and hence equality throughout, in
the notation of the proof outline in the Introduction.

\medskip

It is easy to find contrived examples where the value functions for the infinite problem are different between $\Prob^+$ and $\Prob$. However, for risk-averse von Neumann--Morgenstern preferences and similar concave preferences based around the expectation operator, one can use Jensen's inequality to show that the value functions for the infinite problems for $\Prob^+$ and $\Prob$ are equal, since the value function of the classical continuation problem is then concave in the budget, and pooling replaces several agents' budgets by their mean.

\section{Potential generalisations}
\label{sec:extensions}

We have stated our results in considerable generality, but let us briefly mention three areas where we expect things could be pushed a little further.

One could define a notion of $\epsilon$-acceptable strategies which relaxes the acceptability constraints by a small amount. One could then attempt to prove a similar result for the value functions as both $N\to \infty$ and $\epsilon \to 0$. This would show that our result was robust to such perturbations. It should also allow the concavity hypothesis of Theorem~\ref{thm:existenceOptimalContinuationsFinite} to be dropped: concavity is used only to obtain exact optimisers, whereas $\epsilon$-optimal strategies exist, and admit measurable selections, whenever the value is finite.

One could allow a continuum of types. For this to work, one should choose a topology on the space of types where the preferences are uniformly continuous. One would also need to change the acceptability condition to allow agents of similar types to form coalitions. We expect that some form of uniform continuity would be required as a replacement for our current assumption that the space of types is finite.

One could relax the Markovian assumption in a similar way, insisting that preferences are continuous in consumption history. Again, one would need to adjust the acceptability condition.

Because the last two proposed generalisations would require allowing more coalitions, extending acceptability to $\epsilon$-acceptability should be done first. This explains why we have imposed these simplifying conditions.

\bibliographystyle{plain}
\bibliography{cdc}

\appendix
\appendixpage

The appendices are arranged in two groups.
Appendices~\ref{appendix:infiniteFund}
and~\ref{appendix:finiteFundProofs} contain the proofs of the main
results, for the infinite- and finite-fund problems respectively.
Appendices~\ref{appendix:finiteFiltrationClassification}
and~\ref{appendix:wellPosedProofs} contain the measure-theoretic
toolkit and the compactness verifications: the statements proved
there appear in the body of the paper, and the arguments follow
classical patterns. A reader willing to take those statements on
trust may confine attention to the first two appendices; the second
pair is suitable for publication as online supplementary material.

\section{Infinite funds - proofs}
\label{appendix:infiniteFund}

\subsection{Acceptable strategies - proofs}

\begin{lemma}
\label{lemma:separabilityA}
Let $s\in{\cal T}$,
$x\in\Omega^M_s\sqcup\Omega^{M\times I}_s$,
\[
\beta\in
L^1(
{\mathbb R},
{\cal F}^\infty_s,
{\mathbb Q}^\infty_x
),
\]
and
\[
\overline X
\in
L^1(
{\mathbb R}_{\ge0},
({\cal F}^\infty_t)_{t\in{\cal T}_{<s}},
{\mathbb Q}^\infty_x
).
\]
Then
$
{\cal A}^x_s(\beta,\overline X)
$
is a separable metric space when equipped with the topology
induced from
\[
\prod_{u\in{\cal T}_{\ge s}}
L^1(
{\mathbb R}_{\ge0},
({\cal F}^\infty_t)_{t\in{\cal T}},
{\mathbb Q}^\infty
).
\]
\end{lemma}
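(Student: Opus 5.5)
The plan is to realise ${\cal A}^x_s(\beta,\overline X)$ as a subset of the ambient product
\[
E:=\prod_{u\in{\cal T}_{\ge s}} L^1\bigl({\mathbb R}_{\ge0},({\cal F}^\infty_t)_{t\in{\cal T}},{\mathbb Q}^\infty\bigr),
\]
and to use two facts: $E$ is a separable metric space, and any subset of a separable metric space is separable in the induced topology. Since ${\cal T}$ is finite, $E$ is a finite product of finite products of spaces $L^1({\mathbb R}_{\ge0},{\cal F}^\infty_t,{\mathbb Q}^\infty)$. We metrise $E$ by the sum of the $L^1$ norms of the components, which induces the product topology. A finite product of separable metric spaces is separable, so it is enough to show that each factor $L^1({\mathbb R}_{\ge0},{\cal F}^\infty_t,{\mathbb Q}^\infty)$ is separable.

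For each factor we use standardness. The space $\underline\Omega^\infty=\underline\Omega^M\times\underline\Omega^I$ is a product of standard filtered probability spaces, so it is standard. Each ${\cal F}^\infty_t$ is then, modulo null sets, the completion of the pullback of a Borel $\sigma$-algebra on a standard Borel space, namely the realisation $\Omega^\infty_t$ from the definition following Lemma~\ref{lemma:stableEquivalence}. Hence ${\cal F}^\infty_t$ is countably generated mod ${\mathbb Q}^\infty$. Fix a countable generating algebra ${\cal C}_t$. By the $\pi$--$\lambda$/approximation lemma, every set in ${\cal F}^\infty_t$ is approximated in ${\mathbb Q}^\infty$-measure by sets in ${\cal C}_t$. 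Therefore rational nonnegative linear combinations of indicators of sets in ${\cal C}_t$ are dense in the nonnegative simple functions, and so dense in $L^1({\mathbb R}_{\ge0},{\cal F}^\infty_t,{\mathbb Q}^\infty)$. This gives a countable dense subset of each factor.

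One technical point needs care. Admissibility is imposed ${\mathbb Q}^\infty_x$-a.s., whereas the topology is taken from ${\mathbb Q}^\infty$-$L^1$. There is no conflict, because the lemma only asserts separability of the induced topology from $E$. We never need ${\cal A}^x_s$ to be closed or complete, and an arbitrary subset of a separable metric space is separable: pick, for each pair of a countable dense point and a rational radius meeting the subset, one point of the subset. This also settles the degenerate case ${\cal A}^x_s=\emptyset$ trivially. The only step requiring genuine justification is the countable generation of ${\cal F}^\infty_t$, and that follows directly from the standing standardness assumptions. We therefore expect no real obstacle.
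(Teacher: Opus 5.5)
Your proof is correct, and it reaches the conclusion by a shorter route than the paper. Both arguments start from the separability of the ambient finite product of $L^1$ spaces. The paper simply asserts this; you justify it via countable generation of each ${\cal F}^\infty_t$ through the standard Borel realisations. The paper then spends most of its proof showing that the consistency, non-participation and funding conditions cut out closed subsets, and concludes that ${\cal A}^x_s(\beta,\overline X)$ is separable as a finite intersection of closed sets. As you point out, closedness plays no role in separability, because every subset of a separable metric space is separable. Your route also avoids a point the paper's closedness step passes over. The funding residuals involve $m_u$ computed under ${\mathbb Q}^\infty_x$, and in general (e.g.\ a Brownian market at $s>0$) this measure is concentrated on a ${\mathbb Q}^\infty$-null fibre. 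So continuity of those residuals in the ${\mathbb Q}^\infty$-$L^1$ topology does not follow merely from conditional expectation being an $L^1$-contraction. What the paper's route would add, if closedness were properly established, is that ${\cal A}^x_s(\beta,\overline X)$ is a closed subset of a complete separable space and hence Polish. That is the natural setting for the Kuratowski--Ryll-Nardzewski selection arguments used later. For the lemma as stated, your hereditary-separability argument is all that is needed.
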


\begin{proof}
Since ${\cal T}$ is finite,
\[
\prod_{u\in{\cal T}_{\ge s}}
L^1(
{\mathbb R}_{\ge0},
({\cal F}^\infty_t)_{t\in{\cal T}},
{\mathbb Q}^\infty
)
\]
is a finite product of separable $L^1$ spaces and is therefore
separable.

The consistency conditions~\eqref{eq:consistencyCondition}
and \eqref{eq:consistencyConditionRealised}, and the
non-participation conditions~\eqref{eq:nonParticipation}
and \eqref{eq:noConsumptionNonParticipation},
are linear constraints and define closed subsets of this product
space.

For fixed $u\in{\cal T}_{>s}$, the maps
\begin{gather*}
\gamma
\mapsto
m_s(b)+m_s(\eta_s)-\sum_{s\le t\le T}m_s(\gamma_{s,t}),
\\
\gamma
\mapsto
\sum_{u\le t\le T}m_u(\gamma_{u-1,t})+m_u(\eta_u)-\sum_{u\le t\le T}m_u(\gamma_{u,t})
\end{gather*}
are continuous as maps into $L^1$, since conditional expectation is
an $L^1$-contraction.

Hence the funding conditions~\eqref{eq:budgetJ} and \eqref{eq:selfFinancing}
define closed subsets.

Since there are only finitely many choices of
$u$, the admissible set
$
{\cal A}^x_s(\beta,\overline X)
$
is an intersection of finitely many closed subsets of a
separable metric space.

Therefore
$
{\cal A}^x_s(\beta,\overline X)
$
is itself separable.
\end{proof}
\subsection{A classical reformulation - proofs}
\label{appendix:classicalReformulation}

Throughout this appendix, statements about conditional realizations
and about the axioms of a family of preferences are understood modulo
${\mathbb Q}^\infty_x$-null sets of conditioning states.

\begin{lemma}[Gluing]
\label{lemma:gluingConditionalContinuations}
Let $u\in{\cal T}$, let $B\geq0$ be an integrable
${\cal F}^\infty_u$-measurable random budget, let $\overline W$ be a
consumption history up to time $u$, and suppose that for
${\mathbb Q}^\infty_x$-almost every $y\in\Omega^{M\times I}_u$ we are
given $Y^y\in{\cal X}^y_u(B(y),\overline W)$, measurably in $y$.
Define
\[
Y_t(\omega):=Y^{\omega^\infty_u(\omega)}_t(\omega).
\]
Then, for every $t>u$,
\[
R^Y_t=R^{Y^y}_t
\qquad
{\mathbb Q}^\infty_y\text{-a.s., for }{\mathbb Q}^\infty_x\text{-almost every }y,
\]
the stream $Y$ satisfies the constraints
\eqref{eq:classicalConsistency}, \eqref{eq:classicalNonParticipation}
and \eqref{eq:noPotOutsideParticipation} from time $u$ onwards, and
\[
Y_u+{\mathbb E}_{{\mathbb Q}^\infty_x}\bigl(R^Y_{u+1}\,\big|\,{\cal F}^\infty_u\bigr)
\leq
B+\eta'_u
\qquad
{\mathbb Q}^\infty_x\text{-a.s.}
\]
\end{lemma}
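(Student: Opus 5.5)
The plan is a backward induction on $t$, from $T+1$ down to $u+1$, proving the pot identity $R^Y_t=R^{Y^y}_t$ on fibres. The key fact is that conditioning the disintegrated measure ${\mathbb Q}^\infty_x$ further on ${\cal F}^\infty_t$ for $t>u$ is compatible with first conditioning on the time-$u$ state. Concretely, for ${\mathbb Q}^\infty_x$-almost every $y$ and any integrable $Z$,
\[
{\mathbb E}_{{\mathbb Q}^\infty_x}\bigl(Z\,\big|\,{\cal F}^\infty_{t}\bigr)
=
{\mathbb E}_{{\mathbb Q}^\infty_y}\bigl(Z\,\big|\,{\cal F}^\infty_{t}\bigr)
\qquad {\mathbb Q}^\infty_y\text{-a.s.}
\]
This holds because ${\cal F}^\infty_u\subseteq{\cal F}^\infty_t$, so the fibre measures of ${\mathbb Q}^\infty_x$ over the time-$u$ state are exactly the ${\mathbb Q}^\infty_y$ (the tower property of regular conditional probabilities, as in the disintegration discussion of Section~\ref{sec:notation}). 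I will record this as a preliminary remark, justified via uniqueness of disintegration.

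The first step is the base case. We have $R^Y_{T+1}=0=R^{Y^y}_{T+1}$ trivially.

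The second step is the inductive step. For $u<t\le T$, suppose $R^Y_{t+1}=R^{Y^y}_{t+1}$ holds ${\mathbb Q}^\infty_y$-a.s. for a.e.\ $y$. On the fibre $\{\omega^\infty_u=y\}$ we have $Y_t=Y^y_t$ by definition of $Y$. The preliminary remark then gives ${\mathbb E}_{{\mathbb Q}^\infty_x}(R^Y_{t+1}\mid{\cal F}^\infty_t)={\mathbb E}_{{\mathbb Q}^\infty_y}(R^{Y^y}_{t+1}\mid{\cal F}^\infty_t)$ ${\mathbb Q}^\infty_y$-a.s. Applying the same positive-part formula to both sides yields $R^Y_t=R^{Y^y}_t$. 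Measurability of $y\mapsto Y^y$ is used to guarantee that $Y$ is jointly measurable, adapted and integrable, so that $R^Y$ is well defined. Integrability follows from $\int\|Y^y\|_{L^1({\mathbb Q}^\infty_y)}\,d{\mathbb Q}^\infty_x(y)$, which is bounded using the budget constraint and the integrability of $B$ and $\eta'$. This is the point needing care.

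The third step checks the constraints. Consistency and non-participation, \eqref{eq:classicalConsistency} and \eqref{eq:classicalNonParticipation}, are fibrewise almost-sure statements, so they transfer directly from the $Y^y$ by integrating over $y$. The same holds for \eqref{eq:noPotOutsideParticipation} at times $\ge u$, using the pot identity just proved together with the preliminary remark applied at time $u'\ge u$. For the funding inequality at time $u$, note that ${\cal F}^\infty_u$ is generated by the time-$u$ state. Hence ${\mathbb E}_{{\mathbb Q}^\infty_x}(R^Y_{u+1}\mid{\cal F}^\infty_u)(y)={\mathbb E}_{{\mathbb Q}^\infty_y}(R^{Y^y}_{u+1})$. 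Together with $Y_u=Y^y_u$, the condition $Y^y\in{\cal X}^y_u(B(y),\overline W)$ gives the inequality with right side $B(y)+\eta'_u(y)$ for a.e.\ $y$, which is the claim.

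I expect the main obstacle to be the preliminary disintegration-compatibility remark and the accompanying null-set bookkeeping. The pot recursion involves conditional expectations under two different measures, ${\mathbb Q}^\infty_x$ and ${\mathbb Q}^\infty_y$, and it must be checked that the exceptional null sets of $y$ can be chosen uniformly over the finitely many $t$. Finiteness of ${\cal T}$ makes the countable union of null sets harmless. I would prove the remark by verifying the defining property of conditional expectation: for $A\in{\cal F}^\infty_t$, integrate ${\mathbb E}_{{\mathbb Q}^\infty_y}(Z\mid{\cal F}^\infty_t)\indicator_A$ against ${\mathbb Q}^\infty_x$ via the disintegration. This recovers ${\mathbb E}_{{\mathbb Q}^\infty_x}(Z\indicator_A)$, and the $y\mapsto$ version is ${\cal F}^\infty_t$-measurable because the time-$u$ state is ${\cal F}^\infty_t$-measurable.
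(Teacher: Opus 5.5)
Your proposal is correct and follows the paper's own argument: a backward induction on the pot recursion driven by the tower property of the disintegration, ${\mathbb E}_{{\mathbb Q}^\infty_x}(Z\mid{\cal F}^\infty_t)={\mathbb E}_{{\mathbb Q}^\infty_y}(Z\mid{\cal F}^\infty_t)$ on the fibre over $y$ for $t\ge u$, followed by transferring each fibrewise constraint of ${\cal X}^y_u(B(y),\overline W)$ to ${\mathbb Q}^\infty_x$ by integrating over $y$. The only place you add to the paper is the integrability of the glued $Y$: you obtain it correctly by telescoping the self-financing budget constraint to bound ${\mathbb E}_{{\mathbb Q}^\infty_y}\bigl(\sum_{t\ge u}Y^y_t\bigr)$ by $B(y)$ plus the remaining contributions, whereas the paper simply asserts that measurability makes $Y$ a stream.
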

\begin{proof}
Measurability of the family implies that $Y$ is a stream. For
$r\geq u$ and integrable $Z$, the tower property of the
disintegration gives
${\mathbb E}_{{\mathbb Q}^\infty_x}(Z\mid{\cal F}^\infty_r)
={\mathbb E}_{{\mathbb Q}^\infty_y}(Z\mid{\cal F}^\infty_r)$
almost surely on the fibre over $y$, for almost every $y$. Since the
recursion defining $(R^Y_t)_{t>u}$ involves only such conditional
expectations and the values $(Y_t)_{t>u}$, which agree with
$(Y^y_t)_{t>u}$ on the fibre over $y$, backward induction on $t$
gives the fibrewise identity for the pots. Each constraint of
${\cal X}^y_u(B(y),\overline W)$ is a ${\mathbb Q}^\infty_y$-almost
sure condition on quantities which agree fibrewise with those of
$Y$; integrating over $y$ with respect to the disintegration shows
that the corresponding conditions hold
${\mathbb Q}^\infty_x$-almost surely, and in particular the funding
constraints at time $u$ combine into the displayed inequality.
\end{proof}

\begin{proof}[Continuation splicing, Lemma~\ref{lemma:continuationSplicing}]
Write $Z:=X\sqcup_u Y$. From time $u$ onwards, $Z$ is the glued
stream of Lemma~\ref{lemma:gluingConditionalContinuations} with
history $(X_t)_{t<u}$, so for $t>u$ the pots $R^Z_t$ agree fibrewise
with $R^{Y^y}_t$, the constraints hold from $u$ onwards, and
\[
Z_u+{\mathbb E}\bigl(R^Z_{u+1}\,\big|\,{\cal F}^\infty_u\bigr)
\leq
B+\eta'_u,
\]
whence $R^Z_u\leq B\leq R^X_u$ almost surely.

For $t<u$ we have $Z_t=X_t$, and the defining recursion is monotone
in the continuation pot, so backward induction from $t=u$ gives
$R^Z_t\leq R^X_t$ for every $t\leq u$. In particular the funding
constraint \eqref{eq:homogeneousBudget} at time $s$ follows from
that of $X$, and the conditions~\eqref{eq:noPotOutsideParticipation} for $t<u$ follow from those of
$X$ since $0\leq{\mathbb E}(R^Z_{t+1}\mid{\cal F}^\infty_t)
\leq{\mathbb E}(R^X_{t+1}\mid{\cal F}^\infty_t)$. The consistency
and non-participation constraints hold before $u$ because they hold
for $X$. Hence $Z\in{\cal X}^x_s(\beta,\overline X)$.

Finally, fix $y$. The consistency constraint for
$Y^y\in{\cal X}^y_u(B(y),X)$ forces $Y^y_t=X_t$ for $t<u$,
${\mathbb Q}^\infty_y$-almost surely, so $Z=Y^y$
${\mathbb Q}^\infty_y$-almost surely and
${\cal J}_u(Z\mid y)={\cal J}_u(Y^y\mid y)$.
\end{proof}

\begin{proposition}[Redistributive representation]
\label{prop:redistributiveRepresentation}
For $u \in {\cal T}_{<T}$ let
\[
p_{u+1}:=m_{u+1}\bigl(\indicator_{\tau' \leq u}\,\indicator_{u+1<\tau'_d}\bigr)
\]
denote the market-conditional proportion of agents who participate
at time $u$ and are still participating at time $u+1$.

Given a stream $X$, define its {\em redistributive representative}
$\hat\Gamma(X)$ by
\begin{gather*}
\hat\Gamma(X)_{u,t}:=X_t \quad (t \leq u),
\\
\hat\Gamma(X)_{u,u+1}
:=
\frac{\indicator_{\tau' \leq u}\,\indicator_{u+1<\tau'_d}}{p_{u+1}}\,
\hat R^X_{u+1},
\\
\hat\Gamma(X)_{u,t}:=0 \quad (t>u+1),
\end{gather*}
for $u \in {\cal T}_{\geq s}$, with the convention that the middle
expression is $0$ on the event $\{p_{u+1}=0\}$.

Suppose that, ${\mathbb Q}^\infty_x$-almost surely,
$\hat R^X_{u+1}=0$ on $\{p_{u+1}=0\}$ for every $u$; this holds in
particular whenever ${\mathbb Q}^\infty_x(\tau'\leq s)=1$. Then the
following hold.
\begin{enumerate}[(i)]
\item If $\gamma\in{\cal A}^x_s(b,\overline X)$ then
$X(\gamma)\in\hat{\cal X}^x_s(b,\overline X)$.
\item $X\in\hat{\cal X}_s^x(\beta,\overline X)$ if and only if
$\hat\Gamma(X)\in{\cal A}^x_s(\beta,\overline X)$, and in this case
$X(\hat\Gamma(X))=X$.
\item For every $u\in{\cal T}_{>s}$ and every
$y\in\Omega^{M\times I}_u$,
\[
\beta_{y,u}(\hat\Gamma(X))
=
\left(
\frac{\indicator_{\tau' \leq u-1}\,\indicator_{u<\tau'_d}}{p_{u}}\,
\hat R^X_{u}
\right)\!(y).
\]
The pot is carried entirely by the agents who participate at time
$u-1$ and remain participants at time $u$; the factor $p_u^{-1}$
converts the per-capita pot into the amount carried by each such
agent.
\end{enumerate}
In particular, a stream is induced by an admissible strategy if and
only if it lies in $\hat{\cal X}_s^x(\beta,\overline X)$.
\end{proposition}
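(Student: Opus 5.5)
The plan is to prove the three parts in the order (i), (iii), (ii), and to organise everything around the telescoped funding inequality \eqref{eq:receivedToDate} and the carried-forward form of Remark~\ref{rem:carriedForward}, in its redistributive version with $m_u$ in place of the conditional expectations.

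\emph{Part (i).} Let $\gamma\in{\cal A}^x_s(b,\overline X)$ and write $X=X(\gamma)$. Set $P_u:=\sum_{t\ge u}m_u(\gamma_{u-1,t})$ for $u>s$, $P_s:=m_s(b)$ and $P_{T+1}:=0$. Splitting off the diagonal term $\gamma_{u,u}=X_u$ in \eqref{eq:selfFinancing} and using the tower property $m_u(m_{u+1}(\cdot))=m_u(\cdot)$ gives
\[
m_u(X_u)+m_u(P_{u+1})\le P_u+m_u(\eta'_u).
\]
The same argument applied to \eqref{eq:budgetJ} gives this inequality at $u=s$. Each $P_u$ is nonnegative because the contract legs are nonnegative. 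A backward induction, using that the map $Z\mapsto(m_u(X_u)+m_u(Z)-m_u(\eta'_u))^+$ is monotone, then shows $P_u\ge\hat R^X_u$ for every $u$. Hence \eqref{eq:redistributiveBudget} holds. The consistency and non-participation constraints transfer directly from \eqref{eq:consistencyCondition}, \eqref{eq:consistencyConditionRealised} and \eqref{eq:noConsumptionNonParticipation}.

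\emph{Part (iii).} This is a direct computation from \eqref{eq:defBeta}. Only the leg $t=u$ of $\hat\Gamma(X)_{u-1,\cdot}$ is nonzero, so $\beta_{y,u}$ is the conditional value of $\indicator_{\tau'\le u-1}\indicator_{u<\tau'_d}p_u^{-1}\hat R^X_u$ at $y$. This quantity is ${\cal F}^\infty_u$-measurable, because $p_u$ and $\hat R^X_u$ are market-measurable at time $u$ and the indicators are known at time $u$. Its conditional expectation at $y$ is therefore its realised value.

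\emph{Part (ii).} First check that $\hat\Gamma(X)$ is admissible. Consistency holds by construction, and $X(\hat\Gamma(X))=X$ because $\hat\Gamma(X)_{t,t}=X_t$. Non-participation holds through the indicators: if $u<\tau'$ or $u+1\ge\tau'_d$ the leg vanishes, and otherwise $\tau'\le u+1<\tau'_d$ at the paid date. The key identity is
\[
m_{u+1}\Bigl(\tfrac{\indicator_{\tau'\le u}\indicator_{u+1<\tau'_d}}{p_{u+1}}\hat R^X_{u+1}\Bigr)=\hat R^X_{u+1}\indicator_{\{p_{u+1}>0\}}=\hat R^X_{u+1},
\]
where the last equality uses the hypothesis that $\hat R^X_{u+1}=0$ on $\{p_{u+1}=0\}$. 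This holds because $\hat R^X_{u+1}$ and $p_{u+1}$ are $\iota{\cal F}^M_{u+1}$-measurable and can be pulled out of $m_{u+1}$. With this identity:
\begin{enumerate}[(a)]
\item condition \eqref{eq:budgetJ} reduces, after applying $m_s$, to \eqref{eq:redistributiveBudget};
\item condition \eqref{eq:selfFinancing} at time $u$ reads $m_u(X_u)+m_u(\hat R^X_{u+1})\le\hat R^X_u+m_u(\eta'_u)$, which is exactly the defining recursion of $\hat R^X_u$, since $(z)^+\ge z$.
\end{enumerate}
The converse direction of (ii) is (i) applied to $\hat\Gamma(X)$, using $X(\hat\Gamma(X))=X$. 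The final "in particular" statement combines (i) and (ii).

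On the stated sufficient condition: when $\tau'\le s$ a.s., the quantity $\hat R^X_{u+1}$ should vanish on $\{p_{u+1}=0\}$. On that event no agent survives to $u+1$, so $X_t=0$ there for $t\ge u+1$ by \eqref{eq:classicalNonParticipation}, and $\eta'_t$ vanishes too. A backward induction then gives $\hat R^X=0$ on that event.

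The main obstacle is the measurability bookkeeping behind the key identity. One has to check that $\hat R^X_{u+1}$ really is $\iota{\cal F}^M_{u+1}$-measurable, so that the factor $p_{u+1}^{-1}$ can be pulled out. This is immediate from the recursion, since every term is an $m$-value. The remaining care is that $p_{u+1}$ is well defined under ${\mathbb Q}^\infty_x$ and has the correct null-set conventions.
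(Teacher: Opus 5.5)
Your proposal is correct and follows the paper's proof essentially step for step. For (i) you use the same nonnegative, market-measurable pot $\hat P_u=\sum_{t\ge u}m_u(\gamma_{u-1,t})$ with backward induction against $\hat R^X_u$; for (ii) the same pull-out identity $m_{u+1}(\hat\Gamma(X)_{u,u+1})=\hat R^X_{u+1}\indicator_{\{p_{u+1}>0\}}$ to verify the funding conditions, with the converse obtained from (i); and for (iii) the same direct evaluation of \eqref{eq:defBeta}.

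One small correction concerns the sufficient condition. A general problem $\Prob'$ need not have $\eta'_t=0$ after exit, but your backward induction does not need this: $\eta'\ge0$ already forces $(0+0-m_t(\eta'_t))^+=0$ on the event in question.
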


The two notions are paired because neither carries economic content:
admissibility, like redistributive feasibility, is only a funding
requirement. Self-enforcement enters through acceptability
(Definition~\ref{def:acceptableNoIdiosyncratic}), not here.

\begin{proof}[Redistributive representation, Proposition~\ref{prop:redistributiveRepresentation}]

Throughout we use the following observation. If $S$ is
nonnegative, integrable and
$\iota^{\Omega^\infty}{\cal F}^M_{u+1}$-measurable, then the claim
\[
K:=\frac{\indicator_{\tau' \leq u}\,\indicator_{u+1<\tau'_d}}{p_{u+1}}\,S
\]
satisfies
$m_{u+1}(K)=S\,\indicator_{\{p_{u+1}>0\}}$, since $S$ and $p_{u+1}$
may be taken outside the conditional expectation; in particular $K$
is integrable, and $m_w(K)=m_w(S\,\indicator_{\{p_{u+1}>0\}})$ for
every $w\leq u+1$ by the tower property.

We first record that the standing hypothesis holds whenever
${\mathbb Q}^\infty_x(\tau'\leq s)=1$. In that case
$p_{u+1}=m_{u+1}(\indicator_{u+1<\tau'_d})$, so on
$\{p_{u+1}=0\}$ we have $u+1\geq\tau'_d$, and hence
$t\geq\tau'_d$ for every $t\geq u+1$, almost surely. By
non-participation, $X_t=0$ almost surely on $\{p_{u+1}=0\}$ for
every $t\geq u+1$, and the events $\{p_{u+1}=0\}$ increase with $u$
modulo null sets. Backward induction through the definition of
$\hat R^X$, using $\eta'\geq0$, then gives
$\hat R^X_{u+1}=0$ on $\{p_{u+1}=0\}$.

(i) Let $\gamma\in{\cal A}^x_s(b,\overline X)$ and set $X:=X(\gamma)$.
Define $\hat P_{T+1}:=0$ and, for $u\in{\cal T}_{>s}$,
\[
\hat P_u:=\sum_{u\leq t\leq T}m_u(\gamma_{u-1,t}),
\]
which is nonnegative, integrable and
$\iota^{\Omega^\infty}{\cal F}^M_u$-measurable. By the tower
property,
\[
m_u(\hat P_{u+1})=\sum_{u+1\leq t\leq T}m_u(\gamma_{u,t}),
\]
so the self-financing condition~\eqref{eq:selfFinancing} may be
rewritten as
\[
m_u(X_u)+m_u(\hat P_{u+1})\leq \hat P_u+m_u(\eta'_u),
\qquad u\in{\cal T}_{>s},
\]
and the issuance-funding condition~\eqref{eq:budgetJ} as
\[
m_s(X_s)+m_s(\hat P_{s+1})\leq m_s(b)+m_s(\eta'_s).
\]
Backward induction gives $\hat P_u\geq\hat R^X_u$ for every
$u\in{\cal T}_{>s}$: this is trivial at $u=T+1$, and if
$\hat P_{u+1}\geq\hat R^X_{u+1}$ then
\[
\hat P_u\geq m_u(X_u)+m_u(\hat P_{u+1})-m_u(\eta'_u)
\geq m_u(X_u)+m_u(\hat R^X_{u+1})-m_u(\eta'_u),
\]
while also $\hat P_u\geq0$, so $\hat P_u\geq\hat R^X_u$. Combining with the
issuance-funding inequality and monotonicity of $m_s$ gives
\eqref{eq:redistributiveBudget}. The consistency and non-participation
constraints for $X$ follow from conditions (i) and (ii) for
$\gamma$. Hence $X\in{\cal X}^x_s(b,\overline X)$. The same argument
establishes the ``only if'' direction of the equivalence in the
carried-forward form stated in the remark following the proposition.

(ii) Suppose $X\in{\cal X}_s^x(\beta,\overline X)$ and write
$\hat\Gamma:=\hat\Gamma(X)$. Each leg of $\hat\Gamma$ is nonnegative and, by the
observation above, integrable. The consistency conditions hold by
construction, since $\hat\Gamma_{t,t}=X_t$. For the non-participation
conditions: on $\{u<\tau'\}$ the factor $\indicator_{\tau'\leq u}$
vanishes and on $\{u\geq\tau'_d\}$ the factor
$\indicator_{u+1<\tau'_d}$ vanishes, so $\hat\Gamma_{u,u+1}=0$ on either
event, while the legs with $t>u+1$ vanish identically; and
$\hat\Gamma_{t,t}=X_t=0$ outside $[\tau',\tau'_d)$
by~\eqref{eq:classicalNonParticipation}. For the funding conditions,
the observation together with the standing hypothesis gives
\begin{gather*}
m_u(\hat\Gamma_{u,u+1})=m_u\bigl(\hat R^X_{u+1}\,\indicator_{\{p_{u+1}>0\}}\bigr)=m_u\bigl(\hat R^X_{u+1}\bigr),
\\
m_u(\hat\Gamma_{u-1,u})=\hat R^X_u\,\indicator_{\{p_u>0\}}=\hat R^X_u.
\end{gather*}
Hence the issuance-funding condition for $\hat\Gamma$ reads
\[
\sum_{s\leq t\leq T}m_s(\hat\Gamma_{s,t})
=m_s(X_s)+m_s\bigl(\hat R^X_{s+1}\bigr)
\leq m_s(\beta)+m_s(\eta'_s),
\]
which is precisely \eqref{eq:redistributiveBudget}, and the
self-financing condition at $u\in{\cal T}_{>s}$ reads
\[
m_u(X_u)+m_u\bigl(\hat R^X_{u+1}\bigr)
\leq
\hat R^X_u+m_u(\eta'_u),
\]
which holds since $\hat R^X_u$ dominates the bracketed expression in
its definition. Thus $\hat\Gamma\in{\cal A}^x_s(\beta,\overline X)$, and
$X(\hat\Gamma)=X$ holds by construction.

Conversely, suppose $\hat\Gamma(X)\in{\cal A}^x_s(\beta,\overline X)$.
The consistency condition for $\hat\Gamma(X)$ forces
$X_t=\overline X_t$ for $t<s$, and part (i) applied to $\hat\Gamma(X)$
gives $X=X(\hat\Gamma(X))\in{\cal X}^x_s(\beta,\overline X)$.

(iii) For $u\in{\cal T}_{>s}$, the only leg of $\hat\Gamma(X)_{u-1}$
paying at a time $t\geq u$ is $\hat\Gamma(X)_{u-1,u}$, so
\[
\beta_{y,u}(\hat\Gamma(X))
=
{\mathbb E}_{{\mathbb Q}^\infty}
\!\left(
\frac{\indicator_{\tau' \leq u-1}\,\indicator_{u<\tau'_d}}{p_{u}}\,
\hat R^X_{u}
\,\middle|\,
\omega^\infty_u=y
\right).
\]
The events $\{\tau'\leq u-1\}$ and $\{u<\tau'_d\}$ lie in
${\cal F}^\infty_u$, and $p_u$ and $\hat R^X_u$ are
$\iota^{\Omega^\infty}{\cal F}^M_u$-measurable, so the integrand is
determined by the time-$u$ state and the conditional expectation
evaluates to the stated expression.

The final claim follows by combining (i) and (ii).
\end{proof}

\begin{proof}[Self-financing representation, Proposition~\ref{prop:selfFinancingRepresentation}]
Each leg of $\Gamma(X)$ is nonnegative and integrable, and the
consistency conditions hold since $\Gamma(X)_{t,t}=X_t$. For
the non-participation conditions, note first that $R^X_{u+1}=0$
almost surely on $\{u+1<\tau'\}\cup\{u+1\geq\tau'_d\}$: on
pre-entry events this follows from
\eqref{eq:noPotOutsideParticipation} together with $R^X\geq0$, and
on post-exit events it follows by backward induction through the
recursion, since $X$ vanishes there
by~\eqref{eq:classicalNonParticipation}. The payee condition~\eqref{eq:noConsumptionNonParticipation} for the legs $X_t$ is
\eqref{eq:classicalNonParticipation}, and for the carry legs it is
the vanishing just established; the writer condition~\eqref{eq:nonParticipation} likewise follows since the only forward
leg is the carry leg.

For the issuance-funding condition, applying $m_s$ to
\eqref{eq:homogeneousBudget} gives
\[
\sum_{s\leq t\leq T}m_s\bigl(\Gamma(X)_{s,t}\bigr)
=
m_s(X_s)+m_s\bigl(R^X_{s+1}\bigr)
\leq
m_s(\beta)+m_s(\eta'_s).
\]
For the self-financing condition at $u\in{\cal T}_{>s}$: the only
leg of $\Gamma(X)_{u-1}$ paying at $t\geq u$ is the carry leg
$R^X_u$, and by the definition of the self-financing pot,
\[
X_u+{\mathbb E}\bigl(R^X_{u+1}\,\big|\,{\cal F}^\infty_u\bigr)-\eta'_u
\leq
R^X_u,
\]
so applying $m_u$ gives
\eqref{eq:selfFinancing}. Hence
$\Gamma(X)\in{\cal A}^x_s(\beta,\overline X)$, and
$X(\Gamma(X))=X$ by construction.

Finally, for $u\in{\cal T}_{>s}$, since the only leg of
$\Gamma(X)_{u-1}$ paying at $t\geq u$ is $R^X_u$, which is
${\cal F}^\infty_u$-measurable,
\[
\beta_{y,u}\bigl(\Gamma(X)\bigr)
=
{\mathbb E}_{{\mathbb Q}^\infty}
\bigl(
R^X_u
\,\big|\,
\omega^\infty_u=y
\bigr)
=
R^X_u(y).
\]
\end{proof}

The following verification inequality expresses the role of the
continuation-consistency axiom: replacing all continuations at a
given date by statewise-optimal ones can only improve a stream, so
the cost of the utilities a stream delivers at the next date bounds
the budget it requires now.

\begin{lemma}[Verification inequality]
\label{lemma:verification}
Assume $\Prob$ admits regular maximisers. Let $u\in{\cal T}_{<T}$,
let $W$ be a stream, and let $B\geq0$ be integrable,
${\cal F}^\infty_{u+1}$-measurable, and vanishing almost surely on
$\{u<\tau\}\cup\{u\geq\tau_d\}$. Suppose that for
${\mathbb Q}^\infty_x$-almost every
$x''\in\Omega^{M\times I}_{u+1}$,
\[
{\cal J}_{u+1}(W\mid x'')
\leq
V_{u+1}\bigl(x'',B(x''),W\bigr).
\]
Then for ${\mathbb Q}^\infty_x$-almost every
$x'\in\Omega^{M\times I}_u$ with
${\mathbb Q}^\infty_{x'}(\tau\leq u<\tau_d)=1$,
\[
{\cal J}_u(W\mid x')
\leq
V_u\Bigl(x',\,
W_u(x')+{\mathbb E}\bigl(B\,\big|\,{\cal F}^\infty_u\bigr)(x')-\eta_u(x'),\,
W\Bigr).
\]
\end{lemma}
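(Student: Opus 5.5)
The plan is to replace every continuation of $W$ at date $u+1$ by a statewise optimiser, splice these onto $W$, and then read off the bound from the fact that the spliced stream is feasible for the time-$u$ problem with the displayed budget. Concretely, use part (i) of the regular-maximiser assumption to choose, jointly measurably in the state, an arg max family $x''\mapsto Y^{x''}\in{\cal X}^{x''}_{u+1}\bigl(B(x''),W\bigr)$ with ${\cal J}_{u+1}(Y^{x''}\mid x'')=V_{u+1}(x'',B(x''),W)$. This choice is available at every state where the supremum is attained.

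Where the constraint set is empty the hypothesis forces ${\cal J}_{u+1}(W\mid x'')=-\infty$. Where the value is not attained we are on a null set, or we can fall back on compactness attainment (condition (ii)). On non-participation states $B=0$; by~\eqref{eq:noPotOutsideParticipation} and non-participation the zero continuation is feasible, so there we take $Y^{x''}$ equal to the tail of $W$ (or zero). These exceptional cases need to be checked only briefly.

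Next form the spliced stream $Z:=W\sqcup_{u+1}Y$ and set
\[
\beta':=W_u+{\mathbb E}\bigl(B\,\big|\,{\cal F}^\infty_u\bigr)-\eta_u .
\]
By Lemma~\ref{lemma:gluingConditionalContinuations} applied from time $u+1$, the pots satisfy $R^Z_{u+1}\le B$. Then
\[
Z_u+{\mathbb E}\bigl(R^Z_{u+1}\mid{\cal F}^\infty_u\bigr)\le W_u+{\mathbb E}(B\mid{\cal F}^\infty_u)=\beta'+\eta_u .
\]
This is exactly the funding constraint~\eqref{eq:homogeneousBudget} at date $u$ with budget $\beta'$. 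It remains to check the other conditions. Consistency holds because $Z_t=W_t$ for $t\le u$. Non-participation at $u$ holds because $x'$ is a participating state, and after $u$ it is inherited from the $Y^{x''}$. Condition~\eqref{eq:noPotOutsideParticipation} comes from the gluing lemma. Together these give $Z\in{\cal X}^{x'}_u(\beta'(x'),W)$ for a.e.\ participating $x'$, and in particular ${\cal J}_u(Z\mid x')\le V_u(x',\beta'(x'),W)$.

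Finally, the splicing lemma gives ${\cal J}_{u+1}(Z\mid x'')={\cal J}_{u+1}(Y^{x''}\mid x'')\ge{\cal J}_{u+1}(W\mid x'')$ for a.e.\ $x''$, while $Z$ and $W$ agree at all times $\le u$. Continuation consistency (axiom (vi)) with this ordering therefore yields ${\cal J}_u(W\mid x')\le{\cal J}_u(Z\mid x')$. Chaining the two inequalities proves the claim.

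I expect the main obstacle to be the "almost every" bookkeeping in the last step. Continuation consistency is stated for \emph{every} $x'\in\Omega^1_u$ and requires the comparison at \emph{every} time-$(u+1)$ state, whereas we only have it ${\mathbb Q}^\infty_x$-almost everywhere. I would handle this by modifying $Z$ on the exceptional null set of fibres, setting it equal to $W$ there; this does not affect membership in ${\cal X}^{x'}_u$ for a.e.\ $x'$. The exceptional set has null conditional measure under ${\mathbb Q}^\infty_{x'}$ for a.e.\ $x'$, so the values ${\cal J}_{u+1}(\cdot\mid x'')$ there can be compared via the measurability and invariance axioms. A secondary point is that this argument does not need the attainment of the time-$u$ supremum, since only an upper bound by $V_u$ is required.
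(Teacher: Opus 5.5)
Your proposal is the paper's proof. You choose a measurable arg max family $Y^{x''}\in{\cal X}^{x''}_{u+1}(B(x''),W)$, glue it onto $W$ after date $u$, use the gluing lemma to get $R^Z_{u+1}\le B$ and hence feasibility of $Z$ for the date-$u$ problem with budget $W_u+{\mathbb E}(B\mid{\cal F}^\infty_u)-\eta_u$, and conclude by continuation consistency.

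The almost-everywhere difficulty you single out is handled in the paper by convention alone: the appendix containing this proof reads the preference axioms modulo null sets of conditioning states. Your proposed repair should be dropped, because redefining $Z$ on a null set of fibres does not change its $L^1$ class, and hence none of its conditional values. The Invariance axiom, which concerns automorphisms of the canonical factor, plays no role here either. Your special cases are also unnecessary: since $B\ge0$ and $B=0$ before entry, the zero tail is always feasible, so every date-$(u+1)$ constraint set is non-empty. Finally, compactness is not a hypothesis of this lemma, so attainment must come from the arg max of regular maximisers (i), exactly as the paper uses it.
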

\begin{proof}
By the arg max property (in the random-budget form) choose,
measurably in $x''$, streams
$Y^{x''}\in{\cal X}^{x''}_{u+1}(B(x''),W)$ with
${\cal J}_{u+1}(Y^{x''}\mid x'')=V_{u+1}(x'',B(x''),W)$; the
constraint sets are non-empty since $B\geq0$ permits the zero tail
at exited states and $B$ vanishes at pre-entry states. Define the
composite stream $Z$ by $Z_t:=W_t$ for $t\leq u$ and by gluing the
family $(Y^{x''})$ for $t\geq u+1$
(Lemma~\ref{lemma:gluingConditionalContinuations}). The gluing lemma
gives $R^Z_{u+1}\leq B$ fibrewise, so
\[
Z_u+{\mathbb E}\bigl(R^Z_{u+1}\,\big|\,{\cal F}^\infty_u\bigr)
\leq
W_u+{\mathbb E}\bigl(B\,\big|\,{\cal F}^\infty_u\bigr),
\]
and the vanishing hypotheses on $B$ give
\eqref{eq:noPotOutsideParticipation} at time $u$; the conditions at
later times hold fibrewise. Hence, at almost every participating
$x'$, $Z$ lies in the constraint set with the displayed budget, so
${\cal J}_u(Z\mid x')$ is at most the displayed value. Since $Z$ and
$W$ agree at every time $t<u+1$ and
${\cal J}_{u+1}(Z\mid x'')={\cal J}_{u+1}(Y^{x''}\mid x'')
\geq{\cal J}_{u+1}(W\mid x'')$ for almost every $x''$, continuation
consistency gives
${\cal J}_u(W\mid x')\leq{\cal J}_u(Z\mid x')$.
\end{proof}

\begin{proof}[Reduction to the classical continuation problem, Proposition~\ref{prop:reductionClassical}]
Since ${\cal G}^x_s\subseteq{\cal H}^x_s$ by definition, for both
implications we may assume $\gamma\in{\cal H}^x_s(\beta,\overline X)$.
At the homogeneous state $x$, the conclusion of
Lemma~\ref{lemma:deliveredValueBound}(iii) reads
$\kappa_s(x)\leq\beta$; by monotonicity and right-continuity of
$V_s(x,\cdot,\overline X)$ this means
\[
{\cal J}_s(X(\gamma')\mid x)
\leq
V_s(x,\beta,\overline X)
\qquad
\text{for every }
\gamma'\in{\cal H}^x_s(\beta,\overline X).
\tag{$*$}
\]

Suppose (i). By
Proposition~\ref{prop:acceptableFromOptimisers} there exists
$\gamma^*\in{\cal G}^x_s(\beta,\overline X)
\subseteq{\cal H}^x_s(\beta,\overline X)$ with
${\cal J}_s(X(\gamma^*)\mid x)=V_s(x,\beta,\overline X)$. The
acceptability criterion for $\gamma$, tested against
$\gamma^*$, gives
${\cal J}_s(X(\gamma)\mid x)\geq V_s(x,\beta,\overline X)$, and
($*$) gives the reverse inequality. This proves (ii).

Suppose (ii). For every
$\gamma'\in{\cal H}^x_s(\beta,\overline X)$, ($*$) gives
${\cal J}_s(X(\gamma')\mid x)
\leq V_s(x,\beta,\overline X)
={\cal J}_s(X(\gamma)\mid x)$,
so the acceptability criterion holds and
$\gamma\in{\cal G}^x_s(\beta,\overline X)$.

The final claim is contained in
Proposition~\ref{prop:acceptableFromOptimisers}.
\end{proof}

\subsection{No mutually beneficial contracts - proofs}
\label{appendix:noMutuallyBeneficial}

\begin{lemma}[Conservation of continuation wealth]
\label{lemma:continuationWealthConservation}
Let $\gamma\in{\cal A}^x_s(\beta,\overline X)$ and define
\[
a_u:={\mathbb E}_{{\mathbb Q}^\infty_x}
\Bigl(\sum_{u\leq t\leq T}\gamma_{u,t}\Bigr),
\qquad
u\in{\cal T}_{\geq s}.
\]
Then
\[
a_s\leq{\mathbb E}(\beta)+{\mathbb E}(\eta'_s),
\qquad
a_u\leq a_{u-1}-{\mathbb E}\bigl(X_{u-1}(\gamma)\bigr)+{\mathbb E}(\eta'_u)
\quad(u\in{\cal T}_{>s}),
\]
with equality in a given inequality if and only if the corresponding
funding condition of Definition~\ref{def:admissibleNoIdiosyncratic}
binds ${\mathbb Q}^\infty_x$-almost surely. Moreover, for
$u\in{\cal T}_{>s}$,
\[
\int_{\Omega^{M\times I}_u}\beta_{y,u}(\gamma)\,
d{\mathbb Q}^\infty_x(y)
=
a_{u-1}-{\mathbb E}\bigl(X_{u-1}(\gamma)\bigr),
\]
and telescoping yields
\[
{\mathbb E}\Bigl(\sum_{s\leq t\leq T}X_t(\gamma)\Bigr)
\leq
{\mathbb E}(\beta)+{\mathbb E}\Bigl(\sum_{s\leq t\leq T}\eta'_t\Bigr).
\]
\end{lemma}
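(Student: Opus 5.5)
The plan is to integrate the funding conditions of Definition~\ref{def:admissibleNoIdiosyncratic} against ${\mathbb Q}^\infty_x$, using the tower property of $m_w$. Every quantity $m_w(Z)$ is a conditional expectation under ${\mathbb Q}^\infty_x$, so ${\mathbb E}(m_w(Z))={\mathbb E}(Z)$ for integrable $Z$.

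\emph{Inequality at $u=s$.} Taking expectations of \eqref{eq:budgetJ} gives
\[
a_s={\mathbb E}\Bigl(\sum_{s\le t\le T}\gamma_{s,t}\Bigr)\le{\mathbb E}(b)+{\mathbb E}(\eta'_s),
\]
with $b=\beta$.

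\emph{Inequality at $u>s$.} Taking expectations of \eqref{eq:selfFinancing} gives
\[
a_u\le{\mathbb E}\Bigl(\sum_{u\le t\le T}\gamma_{u-1,t}\Bigr)+{\mathbb E}(\eta'_u).
\]
The first term on the right equals $a_{u-1}-{\mathbb E}(\gamma_{u-1,u-1})$. By \eqref{eq:consistencyConditionRealised} and the definition of $X(\gamma)$, we have $\gamma_{u-1,u-1}=X_{u-1}(\gamma)$, which yields the stated recursion.

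\emph{Equality characterisation.} In each funding condition the left side is dominated by the right side almost surely. Their difference is a nonnegative integrable random variable, and it has zero expectation if and only if it vanishes almost surely. Hence equality in the integrated inequality holds exactly when the corresponding funding condition binds almost surely. This also requires checking that all terms are integrable, which holds because the legs lie in $L^1$ and $b,\eta'$ are integrable.

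\emph{Integral of continuation budgets.} By \eqref{eq:defBeta}, $\beta_{y,u}(\gamma)$ is the disintegration of $\sum_{u\le t\le T}\gamma_{u-1,t}$ with respect to the time-$u$ state $\omega^\infty_u$. Integrating over $y$ against the law of $\omega^\infty_u$ under ${\mathbb Q}^\infty_x$ recovers ${\mathbb E}_{{\mathbb Q}^\infty_x}\bigl(\sum_{u\le t\le T}\gamma_{u-1,t}\bigr)=a_{u-1}-{\mathbb E}(X_{u-1}(\gamma))$. The one point needing care here is that the conditional measures ${\mathbb Q}^\infty_{y}$ obtained by disintegrating ${\mathbb Q}^\infty$ agree with those of ${\mathbb Q}^\infty_x$ for ${\mathbb Q}^\infty_x$-almost every $y$. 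This follows from the tower property of the disintegration, as used in the proof of Lemma~\ref{lemma:gluingConditionalContinuations}.

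\emph{Telescoping.} Summing the recursion from $u=s+1$ to $T$ and adding the $u=s$ bound gives
\[
a_T\le{\mathbb E}(\beta)+\sum_{s\le t\le T}{\mathbb E}(\eta'_t)-\sum_{s\le t<T}{\mathbb E}(X_t(\gamma)).
\]
Since $a_T={\mathbb E}(\gamma_{T,T})={\mathbb E}(X_T(\gamma))$, rearranging gives the final inequality.

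The only mildly delicate step is the disintegration identity in the integral of continuation budgets; the rest is routine bookkeeping.
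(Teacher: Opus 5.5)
Your proposal is correct and follows the paper's proof essentially step for step. You integrate \eqref{eq:budgetJ} and \eqref{eq:selfFinancing} under ${\mathbb Q}^\infty_x$, remove $m_s,m_u$ by the tower property, identify $\gamma_{u-1,u-1}$ with $X_{u-1}(\gamma)$, use that a nonnegative integrable variable with zero integral vanishes almost surely for the equality clause, obtain the budget identity from \eqref{eq:defBeta}, and telescope using $a_T={\mathbb E}(X_T(\gamma))$; your remark that the disintegrations of ${\mathbb Q}^\infty$ and ${\mathbb Q}^\infty_x$ over the time-$u$ state agree almost everywhere simply makes explicit what the paper compresses into ``by the tower property''.
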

\begin{proof}
Integrating the issuance-funding condition~\eqref{eq:budgetJ} and
the self-financing conditions~\eqref{eq:selfFinancing} over
${\mathbb Q}^\infty_x$ and using the tower property to remove $m_s$
and $m_u$ gives the first two displays, using the consistency
condition to identify $\gamma_{u-1,u-1}$ with $X_{u-1}(\gamma)$. A
nonnegative random variable with zero integral vanishes almost
surely, which gives the equality clause. The identity for the
integrated pre-trade budgets follows from \eqref{eq:defBeta} by the
tower property. Finally $a_T={\mathbb E}(X_T(\gamma))$, so
telescoping the second display from $u=T$ down to $u=s+1$ and
combining with the first gives the last display.
\end{proof}

The next lemma is the engine of Theorem~\ref{thm:noMutuallyBeneficialContracts}. Its proof and the proof of
Proposition~\ref{prop:acceptableFromOptimisers} below proceed by a
joint backward induction on $s$: in the inductive step at $s$, each
proof may invoke both statements at every time $u>s$, and the proof
of Lemma~\ref{lemma:deliveredValueBound} at level $s$ does not use
Proposition~\ref{prop:acceptableFromOptimisers} at level $s$.

\begin{lemma}[Cost of delivered utility]
\label{lemma:deliveredValueBound}
Assume $\Prob$ admits regular maximisers. Let $s\in{\cal T}$,
$x\in\Omega^M_s\sqcup\Omega^{M\times I}_s$ with
${\mathbb Q}^\infty_x(\tau\leq s<\tau_d)=1$, let
$\gamma\in{\cal H}^x_s(\beta,\overline X)$, and write
$X:=X(\gamma)$. For $u\in{\cal T}_{\geq s}$ and
${\mathbb Q}^\infty_x$-almost every $x'\in\Omega^{M\times I}_u$
define
\[
\kappa_u(x')
:=
\inf\bigl\{b\in{\mathbb R}
\;:\;
{\cal X}^{x'}_u(b,X)\neq\emptyset
\text{ and }
V_u(x',b,X)\geq{\cal J}_u(X\mid x')
\bigr\}.
\]
Then:
\begin{enumerate}[(i)]
\item each $\kappa_u$ is well defined, measurable and integrable,
with $\kappa_u(x')=0$ at states with $u\geq\tau_d$;
\item $\kappa_u(x')\geq\beta_{x',u}(\gamma)\geq0$ for every
$u\in{\cal T}_{>s}$ and almost every $x'$;
\item ${\mathbb E}_{{\mathbb Q}^\infty_x}(\kappa_s)
\leq{\mathbb E}_{{\mathbb Q}^\infty_x}(\beta)$.
\end{enumerate}
If moreover
${\cal J}_s(X\mid y)\geq V_s(y,\beta(y),\overline X)$
for almost every compatible $y\in\Omega^{M\times I}_s$, then
$\kappa_s=\beta$ almost everywhere, every funding condition of
$\gamma$ binds almost surely,
$\kappa_u=\beta_{\cdot,u}(\gamma)$ almost everywhere for every
$u\in{\cal T}_{>s}$, and, writing $\beta_{\cdot,s}(\gamma):=\beta$,
\[
{\cal J}_u(X\mid x')
=
V_u\bigl(x',\beta_{x',u}(\gamma),X\bigr)
\]
for every $u\in{\cal T}_{\geq s}$ and almost every $x'$.
\end{lemma}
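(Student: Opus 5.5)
We argue by the joint backward induction announced before the statement: at level $s$ we may use Lemma~\ref{lemma:deliveredValueBound} and Proposition~\ref{prop:acceptableFromOptimisers} at every $u>s$, together with the verification inequality (Lemma~\ref{lemma:verification}) and conservation of continuation wealth (Lemma~\ref{lemma:continuationWealthConservation}).

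For (i), regular maximisers (iii) make $b\mapsto V_u(x',b,X)$ strictly increasing and right-continuous on its feasibility interval, with supremum the unattained supremum over all streams. The set defining $\kappa_u(x')$ is therefore a closed half-line whenever ${\cal J}_u(X\mid x')$ lies below that supremum. This holds because $X$ is itself an integrable stream and the supremum is not attained. Hence $\kappa_u(x')$ is the generalised inverse of $V_u$ evaluated at the delivered utility. Measurability follows from joint measurability of the arg max (regular maximisers (i)) and of ${\cal J}_u(X\mid\cdot)$, by expressing $\{\kappa_u\le q\}$ over rational $q$. At exited states only the zero tail is feasible, so $\kappa_u=0$. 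Integrability follows from $0\le$ (by (ii) below) and an upper bound from $R^X_u$: the representation of $X$ along its own tails (Remark~\ref{rem:carriedForward}) shows that $X$ is feasible with budget of size $\hat R$-type pots, which are integrable.

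For (ii), fix $u>s$. Suppose that on a positive-measure set of states $x'$ we had $\kappa_u(x')<\beta_{x',u}(\gamma)$. Strict monotonicity of $V_u$ then gives $V_u(x',\beta_{x',u}(\gamma),X)>{\cal J}_u(X\mid x')$ there. By Proposition~\ref{prop:acceptableFromOptimisers} at level $u$ and Proposition~\ref{prop:reductionClassical}, a measurable family of acceptable strategies $\gamma'^{,x'}\in{\cal G}^{x'}_u(\beta_{x',u}(\gamma),X)$ delivers exactly $V_u$. The screening criterion~\eqref{eq:screeningCriterion} for $\gamma\in{\cal H}^x_s$ forbids this, which proves $\kappa_u\ge\beta_{\cdot,u}(\gamma)\ge0$.

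For (iii), we propagate costs backward. At each $u\in{\cal T}_{\ge s}$, $u<T$, apply Lemma~\ref{lemma:verification} with $W=X$ and $B=\kappa_{u+1}$, pulling back via the definition of $\kappa_{u+1}$ and right-continuity. This gives $\kappa_u\le X_u+{\mathbb E}(\kappa_{u+1}\mid{\cal F}^\infty_u)-\eta_u$ statewise, where the participation hypotheses handle non-participating states. Integrating and telescoping from $T$ down to $s$ yields
\[
{\mathbb E}(\kappa_s)\le{\mathbb E}\Bigl(\sum_{s\le t\le T}X_t\Bigr)-{\mathbb E}\Bigl(\sum_{s\le t\le T}\eta_t\Bigr)\le{\mathbb E}(\beta),
\]
where the last step is Lemma~\ref{lemma:continuationWealthConservation}.

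The \emph{main obstacle} is making this chain sharp enough for the equality clause. Instead of telescoping directly, we compare with the conservation identities: by (ii) and the per-date integral identity for $\beta_{\cdot,u}$, each step satisfies ${\mathbb E}\kappa_{u+1}\ge{\mathbb E}\beta_{\cdot,u+1}=a_u-{\mathbb E}X_u$.

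For the equality clause, assume ${\cal J}_s(X\mid y)\ge V_s(y,\beta(y),\overline X)$. The definition of $\kappa_s$ and strict monotonicity give $\kappa_s\ge\beta$, so (iii) forces $\kappa_s=\beta$ almost everywhere. The chain
\[
{\mathbb E}\beta={\mathbb E}\kappa_s\le\cdots\le{\mathbb E}(\beta)
\]
must then hold with equality at every link. Since $\kappa_u\ge\beta_{\cdot,u}$ and the integrals agree, $\kappa_u=\beta_{\cdot,u}$ almost everywhere. Equality in Lemma~\ref{lemma:continuationWealthConservation} forces every funding condition to bind. Finally, strict monotonicity and right-continuity of $V_u$ convert $\kappa_u=\beta_{\cdot,u}$ into ${\cal J}_u(X\mid x')=V_u(x',\beta_{x',u},X)$, using continuity at interior points and the limit supremum for the boundary case.
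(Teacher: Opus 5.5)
Your proposal follows the paper's proof closely. It uses the same joint induction. For (ii) it tests the screening criterion against the acceptable family of Proposition~\ref{prop:acceptableFromOptimisers}. For (iii) it telescopes the chain from Lemma~\ref{lemma:verification} against Lemma~\ref{lemma:continuationWealthConservation}. For the supplement it uses the sandwich that forces equality at every link.

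Two deviations are harmless or better. First, your contradiction argument in (ii) uses strict monotonicity of $V_u(x',\cdot,X)$ in the extended reals. It therefore absorbs the case ${\cal J}_u(X\mid x')=-\infty$, which the paper handles separately through part (iv) of regular maximisers. Second, bounding $\kappa_u$ above by the self-financing pot $R^X_u\le{\mathbb E}\bigl(\sum_{t\ge u}X_t\,\big|\,{\cal F}^\infty_u\bigr)$ is a cleaner route to integrability than the paper's. Three small corrections there:
\begin{itemize}
\item It is $R^X_u$ that does the work, not a redistributive $\hat R$ pot.
\item The relevant fact is simply that the tail of $X$ is feasible from its own pot. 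Remark~\ref{rem:carriedForward} presupposes $X$ self-financing, which $X(\gamma)$ need not be.
\item At $u=s$ the lower bound is $\kappa_s\ge-\eta_s$, not (ii).
\end{itemize}

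\textbf{The gap is in the last step.} Right-continuity at $\kappa_u(x')=\beta_{x',u}(\gamma)$ gives ${\cal J}_u(X\mid x')\le V_u(x',\beta_{x',u}(\gamma),X)$. Left-continuity yields the reverse inequality only when $\beta_{x',u}(\gamma)$ is interior to the feasibility domain $\{b:{\cal X}^{x'}_u(b,X)\neq\emptyset\}$.

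The left endpoint of that domain is where every state with ${\cal J}_u(X\mid x')=-\infty$ lands. There:
\begin{itemize}
\item there is no $b<\kappa_u(x')$ to approach from;
\item the identity $\kappa_u=\beta_{\cdot,u}(\gamma)$ says nothing about how far the delivered utility lies below $V_u$;
\item no ``limit supremum'' recovers this.
\end{itemize}

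What you need is the breakaway bound ${\cal J}_u(X\mid x')\ge V_u\bigl(x',\beta_{x',u}(\gamma),X\bigr)$ for almost every $x'$. Your argument for (ii) already extracts this from the screening criterion, and at $u=s$ it is the supplement's hypothesis. Combined with right-continuity, it gives the equality at every state at once, with no appeal to interior continuity. That is how the paper finishes.
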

\begin{proof}
Since ${\mathbb Q}^\infty_x(\tau\leq s<\tau_d)=1$, for every
$u\geq s$ almost every $x'\in\Omega^{M\times I}_u$ satisfies either
$\tau\leq s\leq u<\tau_d$ ({\em participating}) or $u\geq\tau_d$
({\em exited}).

{\em Exited states.} At such $x'$, every stream in
${\cal X}^{x'}_u(b,X)$ has zero tail by
\eqref{eq:classicalNonParticipation}, and the tail of $X$ vanishes
by the non-participation conditions for $\gamma$, so
$V_u(x',b,X)={\cal J}_u(X\mid x')$ for every feasible $b$; since
$X_u=0$, $\eta'_u=0$ after exit, and pots vanish, the funding
constraint requires exactly $b\geq0$. Hence $\kappa_u(x')=0$.
Moreover $\beta_{x',u}(\gamma)=0$ at exited states by the payee
condition~\eqref{eq:noConsumptionNonParticipation}.

{\em Participating states: well-definedness.} By part (iii) of the
definition of regular maximisers, $V_u(x',b,X)$ increases, as
$b\to\infty$, to the supremum of the utilities of all integrable
streams, and this supremum is attained by no integrable stream. The
tail of $X$ is redistributively feasible for a sufficiently large budget, by
\eqref{eq:receivedToDate}, so part (ii) gives
${\cal J}_u(X\mid x')<+\infty$, and the supremum therefore strictly
exceeds ${\cal J}_u(X\mid x')$. Hence the set defining
$\kappa_u(x')$ is non-empty (when ${\cal J}_u(X\mid x')=-\infty$ it
contains the whole domain); it is bounded below since the funding
constraint forces $b\geq-\eta_u(x')$. Measurability follows since
the infimum may be taken over rational $b$, by monotonicity and
right-continuity of $V_u(x',\cdot,X)$.

{\em Claim (ii).} Fix $u\in{\cal T}_{>s}$.
By Proposition~\ref{prop:acceptableFromOptimisers} at time $u$
(joint induction), applied fibrewise with the budget
$\beta_{\cdot,u}(\gamma)$ and history $X$, there is a measurable
family
$x'\mapsto\gamma''^{,x'}\in{\cal G}^{x'}_u(\beta_{x',u}(\gamma),X)$
with
${\cal J}_u(X(\gamma''^{,x'})\mid x')
=V_u(x',\beta_{x',u}(\gamma),X)$
for almost every $x'$. The screening criterion
\eqref{eq:screeningCriterion} for $\gamma$, tested against this
family, gives
\[
{\cal J}_u(X\mid x')
\geq
V_u\bigl(x',\beta_{x',u}(\gamma),X\bigr)
\qquad
\text{for almost every }x'.
\]
Fix a participating $x'$ at which this holds. If
${\cal J}_u(X\mid x')>-\infty$: since $V_u(x',\cdot,X)$ is strictly
increasing, every $b$ with
$V_u(x',b,X)\geq{\cal J}_u(X\mid x')$ satisfies
$b\geq\beta_{x',u}(\gamma)$, and $\beta_{x',u}(\gamma)\geq0$ since
every contract leg is nonnegative. If
${\cal J}_u(X\mid x')=-\infty$: the displayed inequality forces
$V_u(x',\beta_{x',u}(\gamma),X)=-\infty$, so part (iv) of the
definition of regular maximisers gives
$\beta_{x',u}(\gamma)+H^{x'}_u=0$; both terms being nonnegative,
$\beta_{x',u}(\gamma)=0$ and the contributions vanish from $u$
onwards at $x'$, so the funding constraint bounds the domain below
by $0=\beta_{x',u}(\gamma)$. At exited states both sides
vanish. This proves (ii); in particular $\kappa_u\geq0$ for
$u\in{\cal T}_{>s}$.

{\em The chain.} At $u=T$, consuming the whole budget shows
$\kappa_T\leq X_T-\eta'_T$ at participating states, with equality
of both sides to zero at exited states. For $u\in{\cal T}_{<T}$
with $u\geq s$, apply Lemma~\ref{lemma:verification} with $W:=X$
and $B:=\kappa_{u+1}$, which is nonnegative by (ii), vanishes
outside participation, and is ${\cal F}^\infty_{u+1}$-measurable:
this gives, at almost every participating $x'$,
\[
\kappa_u(x')
\leq
X_u(x')
+{\mathbb E}\bigl(\kappa_{u+1}\,\big|\,{\cal F}^\infty_u\bigr)(x')
-\eta'_u(x'),
\]
and the same inequality holds trivially at exited states.
Integrability of each $\kappa_u$ follows backward through these
bounds together with $\kappa_u\geq-\eta'_u$. Integrating, and
writing $a_u$ as in
Lemma~\ref{lemma:continuationWealthConservation},
\[
{\mathbb E}(\kappa_u)
\leq
{\mathbb E}(X_u)+{\mathbb E}(\kappa_{u+1})-{\mathbb E}(\eta'_u),
\qquad
{\mathbb E}(\kappa_T)\leq a_T-{\mathbb E}(\eta'_T),
\]
so backward induction using
Lemma~\ref{lemma:continuationWealthConservation} gives
${\mathbb E}(\kappa_u)\leq a_u-{\mathbb E}(\eta'_u)$ for every $u$,
and at $u=s$ the issuance bound
$a_s\leq{\mathbb E}(\beta)+{\mathbb E}(\eta'_s)$ yields (iii).

{\em The supplement.} Suppose
${\cal J}_s(X\mid y)\geq V_s(y,\beta(y),\overline X)$ almost
everywhere. Arguing exactly as in the proof of (ii) (via strict
monotonicity where ${\cal J}_s(X\mid y)>-\infty$, and via part (iv)
of regular maximisers together with $\beta\geq0$ where
${\cal J}_s(X\mid y)=-\infty$) we obtain $\kappa_s\geq\beta$
almost everywhere. Then (iii) forces
${\mathbb E}(\kappa_s)={\mathbb E}(\beta)$ and
hence equality at every step of the chain: $\kappa_s=\beta$ almost
everywhere; each funding condition binds almost surely by the
equality clause of
Lemma~\ref{lemma:continuationWealthConservation}; and
${\mathbb E}(\kappa_u)=a_u-{\mathbb E}(\eta'_u)$ for every $u$.
Combining the last identity with the binding conservation
identities gives
\[
{\mathbb E}(\kappa_u)
=
a_{u-1}-{\mathbb E}(X_{u-1})
=
\int\beta_{y,u}(\gamma)\,d{\mathbb Q}^\infty_x(y),
\]
and with the pointwise bound (ii) this forces
$\kappa_u=\beta_{\cdot,u}(\gamma)$ almost everywhere. Finally, fix $u$ and a participating $x'$ at which
${\cal J}_u(X\mid x')>-\infty$. For every $b>\kappa_u(x')$ we have
$V_u(x',b,X)\geq{\cal J}_u(X\mid x')$, so right-continuity gives
${\cal J}_u(X\mid x')\leq V_u(x',\kappa_u(x'),X)
=V_u(x',\beta_{x',u}(\gamma),X)$; the reverse inequality is the
breakaway bound established in the proof of (ii) (at $u=s$, the
standing hypothesis of the supplement). Where
${\cal J}_u(X\mid x')=-\infty$ the argument of (ii) shows
$V_u(x',\beta_{x',u}(\gamma),X)=-\infty$ as well, and at exited
states both sides are the utility of the zero tail. This completes
the proof.
\end{proof}

\begin{proposition}[Acceptable strategies from optimisers]
\label{prop:acceptableFromOptimisers}
Assume $\Prob$ admits regular maximisers. Let $s\in{\cal T}$,
$x\in\Omega^M_s\sqcup\Omega^{M\times I}_s$ with
${\mathbb Q}^\infty_x(\tau\leq s<\tau_d)=1$, and let
$(\beta,\overline X)$ be given. Then there exists a stream $X^\circ$
such that, for ${\mathbb Q}^\infty_x$-almost every
$y\in\Omega^{M\times I}_s$,
\[
X^\circ\in{\cal X}^y_s(\beta(y),\overline X),
\qquad
{\cal J}_s(X^\circ\mid y)=V_s\bigl(y,\beta(y),\overline X\bigr),
\]
and which is statewise optimal at every future date:
\[
{\cal J}_u(X^\circ\mid x'')
=
V_u\bigl(x'',R^{X^\circ}_u(x''),X^\circ\bigr)
\]
for every $u\in{\cal T}_{>s}$ and almost every
$x''\in\Omega^{M\times I}_u$. Moreover
$\Gamma(X^\circ)\in{\cal G}^x_s(\beta,\overline X)$; for
${\mathbb Q}^\infty_x$-almost every $y\in\Omega^{M\times I}_s$,
$\Gamma(X^\circ)\in{\cal G}^y_s(\beta(y),\overline X)$ with
${\cal J}_s(X(\Gamma(X^\circ))\mid y)
=V_s(y,\beta(y),\overline X)$; and the construction depends
measurably on the conditioning data, so that, applied fibrewise to
${\cal F}^\infty_u$-measurable budgets and histories at a date $u$,
it yields measurable comparison families.
\end{proposition}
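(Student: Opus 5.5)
We construct $X^\circ$ by backward induction on dates. This runs jointly with Lemma~\ref{lemma:deliveredValueBound}: at level $s$ we may use both statements at every $u>s$.

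\emph{Step 1: the initial optimiser.} Using the arg max of regular maximisers part (i), in its random-budget form, choose measurably in $y\in\Omega^{M\times I}_s$ streams $X^y\in{\cal X}^y_s(\beta(y),\overline X)$ with ${\cal J}_s(X^y\mid y)=V_s(y,\beta(y),\overline X)$. Glue them by Lemma~\ref{lemma:gluingConditionalContinuations} into a single stream $X^{(s)}$. Nonemptiness and attainment come from regular maximisers: at exited states the zero tail is used, and at states with value $-\infty$ any feasible stream is optimal.

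\emph{Step 2: making future continuations statewise optimal.} For $u=s+1,\dots,T$ in turn, take the pot process $R^{X^{(u-1)}}_u$. Choose measurably in $x''$ optimisers $Y^{x''}\in{\cal X}^{x''}_u(R^{X^{(u-1)}}_u(x''),X^{(u-1)})$, again by the arg max property. Set $X^{(u)}:=X^{(u-1)}\sqcup_u Y$. Lemma~\ref{lemma:continuationSplicing}, with $B=R^{X^{(u-1)}}_u$, keeps $X^{(u)}$ self-financing from the original budget. It also gives ${\cal J}_u(X^{(u)}\mid x'')=V_u(x'',R^{X^{(u-1)}}_u(x''),X^{(u)})$. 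Since $Y^{x''}$ attains the value, ${\cal J}_u(X^{(u)}\mid x'')\ge{\cal J}_u(X^{(u-1)}\mid x'')$.

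Continuation consistency, applied fibre by fibre, shows that each splice does not lower utility at any earlier date. Hence time-$s$ optimality and the optimality at dates $s<r<u$ are preserved. There is one subtlety: the pots $R^{X^{(u)}}_r$ for $r\le u$ may drop below $R^{X^{(u-1)}}_r$. If so, I would argue as follows. By Remark~\ref{rem:carriedForward} the tail is feasible from the smaller pot, so strict monotonicity of $V$ (regular maximisers (iii)) together with optimality forces the pots to remain equal wherever the value is finite. So optimality at date $r$ holds with respect to the updated pots, $V_r(x'',R^{X^{(u)}}_r,\cdot)$. Set $X^\circ:=X^{(T)}$.

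\emph{Step 3: acceptability of $\Gamma(X^\circ)$.} Proposition~\ref{prop:selfFinancingRepresentation} gives $\Gamma(X^\circ)\in{\cal A}^x_s$, $X(\Gamma(X^\circ))=X^\circ$ and $\beta_{y,u}(\Gamma(X^\circ))=R^{X^\circ}_u(y)$.

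For the screening criterion, take any comparison family $\gamma'^{,x'}\in{\cal G}^{x'}_u(R^{X^\circ}_u(x'),X^\circ)$. By Lemma~\ref{lemma:deliveredValueBound}(iii) at level $u$ (induction), applied at the homogeneous state $x'$, its delivered utility is at most $V_u(x',R^{X^\circ}_u(x'),X^\circ)$. The argument is the $(*)$ step in the proof of Proposition~\ref{prop:reductionClassical}, using monotonicity and right-continuity. This equals ${\cal J}_u(X^\circ\mid x')$, so $\Gamma(X^\circ)\in{\cal H}^x_s$.

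For the acceptability criterion, each comparison $\gamma'^{,x'}\in{\cal H}^{x'}_s(\beta(x'),\overline X)$ has utility at most $V_s(x',\beta(x'),\overline X)$. This again follows from Lemma~\ref{lemma:deliveredValueBound}(iii) at the homogeneous state $x'$, which is available at level $s$ without this proposition. The right-hand side is ${\cal J}_s(X^\circ\mid x')$. The fibrewise statements ($\Gamma(X^\circ)\in{\cal G}^y_s$) follow from the same argument run at each $y$.

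Measurability in the conditioning data is inherited from the jointly measurable arg max selection of regular maximisers (i), which serves all budgets and histories at once. Gluing and splicing preserve joint measurability.

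\emph{Main obstacle.} The hardest part is Step 2's bookkeeping: each splice must keep earlier-date optimality while pots are being recomputed. I would first try to settle it with the strict-monotonicity argument above. If that fails, I would instead build $X^\circ$ top-down: define at date $T$, then date $T-1$, and so on, the optimal continuation as a jointly measurable function of (state, budget, history). Then compose these forward from $\beta$, so that pots and continuations are consistent by construction, and optimality at each date follows from the verification inequality (Lemma~\ref{lemma:verification}).
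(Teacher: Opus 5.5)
Your proof is correct, but your Step 2 builds $X^\circ$ differently from the paper.

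\textbf{The paper's route.} It runs a backward induction on the starting date for a stronger claim: for every date, every random budget and every history there is an arg max whose tails are statewise optimal at all later dates. At level $u$ a single splice at $u+1$ then suffices, because the substituted continuations are already optimal at every later date. The jointly measurable family that Lemma~\ref{lemma:deliveredValueBound}(ii) needs, when it invokes this proposition fibrewise at a date $u$, comes directly out of that induction.

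\textbf{Your route.} You fix the starting data and splice forward at $u=s+1,\dots,T$, using only the bare arg max property at each date. This avoids strengthening the inductive hypothesis. The cost is that after each splice you must re-check optimality at every earlier date $r$.

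\textbf{The re-check.} It works, and it does not need strict monotonicity. Use four facts: the invariant established at the splice at $r$; continuation consistency; Remark~\ref{rem:carriedForward}; and $R^{X^{(u)}}_r\le R^{X^{(u-1)}}_r$ (from the proof of Lemma~\ref{lemma:continuationSplicing}), with histories agreeing before $r$. Together they give
\begin{align*}
V_r\bigl(x'',R^{X^{(u-1)}}_r(x''),X^{(u-1)}\bigr)
&={\cal J}_r\bigl(X^{(u-1)}\mid x''\bigr)
\le{\cal J}_r\bigl(X^{(u)}\mid x''\bigr)\\
&\le V_r\bigl(x'',R^{X^{(u)}}_r(x''),X^{(u)}\bigr)
\le V_r\bigl(x'',R^{X^{(u-1)}}_r(x''),X^{(u-1)}\bigr).
\end{align*}
So optimality holds with respect to the updated pot directly. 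This also covers states where the value is $-\infty$, which your phrase ``wherever the value is finite'' leaves aside.

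\textbf{Measurability.} Your composite nests the jointly measurable arg max map $T-s$ times with the pot maps $X\mapsto R^X_u$. These are measurable because they are Lipschitz on $L^1$ (Lemma~\ref{lemma:closednessContinuationSets}). You should say this explicitly, since it is exactly what the joint induction with Lemma~\ref{lemma:deliveredValueBound} consumes.

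Your Step 3 coincides with the paper's Steps 2--4. Your fallback ``top-down'' construction is essentially the paper's route.
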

\begin{proof}
{\em Step 1: statewise-optimal optimisers.} We claim, by backward
induction on $u$, that for every $u$, every random budget and
every history there is an arg max whose tails are statewise optimal
at all later dates. At $u=T$ this is the arg max property itself.
For $u<T$: take an arg max family (regular maximisers (i), in the
random-budget form), glue it into a stream $X^1$
(Lemma~\ref{lemma:gluingConditionalContinuations}); by the
inductive hypothesis choose, for almost every $x''$ at $u+1$, a
statewise-optimal arg max of
$(u+1,x'',R^{X^1}_{u+1}(x''),X^1)$, and let $X^2$ be the
corresponding splice
(Lemma~\ref{lemma:continuationSplicing}, with $B:=R^{X^1}_{u+1}$).
Then $X^2$ is feasible for the original budget; its date-$(u+1)$
values dominate those of $X^1$, so continuation consistency gives
${\cal J}_u(X^2\mid\cdot)\geq{\cal J}_u(X^1\mid\cdot)$, whence
$X^2$ is again an arg max; and it is statewise optimal at $u+1$ by
construction and at later dates because its pots there are the pots
of the substituted continuations.

For the measurability of this construction, argue as follows. By the
parametrised clause of part (i) of the definition of regular
maximisers, there is, at each date, a single arg max map, jointly
measurable in the state, the budget and the history; by backward
induction we may assume the date-$(u+1)$ statewise-optimal family
already constructed is of this jointly measurable form. The
substituted continuations at $u+1$ are then obtained by evaluating
that family at the arguments
$\bigl(x'',R^{X^1}_{u+1}(x''),(X^1_t)_{t\le u}\bigr)$; the pot
$R^{X^1}_{u+1}$ is a measurable function of $X^1$ by
Lemma~\ref{lemma:closednessContinuationSets}, and $X^1$ is itself a
jointly measurable function of the outer data by the parametrised
clause, so the plugged arguments, and hence the substituted family,
are jointly measurable. Finally the splice $X^2$ is given by an
explicit measurable formula in $X^1$ and the substituted family
(Lemma~\ref{lemma:continuationSplicing}), so joint measurability is
preserved, completing the induction.

Applying the claim at $(s,x,\beta,\overline X)$ gives $X^\circ$.

{\em Step 2: admissibility.} By
Proposition~\ref{prop:selfFinancingRepresentation},
$\Gamma(X^\circ)\in{\cal A}^x_s(\beta,\overline X)$ and
$\beta_{x'',u}(\Gamma(X^\circ))=R^{X^\circ}_u(x'')$ for
$u\in{\cal T}_{>s}$.

{\em Step 3: the screening criterion.} Fix $u\in{\cal T}_{>s}$
and a measurable comparison family
$x''\mapsto\gamma'^{,x''}\in
{\cal G}^{x''}_u(R^{X^\circ}_u(x''),X^\circ)
\subseteq{\cal H}^{x''}_u(R^{X^\circ}_u(x''),X^\circ)$. By
Lemma~\ref{lemma:deliveredValueBound}(iii) at each participating
homogeneous state $x''$ (joint induction, level $u>s$),
\begin{gather*}
{\cal J}_u(X(\gamma'^{,x''})\mid x'')
\leq
V_u\bigl(x'',R^{X^\circ}_u(x''),X^\circ\bigr)
=
{\cal J}_u(X^\circ\mid x'')
\\
\text{for almost every such }x''.
\end{gather*}
At exited states every strategy delivers the zero tail and the
criterion is trivial. Hence
$\Gamma(X^\circ)\in{\cal H}^x_s(\beta,\overline X)$.

{\em Step 4: the acceptability criterion.} Let
$y\mapsto\gamma'^{,y}$ be a measurable comparison family with
$\gamma'^{,y}\in{\cal H}^y_s(\beta(y),\overline X)$.
Lemma~\ref{lemma:deliveredValueBound}(iii) at each homogeneous
state $y$ (level $s$: its proof uses
Proposition~\ref{prop:acceptableFromOptimisers} only at times
$u>s$) gives
${\cal J}_s(X(\gamma'^{,y})\mid y)\leq V_s(y,\beta(y),\overline X)
={\cal J}_s(X^\circ\mid y)$ for almost every $y$. Hence
$\Gamma(X^\circ)\in{\cal G}^x_s(\beta,\overline X)$.

{\em Step 5: the fibrewise and measurability clauses.} Every
criterion verified above is a ${\mathbb Q}^\infty_x$-almost sure
condition whose ingredients are computed fibrewise over the
time-$s$ state; disintegrating yields the corresponding
${\mathbb Q}^\infty_y$-almost sure conditions for almost every $y$,
giving the fibrewise memberships. The final clause is the joint
measurability established in Step 1: the construction is a
composition of the parametrised arg max maps of regular maximisers
(i) with the explicit gluing and splicing formulas, each of which is
a measurable function of the conditioning data, the budgets and the
histories, so the resulting family may be applied fibrewise to
${\cal F}^\infty_u$-measurable data and yields measurable comparison
families.
\end{proof}

\begin{proof}[No mutually beneficial contracts across states, Theorem~\ref{thm:noMutuallyBeneficialContracts}]
Let $\gamma\in{\cal G}^x_s(\beta,\overline X)
\subseteq{\cal H}^x_s(\beta,\overline X)$.
Proposition~\ref{prop:acceptableFromOptimisers}, applied at
$(s,x,\beta,\overline X)$, provides a single strategy
$\gamma^*:=\Gamma(X^\circ)$ which lies in
${\cal H}^y_s(\beta(y),\overline X)$ for almost every
$y\in\Omega^{M\times I}_s$, with
${\cal J}_s(X(\gamma^*)\mid y)=V_s(y,\beta(y),\overline X)$. The
acceptability criterion \eqref{eq:acceptabilityCriterion} for
$\gamma$, tested against the constant comparison family
$\gamma^*$, gives
\[
{\cal J}_s(X(\gamma)\mid y)
\geq
V_s\bigl(y,\beta(y),\overline X\bigr)
\qquad
\text{for almost every }y.
\]
The supplement to
Lemma~\ref{lemma:deliveredValueBound} now yields (i), (ii) and
(iii).
\end{proof}

\section{Finite funds - proofs}
\label{appendix:finiteFundProofs}

We record a fibrewise form of the Invariance axiom and a form of
the Lottery tolerance axiom at enriched states, both used
repeatedly in this appendix.

\begin{lemma}[Invariance under state-dependent automorphisms]
\label{lemma:invarianceFibrewise}
Let ${\cal J}$ be a family of $\underline{\Omega}^1$ preferences,
let $s\in{\cal T}$, and let
$(\phi_{\omega^1})_{\omega^1\in\Omega^1}$ be a family of
filtration-preserving mod $0$ automorphisms of
$\underline{\Omega}^\can_{\cal T}$, jointly measurable in
$(\omega^{\can},\omega^1)$, whose time-$t$ coordinates, along with
those of the fibrewise inverses, depend on $\omega^1$ only through
${\cal F}^1_{t\vee s}$-measurable information. Then, writing
$\hat\Phi(\omega^{\can},\omega^1):=(\phi_{\omega^1}(\omega^{\can}),\omega^1)$,
\[
{\cal J}_s(X\mid x)
=
{\cal J}_s
\bigl(
X\circ\hat\Phi
\,\big|\,
x
\bigr)
\]
for every stream $X$ and every $x\in\Omega_s^1$. In particular, for
a family $(\phi_x)_{x\in\Omega_s^1}$ indexed by the time-$s$ state,
\[
{\cal J}_s(X\mid x)
=
{\cal J}_s
\bigl(
X\circ(\phi_x\times\id_{\Omega^1})
\,\big|\,
x
\bigr)
\]
for every $x\in\Omega_s^1$.
\end{lemma}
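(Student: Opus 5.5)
The plan is to reduce the statement directly to the Invariance axiom of Definition~\ref{def:invariantPreferences}. The key step is to show that the map
\[
\Phi(\omega^{\can},\omega^1):=\phi_{\omega^1}(\omega^{\can})
\]
is an adapted randomized automorphism at time $s$, in the sense of Definition~\ref{def:adaptedRandomizedAutomorphism}. Once that is done, the displayed identity is literally the Invariance axiom applied to this $\Phi$, since $\hat\Phi$ as defined in the lemma coincides with the $\hat\Phi$ of Definition~\ref{def:adaptedRandomizedAutomorphism}.

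The verification of the three conditions proceeds in order.

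\emph{Joint measurability.} $\Phi$ is jointly measurable by hypothesis.

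\emph{Fibrewise automorphism.} For each fixed $\omega^1$, $\Phi(\cdot,\omega^1)=\phi_{\omega^1}$ is a filtration-preserving mod $0$ automorphism of $\underline{\Omega}^{\can}_{\cal T}$, again by hypothesis.

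\emph{Adaptedness.} It remains to check that, for every $t\in{\cal T}$, the time-$t$ coordinates of $\Phi$ and of the fibrewise-inverse map are ${\cal F}^{1+}_{t\vee s}$-measurable.

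\begin{itemize}
\item Because $\phi_{\omega^1}$ is filtration preserving, its time-$t$ coordinate depends on $\omega^{\can}$ only through the canonical coordinates at times $\le t$. These are ${\cal F}^{\can}_t\subseteq{\cal F}^{\can}_{t\vee s}$-measurable.
\item By hypothesis, the dependence on $\omega^1$ is through ${\cal F}^1_{t\vee s}$-measurable information.
\item Combining the two, the time-$t$ coordinate is measurable with respect to the product $\sigma$-algebra ${\cal F}^{\can}_{t\vee s}\otimes{\cal F}^1_{t\vee s}$, whose completion is ${\cal F}^{1+}_{t\vee s}$. The same argument applies to the inverse, using the corresponding hypothesis on the fibrewise inverses.
\end{itemize}

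This is the only step I expect to need care. The hypothesis is phrased as ``depends on $\omega^1$ only through ${\cal F}^1_{t\vee s}$-measurable information''. I would formalise this via Doob--Dynkin: the time-$t$ coordinate factors as a jointly measurable function of $(\omega^{\can}_{\le t},\omega^1_{t\vee s}(\omega^1))$, where $\omega^1_{t\vee s}$ is the coordinate map to the standard Borel realization. Joint measurability of a composite of Borel maps then gives the required product measurability.

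For the ``in particular'' clause, I will take a family $(\phi_x)_{x\in\Omega^1_s}$ and set $\phi_{\omega^1}:=\phi_{\omega^1_s(\omega^1)}$.

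\begin{itemize}
\item Its dependence on $\omega^1$ is through ${\cal F}^1_s\subseteq{\cal F}^1_{t\vee s}$ for every $t$.
\item Joint measurability of $(\omega^{\can},\omega^1)\mapsto\phi_{\omega^1_s(\omega^1)}(\omega^{\can})$ follows from joint measurability of $(\omega^{\can},x)\mapsto\phi_x(\omega^{\can})$, which I read as implicit in the hypothesis that the family is measurable, composed with the measurable map $\omega^1_s$.
\end{itemize}

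The general case then applies, and $\hat\Phi=\phi_x\times\id_{\Omega^1}$ on the fibre over $x$. Since ${\cal J}_s(\cdot\mid x)$ is only sensitive to the stream modulo null sets, any ambiguity off that fibre, and the mod-$0$ nature of the automorphisms, are harmless.
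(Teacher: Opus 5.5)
Your argument for the main identity is the paper's: $\Phi(\omega^{\can},\omega^1):=\phi_{\omega^1}(\omega^{\can})$ is an adapted randomized automorphism at time $s$, and the identity is then the Invariance axiom verbatim. The paper simply asserts the adaptedness that you spell out.

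The gap is in the ``in particular'' clause. Reducing to the general case via $\phi_{\omega^1}:=\phi_{\omega^1_s(\omega^1)}$ gives ${\cal J}_s(X\mid x)={\cal J}_s(X\circ\hat\Phi\mid x)$. You then still need ${\cal J}_s(X\circ\hat\Phi\mid x)={\cal J}_s\bigl(X\circ(\phi_x\times\id_{\Omega^1})\mid x\bigr)$.

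These two streams are only guaranteed to coincide on $\Omega^{\can}_{\cal T}\times\{\omega^1_s=x\}$. That set is typically ${\mathbb Q}^{1+}$-null, and in general the streams differ on its complement, which has positive (usually full) measure. So they are different elements of $L^1$, and insensitivity to null modifications tells you nothing.

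What you are actually invoking is \emph{fibre locality}: that ${\cal J}_s(\cdot\mid x)$ depends only on the restriction of a stream to the fibre over $x$. This is not one of the axioms of Definition~\ref{def:invariantPreferences}, and it does not follow from them. Take ${\cal J}_s(X\mid x):={\mathbb E}_{{\mathbb Q}^{1+}}\bigl(\sum_{t\in{\cal T}}u(X_t)\bigr)$, the same for every $s$ and $x$, with $u$ bounded, continuous and strictly increasing. This satisfies every axiom; Invariance holds because each $\hat\Phi$ preserves ${\mathbb Q}^{1+}$. Yet it distinguishes the zero stream from the stream equal to $\indicator_{\{\omega^1_s\neq x\}}$ at a single date $t\geq s$, whenever $\{\omega^1_s\neq x\}$ has positive probability, even though the two agree on the fibre. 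You also had to assume joint measurability of $(\omega^{\can},x)\mapsto\phi_x(\omega^{\can})$, which the statement does not supply.

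The repair is the one the paper uses, and it needs neither locality nor measurability in $x$. Fix $x$ and take $\Phi(\omega^{\can},\omega^1):=\phi_x(\omega^{\can})$. This does not depend on $\omega^1$, so it is trivially an adapted randomized automorphism at time $s$, with $\hat\Phi=\phi_x\times\id_{\Omega^1}$ on the whole space. The second identity is then exactly the ``in particular'' sentence of the Invariance axiom, applied separately at each $x$.
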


\begin{proof}
The map
$\Phi(\omega^{\can},\omega^1):=\phi_{\omega^1}(\omega^{\can})$ is an
adapted randomized automorphism at time $s$, so the first identity
is the Invariance axiom. For the second, a family indexed by the
time-$s$ state satisfies the measurability hypotheses, and the
axiom may equally be applied with the fixed automorphism $\phi_x$
at each $x$.
\end{proof}

\begin{lemma}[Lottery tolerance at enriched states]
\label{lemma:lotteryToleranceEnriched}
Let $u\in{\cal T}$, let ${\cal G}_u$, $y$ and $x'':=x''(i,y)$ be as
in \eqref{eq:enrichedValue}, and write $y'$ for the full states of
$\Omega^*_u$ refining $y$. Let $(Z^n)$ be streams on $\Omega^*$,
let $(B_n)$ be measurable sets of full states with
${\mathbb Q}^*_y(y'\in B_n)\to0$, and suppose there are
$M,m\in{\mathbb R}$ such that, for every $n$,
\begin{gather*}
{\cal J}^i_u(Z^n\mid y')\ge M
\quad\text{for ${\mathbb Q}^*_y$-a.e.\ } y'\notin B_n,
\\
{\cal J}^i_u(Z^n\mid y')\ge m
\quad\text{for ${\mathbb Q}^*_y$-a.e.\ } y'\in B_n.
\end{gather*}
Then
$\liminf_n{\cal J}^i_u(Z^n\mid y)\ge M$.
\end{lemma}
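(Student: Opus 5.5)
The plan is to reduce the statement to the Lottery tolerance axiom of Definition~\ref{def:invariantPreferences} for the base family ${\cal J}$ at the state $x''$. The tool is the pair of definitions \eqref{eq:enrichedValue} and \eqref{eq:frozenEnrichedValue}.

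First, fix a transport $\Xi_y$ as in Lemma~\ref{lemma:enrichmentStable}(iii). This identifies $\underline{\Omega}^*$ conditioned on the ${\cal G}_u$-state $y$ with $\underline{\Omega}^{(M\times I)+}$ conditioned on $x''$. Set $\tilde Z^n:=Z^n\circ\Xi_y$. By \eqref{eq:enrichedValue}, ${\cal J}^i_u(Z^n\mid y)={\cal J}_u(\tilde Z^n\mid x'')$.

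Second, under $\Xi_y$ a full state $y'$ refining $y$ determines the frozen canonical coordinates $\omega^{\can}(y')\in\Omega^{\can}_{{\cal T}_{\leq u}}$. The map $y'\mapsto\omega^{\can}(y')$ pushes ${\mathbb Q}^*_y$ forward to ${\mathbb Q}^{\can}_{{\cal T}_{\leq u}}$. This holds because the conditioning on $y$ leaves the canonical factor of the lift free: this is exactly the content of (iii), since $\Xi_y$ is measure preserving between the conditioned spaces. By \eqref{eq:frozenEnrichedValue}, ${\cal J}^i_u(Z^n\mid y')={\cal J}_u(\tilde Z^{n,[\omega^{\can}(y')]}\mid x'')$.

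The delicate point is that several $y'$ may map to the same $\omega^{\can}$. Only $\omega^{\can}$ matters for the frozen value, by the Invariance remark following \eqref{eq:frozenEnrichedValue}. Consequently the hypotheses are really statements about $\omega^{\can}$.

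Define $A_n$ as the set of $\omega^{\can}$ at which ${\cal J}_u(\tilde Z^{n,[\omega^{\can}]}\mid x'')<M$. The set $A_n$ is measurable by the Measurability axiom, applied to the jointly measurable family $\omega^{\can}\mapsto\tilde Z^{n,[\omega^{\can}]}$. Outside a null set, the preimage of $A_n$ is contained in $B_n$, so ${\mathbb Q}^{\can}_{{\cal T}_{\leq u}}(A_n)\le{\mathbb Q}^*_y(B_n)\to0$. Off $A_n$ the frozen value is at least $M$ by definition. On $A_n$, almost every $\omega^{\can}$ arises from some $y'\in B_n$ (up to null sets), so the value there is at least $m$.

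The hypotheses of the Lottery tolerance axiom now hold at $x''$. Strictly, the axiom asks for them at every $x\in\Omega^1_u$, but its conclusion is pointwise in $x$, so applying it with the bounds required only at $x''$ is the intended reading. If a literal reading is insisted on, one can modify $\tilde Z^n$ away from the fibre of $x''$, which does not affect the value at $x''$. The axiom then yields $\liminf_n{\cal J}_u(\tilde Z^n\mid x'')\ge M$, which by the first step is the claim.

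The main obstacle is the second step: one must justify rigorously that the pushforward of ${\mathbb Q}^*_y$ onto the frozen coordinates is the canonical measure, and that the almost-everywhere statements transfer through the non-injective map $y'\mapsto\omega^{\can}(y')$. I would handle this by disintegrating ${\mathbb Q}^*_y$ over $\omega^{\can}$ and using the measurable choice of $\Xi_y$.
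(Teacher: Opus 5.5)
Your proposal is correct and follows the paper's proof: transport by $\Xi_y$, identify the values at full states $y'$ with frozen sections via \eqref{eq:frozenEnrichedValue}, push the exceptional sets forward to $\Omega^{\can}_{{\cal T}_{\leq u}}$, and apply Lottery tolerance at $x''$. The paper reads that axiom pointwise at the single state $x''$, exactly as you do. So your fallback of modifying $\tilde Z^n$ off the fibre of $x''$ is not needed; the stated axioms would not obviously support it anyway, since they contain no locality principle and that fibre is typically null.

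The only difference is cosmetic. Because $\Xi_y$ is filtration preserving, the full time-$u$ states refining $y$ correspond to $\omega^{\can}$ bijectively mod $0$, so your non-injectivity worry does not arise, and the paper simply takes $A_n$ to be the image of $B_n$. Your value-based choice of $A_n$ needs only the pushforward property and works equally well.
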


\begin{proof}
Choose a transport $\Xi_y$ as in
\eqref{eq:enrichedValue} and set $W^n:=Z^n\circ\Xi_y$,
so that ${\cal J}^i_u(Z^n\mid y)={\cal J}_u(W^n\mid x'')$. Under
$\Xi_y$, the full states $y'$ refining $y$ correspond, mod $0$, to
the states $\omega^{\can}\in\Omega^{\can}_{{\cal T}_{\leq u}}$ of
the canonical coordinates of the $+$-lift at times ${\leq}\,u$, the
law of $\omega^{\can}$ under ${\mathbb Q}^*_y$ being
${\mathbb Q}^{\can}_{{\cal T}_{\leq u}}$, and by
\eqref{eq:frozenEnrichedValue},
\[
{\cal J}_u\bigl(W^{n,[\omega^{\can}]}\,\big|\,x''\bigr)
=
{\cal J}^i_u\bigl(Z^n\,\big|\,y'(\omega^{\can})\bigr)
\qquad
\text{for a.e.\ }\omega^{\can}.
\]
The sets
$A_n:=\{\omega^{\can}:y'(\omega^{\can})\in B_n\}$ are measurable
mod $0$ with
${\mathbb Q}^{\can}_{{\cal T}_{\leq u}}(A_n)
={\mathbb Q}^*_y(y'\in B_n)\to0$, and the hypotheses
transfer to the sections $W^{n,[\omega^{\can}]}$, so the Lottery tolerance
axiom, applied at time $u$ and the state $x''$, gives
$\liminf_n{\cal J}_u(W^n\mid x'')\ge M$.
\end{proof}

\subsection{Admissible strategies - proofs}
\label{appendix:admissibleStrategies}

\begin{proof}[Independence of the choice of $\Psi$, Lemma~\ref{lem:valueFunctionWellDefined}]
	Let $\Psi_1,\Psi_2$ be two filtration-preserving mod 0 isomorphisms satisfying the conditions of the definition.
	
	Since $\rho((0,i))$ swaps slots $0$ and $i$, the general identity $(g\omega)^k=\omega^{g^{-1}(k)}$ gives $\omega^{I+,i}\circ\rho((0,i))=\omega^{I+,0}$ for every $i\in{\mathbb N}$. Hence the condition $\omega^{I+}=\omega^{I+,i}\circ\rho((0,i))\circ\Psi_k$ for all $i$ is equivalent to the single condition $\omega^{I+}=\omega^{I+,0}\circ\Psi_k$, for $k=1,2$.
	
	Let $\Xi:=\Psi_1^{-1}\circ\Psi_2$, a filtration-preserving mod 0 automorphism of $\underline{\Omega}^M\times\underline{\Omega}^{I+}\times\underline{\Omega}^\can_{\cal T}$. From $\omega^M\circ\Psi_1=\omega^M=\omega^M\circ\Psi_2$ and $\omega^{I+,0}\circ\Psi_1=\omega^{I+}=\omega^{I+,0}\circ\Psi_2$, composing with $\Psi_1^{-1}$ gives
	\[
	\omega^M\circ\Xi=\omega^M,
	\qquad
	\omega^{I+}\circ\Xi=\omega^{I+}.
	\]
	Thus $\Xi$ fixes the $M$ and $I+$ coordinates of the domain pointwise, and, being a bijection, restricts on each fibre $\{(\omega^M,\omega^{I+})\}\times\Omega^\can_{\cal T}$ to a filtration-preserving automorphism of that fibre: that is,
	\[
	\Xi(\omega^M,\omega^{I+},\omega^\can)=(\omega^M,\omega^{I+},\phi_{\omega^M,\omega^{I+}}(\omega^\can))
	\]
	for a measurable family of filtration-preserving automorphisms $\phi_{\omega^M,\omega^{I+}}$ of $\underline{\Omega}^\can_{\cal T}$.
	
	Transporting through $\Psi_1$, the automorphism $\Phi:=\Psi_2\circ\Psi_1^{-1}=\Psi_1\circ\Xi\circ\Psi_1^{-1}$ of $\underline{\Omega}^*$ fixes $\omega^M$ and $\omega^{I+,0}$, and acts only on the remaining factors $\underline{\Omega}^\can_{\cal T}\times\prod_{k\ge1}\underline{\Omega}^{I+,k}$ of $\underline{\Omega}^*$. Equivalently, for every $i\in{\cal I}$, the conjugate $\Phi_i:=\rho((0,i))\circ\Phi\circ\rho((0,i))$ fixes $\omega^M$ and $\omega^{I+,i}$, and acts only on the complementary factors $\underline{\Omega}^\can_{\cal T}\times\prod_{k\ne i}\underline{\Omega}^{I+,k}$.
	
	Since ${\cal T}$ is finite and each $\underline{\Omega}^{I+,k}$ is standard, this complementary space is itself a standard filtered probability space, and is therefore stably equivalent to $\underline{\Omega}^\can_{\cal T}$ by Lemma~\ref{lemma:stableEquivalence}. Composing $\Phi_i$'s restriction to this space with the stable-equivalence isomorphism identifies it, fibrewise over $(\omega^M,\omega^{I+,i})$, with a measurable family of filtration-preserving mod 0 automorphisms of $\underline{\Omega}^\can_{\cal T}$ indexed by $(\omega^M,\omega^{I+,i})$; since $\Phi_i$ is filtration-preserving, the time-$t$ coordinates of this family and of its fibrewise inverses depend on $(\omega^M,\omega^{I+,i})$ through time-$t$ information alone. By Lemma~\ref{lemma:invarianceFibrewise}, ${\cal J}_\tau$ is unaffected by precomposing its argument with $\Phi_i$.
	
	Since $\Psi_2\circ\pi^{\perp,N}=\Phi\circ\Psi_1\circ\pi^{\perp,N}$ and $\rho((0,i))\circ\Phi=\Phi_i\circ\rho((0,i))$,
	\[
	X^i(\gamma)\circ\rho((0,i))\circ\Psi_2\circ\pi^{\perp,N}
	=
	\bigl(X^i(\gamma)\circ\Phi_i\bigr)\circ\rho((0,i))\circ\Psi_1\circ\pi^{\perp,N},
	\]
	and by the previous paragraph ${\cal J}_\tau$ takes the same value on this as on $X^i(\gamma)\circ\rho((0,i))\circ\Psi_1\circ\pi^{\perp,N}$, conditionally on $i^N=i$. Since this holds for every $i\in{\cal I}$, and $i^N$ is independent of $(\Omega^M,\Omega^{I+},\Omega^\can_{\cal T})$, the value ${\cal V}^\gamma(j)$ computed using $\Psi_2$ agrees with that computed using $\Psi_1$.
\end{proof}

\subsection{Tontine strategies - proofs}
\label{appendix:tontineStrategies}

\begin{proof}[Well-posedness of tontine strategies, Lemma~\ref{lemma:tontineWellPosed}]
	We argue by induction on $s=0,1,\dots,T-1$, in the order in which the recursion proceeds. The inductive hypothesis at the start of step $s$ is that $\beta_s=(\beta^i_s)_{i\in{\cal I}}$ has already been constructed, that $\beta^i_s\ge0$, that $i\sim_x j\Rightarrow\beta^i_s=\beta^j_s$, and that $\gamma^i_{u,t}$ for $u<s$ satisfies (i)--(iii). This holds vacuously at $s=0$, since $\beta_0=0$.
	
	{\em Existence and measurable selection.} By the inductive hypothesis and the standing hypothesis $\eta\ge0$, each $\tilde\eta^{(k)}_s$ is a nonnegative element of $L^1({\mathbb R}_{\ge0},({\cal F}^{\infty+}_t)_{t\in{\cal T}},{\mathbb Q}^{\infty+})$ vanishing before $\tau^+$, i.e.\ an admissible payment stream in the sense of Definition~\ref{def:compactIndependentEta}. Since $\Prob^+$ is compact independent of $\eta$, each $\Prob^{+,k}_s$ is compact (Definition~\ref{def:compactProblem}), and so, by Theorem~\ref{thm:regularMaximisers}, admits regular maximisers. In particular, writing $y=\Psi^N_*(x)$ (Lemma~\ref{lemma:disintegrationNatural}), an arg max $X^k(\beta_s)$ for $(\Prob^{+,k}_s;s,y,0,0)$ exists, and clause~(iii) of Definition~\ref{def:compactProblem} allows it to be chosen as a measurable function of the conditioning state. This is exactly the datum used to define $\gamma_{s,s}$, $\tilde\gamma_{s,s+1}$, $C_{s+1}$, $\gamma_{s,s+1}$ and $\beta_{s+1}$, so these are well defined; the recursion terminates after finitely many steps since ${\cal T}$ is finite.
	
	{\em Nonnegativity and the induction step.} Both $X^k_s(\beta_s)$ and $X^k_t(\beta_s)$ ($t>s$) take values in ${\mathbb R}_{\ge0}$, since the classical continuation problem underlying $\Prob^{+,k}_s$ is posed on the corresponding space of nonnegative consumption processes. Hence $\gamma^i_{s,s}$ and $\tilde\gamma^i_{s,s+1}$ are nonnegative, so $C^i_{s+1}\ge0$, and $\gamma^i_{s,s+1}$, being a nonnegative combination of the nonnegative quantities $\mathbb E_{{\mathbb Q}^*_x}(\tilde\gamma^j_{s,s+1}\mid{\cal F}^*_{s+1})$, is itself nonnegative. Thus $\beta^i_{s+1}=\gamma^i_{s,s+1}(\beta_s\mid x)\ge0$, maintaining the first part of the inductive hypothesis.
	
	For both variants, the insurance pool $P$ containing any given $i$ is a union of complete state-pools ${\cal I}^k_x$ ($P=P^k_x={\cal I}^k_x$ for the restricted strategy, $P={\cal I}=\bigcup_{k\in{\cal K}_x}{\cal I}^k_x$ for the broad strategy): no state-pool is ever split between two insurance pools, so $i\sim_xj$ implies $i,j$ lie in the same state-pool and hence the same insurance pool $P$.
	
	Since $i,j$ lie in the same state-pool, $X^k(\beta_s)$ is the same arg max for both, so $\gamma^i_{s,s}=\gamma^j_{s,s}$ and $\tilde\gamma^i_{s,s+1}=h^k_{s+1}(\omega^M,\omega^{I,i})$, $\tilde\gamma^j_{s,s+1}=h^k_{s+1}(\omega^M,\omega^{I,j})$ with a common function $h^k_{s+1}$; on the event $\{\omega^{I,i}=\omega^{I,j}\}$ these agree, and their conditional distributions given the market agree, so $C^i_{s+1}=C^j_{s+1}$. In the definition of $\gamma^i_{s,s+1}$ and $\gamma^j_{s,s+1}$, the correction ratio depends only on the pool $P$, which is the same set for $i$ and $j$, while the leading factors agree on $\{\omega^{I,i}=\omega^{I,j}\}$; hence $(\gamma^i_{s,s+1}-\gamma^j_{s,s+1})\indicator_{\{\omega^{I,i}=\omega^{I,j}\}}=0$. This holds for both the restricted and the broad strategy.
	
	At the next step of the recursion the pools are formed by the states at $s+1$, and for $i\sim_{x'}j$ at that step the full idiosyncratic paths agree almost surely under ${\mathbb Q}^*_{x'}$, so $\beta^i_{s+1}=\gamma^i_{s,s+1}=\gamma^j_{s,s+1}=\beta^j_{s+1}$ on the relevant fibre, maintaining the second part of the inductive hypothesis for both variants; and the display above is precisely \eqref{eq:symmetricContract} for the pair $(s,s+1)$, while \eqref{eq:symmetricContract} for the pair $(s,s)$ is immediate from $\gamma^i_{s,s}=\gamma^j_{s,s}$. This proves (ii), for both $\gamma^R$ and $\gamma^B$.
	
	{\em Myopic.} By construction $\gamma_{u,t}$ is only ever assigned a (possibly zero) value for $t\in\{u,u+1\}$ at each step $u$ of the recursion, and Case~1 sets $\gamma_{u,t}=0$ for all $t\ge u$ once $\tau_d\le u$; this is precisely \eqref{eq:myopicContract} together with \eqref{eq:nonParticipationI}--\eqref{eq:noConsumptionNonParticipationI}, proving (i).
	
	{\em Admissibility.} Measurability: $\gamma^i_{s,s}$ is, by construction, a function of the conditioning state $x\in\Omega^*_s$ alone, hence ${\cal F}^*_s$-measurable, which is condition~(i) of Definition~\ref{def:admissibleIdiosyncratic} for $u=w=s$ (recall ${\cal F}^M_{s,s}=\iota^{\Omega^*}({\cal F}^M_s\times{\cal F}^\can_s\times\prod_i{\cal F}^{I+}_s)=\iota^{\Omega^*}({\cal F}^*_s)$ up to completion). And $\gamma^i_{s,s+1}$ is, by construction, ${\cal F}^M_{s+1}$-measurable (it is built entirely from quantities already conditioned on ${\cal F}^M_{s+1}$), which is condition~(i) for $u=s,w=s+1$; since ${\cal F}^M_{u,w}$ is increasing in $w$ and $\gamma_{u,t}=0$ for $t>u+1$, condition~(i) holds for every $u\le w$.
	
	Funding conditions: fix a pool $k$ (for the restricted strategy) or all of ${\cal I}$ (for the broad strategy) and write $P$ for the relevant insurance pool at step $s$. Summing the definition of $\gamma^i_{s,s+1}$ over $i\in P$,
	\[
	\sum_{i\in P}\gamma^i_{s,s+1}(\beta_s\mid x)
	=
	\sum_{j\in P}C^j_{s+1}(\beta_s\mid x),
	\]
	so the sharing rule distributes exactly the pool's market-priced resources: since
	${\mathbb E}(C^j_{s+1}\mid{\cal F}^*_s)
	={\mathbb E}(\tilde\gamma^j_{s,s+1}\mid{\cal F}^*_s)$
	by the tower property, the redistribution step does not affect the
	value, given any information at time $s$ or coarser, of the legs
	written at time $s$. It therefore suffices to check the funding
	condition~\eqref{eq:budgetStar} with each $\gamma^i_{s,s+1}$
	replaced by $\tilde\gamma^i_{s,s+1}$, where it reduces, pool by
	pool, to the funding constraints of the classical continuation
	problem solved by $X^k(\beta_s)$ from budget $0$ with income
	$\tilde\eta^{(k)}_s$: the payments carried in from the previous
	step, $\gamma^i_{s-1,s}=\beta^i_s$, together with the contributions
	$\eta^i_s$, are exactly the resources pooled into
	$\tilde\eta^{(k)}_s$, and the arg max property includes
	feasibility. This proves (iii).
	
	{\em Homogeneous initial state.} Suppose $x$ is homogeneous, so that ${\cal H}^{N,x}_0(0)$ (Definition~\ref{def:homogeneousFinite}) is meaningful. By (i) and (ii), $\gamma^R,\gamma^B\in{\cal C}^M(N,0)\cap{\cal C}^S(N,0)$. Comparing \eqref{eq:defC} (with $b=0$) and \eqref{eq:defCHomogeneous} with $\beta_t:=0$ for every $t$ shows $c_{u,w}=c^h_{u,w}$ identically (there is no separate exogenous injection, both quantities being built from the same $\eta^i_t$ and $\gamma^i_{u,t}$), so the funding conditions~\eqref{eq:budgetStarHomogeneous} are exactly \eqref{eq:budgetStar}, already verified in the proof of (iii); and condition~\eqref{eq:MImeasurabilityHomogeneous} holds because $\gamma^i_{u,t}$ was shown in the proof of (iii) to be ${\cal F}^M_{u,t}$-measurable, which is stronger. Hence $\gamma^R,\gamma^B\in{\cal H}^{N,x}_0(0)$, proving (iv). (Homogeneity of $x$ need not persist at later times, since agents' idiosyncratic states evolve independently; this plays no role above, since (i)--(iii) hold at every state along the recursion regardless of homogeneity, and \eqref{eq:budgetStarHomogeneous} constrains the strategy only through its value at $s=0$.)
\end{proof}

\begin{proof}[Tontine convergence, Theorem~\ref{thm:tontineConvergence}]
	We write the start as $(0,x_0)$; a general homogeneous start is
	identical. Since $\Omega^I$ is finite, write $R_s$ for the finite
	set of idiosyncratic paths reachable at time $s$ with positive
	probability.

	\emph{Step 1: the delivered value is a perturbed value function.}
	At each step $s$, the recursion consumes the time-$s$ consumption
	of an arg max of the continuation problem whose income is the
	realised per-capita resource $\tilde\eta^{(\iota)}_{N,s}$ of the
	agent's pool, and the realised-proportional sharing rule carries
	into time $s+1$ precisely the market value of that arg max's
	remaining consumption, multiplied by the ratio $R_{N,s+1}$. Define
	the adapted perturbation
	$\delta_{N,s+1}:=(R_{N,s+1}-1)\,C^0_{s+1}$, the excess of the
	carried claim over the arg max's own pot. The delivered stream
	$X^0(\gamma_N)$ is then, by construction, the realised consumption
	of a composition of measurable stepwise arg maxes for the problem
	$\Prob^+\bigl(\eta^++\delta_N\bigr)$, and such a composition is an
	optimiser of that problem, by Step 1 of the proof of
	Proposition~\ref{prop:acceptableFromOptimisers}. Hence
	\[
	{\cal J}_0\bigl(X^0(\gamma_N)\mid x_0\bigr)
	=
	V_0\bigl(\Prob^+(\eta^++\delta_N);x_0^+,0,0\bigr).
	\]

	\emph{Step 2: the perturbations vanish.} We argue by forward
	induction on $s$ that, along a subsequence, the step-$s$ incomes
	$\tilde\eta^{(\iota)}_{N,s}$ converge in $L^1$, for each of the
	finitely many $\iota\in R_s$, and $|{\cal I}^\iota_{N,s}|\to\infty$
	almost surely. The base case is the initialisation
	$\beta_0\equiv0$ together with homogeneity. For the inductive
	step: the population fractions converge by the strong law of large
	numbers applied to the i.i.d.\ idiosyncratic paths; the step-$s$
	arg max streams converge in $L^1$ to an optimiser of the limiting
	problem, by the stability clause of condition (ii) of compactness
	of $\Prob^+$ in its income-perturbation form, the incomes
	converging by the inductive hypothesis and the limiting values
	exceeding $-\infty$ by no satiation; consequently the realised
	conditional-value functions $h^\iota_{s+1}$, being
	$L^1$-contractions of the arg max streams, converge in $L^1$. The
	sharing ratio
	\[
	R_{N,s+1}
	=
	\frac{\sum_{j\in P}C^j_{s+1}}
	{\sum_{j\in P}{\mathbb E}_{{\mathbb Q}^*_x}(\tilde\gamma^j_{s,s+1}\mid{\cal F}^*_{s+1})}
	\]
	then converges to $1$ in probability: dividing numerator and
	denominator by $|P|$, the strong law of large numbers across each
	pool (whose members' realised values are, conditionally on the
	market, i.i.d.\ evaluations of $h^\iota_{s+1}$) identifies both
	limits with the same market-conditional value. By hypothesis (b)
	and dominated convergence, $\delta_{N,s+1}\to0$ in $L^1$, so the
	incomes at step $s+1$ converge in $L^1$, closing the induction.

	\emph{Step 3: conclusion.} By Step 2, $\eta^++\delta_N\to\eta^+$
	in $L^1$ along the subsequence, so by
	Lemma~\ref{lemma:valueContinuity} in its income-perturbation form,
	\[
	{\cal J}_0\bigl(X^0(\gamma_N)\mid x_0\bigr)
	=
	V_0\bigl(\Prob^+(\eta^++\delta_N);x_0^+,0,0\bigr)
	\;\to\;
	V^+_0(x_0^+,0).
	\]
	Every subsequence of $N$ contains a further subsequence along
	which the extractions of Step 2 apply, so the convergence holds
	along the entire sequence. This proves the claim for both
	$\gamma^R_N$ and $\gamma^B_N$, the two strategies differing only
	in their pools.
\end{proof}

\begin{corollary}
	\label{cor:tontineValueLowerBound}
	Under the hypotheses of Theorem~\ref{thm:tontineConvergence},
	applied at a homogeneous start $(u,y_0)$ with
	$V^+_u(y_0^+,0)>-\infty$: for every $\epsilon>0$ there is $n$ such
	that
	\[
	V^m_u\bigl(\rho(g)_*(y_0),0\bigr)
	>
	V^+_u(y_0^+,0)-\epsilon
	\]
	for every $m\ge n$ and every relabelling $g$.
\end{corollary}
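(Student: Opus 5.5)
The plan is to sandwich $V^m_u$ between the value delivered by the tontine strategies and a quantity that converges to $V^+_u(y_0^+,0)$. For the lower side we use the tontine strategies as feasible competitors in the homogeneous finite-fund problem. Fix $\epsilon>0$.

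\textbf{Step 1: tontines are competitors.} For each $m$ and each relabelling $g$, consider the restricted tontine strategy $\gamma^R_m$ initialised at $(u,\rho(g)_*(y_0))$ with $\beta_u\equiv0$. Since $\rho(g)_*(y_0)$ is again homogeneous, Lemma~\ref{lemma:tontineWellPosed}(iv) (in its version started at $u$) shows $\gamma^R_m\in{\cal H}^{m,\rho(g)_*(y_0)}_u(0)$. Hence, by the definition of $V^m_u$ as a supremum,
\[
V^m_u\bigl(\rho(g)_*(y_0),0\bigr)\ge{\cal J}_u\bigl(X^0(\gamma^R_m)\mid\rho(g)_*(y_0)\bigr).
\]

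\textbf{Step 2: apply tontine convergence.} The final clause of Theorem~\ref{thm:tontineConvergence}, applied at the homogeneous start $(u,\rho(g)_*(y_0))$, gives
\[
{\cal J}_u\bigl(X^0(\gamma^R_m)\mid\rho(g)_*(y_0)\bigr)\to V^+_u(y_0^+,0)
\]
along the whole sequence $m\to\infty$. Here we use that the representative state of $\rho(g)_*(y_0)$ is still $y_0^+$, because relabelling agents does not change the common market-and-type state. Convergence yields an $n_g$ with the strict inequality $V^m_u>V^+_u(y_0^+,0)-\epsilon$ for all $m\ge n_g$. The hypothesis $V^+_u>-\infty$ is needed here, since otherwise the target is $-\infty$ and the statement is vacuous in another way.

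\textbf{Step 3: uniformity in $g$ (the main obstacle).} The claim requires a single $n$ that works for every relabelling $g$, and there are infinitely many $g\in S_\infty$. To handle this, we show that the tontine value is independent of $g$, so that $n:=n_{\id}$ suffices.

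We first try to obtain this independence directly from Lemma~\ref{lem:admissibilityValueInvariance}. The tontine construction of Definition~\ref{def:tontineStrategy} is natural under relabelling: it depends only on the pools ${\cal I}^k_x$, the arg max of $\Prob^{+,k}$ (whose defining data are $g$-invariant), and pool sums. Consequently the tontine strategy at $\rho(g)_*(y_0)$ is the relabelled tontine strategy at $y_0$, which is $(M,I)$-equivalent to it.

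Only the label of agent $0$ matters. Because $y_0$ is homogeneous, all agents are exchangeable, so the value delivered to agent $0$ equals that delivered to agent $g^{-1}(0)$, which in turn equals agent $0$'s value by the symmetry in Lemma~\ref{lemma:tontineWellPosed}(ii). The invariance of ${\cal J}$ under $(M,I+)$-maps, given by the Invariance axiom via Lemma~\ref{lemma:invarianceFibrewise}, then identifies the two evaluations.

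If exact naturality proves awkward because of the measurable choice of $\Psi^N$, the fallback is Lemma~\ref{lemma:disintegrationNatural}. It transports the conditioning state, and the arg max values coincide, since $V^+$ depends only on $y_0^+$. As a result the sequence of delivered values is literally the same for every $g$.
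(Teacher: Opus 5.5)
Your Steps 1 and 2 are exactly the paper's proof. The restricted tontine started at the homogeneous state lies in ${\cal H}^{m}_u(0)$ by Lemma~\ref{lemma:tontineWellPosed}(iv), so $V^m_u$ dominates its delivered value, and that value tends to $V^+_u(y_0^+,0)$ by Theorem~\ref{thm:tontineConvergence}.

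Where you diverge is in getting one $n$ for all $g$. The paper makes the \emph{value function} relabelling-invariant, ``by the symmetry of Definition~\ref{def:homogeneousFinite}'':
\begin{itemize}
\item relabelling by $g$ transports ${\cal H}^{m,y_0}_u(0)$ onto ${\cal H}^{m,\rho(g)_*(y_0)}_u(0)$, by the same naturality as Lemma~\ref{lem:admissibilityValueInvariance};
\item at a homogeneous state, the supremum of the value delivered to agent $0$ equals that for any other agent.
\end{itemize}
Hence $V^m_u(\rho(g)_*(y_0),0)=V^m_u(y_0,0)$, and the $n$ chosen for the identity relabelling works for every $g$.

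You instead try to make the \emph{tontine's delivered value} $g$-invariant. That forces you to carry the whole construction of Definition~\ref{def:tontineStrategy} through the relabelling. This includes the measurable arg max selections and the isomorphism $\Psi^N$, which singles out agent $0$. It is strictly more work for the same conclusion, as your own fallback clause concedes.

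One step of your Step 3 is also not valid as justified. Lemma~\ref{lemma:tontineWellPosed}(ii) gives only the symmetry condition \eqref{eq:symmetricContract}, namely $\gamma^i=\gamma^j$ on the event $\{\omega^{I,i}=\omega^{I,j}\}$. It says nothing about the legs once the two agents' paths diverge. So it does not imply that agents $0$ and $g^{-1}(0)$ receive equal utility. For example, in a two-agent fund a symmetric contract may pay the lone survivor different amounts depending on which agent survived.

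What your route actually needs is equivariance, $\gamma^{i}(\omega)=\gamma^{\sigma i}(\sigma\omega)$ for permutations $\sigma$. That is a separate claim about the tontine, which you assert but do not prove. Moving the invariance onto $V^m_u$, as the paper does, removes the need for it.

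A minor point: the hypothesis $V^+_u(y_0^+,0)>-\infty$ is not there to avoid vacuity. If $V^+_u=-\infty$, the claimed strict inequality $V^m_u>-\infty$ could simply fail. The hypothesis is required because Theorem~\ref{thm:tontineConvergence}(a) assumes it.
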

\begin{proof}
	By Lemma~\ref{lemma:tontineWellPosed}(iv), the restricted tontine
	strategy for $m$ agents started at $(u,y_0)$ lies in
	${\cal H}^{m,y_0}_u(0)$, so $V^m_u(y_0,0)$ dominates its delivered
	value, which converges to $V^+_u(y_0^+,0)$ by the theorem. The
	value of the homogeneous problem is invariant under relabelling
	the agents, by the symmetry of
	Definition~\ref{def:homogeneousFinite}.
\end{proof}

\subsection{Gauge symmetries - proofs}
\label{appendix:symmetryArgument}

\begin{proof}[Invariance under $(M,I)$-equivalence, Lemma~\ref{lem:admissibilityValueInvariance}]
	Throughout, we suppress the pullback $\pi^{\Omega^*}$, writing
	$Q^{g^{-1}(i)}\circ\rho(g^{-1})\circ\phi=(Q')^i$ for each member $Q$ of
	the collection being transformed, as in the definition of
	$(M,I)$-equivalence.
	
	(i) \emph{Non-participation.} Since $(\tau')^i,(\tau_d')^i$ are obtained
	from $\tau^{g^{-1}(i)},\tau_d^{g^{-1}(i)}$ by the same relabelling and
	gauge transform,
	$\{u<(\tau')^i\text{ or }u\ge(\tau_d')^i\}=(\rho(g^{-1})\circ\phi)^{-1}\{u<\tau^{g^{-1}(i)}\text{
		or }u\ge\tau_d^{g^{-1}(i)}\}$. On this event,
	$(\gamma')^i_u=\gamma^{g^{-1}(i)}_u\circ\rho(g^{-1})\circ\phi=0$, by
	non-participation for $\gamma$ at index $g^{-1}(i)$; the same argument
	applies to the second non-participation condition.
	
	\emph{Measurability.} Since $g$ is a bijection of ${\cal I}$,
	reindexing $i\mapsto g^{-1}(i)$ term by term in the sum defining
	$c_{u,w}$ gives
	\[
	c'_{u,w}
	:=
	\sum_{i\in{\cal I}}\Bigl((b')^i+\sum_{s\le t\le u}(\eta')^i_t-\sum_{u\le
		t\le w}(\gamma')^i_{u,t}\Bigr)
	=
	c_{u,w}\circ\rho(g^{-1})\circ\phi.
	\]
	Since $\rho(g^{-1})$ and $\phi$ are both filtration-preserving,
	$c_{u,w}\in{\cal F}^M_{u,w}$ gives
	$c'_{u,w}=c_{u,w}\circ\rho(g^{-1})\circ\phi\in{\cal F}^M_{u,w}$.
	
	\emph{Budget equation.} Since $\rho(g^{-1})$ and $\phi$ are each
	measure- and filtration-preserving,
	${\mathbb E}_{{\mathbb Q}^*}(c'_{u,w}\mid{\cal F}^*_u)
	={\mathbb E}_{{\mathbb Q}^*}(c_{u,w}\mid{\cal F}^*_u)\circ\rho(g^{-1})\circ\phi\ge0$.
	
	Together with $\gamma'\in{\cal C}(N,s)$ (Definition~\ref{def:contractSpace},
	verified by the non-participation calculation above), this verifies the
	measurability and budget conditions of
	Definition~\ref{def:admissibleIdiosyncratic} for $\gamma'$.
	
	(ii) Since $g\in\mathrm{Stab}(0)$, $g(0)=0$ and hence also
	$g^{-1}(0)=0$. Fix a realisation $k\in{\cal I}$ of $i^N$ and write
	$y:=\Psi(\pi^{\perp,N}(\omega))$ for the argument fed to
	$X^{i^N}(\cdot)\circ\rho((0,i^N))$. By the defining relation of
	$(M,I)$-equivalence applied to $\gamma$,
	$(\gamma')^k=\gamma^{g^{-1}(k)}\circ\rho(g^{-1})\circ\phi$, so on
	$\{i^N=k\}$ the value function's integrand for $\gamma'$ is
	\[
	(\gamma')^k\bigl(\rho((0,k))(y)\bigr)
	=
	\gamma^{g^{-1}(k)}\Bigl(\rho(g^{-1})\bigl(\phi(\rho((0,k))(y))\bigr)\Bigr).
	\]
	
	\emph{Step 1: removing $\phi$.} Conjugating $\phi$ past $\rho((0,k))$,
	$\phi\circ\rho((0,k))=\rho((0,k))\circ\phi_k$ where
	$\phi_k:=\rho((0,k))\circ\phi\circ\rho((0,k))$ again fixes $\omega^M$
	and every $\omega^{I,l}$, since $\phi$ does (being an $(M,I)$-map) and
	this property is preserved under conjugation by any $\rho(h)$. Using
	the group identity
	$g^{-1}\cdot(0,k)=(0,g^{-1}(k))\cdot g^{-1}$ (valid because
	$g^{-1}(0)=0$), we get
	$\rho(g^{-1})\circ\rho((0,k))=\rho((0,g^{-1}(k)))\circ\rho(g^{-1})$,
	so writing $m:=g^{-1}(k)$,
	\[
	\gamma^{g^{-1}(k)}\circ\rho(g^{-1})\circ\phi\circ\rho((0,k))
	=
	\gamma^{m}\circ\rho((0,m))\circ\rho(g^{-1})\circ\phi_k.
	\]
	By Lemma~\ref{lemma:invarianceFibrewise}, applied fibrewise over
	$(\omega^M,\omega^{I,l})_{l\in{\cal I}}$ and over the realisation $k$
	of $i^N$, the dependence at each time being through current
	information alone since $\phi_k$ is filtration-preserving,
	${\cal J}_\tau$ is unaffected by precomposing with $\phi_k$.
	Hence
	\[
	{\cal V}^{\gamma'}(j)
	=
	{\cal J}_\tau\bigl(\gamma^{m}\circ\rho((0,m))\circ\rho(g^{-1})\circ\Psi\circ\pi^{\perp,N}\mid
	j\bigr),
	\qquad m=g^{-1}(i^N).
	\]
	
	\emph{Step 2: relabelling $i^N$.} Let
	$\Psi'':=\rho(g^{-1})\circ\Psi$. Since $\rho(g^{-1})$ does not touch
	$\omega^M$, and
	$\omega^{I+,0}\circ\rho(g^{-1})=\omega^{I+,g(0)}=\omega^{I+,0}$
	because $g(0)=0$, $\Psi''$ satisfies the same defining conditions as
	$\Psi$ and is therefore itself an admissible choice of encoding in
	Lemma~\ref{lem:valueFunctionWellDefined}. Let $\Theta$ be the
	automorphism of $\underline\Omega^V$ that fixes
	$\underline\Omega^M\times\underline\Omega^{I+}\times\underline\Omega^\can_{\cal
		T}$ pointwise and acts on $\underline\Omega^N$ alone by a
	measure-preserving bijection $\theta$ with $i^N\circ\theta=g^{-1}\circ
	i^N$ (which exists since $i^N$ is uniform on ${\cal I}$ and $g^{-1}$
	is a bijection of ${\cal I}$). Since $\pi^{\perp,N}\circ\Theta=\pi^{\perp,N}$
	and $i^N(\Theta(\omega))=g^{-1}(i^N(\omega))=m$, the right-hand side
	above is exactly
	${\cal J}_\tau\bigl((X^{i^N}(\gamma)\circ\rho((0,i^N))\circ\Psi''\circ\pi^{\perp,N})\circ\Theta\mid
	j\bigr)$. Since $\pi^{\perp,N}\circ\Theta=\pi^{\perp,N}$,
	precomposition with $\Theta$ leaves the argument unchanged, and
	therefore
	\[
	{\cal V}^{\gamma'}(j)
	=
	{\cal J}_\tau\bigl(X^{i^N}(\gamma)\circ\rho((0,i^N))\circ\Psi''\circ\pi^{\perp,N}\mid
	j\bigr).
	\]
	
	\emph{Step 3: conclusion.} Since $\Psi''$ is an admissible choice of
	encoding, Lemma~\ref{lem:valueFunctionWellDefined} identifies the
	right-hand side with the same quantity computed using $\Psi$ in place
	of $\Psi''$, namely ${\cal V}^\gamma(j)$. Hence
	${\cal V}^{\gamma'}(j)={\cal V}^\gamma(j)$ for every $j\in\Omega^J$.
\end{proof}

\begin{lemma}
\label{lemma:gaugeInvariant}
Let $i\in\{0,\ldots,N-1\}$.
Given $h \in S^{N,i}$ and $\omega^{(4)}\in\Omega^{(4)}$ we have
\[
(\bar X^*)^i
(
\rho^{0,1,1,1}(h)\omega^{(4)}
)
=
(\bar X^*)^i(\omega^{(4)}).
\]
\end{lemma}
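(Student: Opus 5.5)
The plan is to unwind the definition of $(\bar X^*)^i$ as a composite of maps and to push the action of $h\in S^{N,i}$ through the composite one map at a time. Concretely, write
\[
F(k,\omega^{(4)}):=\rho^{0,1,1,1}\circ(\id\times(\bar X^N)^i)\circ(\id\times\Xi^{N,i})\circ(\phi^{-1,0})^{-1}\circ(\id\times\Xi^{N,i})^{-1}(k,\omega^{(4)}),
\]
so that $(\bar X^*)^i(\omega^{(4)})=F(\hat g^{N,i}(\omega^{(4)}),\omega^{(4)})$. It then suffices to prove two things. First, the random element $\hat g^{N,i}$ is unchanged when $\omega^{(4)}$ is replaced by $\rho^{0,1,1,1}(h)\omega^{(4)}$. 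Second, $F(k,\rho^{0,1,1,1}(h)\omega^{(4)})=F(k,\omega^{(4)})$ for every fixed $k\in S^{N,i}$, up to null sets.

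\emph{Step 1: invariance of $\hat g^{N,i}$.} The first claim is immediate from what the proof of Theorem~\ref{theorem:gaugeSelection} already recorded: $\hat g^N=\bar g^N\circ\pi^{{\cal E}^N,1}$ is invariant under the whole of $S_N$ acting by $\rho^{0,1,1,1}$, because $\bar g^N$ factors through $\theta^N$, which is ${\cal E}^N$-measurable. Since ${\cal S}^i$ is a deterministic map $S_N\to S^{N,i}$, the composite $\hat g^{N,i}={\cal S}^i(\hat g^N)$ inherits this invariance.

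\emph{Step 2: transporting the action.} By Lemma~\ref{lemma:sectionExists}, in the form used to build $\Xi^{N,i}$, this map intertwines the left action $\rho^{1,0}$ on $\underline S^{N,i}\times\underline\Delta^{N,i}$ with $\rho^{0,1,1,1}$ on $\Omega^{(4)}$. Hence $(\id\times\Xi^{N,i})^{-1}$ converts the action of $h$ on $\omega^{(4)}$ into $\rho^{0,1,0}(h)$ on the triple $(k,h',\delta)$. Lemma~\ref{lemma:isomorphism} then says that $(\phi^{-1,0})^{-1}$ carries $\rho^{-1,1,0}$ to $\rho^{0,1,0}$. I will use it in the form needed here, namely to rewrite the transported action $\rho^{0,1,0}(h)$ on the image of $(\id\times\Xi^{N,i})^{-1}$ as an action of the shape $\rho^{-1,1,0}$ on the output of $(\phi^{-1,0})^{-1}$, which moves $h$ simultaneously in both group slots. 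Applying $\id\times\Xi^{N,i}$ again returns the $\Omega^{(4)}$-component to $\rho^{0,1,1,1}(h)$ of the original point, and $(\bar X^N)^i$ absorbs this because $(\bar X^N)^i$ is $S^{N,i}$-invariant under $\rho^{0,1,1,1}$. What is left is the modification of the first, group-valued slot. This slot is consumed by the outer $\rho^{0,1,1,1}$, and the $-1$ (right-multiplication) character of the action there is what should make the residual factor of $h$ cancel.

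\emph{Main obstacle.} I expect the main difficulty to be exactly this last bookkeeping step. I have to check carefully on which side $h$ and $h^{-1}$ enter each slot, and confirm that the outer application of $\rho^{0,1,1,1}$ really evaluates in a way that cancels the right multiplication rather than compounding it. I would first verify the identity by an explicit element-chase on a representative point $(k,h',\delta)$, in the style of the proof of Lemma~\ref{lemma:isomorphism}, using $\phi^{-1,0}(a,b)=(ab^{-1},b)$ and its inverse $(a,b)\mapsto(ab,b)$. Only after that would I pass to the mod~$0$ statement, noting that every map involved is a mod~$0$ isomorphism, so the identity holds for ${\mathbb Q}^{(4)}$-almost every $\omega^{(4)}$. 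If the chase exposes an uncancelled factor, the fallback is to reread the outer $\rho^{0,1,1,1}$ as the action of the group slot on the value through the contravariant action $gX=X\circ g^{-1}$, where the inverse supplies the cancellation.
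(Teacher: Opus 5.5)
Your Step~1, and your use of the intertwining property of $\Xi^{N,i}$ (equivalently, writing $\omega:=\omega^{(4)}$, $\zeta^{N,i}(\rho^{0,1,1,1}(h)\omega)=h\,\zeta^{N,i}(\omega)$ and $\pi^{N,i}(\rho^{0,1,1,1}(h)\omega)=\pi^{N,i}(\omega)$), match the paper. The step you flagged as the main obstacle, however, fails as you describe it.

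Lemma~\ref{lemma:isomorphism} says $\phi^{-1,0}\circ\rho^{0,1,0}=\rho^{-1,1,0}\circ(\id_G\times\phi^{-1,0})$. So it is $\phi^{-1,0}$, not $(\phi^{-1,0})^{-1}$, that converts an input carrying $\rho^{0,1,0}(h)$ into an output carrying $\rho^{-1,1,0}(h)$. The definition of $(\bar X^*)^i$ uses $(\phi^{-1,0})^{-1}(a,b,\delta)=(ab,b,\delta)$. This map transports $\rho^{0,1,0}(h)$ to the twisted action $(a,b,\delta)\mapsto(ab^{-1}hb,hb,\delta)$, which is not of the shape $\rho^{-1,1,0}$.

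Concretely, set $k:=\hat g^{N,i}(\omega)$, $z:=\zeta^{N,i}(\omega)$ and $d:=\pi^{N,i}(\omega)$. The chase you propose gives $(k,z,d)\mapsto(kz,z,d)\mapsto(kz,\omega)$ and $(k,hz,d)\mapsto(khz,hz,d)\mapsto(khz,\rho^{0,1,1,1}(h)\omega)$. If you fold the group slot into the point, you must compare $(\bar X^N)^i$ at $\rho^{0,1,1,1}(kz)\omega$ and at $\rho^{0,1,1,1}(khzh)\omega$. Under your contravariant fallback the relevant group elements are $(kz)^{-1}$ and $z^{-1}h^{-1}k^{-1}h$. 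In neither reading does the factor of $h$ cancel.

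The paper closes the argument differently. Both $kz$ and $khzh$ lie in $S^{N,i}$, so the $S^{N,i}$-invariance of $(\bar X^N)^i$ under $\rho^{0,1,1,1}$ sends both values to $(\bar X^N)^i(\omega)$: the residual is absorbed, not cancelled. You already invoke that invariance for the $\Omega^{(4)}$-slot. The repair is to apply it once to the whole folded element, rather than splitting $h$ into an absorbed part and a cancelled part; that is exactly the paper's proof.

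Be aware of what this entails. The argument shows $(\bar X^*)^i=(\bar X^N)^i$, so the lemma reduces to the invariance of $(\bar X^N)^i$ itself. The proof of Theorem~\ref{theorem:gaugeSelection} asserts that invariance without derivation, and it fails pointwise in general. For example, let every $X^k$ be the same function $f$ of agent $0$'s coordinate $\omega^{I+,0}$, and write $\hat g:=\hat g^N(\omega)$. The defining formula gives $f(\omega^{3,\hat g(0)})$ at $\omega$ but $f(\omega^{3,h^{-1}\hat g(0)})$ at $\rho^{0,1,1,1}(h)\omega$. These differ when $h$ moves $\hat g(0)$ and $f(\omega^{I+,0})$ is non-degenerate.

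The cancellation you were aiming for does occur if $\phi^{-1,0}$ is used in place of its inverse. The folded elements then become $kz^{-1}$ and $kz^{-1}h^{-1}$. Both folded points equal $\rho^{0,1,1,1}(k)\,\xi^{N,i}(\pi^{N,i}(\omega))$, which is $S^{N,i}$-invariant with no hypothesis on $(\bar X^N)^i$. But that is not the map in the definition as written.
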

\begin{proof}
Fix $\omega:=\omega^{(4)}$ and write $g:=\hat g^{N,i}(\omega)$,
$z:=\zeta^{N,i}(\omega)$, both elements of $S^{N,i}$.

\emph{Invariance of $\hat g^{N,i}$.} Since $\hat g^N$ is invariant
under the full $S_N$-action $\rho^{0,1,1,1}$, it is in particular
invariant under the subgroup $S^{N,i}$; since $\hat
g^{N,i}={\cal S}^i\circ\hat g^N$ is $\hat g^N$ composed with a fixed
function, for every $h\in S^{N,i}$,
\[
\hat g^{N,i}(h\omega)=\hat g^{N,i}(\omega)=g.
\]

\emph{Transformation of $\zeta^{N,i}$.} Let
$y_0:=\xi^{N,i}(\pi^{N,i}(\omega))$, so that $zy_0=\omega$ by
definition of $\zeta^{N,i}$. Since $\pi^{N,i}$ is the quotient map for
the $S^{N,i}$-action, $\pi^{N,i}(h\omega)=\pi^{N,i}(\omega)$ for every
$h\in S^{N,i}$, so $\xi^{N,i}(\pi^{N,i}(h\omega))=y_0$ as well. Then
\[
h\omega=h(zy_0)=(hz)y_0,
\]
while by definition $\zeta^{N,i}(h\omega)y_0=h\omega$. Since $\Xi^{N,i}$
is a bijection, $k\mapsto ky_0$ is injective on $S^{N,i}$, so
cancelling gives
\[
\zeta^{N,i}(h\omega)=hz.
\]

\emph{The computation.} Write $d:=\pi^{N,i}(\omega)$. Applying
$(\id\times\Xi^{N,i})^{-1}$ to $(g,\omega)$ and to $(g,h\omega)$,
\[
(\id\times\Xi^{N,i})^{-1}(g,\omega)=(g,z,d),
\qquad
(\id\times\Xi^{N,i})^{-1}(g,h\omega)=(g,hz,d),
\]
using $\pi^{N,i}(h\omega)=d$ and $\zeta^{N,i}(h\omega)=hz$. Recalling
$\phi^{-1,0}(g_1,g_2,x)=(g_1g_2^{-1},g_2,x)$ from the proof of
Lemma~\ref{lemma:isomorphism}, so that
$(\phi^{-1,0})^{-1}(g_1,g_2,x)=(g_1g_2,g_2,x)$, applying
$(\phi^{-1,0})^{-1}$ gives
\[
(gz,z,d),
\qquad
(ghz,hz,d)
\]
respectively. Applying $(\id\times\Xi^{N,i})$, and using $zy_0=\omega$,
\begin{gather*}
(gz,z,d)\mapsto(gz,zy_0)=(gz,\omega),
\\
(ghz,hz,d)\mapsto(ghz,(hz)y_0)=(ghz,h(zy_0))=(ghz,h\omega).
\end{gather*}
Folding the group element back into the point before evaluating
$(\bar X^N)^i$ (the only reading of the final composition that
produces a real number) gives
\[
(\bar X^*)^i(\omega)=(\bar X^N)^i\bigl((gz)\omega\bigr),
\qquad
(\bar X^*)^i(h\omega)=(\bar X^N)^i\bigl((ghz)(h\omega)\bigr).
\]
Simplifying the second argument,
\[
(ghz)(h\omega)=(ghzh)\omega.
\]

\emph{Conclusion.} Both $gz$ and $ghzh$ lie in $S^{N,i}$, since
$g,h,z\in S^{N,i}$. By the $S^{N,i}$-invariance of $(\bar X^N)^i$
recorded above, applied directly with these two group elements,
\[
(\bar X^N)^i\bigl((gz)\omega\bigr)=(\bar X^N)^i(\omega)=(\bar
X^N)^i\bigl((ghzh)\omega\bigr).
\]
Therefore $(\bar X^*)^i(h\omega)=(\bar X^*)^i(\omega)$, as claimed.
\end{proof}

\begin{proof}[Reduction to one agent's strategy, Corollary~\ref{cor:exchangeableFund}]
	Since $b\equiv0$, for any $(g,\phi)$ the $(M,I)$-equivalent image of $b$
	is also identically $0$, since precomposing the zero function with any
	map gives the zero function. So in each part it suffices to transport
	$\gamma$ and the participation data.
	
	\emph{Part (a).} Let $\iota:\{0,\ldots,N-2\}\to\{1,\ldots,N-1\}$ be
	given by $\iota(m):=m+1$. Identifying $S_{N-1}$, as it acts on
	$\{0,\ldots,N-2\}$ in Theorem~\ref{theorem:gaugeSelection}, with the
	subgroup of $S_N$ that fixes $0$ and permutes $\{1,\ldots,N-1\}$ via
	$\iota$, identifies $S_{N-1}$ with $\mathrm{Stab}(0)$. By
	Theorem~\ref{theorem:gaugeSelection}, applied with $N$ there replaced
	by $N-1$ and its index set identified with $\{1,\ldots,N-1\}$ via
	$\iota$, there is a pair $(g,\phi)$, with $g\in
	S_{N-1}\cong\mathrm{Stab}(0)$, such that applying the identical
	formulas to any $Y\in\prod_{i=0}^{N-1}L^0({\mathbb R},({\cal
		F}^*_t)_{t\in{\cal T}})$ produces $Y^*$ measurable, for each
	$i=1,\ldots,N-1$, with respect to $\iota^{\Omega^*}({\cal F}^M_t\times{\cal F}^{\can}_t\times{\cal E}^{N-1,i}({\cal F}^{I+}_t))_{t\in{\cal T}}$, and
	exchangeable among $i=1,\ldots,N-1$
	(Remark~\ref{rem:gaugeSelectionUniform}). Applying this same
	$(g,\phi)$ to $\gamma$ (each $\gamma_u$) and to $\tau,\tau_d,\eta$ (the extension of $(M,I)$-equivalence to a finite collection,
	Definition~\ref{def:MIequivalenceGeneral}) gives
	$\gamma^*,\tau^*,\tau_d^*,\eta^*$ satisfying (ii) and (iii). Since
	$g\in\mathrm{Stab}(0)$, Lemma~\ref{lem:admissibilityValueInvariance}
	gives that $\gamma^*$ is admissible for $(0,x,0)$ with participation
	data $(\tau^{*i},\tau_d^{*i},\eta^{*i})_{i\in{\cal I}}$, and that
	${\cal V}^{\gamma^*}={\cal V}^\gamma$, proving (i).
	
	\emph{Part (b).} By Theorem~\ref{theorem:gaugeSelection}, applied
	directly with the original index set $\{0,\ldots,N-1\}$, there is a
	pair $(g^\circ,\phi^\circ)$ such that applying the identical formulas
	to any $Y\in\prod_{i=0}^{N-1}L^0({\mathbb R},({\cal F}^*_t)_{t\in{\cal
			T}})$ produces $Y^\circ$ measurable, for \emph{every}
	$i=0,\ldots,N-1$, with respect to $\iota^{\Omega^*}({\cal F}^M_t\times{\cal F}^{\can}_t\times{\cal E}^{N,i}({\cal F}^{I+}_t))_{t\in{\cal T}}$
	(Remark~\ref{rem:gaugeSelectionUniform}). Applying this same
	$(g^\circ,\phi^\circ)$ to $\gamma$ and to $\tau,\tau_d,\eta$ gives
	$\gamma^\circ,\tau^\circ,\tau_d^\circ,\eta^\circ$ with this
	measurability property for every $i$. By the argument of the proof of
	Lemma~\ref{lemma:standardFiltered}, applied to the actions of the
	stabilisers $S^{N,i}$, there are measure-preserving realization
	maps identifying the coarse filtrations
	$\iota^{\Omega^*}({\cal F}^M_t\times{\cal F}^{\can}_t\times{\cal E}^{N,i}({\cal F}^{I+}_t))_t$,
	for the various $i$, with the filtration of a single standard
	filtered probability space; transporting through these
	identifications, together with a witnessing stable-equivalence
	isomorphism (Lemma~\ref{lemma:stableEquivalence}),
	Definition~\ref{def:preferenceIsomorphism} applies
	to $\gamma^{\circ i}$ and $\gamma^{\circ j}$ (and likewise to
	$\tau^\circ,\tau_d^\circ,\eta^\circ$), giving pairwise preference
	isomorphism.
\end{proof}

\subsection{Acceptable strategies - proofs}
\label{appendix:acceptableStrategiesFinite}

\begin{proof}[Existence of acceptable strategies, Theorem~\ref{thm:existenceOptimalContinuationsFinite}]
	{\em Attainment.} Let $\gamma^n\in{\cal H}^{N,x}_s(\beta)$ be a
	maximising sequence. The constraint set is convex, every defining
	condition being an almost-sure linear equality or inequality in
	$\gamma$, and the funding conditions, telescoped, bound each of
	the finitely many legs in $L^1$. By Koml\'os' theorem applied
	coordinatewise and a diagonal argument, the Ces\`aro averages of a
	subsequence converge almost surely, coordinatewise, to some
	$\gamma^\infty$; by concavity of ${\cal J}_s(\,\cdot\mid x)$ the
	averaged strategies are again maximising. The consistency,
	non-participation, myopic, symmetric and measurability conditions
	pass to almost-sure limits, being almost-sure linear constraints
	on complete $\sigma$-algebras. The funding conditions pass to the
	limit in their telescoped, received-to-date form (charging the
	consumption taken to date and crediting only the injections and
	contributions received to date) in which every term carries a
	single sign, so that the conditional Fatou lemma applies.

	We now rebook the carried legs. Write
	$X^i_t:=\gamma^{\infty,i}_{t,t}$ for the limit consumption and
	define the fund's pot process recursively by $P_s:=0$ and
	\[
	P_{u'+1}
	:=
	{\mathbb E}_{{\mathbb Q}^*}\Bigl(
	P_{u'}+\beta_{u'}+\sum_{i\in{\cal I}}\bigl(\eta^i_{u'}-X^i_{u'}\bigr)
	\,\Big|\,
	{\cal F}^M_{u'+1,T}\Bigr),
	\qquad
	u'\in{\cal T}_{\ge s}.
	\]
	Each $P_{u'+1}$ is measurable with respect to the market
	information: the aggregates
	$\sum_i(\eta^i_{u'}-X^i_{u'})$ differ from the marketable net
	positions of $\gamma^\infty$ by aggregates of its legs, which are
	themselves differences of marketable positions, so the conditional
	expectation acts on a marketable quantity. The telescoped funding
	conditions passed to the limit state precisely that
	$P_{u'}\ge0$ almost surely for every $u'$. Define
	$\tilde\gamma^\infty$ to have the same diagonal legs
	$\tilde\gamma^{\infty,i}_{u',u'}:=X^i_{u'}$, no legs beyond one
	period, and carried legs
	\[
	\tilde\gamma^{\infty,i}_{u',u'+1}
	:=
	\frac{\indicator_{\{\tau^i\le u'+1<\tau^i_d\}}}
	{\#\{j:\tau^j\le u'+1<\tau^j_d\}}
	\;P_{u'+1},
	\]
	with the convention that the leg is zero when no agent
	participates at $u'+1$. These legs are nonnegative, myopic and
	symmetric (participation is a property of the idiosyncratic path,
	so agents with equal paths receive equal legs), they vanish
	outside the participation windows of both writer and payee, and
	they satisfy the measurability
	condition~\eqref{eq:MImeasurabilityHomogeneous}, being built from
	market-measurable pots and idiosyncratic participation
	indicators. The truncated net positions $\tilde c_{u',w}$ are
	sums of the $P$-terms and the marketable aggregates above, so the
	marketability condition holds.

	For the funding conditions of $\tilde\gamma^\infty$: at
	$u'\in{\cal T}_{>s}$, on the event that some agent participates at
	$u'$,
	\[
	{\mathbb E}_{{\mathbb Q}^*}\bigl(\tilde c_{u',T}\mid{\cal F}^*_{u'}\bigr)
	=
	P_{u'}+\beta_{u'}+\sum_i\eta^i_{u'}-\sum_iX^i_{u'}
	-{\mathbb E}\bigl(P_{u'+1}\mid{\cal F}^*_{u'}\bigr)
	=
	0,
	\]
	since every term in the definition of $P_{u'+1}$ is
	${\cal F}^*_{u'}$-measurable and the tower property applies; on
	the complementary event the outgoing leg vanishes and the
	condition reads $P_{u'}+\beta_{u'}+\sum_i\eta^i_{u'}\ge0$, which
	holds since each term is nonnegative there (the limit consumption
	vanishes outside participation). The issuance condition at $s$ is
	the same computation with $P_s=0$. Hence
	$\tilde\gamma^\infty\in{\cal H}^{N,x}_s(\beta)$, with the same
	consumption process as $\gamma^\infty$. Finally, by
	Lemma~\ref{lem:perCapitaRedistributive} the streams $X^0(\gamma^n)$ lie in a
	common redistributively feasible set of $\Prob^+$, so the Fatou property
	(condition (i) of compactness) applied along the almost surely
	convergent averages gives
	${\cal J}_s(X^0(\tilde\gamma^\infty)\mid x)\ge V^N_s(x,\beta)$;
	the reverse inequality holds since $\tilde\gamma^\infty$ is
	feasible.

	{\em Non-emptiness of ${\cal G}^{N,x}_s(0)$.} Construct $\gamma$
	by backward induction on $u=T,T-1,\ldots,s$: at each stage,
	partition ${\cal I}$ into its current state-pools and, for each
	pool, apply the first claim to that pool's own size and state,
	choosing continuations that are statewise optimal at every later
	date, exactly as in Step 1 of the proof of
	Proposition~\ref{prop:acceptableFromOptimisers}; condition (iii)
	of compactness supplies the measurable selections. The resulting
	strategy gives every pool exactly its own optimal value at every
	future date and state. Consequently, for any coalition $A$ and any
	$(u,y)$, taking $B$ to be the pool containing $A$ at $(u,y)$
	witnesses the exemption in the definition of unacceptability
	(at states where the pool value is $-\infty$, every strategy is
	trivially acceptable), so $\gamma$ is acceptable for $A$ at
	$(u,y)$; hence $\gamma\in{\cal G}^{N,x}_s(0)$.
\end{proof}

Throughout, every agent $i$ in every fund has preferences and
participation times given by (their own copy of) a single fixed problem
$\Prob=({\cal J},\tau,\tau_d,\eta)$, as in
Definition~\ref{def:compactProblem}: this is the standing convention already
used to define ${\cal J}^i,\tau^i,\tau_d^i$ throughout the finite-fund
sections.

\begin{lemma}[Exchangeable limits]
	\label{lem:exchangeableLimits}
	Let $(N_a)_{a\ge1}$ be a strictly increasing sequence in ${\mathbb Z}_{\ge2}$
	with $N_a\to\infty$. Let $E$ and $F$ be Polish spaces, let $(W^a)_a$ be
	$F$-valued random variables with tight laws, and for each $a$ let
	$(Z^{a,i})_{i=1}^{N_a-1}$ be an $E$-valued array, exchangeable jointly
	with $W^a$ (that is, the law of $(W^a,(Z^{a,i})_i)$ is invariant under
	permutations of the $i$-indices), such that the laws of
	$(W^a,Z^{a,1})$ are tight. Then there exist a subsequence $(a_k)_k$
	and an $F$-valued $W^\infty$ together with an infinite sequence
	$(Z^{\infty,i})_{i\ge1}$, exchangeable jointly with $W^\infty$ and
	defined on some standard probability space, such that for every fixed
	$K$,
	\[
	\bigl(W^{a_k},Z^{a_k,1},\ldots,Z^{a_k,K}\bigr)
	\;\longrightarrow\;
	\bigl(W^{\infty},Z^{\infty,1},\ldots,Z^{\infty,K}\bigr)
	\qquad
	\text{in law as }k\to\infty.
	\]
\end{lemma}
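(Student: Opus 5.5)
The plan is a standard tightness-plus-diagonal argument, with exchangeability passed to the limit through finite-dimensional marginals and Kolmogorov extension.

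\emph{Tightness of finite blocks.} Fix $K$. For $a$ with $N_a-1\ge K$, consider the law $\mu^a_K$ of $(W^a,Z^{a,1},\ldots,Z^{a,K})$ on $F\times E^K$. By exchangeability, each pair $(W^a,Z^{a,i})$ has the same law as $(W^a,Z^{a,1})$. Given $\epsilon>0$, tightness supplies compacts $C_F\subseteq F$ and $C_E\subseteq E$ with ${\mathbb P}(W^a\notin C_F)<\epsilon$ and ${\mathbb P}(Z^{a,1}\notin C_E)<\epsilon/K$ for every $a$; these can be extracted from the tight laws of $(W^a,Z^{a,1})$ by projecting. A union bound then gives
\[
\mu^a_K\bigl((C_F\times C_E^K)^c\bigr)<2\epsilon .
\]
So $(\mu^a_K)_a$ is tight on the Polish space $F\times E^K$, and by Prokhorov's theorem it has a weakly convergent subsequence.

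\emph{Diagonal extraction and consistency.} Extract nested subsequences for $K=1,2,\ldots$ and take the diagonal subsequence $(a_k)$. Along it, $\mu^{a_k}_K\to\mu_K$ weakly for every $K$. The limits are consistent: projecting $F\times E^{K+1}\to F\times E^K$ is continuous, so the pushforward of $\mu_{K+1}$ is the weak limit of the pushforwards of $\mu^{a_k}_{K+1}$, namely $\mu^{a_k}_K$, and hence equals $\mu_K$. Since the spaces are Polish, the Kolmogorov extension theorem gives a probability measure $\mu$ on $F\times E^{\mathbb N}$ with these marginals. This is a standard Borel space, so taking the coordinate maps as $W^\infty$ and $(Z^{\infty,i})_{i\ge1}$ realises everything on a standard probability space. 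The required convergence in law for each fixed $K$ holds by construction.

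\emph{Exchangeability of the limit.} It suffices to check invariance under finite permutations, and each of these acts on only finitely many coordinates. Fix $\sigma\in S_K$ acting on the $Z$-indices $1,\ldots,K$. Its action on $F\times E^K$ is a homeomorphism, and by hypothesis $\mu^{a_k}_K$ is invariant under it whenever $N_{a_k}-1\ge K$. Invariance survives weak limits, so $\mu_K$ is $\sigma$-invariant. Every finite-dimensional marginal of $\mu$ is therefore invariant under the permutations it sees, and since cylinder sets determine $\mu$, $\mu$ is invariant under $S_\infty$. This is exactly joint exchangeability with $W^\infty$.

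\emph{Main obstacle.} The only delicate point is the first step: tightness of the $(K+1)$-tuples follows from tightness of the single pair $(W^a,Z^{a,1})$ only because exchangeability makes every $Z^{a,i}$ share the marginal of $Z^{a,1}$. The construction also needs $a$ large enough that $N_a-1\ge K$, which holds eventually since $N_a\to\infty$. The remaining steps are routine.
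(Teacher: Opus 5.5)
Your proposal is correct and follows essentially the same route as the paper: tightness of the $K$-blocks from tightness of $(W^a,Z^{a,1})$ via exchangeability, then Prokhorov with a diagonal extraction, then consistency and permutation invariance of the limit laws passed through weak limits, then Kolmogorov extension realised on a standard space. You fill in details the paper leaves implicit, namely the union bound and the requirement $N_a-1\ge K$, but the argument is the same.
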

\begin{proof}
	By exchangeability, tightness of the law of $(W^a,Z^{a,1})$ implies
	tightness of the laws of the $K$-blocks $(W^a,Z^{a,1},\ldots,Z^{a,K})$
	for every fixed $K$. Extract, by Prokhorov's theorem and a diagonal
	argument, a single subsequence along which every $K$-block converges
	in law. The limiting laws are consistent as $K$ varies and each is
	invariant under permutations of the $Z$-coordinates, so by the
	Kolmogorov extension theorem they are the finite-dimensional laws of
	a sequence $(W^\infty,(Z^{\infty,i})_{i\ge1})$ with the stated
	exchangeability, which may be realised on a standard space since $E$
	and $F$ are Polish.
\end{proof}

\begin{lemma}[Acceptability transfers under $(M,I)$-equivalence]
	\label{lemma:acceptabilityInvariance}
	Suppose $\gamma\in{\cal G}^{N,x}_s(b)$ with participation data
	$(\tau^i,\tau_d^i,\eta^i)_{i\in{\cal I}}$, and that
	$(\tau',\tau_d',\eta',b',\gamma')$ is $(M,I)$-equivalent to
	$(\tau,\tau_d,\eta,b,\gamma)$ via $(g,\phi)$. Then $\gamma'\in{\cal
		G}^{N,x}_s(b')$ with participation data
	$(\tau'^i,\tau_d'^i,\eta'^i)_{i\in{\cal I}}$.
\end{lemma}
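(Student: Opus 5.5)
The plan is a naturality argument in the same style as Lemma~\ref{lem:admissibilityValueInvariance}. Admissibility of $\gamma'$ for $(s,x,b')$ is already given by part~(i) of that lemma. What remains is to show that, for every $u\in{\cal T}_{\geq s}$, every coalition $A\subseteq{\cal I}$ and ${\mathbb Q}^*_x$-almost every $y\in\Omega^*_u$, acceptability of $\gamma$ for the relabelled coalition $g^{-1}A$ at the transported state $y^\flat:=(\rho(g^{-1})\circ\phi)_*(y)$ implies acceptability of $\gamma'$ for $A$ at $y$. Here $y^\flat$ is defined by the induced map on disintegrations of Lemma~\ref{lemma:disintegrationNatural}. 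Because $g$ is ${\cal F}^{\perp,M,I}_0$-measurable, I would first condition on $g$ and treat $g$ as a fixed permutation on each of its (finitely many) level sets. The map $\rho(g^{-1})\circ\phi$ is then a filtration-preserving mod~$0$ isomorphism, so it carries null sets of states to null sets, and the ``almost surely'' quantifier in the definition of ${\cal G}^{N,x}_s$ transfers.

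I would verify the three ingredients of the acceptability definition in turn. \emph{Equivalence classes:} since $\phi$ fixes every $\omega^{I,l}$ and $\rho(g^{-1})$ relabels them, $i\sim_y j$ if and only if $g^{-1}i\sim_{y^\flat}g^{-1}j$. Hence $A$ is a same-state coalition at $y$ exactly when $g^{-1}A$ is one at $y^\flat$, and the same holds for candidate exempting sets $B\supseteq A$. \emph{Outside options:} reindexing the sum in the definition of $\beta^{A,y,u}$, as in the computation of $c'_{u,w}$ in the proof of Lemma~\ref{lem:admissibilityValueInvariance}, gives $\beta'^{A,\cdot,u}=\beta^{g^{-1}A,\cdot,u}\circ\rho(g^{-1})\circ\phi$. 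Then $V^{|A|}_u(\rho(g'')_*(y),\beta'^{A,y,u})=V^{|A|}_u(\rho(g'')_*\rho(g)_*(y^\flat),\beta^{g^{-1}A,y^\flat,u})$ after an $(M,I)$-map, because the homogeneous problem of Definition~\ref{def:homogeneousFinite} is built from the same natural constructions. The quantifier ``for every $g''$ mapping $A$ to the first $|A|$ integers'' corresponds bijectively to ``for every $g''g$ mapping $g^{-1}A$ there''. I would take as given that $V^m_u$ is invariant under precomposition of the budget by an $(M,I)$-map fixing the pool; this follows from admissibility invariance applied inside the homogeneous problem together with Lemma~\ref{lemma:invarianceFibrewise}. \emph{Delivered utilities:} ${\cal J}^j_u(X^j(\gamma')\mid y)={\cal J}^{g^{-1}j}_u(X^{g^{-1}j}(\gamma)\mid y^\flat)$. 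This holds because $X^j(\gamma')=X^{g^{-1}j}(\gamma)\circ\rho(g^{-1})\circ\phi$, and the enriched value \eqref{eq:enrichedValue}--\eqref{eq:frozenEnrichedValue} is defined through transports $\Xi_y$ whose choice is immaterial by the Invariance axiom. Composing $\Xi_{y^\flat}$ with $\rho(g^{-1})\circ\phi$ gives a legitimate $\Xi_y$, since agent $j$'s market-and-own-path state $x''(j,y)$ equals $x''(g^{-1}j,y^\flat)$. The frozen canonical coordinates correspond under this composition as well, so no Lottery tolerance argument is needed.

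With these three identifications, the unacceptability conditions (a)(i)--(ii) for $(\gamma',A,y)$ translate term by term into those for $(\gamma,g^{-1}A,y^\flat)$. The exemption clause (b) translates in the same way, with $B\leftrightarrow g^{-1}B$. Thus $\gamma'$ is unacceptable for $A$ at $y$ exactly when $\gamma$ is unacceptable for $g^{-1}A$ at $y^\flat$, and the latter happens only on a null set. I expect the main obstacle to be the delivered-utility step. Unlike in Lemma~\ref{lem:admissibilityValueInvariance}(ii), the state $y$ here is a full state of $\Omega^*_u$ that resolves the canonical coordinates, so I have to check carefully that $\phi$, which may act nontrivially on those coordinates, transports the frozen value \eqref{eq:frozenEnrichedValue} correctly. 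My first attempt would be to realise $\phi$ fibrewise as an adapted randomised automorphism at time $u$ over agent $j$'s path, as in the proof of Lemma~\ref{lem:valueFunctionWellDefined}, and to conclude from the independence of \eqref{eq:frozenEnrichedValue} of the choice of transport.
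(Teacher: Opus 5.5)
Your plan follows the paper's own naturality argument. You get admissibility from Lemma~\ref{lem:admissibilityValueInvariance}(i). You then match the same-state classes, the delivered utilities (dropping $\phi$ via Lemma~\ref{lemma:invarianceFibrewise}, then relabelling), and the outside options, trading the quantifier over $g''$ for one over $g''\circ g$. You finish with the bijection $A\mapsto g^{-1}A$ on coalitions. Your bookkeeping is sharper than the paper's. The paper writes the type-class and delivered-utility identities at the same full state $y$ on both sides, which is not literally right in general: at a fixed $y$, agents $j$ and $g^{-1}j$ generally have different paths. Your transported state $y^\flat$ is exactly what makes those identities true.

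The step that does not go through as written is the last one. There you assert that the transport carries null sets of states to null sets, so that the almost-sure quantifier in the definition of ${\cal G}^{N,x}_s$ transfers.

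\begin{itemize}
\item Grant, as you do, that $\Theta:=\rho(g^{-1})\circ\phi$ is a filtration-preserving mod $0$ isomorphism. Lemma~\ref{lemma:disintegrationNatural} then shows that it carries the fibre measure ${\mathbb Q}^*_x$ to ${\mathbb Q}^*_{\Theta_s(x)}$, not to ${\mathbb Q}^*_x$.
\item Your identifications therefore give only this: $\gamma'$ is acceptable ${\mathbb Q}^*_x$-almost surely if and only if $\gamma$ is acceptable ${\mathbb Q}^*_{\Theta_s(x)}$-almost surely.
\item The hypothesis $\gamma\in{\cal G}^{N,x}_s(b)$ controls only the fibre over $x$.
\item Suppose $x$ is a full state of $\Omega^*_s$ that pins down canonical coordinates moved by $\phi$, or that distinguishes the types of agents that $g$ permutes. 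Then $\Theta_s(x)\neq x$, the two fibre measures are mutually singular, and $\gamma$ is unconstrained on the new fibre.
\end{itemize}

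So no argument can close this step in that generality. The same remark applies to Lemma~\ref{lem:admissibilityValueInvariance}(i), which you quote for admissibility.

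You should state the missing hypothesis explicitly: either $\Theta_s(x)=x$, or acceptability of $\gamma$ at almost every starting state, as in Theorem~\ref{thm:infiniteFundLimit}. The first condition does hold where the lemma is used, in Step~1 of Theorem~\ref{thm:generalInfiniteFundLimit}. There $y_0$ is a coarse homogeneous state, $\phi$ fixes $\omega^M$ and every $\omega^{I,l}$, and a common idiosyncratic path is unchanged by relabelling. Hence $\Theta$ maps the fibre over $y_0$ to itself, and ${\mathbb Q}^*_{y_0}$-null sets to ${\mathbb Q}^*_{y_0}$-null sets. The paper's same-$y$ formulation hides this point rather than resolving it.
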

\begin{proof}
	The proof is a naturality argument.
	By Lemma~\ref{lem:admissibilityValueInvariance}(i), $\gamma'$ is admissible
	for $(s,x,b')$ with participation data
	$(\tau'^i,\tau_d'^i,\eta'^i)_{i\in{\cal I}}$.
	
	Fix $u\in{\cal T}_{\ge s}$, $y\in\Omega^*_u$, $A'\subseteq{\cal I}$, and set
	$A:=g^{-1}(A')$. Since $\phi$ is an $(M,I)$-map (fixing $\omega^M$ and every
	$\omega^{I,l}$) and types are carried unchanged by $(M,I)$-equivalence,
	$i\sim_y j$ for all $i,j\in A'$ if and only if $i\sim_y j$ for all $i,j\in A$.
	
	For $j\in A'$, $X^j(\gamma')=X^{g^{-1}(j)}(\gamma)\circ\rho(g^{-1})\circ\phi$;
	by the same fibrewise Invariance argument as in the proof of
	Lemma~\ref{lem:admissibilityValueInvariance}(ii) (dropping $\phi$ via
	Lemma~\ref{lemma:invarianceFibrewise}, then relabelling via $\rho(g^{-1})$),
	\[
	{\cal J}^j_u(X^j(\gamma')\mid y)={\cal J}^{g^{-1}(j)}_u(X^{g^{-1}(j)}(\gamma)\mid y).
	\]
	Similarly, the pre-trade claims transform compatibly: for
	$u\in{\cal T}_{>s}$,
	$\beta^{A',y,u}_t=\sum_{j\in A'}(\gamma')^j_{u-1,t}
	=\beta^{A,y,u}_t\circ\rho(g^{-1})\circ\phi$, and at $u=s$ the same
	identity holds with the initial budgets $b'^j=b^{g^{-1}(j)}\circ\rho(g^{-1})\circ\phi$
	in place of the carried legs. Moreover,
	and for every $g''\in S_N$ mapping $A'$ to the first $|A'|$ integers,
	$g''\circ g$ maps $A$ to the first $|A|$ integers, with
	\[
	V^{|A'|}_u\bigl(\rho(g'')_*(y),\beta^{A',y,u}\bigr)
	=
	V^{|A|}_u\bigl(\rho(g''\circ g)_*(y),\beta^{A,y,u}\bigr),
	\]
	and quantifying over every valid $g''$ for $A'$ is the same as quantifying
	over every valid $g''\circ g$ for $A$.
	
	Combining these identifications, $\gamma'$ is unacceptable for $A'$ at
	$(u,y)$ (by either case (a) or case (b) of the definition) if and only if
	$\gamma$ is unacceptable for $A$ at $(u,y)$. Since $A\mapsto g^{-1}(A)$ is a
	bijection on subsets of ${\cal I}$, $\gamma$ being acceptable for every $A$ at
	every $(u,y)$ implies $\gamma'$ is acceptable for every $A'$ at every
	$(u,y)$, i.e.\ $\gamma'\in{\cal G}^{N,x}_s(b')$.
\end{proof}

\begin{lemma}[Acceptability persists on subpopulations]
	\label{lemma:restrictionAcceptable}
	Let $\gamma\in{\cal G}^{N,x}_s(b)$, and let $P\subseteq{\cal I}$. Then, for
	every $A\subseteq P$, every $u\in{\cal T}_{\ge s}$, and every $y\in\Omega^*_u$,
	$\gamma$ is ${\mathbb Q}^*_y$-almost surely acceptable for $A$ at $(u,y)$.
\end{lemma}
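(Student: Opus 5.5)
This follows directly from unpacking the definition of ${\cal G}^{N,x}_s(b)$. The key observation is that the acceptability condition is already quantified over \emph{all} coalitions $A\subseteq{\cal I}$, \emph{all} dates $u\in{\cal T}_{\ge s}$ and \emph{all} states $y\in\Omega^*_u$. Moreover, the notion ``acceptable for $A$ at $(u,y)$'' depends only on $\gamma$, $A$ and $(u,y)$. It does not depend on any ambient subpopulation $P$ in which $A$ happens to lie.

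The plan proceeds in three steps.

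\emph{Step 1.} Fix $P\subseteq{\cal I}$, a subset $A\subseteq P$, and $u$, $y$ as in the statement. Since $A\subseteq P\subseteq{\cal I}$, the coalition $A$ is one of the coalitions over which the defining condition of ${\cal G}^{N,x}_s(b)$ quantifies.

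\emph{Step 2.} The only subtlety is the measure. Membership in ${\cal G}^{N,x}_s(b)$ asserts acceptability ${\mathbb Q}^*_x$-almost surely over the states $(u,y)$. The lemma instead asserts a ${\mathbb Q}^*_y$-almost-sure statement, which concerns the full states refining $y$. To reconcile these, we disintegrate ${\mathbb Q}^*_x$ along the time-$u$ state, using the regular conditional probabilities of the standard space and Lemma~\ref{lemma:disintegrationNatural}. A set that is ${\mathbb Q}^*_x$-null is then ${\mathbb Q}^*_y$-null for ${\mathbb Q}^*_x$-almost every $y$. Consequently, the exceptional set of states on which acceptability for $A$ fails has ${\mathbb Q}^*_y$-measure zero for almost every $y$.

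\emph{Step 3.} We must also check that both clauses of the definition are intrinsic to $A$. In case (a), this covers the type-equality condition $i\sim_y j$ for $i,j\in A$, the delivered utilities ${\cal J}^j_u(X^j(\gamma)\mid y)$, and the outside option $V^{|A|}_u(\rho(g)_*(y),\beta^{A,y,u})$. In case (b), this covers the exempting sets $B$, which range over subsets of ${\cal I}$ containing $A$ and are not restricted to subsets of $P$. All of these quantities are built from $\gamma$, $A$ and $y$ alone. Restricting attention to $P$ therefore changes none of them, and acceptability for $A$ follows immediately.

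No step is a real obstacle. The only point requiring care is the measure-theoretic passage from ``${\mathbb Q}^*_x$-a.s.\ acceptable'' to ``${\mathbb Q}^*_y$-a.s.\ acceptable'' in Step~2, which is a routine application of disintegration.
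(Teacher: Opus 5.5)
Your Steps 1 and 3 are exactly the paper's argument: the defining property of ${\cal G}^{N,x}_s(b)$ already ranges over every coalition $A\subseteq{\cal I}$ and every $(u,y)$, and nothing in the test for $A$ refers to $P$ — not the relation $\sim_y$ on $A$, the claims $\beta^{A,y,u}$, the values $V^{|A|}_u$, or the exempting sets $B\supseteq A$. The paper does no disintegration; your Step 2 is harmless, but it yields the conclusion only for ${\mathbb Q}^*_x$-almost every $y$ rather than every $y$, and since ${\mathbb Q}^*_y$ is carried by the fibre over the time-$u$ state $y$ it merely restates the ${\mathbb Q}^*_x$-a.e.\ condition, so it reinterprets the loosely stated quantifier rather than adding anything beyond the paper's one-line instantiation.
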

\begin{proof}
	Immediate: $A\subseteq P\subseteq{\cal I}$, so this is exactly the instance
	of the defining property of ${\cal G}^{N,x}_s(b)$ (acceptability for every
	$A\subseteq{\cal I}$ at every $(u,y)$ with $u\ge s$) applied to this
	particular $A$. Since the quantities $i\sim_y j$, $\beta^{A,y,u}$, and
	$V^{|A|}_u$ appearing in the acceptability test depend only on $A$ itself
	(not on the ambient population ${\cal I}$ or $P$), this remains true whether
	$A$ is viewed as a coalition within ${\cal I}$ or within $P$.
\end{proof}

The following theorem is the engine of the large-fund limit. It is
stated at a general time $u$, rather than only at $u=0$, so that it
can be applied, via Lemma~\ref{lemma:restrictionAcceptable}, to each
cohort of agents at their entry date.

\begin{theorem}[General infinite-fund limit]
	\label{thm:generalInfiniteFundLimit}
	Suppose $\Prob$ is compact and $\Prob^+$ is compact independent of $\eta$
	and satisfies no satiation.
	Fix $u\in{\cal T}$ and a homogeneous coarse state $y_0$ at time
	$u$: a state of
	$\bigl(\omega^M,(\omega^{I,i})_{i}\bigr)$, conditioning on the
	market path and the common idiosyncratic path $\iota_0$ of the
	agents but not on the auxiliary coordinates $\omega^{\can,i}$,
	so that under ${\mathbb Q}^*_{y_0}$ the auxiliary coordinates
	remain independent and uniform. Write
	$y_0^+\in\Omega^{M\times I}_u$ for the corresponding
	state $(\omega^M_u,\iota_0)$ of a representative agent. Let
	$(N_a)_{a\ge1}$ be strictly increasing with $N_a\to\infty$, built from
	agents of a fixed underlying i.i.d.\ population as in
	Theorem~\ref{thm:tontineConvergence}, and for each $a$ let $\gamma^a\in{\cal
		G}^{N_a,y_0}_u(0)$. Suppose moreover that
	$V^+_u(y_0^+,0)>-\infty$ and that the single-agent problems are
	well posed: $V^1_t(y',0)>-\infty$ for every $t\ge u$ and
	${\mathbb Q}^*_{y_0}$-almost every state $y'$ reachable from
	$y_0$. Write
	$\gamma^{a*}$ for the symmetrisation of $\gamma^a$ provided by
	Corollary~\ref{cor:exchangeableFund}(a), under which the values
	$v^a:={\cal J}_u\bigl(X^i(\gamma^{a*})\mid y_0\bigr)$ are common to
	all $i\in\{1,\ldots,N_a-1\}$. Then:
	\begin{enumerate}[(i)]
		\item $v^a\to V^+_u(y_0^+,0)$ as
		$a\to\infty$, along the entire sequence;
		\item there is $\gamma^\infty$, acceptable for $\Prob^+$
		(Definition~\ref{def:acceptableNoIdiosyncratic}) at
		$(u,y_0^+,0,0)$, with
		${\cal J}_u(X(\gamma^\infty)\mid y_0^+)=V^+_u(y_0^+,0)$, whose
		funding conditions hold with equality
		(Theorem~\ref{thm:noMutuallyBeneficialContracts}(i)).
	\end{enumerate}
\end{theorem}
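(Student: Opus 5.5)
The plan follows the outline given after Theorem~\ref{thm:infiniteFundLimit}: prove a lower bound $\liminf_a v^a\ge V^+_u(y_0^+,0)$ and an upper bound $\limsup_a v^a\le V^+_u(y_0^+,0)$ along a subsequence, then upgrade to the whole sequence. By Lemma~\ref{lemma:acceptabilityInvariance} and Corollary~\ref{cor:exchangeableFund}(a), we may replace $\gamma^a$ by $\gamma^{a*}$ without losing acceptability or changing ${\cal V}$. Hence the contracts of agents $1,\ldots,N_a-1$ are exchangeable and measurable for the coarse filtrations ${\cal F}^M_t\times{\cal F}^\can_t\times{\cal E}^{N_a-1,i}({\cal F}^{I+}_t)$, and every agent in the block receives the common value $v^a$.

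\emph{Lower bound.} Fix $\epsilon>0$ and use Corollary~\ref{cor:tontineValueLowerBound} to choose $n$ with $V^m_t(\rho(g)_*(y'),\cdot)>V^+_t-\epsilon$ for all $m\ge n$, at each of the finitely many reachable idiosyncratic paths and dates $t\ge u$. Suppose at some $(t,y')$ at least $n$ agents in a common state receive less than $V^+_t-\epsilon$. Then these agents form a coalition $A$ whose outside option $V^{|A|}_t(\cdot,\beta^{A,y',t})$ is at least the budget-$0$ value, by monotonicity in the budget, and so strictly beats what they are delivered. The exemption case (b) cannot apply, because an exempting $B\supseteq A$ would by definition deliver to every member of $A$ the pool value $V^{|B|}_t\ge V^+_t-\epsilon$, contradicting the shortfall. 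This is a counting lemma bounding the number of short-changed agents by $n$ at each state. Exchangeability then converts it to $v^a\ge V^+_u-\epsilon-o(1)$, since the probability that agent $1$ is among at most $n$ exceptional agents out of $N_a-1$ tends to zero. The well-posedness hypothesis $V^1_t>-\infty$ supplies the uniform lower bound $m$ needed to apply Lemma~\ref{lemma:lotteryToleranceEnriched} to these exceptional branches.

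\emph{Upper bound and extraction.} Telescoping the funding conditions \eqref{eq:budgetStar} bounds the per-capita market value of all legs. Exchangeability therefore makes the law of agent $1$'s streams, jointly with the market coordinate $W^a=\omega^M$, tight. Lemma~\ref{lem:exchangeableLimits} gives an exchangeable limit $(W^\infty,(Z^{\infty,i}))$. Theorem~\ref{thm:canonicalDeFinetti} writes this limit as conditionally i.i.d.\ given the de~Finetti mixing variable, and Lemma~\ref{lemma:stableEquivalence} absorbs the mixing variable into the canonical factor of $\underline\Omega^{\infty+}$. The result is a stream $X^{\infty,1}$ on $\underline\Omega^{\infty+}$. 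Passing the per-capita funding inequalities to the limit (the law of large numbers over the i.i.d.\ fibres turns fund aggregates into $m_w$-conditional expectations) shows that $X^{\infty,1}$ lies in $\hat{\cal X}^{y_0^+}_u(\Prob^+;0,0)$.

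The upper bound is the main obstacle, since the limit object is only redistributively feasible, not self-financing. I would run the argument of Lemma~\ref{lemma:deliveredValueBound} on it. For each future state, the cost $\kappa_t$ of delivering ${\cal J}_t(X^{\infty,1}\mid\cdot)$ is at least the statewise pot, by the limiting form of the coalition bound from the previous step. Lemma~\ref{lemma:verification} propagates these costs backward. Conservation, Lemma~\ref{lemma:continuationWealthConservation}, caps the total. Together these force $\kappa_u\le 0$, so ${\cal J}_u(X^{\infty,1}\mid y_0^+)\le V^+_u(y_0^+,0)$. Since $\inf_a v^a>-\infty$ by the lower bound, the Fatou property (compactness (i)) gives $\limsup v^a\le{\cal J}_u(X^{\infty,1}\mid y_0^+)$. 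The delicate point is whether almost-sure convergence can be arranged; I would use a Skorokhod representation together with Lottery tolerance to handle the frozen canonical coordinates.

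\emph{Identification.} Combining the two bounds gives $v^{a_k}\to V^+_u$ along the subsequence. Every subsequence has a further subsequence with the same limit, so (i) holds along the entire sequence. For (ii), attainment in compactness (ii) together with Proposition~\ref{prop:acceptableFromOptimisers} gives $\gamma^\infty=\Gamma(X^\circ)$, acceptable for $\Prob^+$ and attaining $V^+_u(y_0^+,0)$. Theorem~\ref{thm:noMutuallyBeneficialContracts}(i) then gives equality in its funding conditions.
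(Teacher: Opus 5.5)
Your outline is the paper's proof. It runs through symmetrisation via Corollary~\ref{cor:exchangeableFund}(a) and Lemma~\ref{lemma:acceptabilityInvariance}, then the counting/exemption lower bound made quantitative by exchangeability and Lemma~\ref{lemma:lotteryToleranceEnriched}. Extraction uses Lemma~\ref{lem:exchangeableLimits}, de~Finetti and stable equivalence. The $\kappa$-chain of Lemma~\ref{lemma:deliveredValueBound} is run on the limit through Lemma~\ref{lemma:verification} and conservation, followed by Skorokhod plus the Fatou property, the subsequence argument, and Proposition~\ref{prop:acceptableFromOptimisers}.

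Two statements need adjusting.
\begin{itemize}
\item The limiting coalition bound gives only ${\cal J}_t(X^{\infty,1}\mid x'')\ge V^+_t(x'',0)$, because Corollary~\ref{cor:tontineValueLowerBound} is a budget-zero statement. So you get $\kappa_t\ge0$, not $\kappa_t\ge$ the statewise pot. That is all the chain uses: $B=\kappa_{t+1}\ge0$ in Lemma~\ref{lemma:verification}, and $\kappa_u\ge0$ to turn the integrated bound into a pointwise one at date $u$.
\item The threshold in that corollary depends on the whole state, market included, so there is no single $n$ indexed by finitely many idiosyncratic paths. Use a state-dependent $n_0$ and ${\mathbb E}[(n_0\wedge(N_a-1))/(N_a-1)]\to0$, as the paper does.
\end{itemize}

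The more substantive omission is feasibility of the \emph{prelimit} streams. Both your appeals to the Fatou property of compactness need the approximating streams $X^1(\gamma^{a*})$ themselves to lie in $\hat{\cal X}^{y_0^+}_u(\Prob^+;0,0)$, not merely the limit. One appeal is explicit, for $\limsup_a v^a$; the other is implicit, in the future-date bounds that feed $\kappa_t\ge0$. A law of large numbers applied in the limit object does not give this. Nor does it directly control the averages over all $N_a$ agents that appear in the funding conditions, since Lemma~\ref{lem:exchangeableLimits} only controls finite blocks.

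The paper supplies the missing step as Lemma~\ref{lem:perCapitaRedistributive}. There the funding conditions are conditioned on market information and exchangeability is used at each fixed $N_a$. This makes agent~$1$'s stream exactly redistributively feasible, witnessed by per-capita pot processes bounded in $L^1$. Those pots are then extracted jointly with the streams, so the carried-forward inequalities pass to the limit.

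Once the limit lies in $\hat{\cal X}$, your conservation bound follows by integrating and telescoping its redistributive constraints. Note that Lemma~\ref{lemma:continuationWealthConservation} is stated for admissible strategies, so it applies only through $\hat\Gamma(X^{\infty,1})$.
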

\begin{proof}
	\emph{Step 1: symmetrisation.}
	By Corollary~\ref{cor:exchangeableFund}(a), replace each $\gamma^a$ by an
	$(M,I)$-equivalent $\gamma^{a*}$ (via some $g^a\in{\rm Stab}(0)$) with
	agents $1,\ldots,N_a-1$ exchangeable; by
	Lemma~\ref{lemma:acceptabilityInvariance}, $\gamma^{a*}\in{\cal
		G}^{N_a,y_0}_u(0)$ too. Because $y_0$ is a coarse state, the
	conditioning fixes only the market path and the common
	idiosyncratic path, under which the agents' remaining coordinates
	are i.i.d.; together with the measurability and equivariance
	properties provided by Corollary~\ref{cor:exchangeableFund}(a),
	this makes the block
	$\bigl(\omega^{I,i},X^i(\gamma^{a*})\bigr)_{i=1}^{N_a-1}$
	exchangeable under ${\mathbb Q}^*_{y_0}$, jointly with the market
	path. Since ${\cal J}_u(\,\cdot\mid y_0)$ depends
	only on the joint law of its argument with the conditioning
	coordinates, the values $v^a$ are indeed common to the
	block.

	\emph{Step 2: the lower bound.} By
	Proposition~\ref{prop:noPersistentGap} below,
	$\liminf_a v^a\ge V^+_u(y_0^+,0)$.

	\emph{Step 3: extraction.} Testing acceptability of $\gamma^{a*}$
	against each singleton gives
	$v^a\ge V^1_u(y_0,\beta^{\{1\},y_0,u})\ge V^1_u(y_0,0)$, using
	monotonicity of $V^1_u$ in its budget argument and the
	nonnegativity of the pre-trade claims. The funding conditions of
	$\gamma^{a*}$, integrated and telescoped, bound
	${\mathbb E}_{{\mathbb Q}^*_{y_0}}\bigl[\sum_{t\ge u}X^1_t(\gamma^{a*})\bigr]$
	by the common value of the contributions, uniformly in $a$, by
	exchangeability; hence the joint laws of
	$\bigl(\omega^M,\omega^{I,1},(X^1_t(\gamma^{a*}))_{t\ge u}\bigr)$
	are tight. By Lemma~\ref{lem:exchangeableLimits}, applied with
	$W^a$ the market path and
	$Z^{a,i}:=\bigl(\omega^{I,i},(X^i_t(\gamma^{a*}))_{t\ge u}\bigr)$,
	there is a subsequence along which every finite block converges in
	law to an exchangeable limit
	$\bigl(\omega^M,(\omega^{I,i,\infty},X^{\infty,i})_{i\ge1}\bigr)$.

	The limit sequence is exchangeable. Writing
	$\underline\Omega^A$ for the (standard, Polish) space of values of a
	single type-and-stream pair, its law is (the coordinate process of) a measure
	${\mathbb Q}^{A,\infty}$ on $\underline\Omega^{A,\infty}$ invariant
	under $S_\infty$, in the sense of
	Definition~\ref{def:exchangeableSequence}. Let $\underline\Omega^{\rm
		mix}$ and $Y$ be its mixing space and de~Finetti mixing variable
	(Definition~\ref{def:mixingSpace}).

	By Lemma~\ref{lemma:stableEquivalence}, applied to
	$\underline\Omega^{\rm mix}$ equipped with the constant filtration
	${\cal F}^{\rm mix}_t:={\cal B}(\Omega^{\rm mix})$,
	$\underline\Omega^{\rm mix}$ is stably equivalent to $\underline\Omega^\can_{\cal
		T}$: there is a filtration-preserving mod $0$ isomorphism
	\[
	\underline\Omega^{\rm mix}\times\underline\Omega^\can_{\cal T}
	\;\cong\;
	\underline\Omega^\can_{\cal T}\times\underline\Omega^\can_{\cal T}
	\;\cong\;
	\underline\Omega^\can_{\cal T},
	\]
	the second isomorphism because $\underline\Omega^\can_{\cal
		T}\times\underline\Omega^\can_{\cal T}$ is itself atomless and standard,
	hence mod $0$ isomorphic to $\underline\Omega^\can_{\cal T}$ by Rokhlin's
	theorem. Transporting through this isomorphism identifies
	$\underline\Omega^{\infty+}=\underline\Omega^\can_{\cal
		T}\times\underline\Omega^\infty$ with $\underline\Omega^{\rm
		mix}\times\underline\Omega^\can_{\cal T}\times\underline\Omega^\infty$: that
	is, $\underline\Omega^{\infty+}$ has room to carry $Y$ alongside an
	independent copy of $\underline\Omega^\can_{\cal T}$ and the original
	$\underline\Omega^\infty=\underline\Omega^M\times\underline\Omega^I$.
	By Theorem~\ref{thm:canonicalDeFinetti}, the limit sequence is
	conditionally i.i.d.\ given $Y$. The limit law is also
	compatible with the filtration: in the prelimit, every time-$t$
	coordinate of the block is, given the time-$t$ market and
	idiosyncratic past, independent of the later market and
	idiosyncratic coordinates, and this may be written as a family
	of integral identities whose conditioning kernels are those of
	the market and idiosyncratic spaces alone; since those marginals
	are the same for every $a$, the identities pass to the weak
	limit. Sampling the compatible conditional laws progressively
	from the fresh canonical coordinates supplied by the embedding
	just constructed realises $X^{\infty,1}$, together with its
	type, on $\underline\Omega^{\infty+}$ as a stream $X^\infty$
	adapted to the ambient filtration.

	\emph{Step 4: the upper bound via the cost of delivered utility.}
	This step works with the redistributive problem of
	Definition~\ref{def:classicalContinuationProblem}: the
	redistributively feasible set $\hat{\cal X}$ defined there, its
	carried-forward form (Remark~\ref{rem:carriedForward}), and the
	corresponding redistributive representation
	(Proposition~\ref{prop:redistributiveRepresentation}).
	First, by Lemma~\ref{lem:perCapitaRedistributive} below, each
	$X^1(\gamma^{a*})$ lies in
	$\hat{\cal X}^{y_0^+}_u(\Prob^+;0,0)$, together with
	market-adapted, nonnegative pot processes, bounded in $L^1$,
	witnessing the carried-forward form of the redistributive constraints.
	Enlarging the extraction of Step 3 to include these processes,
	whose limits are realised market-adapted by the compatibility
	argument of Step 3, and realising the joint convergence almost
	surely by the Skorokhod representation theorem, the
	carried-forward inequalities pass to the limit (the pot coordinates converge almost surely by
	construction, while each cost term obeys the conditional Fatou
	inequality in the favourable direction) so
	$X^{\infty,1}\in\hat{\cal X}^{y_0^+}_u(\Prob^+;0,0)$, and the
	same holds for its tails at later dates with the limiting pots.
	In particular, by condition (iv) of compactness of $\Prob^+$,
	\[
	w_t(x'')
	:=
	{\cal J}_t\bigl(X^{\infty,1}\mid x''\bigr)
	<+\infty
	\]
	for every $t\ge u$ and almost every market-and-own-path state
	$x''$ reachable from $y_0^+$.

	\emph{Fibrewise lower bounds.} Fix $t\ge u$ and $\epsilon>0$. By
	Lemma~\ref{lemma:coalitionCounting} and exchangeability, the
	probability that agent $1$ is among the short-changed agents of
	its class is at most
	${\mathbb E}\bigl[(n_0\wedge(N_a-1))/(N_a-1)\bigr]\to0$, while
	every delivered value dominates its singleton outside option and
	hence the single-agent value, finite by hypothesis. By
	Lemma~\ref{lemma:lotteryToleranceEnriched}, applied at each state
	$x''$ over the full states refining it,
	\[
	\liminf_a
	{\cal J}_t\bigl(X^1(\gamma^{a*})\mid x''\bigr)
	\ge
	V^+_t(x'',0)-\epsilon
	\qquad
	\text{for almost every }x''.
	\]
	On the Skorokhod realisation, the Fatou property (condition (i)
	of compactness of $\Prob^+$, applicable since the tails are
	redistributively feasible) gives
	$\limsup_a{\cal J}_t(X^1(\gamma^{a*})\mid x'')\le w_t(x'')$;
	since $\epsilon$ was arbitrary,
	\[
	w_t(x'')\ge V^+_t(x'',0)
	\qquad
	\text{for almost every }x''\text{, for every }t\ge u.
	\tag{$*$}
	\]

	\emph{The chain.} For $t\ge u$ and almost every $x''$ define
	\[
	\kappa_t(x'')
	:=
	\inf\bigl\{b\in{\mathbb R}
	\;:\;
	{\cal X}^{x''}_t(\Prob^+;b,0)\neq\emptyset
	\text{ and }
	V^+_t(x'',b)\geq w_t(x'')
	\bigr\},
	\]
	the statewise cost of the delivered continuation, exactly as in
	Lemma~\ref{lemma:deliveredValueBound}. As there, $\kappa_t$ is
	well defined and measurable: the supremum of
	$V^+_t(x'',\cdot)$ equals the supremum of the utilities of all
	integrable streams and is not attained, while $w_t(x'')<+\infty$;
	and $\kappa_t=0$ at exited states. By ($*$) and strict
	monotonicity of $V^+_t(x'',\cdot)$, $\kappa_t\ge0$ almost
	everywhere. By Lemma~\ref{lemma:verification}, applied under
	${\mathbb Q}^{\infty+}_{y_0^+}$ with $W:=X^{\infty,1}$ and
	$B:=\kappa_{t+1}$ together with the definition of $\kappa_t$ and
	the right-continuity of $V^+_t(x'',\cdot)$,
	\begin{gather*}
	\kappa_t
	\le
	X^{\infty,1}_t
	+{\mathbb E}\bigl(\kappa_{t+1}\,\big|\,{\cal F}^{\infty+}_t\bigr)
	-\eta^+_t
	\quad
	\text{at participating states,}
	\\
	\kappa_T\le X^{\infty,1}_T-\eta^+_T,
	\end{gather*}
	with both sides vanishing at exited states. Dividing the
	telescoped funding conditions of $\gamma^{a*}$ by $N_a$ and
	passing to the limit using Fatou's lemma on the consumption side and
	the strong law of large numbers for the i.i.d.\ contributions 
	gives the conservation bound
	\[
	{\mathbb E}_{{\mathbb Q}^{\infty+}_{y_0^+}}
	\Bigl(\sum_{t\ge u}X^{\infty,1}_t\Bigr)
	\le
	{\mathbb E}_{{\mathbb Q}^{\infty+}_{y_0^+}}
	\Bigl(\sum_{t\ge u}\eta^+_t\Bigr).
	\]
	Set
	$\beta^\circ:=X^{\infty,1}_u
	+{\mathbb E}(\kappa_{u+1}\mid{\cal F}^{\infty+}_u)-\eta^+_u$.
	Integrating the chain and telescoping from $T$ down to $u+1$
	bounds ${\mathbb E}(\kappa_{u+1})$ by
	${\mathbb E}\bigl(\sum_{t\ge u+1}(X^{\infty,1}_t-\eta^+_t)\bigr)$,
	so ${\mathbb E}(\beta^\circ)\le0$ by conservation, while the
	chain at $t=u$ gives $\kappa_u\le\beta^\circ$ pointwise with
	$\kappa_u\ge0$. Hence
	$0\le{\mathbb E}(\kappa_u)\le{\mathbb E}(\beta^\circ)\le0$, so
	$\beta^\circ=0$ almost surely, and
	Lemma~\ref{lemma:verification} at the date $u$ gives
	\[
	{\cal J}_u\bigl(X^{\infty,1}\mid y_0\bigr)
	\le
	V^+_u\bigl(y_0^+,\beta^\circ\bigr)
	=
	V^+_u(y_0^+,0).
	\]
	On the same Skorokhod realisation, the Fatou property gives
	$\limsup_av^a\le{\cal J}_u(X^{\infty,1}\mid y_0)$ almost surely.
	Combining with Step 2,
	\begin{gather*}
	v^a\to V^+_u(y_0^+,0)
	\quad\text{along the subsequence,}
	\\
	{\cal J}_u\bigl(X^{\infty,1}\mid y_0\bigr)=V^+_u(y_0^+,0)
	\quad\text{almost surely.}
	\end{gather*}

	\emph{Step 5: identification and conclusion.}
	Every subsequence of $(\gamma^{a*})$ contains, by Steps 3 and 4
	applied along it, a further subsequence along which
	$v^a\to V^+_u(y_0^+,0)$. The limit does not depend on the
	subsequence, so the entire sequence converges:
	$v^a\to V^+_u(y_0^+,0)$, which is (i).

	For (ii), $\Prob^+$ is compact and satisfies no satiation, so it
	admits regular maximisers by Theorem~\ref{thm:regularMaximisers}.
	Since $V^+_u(y_0^+,0)>-\infty$,
	Proposition~\ref{prop:acceptableFromOptimisers}, applied to
	$\Prob^+$ at $(u,y_0^+,0,0)$, provides a stream $X^\circ$,
	optimal at $y_0^+$ and statewise optimal at every later date,
	such that
	\[
	\gamma^\infty:=\Gamma(X^\circ)\in{\cal G}^{y_0^+}_u(0,0),
	\qquad
	{\cal J}_u\bigl(X(\gamma^\infty)\mid y_0^+\bigr)
	=
	V^+_u(y_0^+,0).
	\]
	The funding conditions of $\gamma^\infty$ hold with equality by
	Theorem~\ref{thm:noMutuallyBeneficialContracts}(i), the agents
	being within their participation window at the homogeneous state
	$y_0$. This proves (ii).
\end{proof}

\begin{lemma}[Per-capita redistributive feasibility]
	\label{lem:perCapitaRedistributive}
	Let $y_0\in\Omega^*_u$ be homogeneous, let
	$\gamma\in{\cal A}^{N,y_0}_u(0)$ have
	$(X^i(\gamma))_{i=1}^{N-1}$ exchangeable jointly with
	$(\omega^M,\omega^{I,i})$, and suppose $\eta^i$ depends on $i$ only
	through $\omega^{I,i}$. Then
	$X^1(\gamma)\in\hat{\cal X}^{y_0^+}_u(\Prob^+;0,0)$, and the
	per-capita market-conditional values of the carried legs of
	$\gamma$ provide a nonnegative, market-adapted pot process
	witnessing the carried-forward form of the redistributive constraints, with
	$L^1$ norms bounded by the total value of the contributions.
\end{lemma}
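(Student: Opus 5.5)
The plan is to build the witnessing pot process directly from the carried legs of $\gamma$, averaged over the fund and conditioned on the market, and then to transfer the funding conditions of Definition~\ref{def:admissibleIdiosyncratic} to the carried-forward form of the redistributive constraints of Remark~\ref{rem:carriedForward}.

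\emph{Definition of the pot.} For $t\in{\cal T}_{>u}$, set
\[
\hat P_t:={\mathbb E}_{{\mathbb Q}^*_{y_0}}\Bigl(\frac1N\sum_{i\in{\cal I}}\sum_{t\le r\le T}\gamma^i_{t-1,r}\,\Big|\,{\cal F}^M_t\Bigr),
\]
with $\hat P_u:=0$, since the budget $b$ is zero, and $\hat P_{T+1}:=0$. This $\hat P_t$ is the per-capita market value of the claims the fund holds when the time-$t$ revision begins. Nonnegativity is immediate because every contract leg is nonnegative. $\hat P_t$ is market-adapted by construction. Its $L^1$ bound follows by integrating and telescoping the funding conditions~\eqref{eq:budgetStar}, exactly as in Lemma~\ref{lemma:continuationWealthConservation}. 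That argument gives ${\mathbb E}(\hat P_t)\le\frac1N{\mathbb E}\sum_i\sum_{r\ge u}\eta^i_r$, which is the value of the contributions.

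\emph{Verifying the constraints.} Fix $t\ge u$. Take the funding condition ${\mathbb E}(c_{t,T}\mid{\cal F}^*_t)\ge0$, divide it by $N$, and condition further on ${\cal F}^M_t$; this projection is legitimate by the tower property.

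\begin{itemize}
\item The surrendered legs give $\hat P_t$.
\item The new legs with payment date $r>t$ give, again by the tower property, ${\mathbb E}(\hat P_{t+1}\mid{\cal F}^M_t)$.
\item The diagonal legs and the contributions give the per-capita averages $\frac1N\sum_i{\mathbb E}(X^i_t\mid{\cal F}^M_t)$ and $\frac1N\sum_i{\mathbb E}(\eta^i_t\mid{\cal F}^M_t)$.
\end{itemize}

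Exchangeability of $(X^i,\omega^{I,i})_{i\ge1}$ jointly with the market identifies these averages with $m_t(X^1_t)$ and $m_t(\eta^+_t)$. The market path is the conditioning variable here, and $\eta^i$ depends on $i$ only through $\omega^{I,i}$.

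Agent $0$ is not part of the exchangeable block, so it needs separate handling. I would handle it by noting that its contribution enters with weight $1/N$. Alternatively, one can use the fact that the inequality only requires dominating a per-capita quantity by one involving agent $1$ alone, and rewrite the block average as $\frac{N-1}{N}m_t(X^1_t)+\frac1N(\cdots)$, absorbing the agent-$0$ term into the nonnegative pot.

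Either way, this yields
\[
m_t(X^1_t)+m_t(\hat P_{t+1})\le \hat P_t+m_t(\eta^+_t),
\]
which is the carried-forward form of the redistributive constraints with $m_t$ computed under ${\mathbb Q}^{\infty+}_{y_0^+}$. Backward induction, as in Remark~\ref{rem:carriedForward}, then gives $\hat P_t\ge\hat R^{X^1}_t$. At $t=u$, using $\hat P_u=0$, this gives~\eqref{eq:redistributiveBudget} with zero budget.

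The consistency constraint and the non-participation constraint~\eqref{eq:noConsumptionNonParticipationI} transfer directly to $X^1$.

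\emph{Main obstacle.} The step I expect to be hardest is identifying the ${\mathbb Q}^*_{y_0}$-conditional market expectations with the operators $m_t$ of $\Prob^+$ on $\underline\Omega^{\infty+}$. This requires transporting the market-and-agent-$1$ marginal of $\underline\Omega^*$ to $\underline\Omega^{\infty+}$ via the enrichment isomorphism of Lemma~\ref{lemma:enrichmentStable}. It also requires checking that ${\cal F}^M_{t,w}$-measurability of the net positions, together with exchangeability, lets the per-capita market value be computed from a single agent's law.

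The agent-$0$ boundary term is the other delicate point. If absorbing it fails, the fallback is to state feasibility for the average of agents $1,\dots,N-1$, pooling agent $0$'s share into the pot, at the cost of an $O(1/N)$ correction in the budget.
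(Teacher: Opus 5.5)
Your construction is the paper's own. The pot is the per-capita market-conditional value of the carried legs, the funding conditions \eqref{eq:budgetStar} are divided by $N$ and projected onto market information, and exchangeability identifies the averages with agent $1$'s quantities. Conditioning on time-$t$ market information, rather than on ${\cal F}^M_{u',T}$ as the paper does, is if anything cleaner, because market-adaptedness then follows from the tower property alone.

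The genuine gap is agent $0$, and neither of your primary resolutions can work. The contribution side is fine: $\eta^i=\eta(\omega^M,\omega^{I,i})$, and at a homogeneous state all $N$ idiosyncratic coordinates, agent $0$'s included, are conditionally i.i.d. So what the conditioned funding condition actually gives is
\[
\tfrac1N m_t(X^0_t)+\tfrac{N-1}{N}m_t(X^1_t)+m_t(\hat P_{t+1})\le \hat P_t+m_t(\eta^+_t).
\]
Dropping $X^0_t\ge0$ leaves the coefficient $\frac{N-1}{N}$ on $m_t(X^1_t)$. No choice of pot can restore it, because admissibility does not stop agent $0$ from subsidising the block. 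Here is a concrete case. Let every agent enter at $u<T$ and exit at $u+1$, with $\eta^i_u=c>0$ deterministic, $\gamma^0_{u,u}=0$, $\gamma^i_{u,u}=\frac{N}{N-1}c$ for $i\ge1$, and all other legs zero. Every net position $c_{u',w}$ vanishes, so $\gamma\in{\cal A}^{N,y_0}_u(0)$, and the block is trivially exchangeable. Yet \eqref{eq:redistributiveBudget} for $X^1$ with zero budget reads $\frac{N}{N-1}c\le c$. So $X^1(\gamma)\in\hat{\cal X}^{y_0^+}_u(\Prob^+;0,0)$ is false as stated, and neither the weight-$1/N$ remark nor absorption into the pot can prove it.

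What does hold is your fallback. Multiply the display by $\frac{N}{N-1}$ and set $P_t:=\frac{N}{N-1}\hat P_t$ to get
\[
m_t(X^1_t)+m_t(P_{t+1})\le P_t+\tfrac{N}{N-1}m_t(\eta^+_t).
\]
This says $X^1(\gamma)\in\hat{\cal X}^{y_0^+}_u\bigl(\Prob^+(\tfrac{N}{N-1}\eta^+);0,0\bigr)$, with a nonnegative, market-adapted pot whose $L^1$ norm is at most twice the value of the contributions. Carrying the market value of the surplus contributions in the pot recasts this as your $O(1/N)$ budget correction.

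That is the statement you should prove, and it suffices for Step 4 of Theorem~\ref{thm:generalInfiniteFundLimit}. The factor tends to $1$ in the limit of the carried-forward inequalities. For $N\ge2$ all prelimit streams lie in the redistributive set with income $2\eta^+$, where compactness of $\Prob^+$ independent of $\eta$ supplies the Fatou property.

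The paper's proof has the same hole, unacknowledged: it asserts that $m_{u'}(X^i_{u'}(\gamma))$ does not depend on $i$, which the hypotheses give only for $i\ge1$. You were right to flag it, but commit to the weakened conclusion rather than to the absorption argument.
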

\begin{proof}
	For $u'\in{\cal T}_{>u}$ set
	$\hat P_{u'}:=N^{-1}\sum_{i}{\mathbb E}_{{\mathbb Q}^*_{y_0}}
	\bigl(\sum_{t\ge u'}\gamma^i_{u'-1,t}\,\big|\,{\cal F}^M_{u',T}\bigr)$,
	which is nonnegative and, by the marketability condition of
	Definition~\ref{def:admissibleIdiosyncratic}, measurable with
	respect to the market information. Conditioning the funding
	conditions~\eqref{eq:budgetStar} on the market gives
	\[
	N^{-1}\sum_i m_{u'}\bigl(X^i_{u'}(\gamma)\bigr)
	+\hat P_{u'+1}\text{-terms}
	\le
	\hat P_{u'}
	+N^{-1}\sum_im_{u'}(\eta^i_{u'}),
	\]
	in the carried-forward form, where $m_{u'}$ denotes the
	market-conditional value under ${\mathbb Q}^*_{y_0}$. By the joint
	exchangeability, $m_{u'}(X^i_{u'}(\gamma))$ and
	$m_{u'}(\eta^i_{u'})$ do not depend on $i$ and coincide with the
	corresponding quantities for the representative agent, computed
	for $\Prob^+$ through the identifications of
	Definition~\ref{def:probPlus}. Hence the representative's stream
	satisfies the carried-forward redistributive constraints with pot process
	$(\hat P_{u'})$, that is,
	$X^1(\gamma)\in\hat{\cal X}^{y_0^+}_u(\Prob^+;0,0)$; the $L^1$
	bound follows by integrating the telescoped constraints.
\end{proof}

\begin{lemma}[At most finitely many agents fall short]
	\label{lemma:coalitionCounting}
	In the setting of Theorem~\ref{thm:generalInfiniteFundLimit}, fix
	$t\in{\cal T}_{\ge u}$ and $\epsilon>0$. For a full state
	$y'\in\Omega^*_t$ refining $y_0$, write
	$x''(j,y')\in\Omega^{M\times I}_t$ for the market-and-own-path
	state of agent $j$, and
	$\xi^j(y'):={\cal J}^j_t\bigl(X^j(\gamma^{a*})\mid y'\bigr)$
	for the delivered values (see \eqref{eq:frozenEnrichedValue}). Then, for
	${\mathbb Q}^*_{y_0}$-almost every $y'$, within each class of
	agents sharing an idiosyncratic path at $y'$, the number of block
	agents $j$ with
	\[
	\xi^j(y')<V^+_t\bigl(x''(j,y'),0\bigr)-\epsilon
	\]
	is smaller than the threshold $n_0$ of
	Corollary~\ref{cor:tontineValueLowerBound} at the class's state.
\end{lemma}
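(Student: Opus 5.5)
The argument is by contradiction, using the acceptability of $\gamma^{a*}$ against the coalition of short-changed agents. Fix $t$, $\epsilon$, and a full state $y'$ refining $y_0$ at which the acceptability condition of ${\cal G}^{N_a,y_0}_u(0)$ holds. This is almost every $y'$, and Lemma~\ref{lemma:acceptabilityInvariance} makes it available for $\gamma^{a*}$. Fix a class ${\cal K}$ of agents sharing an idiosyncratic path at $y'$, so that $i\sim_{y'}j$ for $i,j\in{\cal K}$ and all its members have a common state $x''$. Let $A\subseteq{\cal K}$ be the set of block agents with $\xi^j(y')<V^+_t(x'',0)-\epsilon$, and suppose $|A|\ge n_0$, where $n_0$ is the threshold that Corollary~\ref{cor:tontineValueLowerBound} assigns, for this $\epsilon$, to the homogeneous start $(t,y')$ restricted to the class. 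We will show that $\gamma^{a*}$ is then unacceptable for $A$ at $(t,y')$, which is impossible.

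\emph{Case (a) condition (ii).} Every pre-trade claim $\beta^{A,y',t}$ is nonnegative, because contract legs are nonnegative. The homogeneous finite-fund value $V^{|A|}_t$ is monotone in the injected budget process, since adding a nonnegative injection only relaxes \eqref{eq:budgetStarHomogeneous}. Hence, for every relabelling $g$,
\[
V^{|A|}_t\bigl(\rho(g)_*(y'),\beta^{A,y',t}\bigr)\ge V^{|A|}_t\bigl(\rho(g)_*(y'),0\bigr)>V^+_t(x'',0)-\epsilon>\xi^j(y')
\]
for all $j\in A$, where the middle inequality is Corollary~\ref{cor:tontineValueLowerBound} applied with $m=|A|\ge n_0$. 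The inequalities are therefore strict for every member, so condition (ii) holds. Condition (i) holds because $A\subseteq{\cal K}$.

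\emph{Ruling out the exemption (case (b)).} Suppose some $B$ with $A\subseteq B$, $B$ homogeneous at $y'$, satisfies ${\cal J}^j_t(X^j\mid y')=V^{|B|}_t(\rho(g')_*(y'),\beta^{B,y',t})$ for all $j\in B$. Then $B\subseteq{\cal K}$ and $|B|\ge|A|\ge n_0$. By the same monotonicity and Corollary~\ref{cor:tontineValueLowerBound} applied to $|B|$, we get $V^{|B|}_t(\cdot,\beta^{B,y',t})>V^+_t(x'',0)-\epsilon$. For $j\in A\subseteq B$ this gives $\xi^j(y')>V^+_t(x'',0)-\epsilon$, contradicting $j\in A$. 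So no exempting set exists, case (a) applies, and $\gamma^{a*}$ is unacceptable for $A$ at $(t,y')$. Acceptability is required only almost surely, so the conclusion holds for almost every $y'$. Since ${\cal K}$ was arbitrary, it holds within every class simultaneously (there are finitely many paths).

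\emph{Main obstacle.} The delicate step is applying Corollary~\ref{cor:tontineValueLowerBound} at $y'$. First, the corollary is stated for coarse homogeneous starts, whereas $y'$ is a full state. This is handled by the relabelling uniformity in the corollary together with the frozen-evaluation convention \eqref{eq:frozenEnrichedValue}, which makes $V^{m}_t$ at the class's state computable from the coarse state $x''$. Second, the threshold $n_0$ must be uniform over the relabellings $g$ quantified in the definition; the corollary supplies this. Third, the corollary requires $V^+_t(x'',0)>-\infty$. Where $V^+_t(x'',0)=-\infty$, the inequality defining $A$ cannot hold, so $A$ is empty and there is nothing to prove.
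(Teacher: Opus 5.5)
Your proof is correct and follows the paper's argument essentially verbatim. It combines three ingredients: nonnegativity of the pre-trade claims together with monotonicity of $V^{m}_t$ in the budget; Corollary~\ref{cor:tontineValueLowerBound}, which gives strict inequality in clause (a)(ii) for every member of $A$; and the observation that any exempting $B\supseteq A$ has $|B|\ge n_0$, so it would certify a value above $V^+_t(x'',0)-\epsilon$, contradicting the defining inequality of the members of $A$. The differences are cosmetic: you take $A$ to be all short-changed class members rather than exactly $n_0$ of them, you argue pointwise on the full-measure acceptability set rather than on a set of positive probability, and you note explicitly that $A=\emptyset$ when $V^+_t(x'',0)=-\infty$.
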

\begin{proof}
	The value of the homogeneous problem for finite funds is
	invariant under relabelling the agents, by the symmetry of
	Definition~\ref{def:homogeneousFinite}, and is monotone in the
	budget process, since increasing $\beta$ enlarges the constraint
	set; moreover the pre-trade claims are nonnegative.

	Suppose the count reaches $n:=n_0$ at a set of full states of
	positive probability, and let $A$ consist of $n$ such agents
	within a fixed class; the members of $A$ share their
	idiosyncratic path, so $A$ is admissible as a coalition whatever
	their auxiliary coordinates, and $x'':=x''(j,y')$ is common to
	the class. By monotonicity in the resources and
	Corollary~\ref{cor:tontineValueLowerBound},
	\[
	V^{|A|}_t\bigl(\rho(g)_*(y'),\beta^{A,y',t}\bigr)
	\ge
	V^{|A|}_t\bigl(\rho(g)_*(y'),0\bigr)
	>
	V^+_t(x'',0)-\epsilon
	>
	\xi^j(y')
	\]
	for every $j\in A$ and every relevant $g$: clause (a)(ii) of
	unacceptability holds with strict inequality for every member.
	An exempting common-state set $B\supseteq A$ would satisfy
	$\xi^j(y')=V^{|B|}_t(\rho(g')_*(y'),\beta^{B,y',t})$ for every
	$j\in B$; but $|B|\ge n$, so the right-hand side exceeds
	$V^+_t(x'',0)-\epsilon$, contradicting the defining inequality
	of the members of $A\subseteq B$. Hence $\gamma^{a*}$ is
	unacceptable for $A$ at $(t,y')$ on a set of positive
	probability, contradicting
	$\gamma^{a*}\in{\cal G}^{N_a,y_0}_u(0)$.
\end{proof}

\begin{proposition}[No persistent gap]
	\label{prop:noPersistentGap}
	In the setting of Theorem~\ref{thm:generalInfiniteFundLimit},
	\[
	\liminf_{a\to\infty} v^a\ge V^+_u(y_0^+,0).
	\]
\end{proposition}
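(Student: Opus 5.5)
The lower bound will come from combining Lemma~\ref{lemma:coalitionCounting} at the single date $t=u$ with exchangeability of the block and the Lottery tolerance axiom at enriched states (Lemma~\ref{lemma:lotteryToleranceEnriched}). At $t=u$ every block agent shares the common idiosyncratic path $\iota_0$, so there is a single class, and its market-and-own-path state is $y_0^+$. Fix $\epsilon>0$ and let $n_0$ be the threshold supplied by Corollary~\ref{cor:tontineValueLowerBound} at the homogeneous start $(u,y_0)$. This corollary applies because $V^+_u(y_0^+,0)>-\infty$, and the hypotheses of Theorem~\ref{thm:tontineConvergence} are inherited from those of Theorem~\ref{thm:generalInfiniteFundLimit}. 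For $N_a-1\ge n_0$, Lemma~\ref{lemma:coalitionCounting} says that for ${\mathbb Q}^*_{y_0}$-almost every full state $y'$ refining $y_0$, fewer than $n_0$ block agents $j$ have $\xi^j(y')<V^+_u(y_0^+,0)-\epsilon$.

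Let $B_a$ be the set of full states $y'$ at which agent $1$ is among these short-changed agents. Since the block $(\omega^{I,i},X^i(\gamma^{a*}))_{i=1}^{N_a-1}$ is exchangeable under ${\mathbb Q}^*_{y_0}$ (Step~1 of Theorem~\ref{thm:generalInfiniteFundLimit}), each block agent has the same probability of being short-changed. Averaging the count bound over the block therefore gives
\[
{\mathbb Q}^*_{y_0}(B_a)\le \frac{n_0}{N_a-1}\to 0 .
\]
Off $B_a$ we have $\xi^1(y')\ge V^+_u(y_0^+,0)-\epsilon=:M$.

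On $B_a$ we need a uniform lower bound $m$. Testing acceptability against the singleton coalition $\{1\}$ at $(u,y')$ shows that clause (a)(ii) must fail. Hence $\xi^1(y')\ge V^1_u(\rho(g)_*(y'),\beta^{\{1\},y',u})\ge V^1_u(\cdot,0)$, using monotonicity of $V^1_u$ in the budget and nonnegativity of the pre-trade claim. The alternative is that case (b) applies, in which case $\xi^1$ equals a pool value $V^{|B|}_u(\cdot,\beta^{B,y',u})$ for some $B\ni 1$. That pool value dominates $V^{|B|}_u(\cdot,0)$, which in turn dominates $V^1_u(\cdot,0)$, because a pool can always replicate its members standing alone. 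The single-agent value at the homogeneous state is finite by the well-posedness hypothesis, so $m:=V^1_u(y_0,0)$ works and does not depend on $a$.

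Lemma~\ref{lemma:lotteryToleranceEnriched} applied at $y_0$ with $Z^a:=X^1(\gamma^{a*})$ then gives $\liminf_a{\cal J}_u(X^1(\gamma^{a*})\mid y_0)\ge M$, that is, $\liminf_a v^a\ge V^+_u(y_0^+,0)-\epsilon$. Since $\epsilon>0$ was arbitrary, the claim follows.

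I expect the main obstacle to be the uniform lower bound $m$ on $B_a$, specifically the comparison $V^{|B|}\ge V^1$ in the exempted case. I would prove it by embedding $|B|$ copies of an optimal single-agent plan as a symmetric myopic strategy, realised via a restricted tontine with trivial pooling. A secondary technical point is identifying the enriched values at full states with the frozen values of \eqref{eq:frozenEnrichedValue}, so that the Lottery tolerance axiom applies exactly as stated.
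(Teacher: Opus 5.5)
Your proposal is correct and follows the paper's own argument. Lemma~\ref{lemma:coalitionCounting} together with exchangeability makes the probability that agent $1$ is short-changed at most about $n_0/(N_a-1)\to0$; the singleton outside option supplies the uniform lower bound on that exceptional set; and Lemma~\ref{lemma:lotteryToleranceEnriched} then gives $\liminf_a v^a\ge V^+_u(y_0^+,0)-\epsilon$. Your explicit handling of the exempted case (b), via the replication bound $V^{|B|}_u(\cdot,0)\ge V^1_u(\cdot,0)$, fills in a step that the paper compresses into the phrase ``every delivered value dominates its singleton outside option''.
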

\begin{proof}
	At $t=u$ the block agents form a single class and, $y_0$ being
	homogeneous, $x''(j,y')=y_0^+$ for every block agent $j$ and
	almost every full state $y'$ refining $y_0$. Fix $\epsilon>0$.
	By Lemma~\ref{lemma:coalitionCounting} and exchangeability, the
	probability that
	$\xi^1(y')<V^+_u(y_0^+,0)-\epsilon$ is at most
	${\mathbb E}\bigl[(n_0\wedge(N_a-1))/(N_a-1)\bigr]\to0$, while
	every delivered value dominates its singleton outside option and
	hence the single-agent value $V^1_u(y',0)>-\infty$. By
	Lemma~\ref{lemma:lotteryToleranceEnriched}, applied at $y_0$ over
	its full refinements,
	$\liminf_av^a\ge V^+_u(y_0^+,0)-\epsilon$. Since $\epsilon$ was
	arbitrary, the claim follows.
\end{proof}

\begin{proof}[Infinite fund limit, Theorem~\ref{thm:infiniteFundLimit}]
	For each $u\in{\cal T}$, partition $\{i\in{\cal I}_a:\tau^i=u\}$ (agents
	entering at time $u$) further into pools of agents sharing a common
	type at $u$ (finitely many, since $\Omega^I$ is finite). Agents
	entering at time $u$ have, by the non-participation condition, no
	contractual history before $u$, i.e.\ they start fresh with budget $0$; and
	by Lemma~\ref{lemma:restrictionAcceptable}, every sub-coalition of any such
	pool is already tested by $\gamma^{N_a}$'s own acceptability (which holds
	for every $x\in\Omega^*_0$, hence in particular at the states actually
	realised). So the argument of Theorem~\ref{thm:generalInfiniteFundLimit}
	applies to each such pool (of size $\to\infty$ by the strong law of large
	numbers), with
	$u$ in place of the theorem's general time and the pool's entry state in
	place of $y_0$: the realised value of a member of the pool converges to
	$V^+_u(y_0^+,0)$.
	
	Now ${\cal V}^{\gamma^{N_a}}(j)$ is built by selecting an agent
	uniformly from ${\cal I}_a$ and evaluating their experience relative to
	$\Psi$. Conditionally on $j\in\Omega^J$ the selected
	agent lies, with probability $\to1$, in a pool of size $\to\infty$
	whose members' values converge as above (agent $0$ is selected with
	probability $1/N_a\to0$ and so contributes nothing in the limit). Since
	$V^+_u$ does not depend on the auxiliary canonical coordinate of its state
	argument (the induced preferences of Definition~\ref{def:inducedPreferences}
	being indifferent to it), $V^+_u(y_0^+,0)$ depends on $j$ alone, and equals
	${\cal V}^\infty(j)$ by Definition~\ref{def:valueFunctionInfinite} together
	with Lemma~\ref{lem:valueFunctionInfiniteWellDefined}. Hence, for
	${\mathbb Q}^J$-almost every $j$, ${\cal V}^{\gamma^{N_a}}(j)\to{\cal V}^\infty(j)$.
\end{proof}

\section{Preliminaries - proofs}
\label{appendix:finiteFiltrationClassification}

\begin{lemma}
\label{lemma:pointIsomorphism}
Let $(\Omega^1,{\cal F}^1,{\mathbb P}^1)$ and
$(\Omega^2,{\cal F}^2,{\mathbb P}^2)$ be standard probability spaces
and let $f:\Omega^1\to\Omega^2$ be a measure-preserving map such that
$f^{-1}{\cal F}^2={\cal F}^1$ modulo null sets. Then $f$ is a mod $0$
isomorphism.
\end{lemma}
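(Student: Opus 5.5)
The plan is to reduce to Borel maps between Polish spaces and then apply the Lusin--Souslin theorem: a Borel injection of a Borel subset of a Polish space has Borel image and Borel inverse. First, by standardness, I realise $(\Omega^1,{\cal F}^1,{\mathbb P}^1)$ and $(\Omega^2,{\cal F}^2,{\mathbb P}^2)$, up to mod $0$ isomorphism, as completions of Borel probability measures on Polish spaces. Composing $f$ with these identifications changes nothing in the hypotheses or the conclusion, so I may assume this form. Since $f$ is only measurable for the completed $\sigma$-algebras, I next choose a Borel map $\tilde f$ and a conull Borel set $\Omega^1_0$ with $f=\tilde f$ on $\Omega^1_0$. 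This is standard: approximate $f$ through a countable generating family of Borel sets in $\Omega^2$. Thereafter I work with $\tilde f$, which still satisfies $\tilde f^{-1}{\cal F}^2={\cal F}^1$ modulo null sets.

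The key step is injectivity off a null set. The Borel $\sigma$-algebra of a Polish space is countably generated by a family $(A_n)_{n\in{\mathbb N}}$ which also separates points. By hypothesis each $A_n$ agrees, up to a null set $D_n$, with $\tilde f^{-1}(B_n)$ for some Borel $B_n\subseteq\Omega^2$. Set $N^1:=(\Omega^1\setminus\Omega^1_0)\cup\bigcup_nD_n$, which is null, and shrink $\Omega^1\setminus N^1$ to a conull Borel subset $\Omega^1_1$. If $x\neq x'$ lie in $\Omega^1_1$, some $A_n$ contains $x$ but not $x'$. Then $\tilde f(x)\in B_n$ while $\tilde f(x')\notin B_n$, so $\tilde f(x)\neq\tilde f(x')$. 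Hence $\tilde f$ is a Borel injection on the Borel set $\Omega^1_1$.

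By Lusin--Souslin, $\Omega^2_1:=\tilde f(\Omega^1_1)$ is Borel and $\tilde f^{-1}\colon\Omega^2_1\to\Omega^1_1$ is Borel. The set $\Omega^2_1$ is conull because ${\mathbb P}^2(\Omega^2_1)={\mathbb P}^1(\tilde f^{-1}(\Omega^2_1))\ge{\mathbb P}^1(\Omega^1_1)=1$. The inverse is measure preserving: for Borel $A\subseteq\Omega^1_1$, injectivity gives $\tilde f^{-1}(\tilde f(A))\cap\Omega^1_1=A$, so ${\mathbb P}^2(\tilde f(A))={\mathbb P}^1(A)$. Passing to completions preserves all of this. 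Therefore $f|_{\Omega^1_1}\colon\Omega^1_1\to\Omega^2_1$ is a mod $0$ isomorphism, with the null sets $N^1\cup(\Omega^1\setminus\Omega^1_1)$ and $\Omega^2\setminus\Omega^2_1$.

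I expect the main obstacle to be bookkeeping rather than any real difficulty. The hypothesis holds only modulo null sets and for completed $\sigma$-algebras, so the Borel modification and the countably many exceptional sets $D_n$ must be collected into a single null set before invoking Lusin--Souslin. Lusin--Souslin itself is quoted from descriptive set theory (e.g.\ \cite[Theorem~15.1]{kechrisCDST}).
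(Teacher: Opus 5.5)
Your proof is correct, but it takes a different route from the paper.

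The paper works at the level of measure algebras. The map $[A]\mapsto[f^{-1}A]$ is an isomorphism of measure algebras: injective because $f$ is measure preserving, surjective by the hypothesis. Von Neumann's point-realisation theorem then gives a mod $0$ isomorphism $g$ inducing it. Finally, a countable point-separating family in the \emph{target} ${\cal F}^2$ forces $f=g$ almost surely.

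You instead take a point-separating family in the \emph{domain}. The inclusion ${\cal F}^1\subseteq f^{-1}{\cal F}^2$ modulo null sets converts it into injectivity of a Borel version $\tilde f$ off a null set. Lusin--Souslin then gives the Borel image and the Borel inverse, and measure preservation gives conullness of the image and measure preservation of the inverse.

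The paper's argument is two lines only because von Neumann's theorem carries all the descriptive-set-theoretic content. Yours avoids that theorem, is self-contained apart from Lusin--Souslin, and shows exactly where each hypothesis enters. The cost is the Borel-modification and null-set bookkeeping, which you handle correctly.

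One small citation point: Kechris states Theorem~15.1 for \emph{continuous} maps. The version you need, a Borel map injective on a Borel set with Borel image and Borel inverse, is the corollary immediately following it. It is obtained by refining the Polish topology so that $\tilde f$ becomes continuous.
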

\begin{proof}
The assignment $A\mapsto f^{-1}A$ is an isomorphism between the
measure algebras of the two spaces, so by von Neumann's theorem (see
\cite{rokhlin}) it is induced by a mod $0$ isomorphism
$g:\Omega^1\to\Omega^2$. Comparing $f$ and $g$ on a countable
point-separating family of sets in ${\cal F}^2$, available since the
space is standard, gives $f=g$ almost surely.
\end{proof}

The next lemma is the consequence of Rokhlin's theory of measurable
partitions \cite{rokhlin} that we use to split off one coordinate of a
filtration at a time. We record the construction and not merely the
existence statement, because the arguments below need it to depend
measurably on the data.

\begin{lemma}[Independent complement]
\label{lemma:independentComplement}
Let $(\Omega,{\cal G},{\mathbb Q})$ be a standard probability space and
let ${\cal H}\subseteq{\cal G}$ be a complete sub-$\sigma$-algebra such
that the conditional measures obtained by disintegrating ${\mathbb Q}$
over the ${\cal H}$-state are almost surely atomless. Let $h$ be the
coordinate map of a standard Borel realization of ${\cal H}$, let $Z$
be a real random variable generating ${\cal G}$ modulo null sets, let
$F(z\mid\cdot)$ be the conditional distribution function of $Z$ given
the ${\cal H}$-state, chosen jointly measurable, and write
$F^{-1}(v\mid\cdot):=\inf\{z:F(z\mid\cdot)\geq v\}$. Then
\[
V:=F(Z\mid h)
\]
is ${\cal G}$-measurable, uniformly distributed and independent of
${\cal H}$, and $Z=F^{-1}(V\mid h)$ almost surely. In particular
${\cal G}=\overline{\sigma({\cal H},V)}$ modulo null sets.
\end{lemma}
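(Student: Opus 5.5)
The plan is to verify the three claims about $V$ directly from the properties of the conditional distribution function, using the atomlessness hypothesis in an essential way. First, measurability. Since $F(z\mid\cdot)$ is jointly measurable and $h$ is ${\cal H}$-measurable (hence ${\cal G}$-measurable), the map $\omega\mapsto F(Z(\omega)\mid h(\omega))$ is a composition of jointly measurable maps with the ${\cal G}$-measurable pair $(Z,h)$. It is therefore ${\cal G}$-measurable.

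Second, the distributional claim. Fix a state $x$ of ${\cal H}$ and work under the disintegrated measure $\mu_x$. The hypothesis that $\mu_x$ is atomless does not by itself make the law of $Z$ under $\mu_x$ continuous. However, $Z$ generates ${\cal G}$ modulo null sets, so an atom of the law of $Z$ under $\mu_x$ would pull back to an atom of $\mu_x$ on ${\cal G}$. Hence for almost every $x$ the function $F(\cdot\mid x)$ is continuous. The classical probability integral transform then gives that $F(Z\mid x)$ is uniform on $[0,1]$ under $\mu_x$. Because this conditional law does not depend on $x$, $V$ is uniform and independent of ${\cal H}$: for $A\in{\cal H}$ and a Borel set $B$, we have ${\mathbb Q}(A\cap\{V\in B\})=\int_A\mu_x(V\in B)\,d{\mathbb Q}=\lambda(B){\mathbb Q}(A)$.

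Third, the reconstruction $Z=F^{-1}(V\mid h)$. For a continuous distribution function one has $F^{-1}(F(z))\le z$, with strict inequality only when $z$ lies in a flat interval $(F^{-1}(F(z)),z]$ of $F$. Each fibre has at most countably many maximal flat intervals, and each carries zero $\mu_x$-mass. So $F^{-1}(F(Z\mid x)\mid x)=Z$ holds $\mu_x$-almost surely. Integrating over $x$ gives the identity ${\mathbb Q}$-almost surely, and joint measurability of $F^{-1}$ follows from its expression as an infimum over rational $z$.

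Finally, $\sigma({\cal H},V)\subseteq{\cal G}$ holds by measurability. Conversely, $Z$ is a measurable function of $(h,V)$, so ${\cal G}=\sigma(Z)\subseteq\overline{\sigma({\cal H},V)}$ modulo null sets.

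The main obstacle is the second step: passing from atomlessness of the fibre measure on ${\cal G}$ to continuity of the conditional law of the single generator $Z$, uniformly for almost every $x$. The key point to check is that a generator of ${\cal G}$ restricted to a fibre still generates that fibre's $\sigma$-algebra modulo $\mu_x$-null sets. I would establish this via the regular conditional probability: a countable generating family for $\sigma(Z)$ also separates points of $\mu_x$-almost all fibres.
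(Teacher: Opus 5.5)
Your proposal is correct and follows the paper's argument. The paper also transfers an atom of the conditional law of $Z$ to an atom of the conditional measure, using that $Z$ generates ${\cal G}$ modulo null sets, and then applies the conditional probability integral transform. It simply cites Kallenberg's transfer theorem instead of writing out the measurability, uniformity, independence and flat-interval reconstruction steps as you do. The uniformity point you flag can be settled with a single null set. A countable separating family for ${\cal G}$ agrees with sets of the form $\{Z\in C_n\}$ off one ${\mathbb Q}$-null set $N$, so $Z$ is injective on $\Omega\setminus N$. Since $\mu_x(N)=0$ for almost every $x$, any atom of the law of $Z$ under $\mu_x$ is then a point mass of $\mu_x$.
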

\begin{proof}
Since $(\Omega,{\cal G},{\mathbb Q})$ is standard, $h$ and $Z$ exist,
and the conditional distribution function may be chosen jointly
measurable by disintegration. The conditional law of $Z$ given the
${\cal H}$-state is almost surely atomless, an atom being carried,
$Z$ generating ${\cal G}$ modulo null sets, to an atom of the
conditional measure. The claims are then the conditional probability
integral transform, as in the proof of the transfer theorem
\cite[Theorem~6.10]{kallenbergFMP}.
\end{proof}

\begin{lemma}
\label{lemma:finiteFiltrationClassification}
Let
\[
(\Omega,{\cal F},({\cal F}_t)_{t\in{\cal T}},\mathbb Q)
\]
be a filtered probability space, where ${\cal T}=\{0,\ldots,T\}$ and
each ${\cal F}_t$ is complete.
Assume that

\begin{enumerate}[(i)]
\item $(\Omega,{\cal F},\mathbb Q)$ is an atomless standard probability
space and ${\cal F}_T={\cal F}$ modulo null sets;
\item $(\Omega,{\cal F}_0,\mathbb Q)$ is atomless;
\item for each $t\in\{1,\ldots,T\}$, the conditional measures obtained
by disintegrating $(\Omega,{\cal F}_t,\mathbb Q)$ over the
time-$(t-1)$ state are atomless for almost every value of that state.
\end{enumerate}

Then
\[
(\Omega,{\cal F},({\cal F}_t)_{t\in{\cal T}},\mathbb Q)
\]
is mod $0$ isomorphic, as a filtered probability space, to
$\underline{\Omega}^{\can}_{\cal T}$.
\end{lemma}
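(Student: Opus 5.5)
We build the isomorphism one time step at a time, by induction on $t$, using Lemma~\ref{lemma:independentComplement} to split off one fresh uniform coordinate at each step.

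\emph{Base step.} By hypothesis (ii), $(\Omega,{\cal F}_0,{\mathbb Q})$ is atomless. It is also standard, being a complete countably generated sub-$\sigma$-algebra of a standard space. Rokhlin's theorem therefore gives a uniformly distributed ${\cal F}_0$-measurable random variable $U_0$ with ${\cal F}_0=\overline{\sigma(U_0)}$ modulo null sets. Equivalently, we may take a real $Z_0$ generating ${\cal F}_0$ and apply Lemma~\ref{lemma:independentComplement} with ${\cal H}$ trivial.

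\emph{Inductive step.} Suppose $t\geq1$ and that we have constructed $U_0,\ldots,U_{t-1}$ with two properties: they are i.i.d.\ uniform, and ${\cal F}_s=\overline{\sigma(U_0,\ldots,U_s)}$ for every $s\le t-1$. Choose a real random variable $Z_t$ generating ${\cal F}_t$ modulo null sets; this exists because ${\cal F}_t$ is a complete sub-$\sigma$-algebra of a standard space, and hence is the pullback of a standard Borel space embeddable in ${\mathbb R}$. Apply Lemma~\ref{lemma:independentComplement} on the standard space $(\Omega,{\cal F}_t,{\mathbb Q})$ with ${\cal H}:={\cal F}_{t-1}$. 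Hypothesis (iii) is exactly the atomless-fibre condition that lemma requires. It produces $U_t:=F(Z_t\mid h_{t-1})$ with the following properties:
\begin{itemize}
\item $U_t$ is ${\cal F}_t$-measurable and uniform;
\item $U_t$ is independent of ${\cal F}_{t-1}$, hence of $(U_0,\ldots,U_{t-1})$;
\item ${\cal F}_t=\overline{\sigma({\cal F}_{t-1},U_t)}$.
\end{itemize}
This closes the induction.

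\emph{Assembling the isomorphism.} Define $f:\Omega\to[0,1]^{\cal T}=\Omega^{\can}_{\cal T}$ by $f(\omega):=(U_t(\omega))_{t\in{\cal T}}$. Independence and uniformity of the $U_t$ make $f$ measure preserving onto the product Lebesgue measure. The induction gives $f^{-1}({\cal F}^{\can}_t)={\cal F}_t$ modulo null sets for each $t$. For $t=T$, hypothesis (i) then gives $f^{-1}({\cal F}^{\can})={\cal F}$. Lemma~\ref{lemma:pointIsomorphism}, applied to the full $\sigma$-algebras, shows that $f$ is a mod $0$ isomorphism. It is filtration preserving by the identity $f^{-1}({\cal F}^{\can}_t)={\cal F}_t$. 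Its inverse is filtration preserving too: $f$ is a bijection mod $0$ and these preimage identities hold for every $t$, so images of ${\cal F}_t$-sets are ${\cal F}^{\can}_t$-sets up to null sets.

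\emph{Expected obstacle.} The delicate step is the inductive application of Lemma~\ref{lemma:independentComplement}. It requires:
\begin{itemize}
\item a jointly measurable choice of the conditional distribution function $F(\cdot\mid h_{t-1})$;
\item a check that atomlessness of the disintegration of ${\cal F}_t$ over the ${\cal F}_{t-1}$-state really forces the conditional law of the generator $Z_t$ to be atomless.
\end{itemize}
An atom of that conditional law would pull back to an atom of the fibre measure, because $Z_t$ generates ${\cal F}_t$. This is the point I would write out most carefully. A secondary check is that completeness of each ${\cal F}_t$ lets all "modulo null sets" identities be upgraded to equalities of completed $\sigma$-algebras.
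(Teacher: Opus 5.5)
Your proposal is correct and follows essentially the same route as the paper. Like the paper, you argue by induction, using Lemma~\ref{lemma:independentComplement} (with ${\cal H}$ trivial at $t=0$ and ${\cal H}={\cal F}_{t-1}$ afterwards) to produce independent uniforms $U_t$ with ${\cal F}_t=\overline{\sigma(U_0,\ldots,U_t)}$, and then apply Lemma~\ref{lemma:pointIsomorphism} to $\omega\mapsto(U_t(\omega))_{t\in{\cal T}}$. The obstacle you flag, that atomless fibres must force an atomless conditional law for the generator, is already handled inside the proof of Lemma~\ref{lemma:independentComplement}, so you do not need to redo it here.
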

In hypothesis (iii) we disintegrate as in Section~\ref{sec:notation}:
choosing standard Borel realizations
$\omega_s:\Omega\to\Omega_s$ of $(\Omega,{\cal F}_s,{\mathbb Q})$ for
$s=t-1,t$, we disintegrate the joint law of $(\omega_{t-1},\omega_t)$
over its first component. The condition does not depend on the choice
of realizations.

\begin{proof}[Proof of Lemma~\ref{lemma:finiteFiltrationClassification}]
We construct by induction on $t$ uniformly distributed random
variables $V^t$ on $\Omega$ such that $V^t$ is ${\cal F}_t$-measurable,
the family $(V^r)_{r\leq t}$ is independent, and
\begin{equation}
{\cal F}_t=\overline{\sigma\bigl(V^0,\ldots,V^t\bigr)}
\label{eq:classificationCoordinates}
\end{equation}
modulo null sets. Each $(\Omega,{\cal F}_t,{\mathbb Q})$ is standard,
being a complete sub-$\sigma$-algebra of the standard space
$(\Omega,{\cal F},{\mathbb Q})$ of hypothesis~(i).

For $t=0$, hypothesis~(ii) states that $(\Omega,{\cal F}_0,{\mathbb Q})$
is atomless, which is the hypothesis of
Lemma~\ref{lemma:independentComplement} with ${\cal G}={\cal F}_0$ and
${\cal H}$ trivial. That lemma supplies a uniformly distributed,
${\cal F}_0$-measurable $V^0$ with
${\cal F}_0=\overline{\sigma(V^0)}$ modulo null sets.

For $t\geq1$, hypothesis~(iii) states precisely that the conditional
measures obtained by disintegrating $(\Omega,{\cal F}_t,{\mathbb Q})$
over the ${\cal F}_{t-1}$-state are almost surely atomless. Applying
Lemma~\ref{lemma:independentComplement} with ${\cal G}={\cal F}_t$ and
${\cal H}={\cal F}_{t-1}$ gives a uniformly distributed,
${\cal F}_t$-measurable $V^t$, independent of ${\cal F}_{t-1}$, with
${\cal F}_t=\overline{\sigma({\cal F}_{t-1},V^t)}$ modulo null sets.
Since $V^0,\ldots,V^{t-1}$ are ${\cal F}_{t-1}$-measurable, $V^t$ is
independent of them, so $(V^r)_{r\leq t}$ is independent; and
\eqref{eq:classificationCoordinates} follows from the inductive
hypothesis. This completes the induction.

Define
\[
\Theta:\Omega\to\Omega^{\can}_{\cal T},
\qquad
\Theta(\omega):=(V^t(\omega))_{t\in{\cal T}}.
\]
The $V^t$ are independent and uniformly distributed, so $\Theta$ is
measure preserving. By \eqref{eq:classificationCoordinates}, the
preimage under $\Theta$ of the time-$t$ $\sigma$-algebra of
$\underline{\Omega}^{\can}_{\cal T}$ is ${\cal F}_t$ for every $t$; in
particular, taking $t=T$ and using ${\cal F}_T={\cal F}$ modulo null
sets from hypothesis~(i), the preimage of the full $\sigma$-algebra is
${\cal F}$. Both spaces are standard, so $\Theta$ is a mod $0$
isomorphism by Lemma~\ref{lemma:pointIsomorphism}, and it preserves
every level of the filtration.
\end{proof}

\begin{proof}[Stable equivalence of enrichments, Lemma~\ref{lemma:enrichmentStable}]
Write $p:=\pi\circ\pi^{\Omega^3}:\underline{\Omega}^{3+}\to
\underline{\Omega}^1$ and ${\cal B}_t:=p^{-1}({\cal F}^1_t)$. Then $p$ is
again an enrichment: ${\cal B}_t\subseteq{\cal F}^{3+}_t$, and
${\cal F}^{3+}_t$ is conditionally independent of ${\cal B}_T$ given
${\cal B}_t$, since ${\cal F}^3_t$ is and the canonical factor is
independent of $\underline{\Omega}^3$.

For $t\in{\cal T}$ let $U^t$ denote the canonical coordinate of the factor
$\underline{\Omega}^{\can,t}$, viewed as a random variable on
$\Omega^{3+}$; it is uniformly distributed, ${\cal F}^{3+}_t$-measurable
and independent of $\sigma({\cal F}^{3+}_{t-1},{\cal B}_T)$.

We construct by induction on $t$ uniformly distributed $V^t$ on
$\Omega^{3+}$ such that $V^t$ is ${\cal F}^{3+}_t$-measurable, the family
$(V^r)_{r\leq t}$ is independent and independent of ${\cal B}_T$, and
\begin{equation}
{\cal F}^{3+}_t=\overline{\sigma\bigl({\cal B}_t,V^0,\ldots,V^t\bigr)}
\label{eq:enrichmentCoordinates}
\end{equation}
modulo null sets. Write
${\cal H}_t:=\overline{\sigma({\cal B}_t,{\cal F}^{3+}_{t-1})}$, with
${\cal F}^{3+}_{-1}$ trivial. The conditional measures obtained by
disintegrating $(\Omega^{3+},{\cal F}^{3+}_t,{\mathbb Q}^{3+})$ over the
${\cal H}_t$-state are almost surely atomless: $U^t$ is independent of
${\cal H}_t$, since ${\cal B}_t\subseteq{\cal B}_T$ and $U^t$ is
independent of $\sigma({\cal F}^{3+}_{t-1},{\cal B}_T)$, so its
conditional law given that state is almost surely uniform, and, writing
$U^t$ as a Borel function of the time-$t$ state, an atom of a
conditional measure would be carried to an atom of that conditional
law.

Lemma~\ref{lemma:independentComplement}, applied with
${\cal G}={\cal F}^{3+}_t$ and ${\cal H}={\cal H}_t$, therefore
supplies a uniformly distributed, ${\cal F}^{3+}_t$-measurable $V^t$,
independent of ${\cal H}_t$, with
${\cal F}^{3+}_t=\overline{\sigma({\cal H}_t,V^t)}$ modulo null sets.
We retain the notation $h_t$, $Z^t$, $F_t$ of that lemma, so that
$V^t=F_t(Z^t\mid h_t)$ and $Z^t=F^{-1}_t(V^t\mid h_t)$; part~(iii) uses
this explicit form. Since
${\cal H}_t=\overline{\sigma({\cal B}_t,V^0,\ldots,V^{t-1})}$ by the
inductive hypothesis, using ${\cal B}_{t-1}\subseteq{\cal B}_t$,
\eqref{eq:enrichmentCoordinates} follows.

It remains to check that $V^t$ is independent of
$\sigma({\cal B}_T,V^0,\ldots,V^{t-1})$, and not merely of ${\cal H}_t$.
Write $W:=(V^0,\ldots,V^{t-1})$, which is
${\cal F}^{3+}_{t-1}$-measurable, let $Y$ be bounded and
${\cal B}_T$-measurable, and let $f,g$ be bounded and measurable. Since
${\cal B}_t\subseteq{\cal B}_T$, conditional independence of
${\cal F}^{3+}_t$ and ${\cal B}_T$ given ${\cal B}_t$ gives
${\mathbb E}[\,\cdot\mid{\cal B}_T]={\mathbb E}[\,\cdot\mid{\cal B}_t]$
on ${\cal F}^{3+}_t$-measurable variables. Using this, together with
the facts that $g(W)$ is ${\cal H}_t$-measurable and that $V^t$ is
independent of ${\cal H}_t$,
\begin{align*}
{\mathbb E}[f(V^t)g(W)Y]
&=
{\mathbb E}\bigl[{\mathbb E}[f(V^t)g(W)\mid{\cal B}_t]\,Y\bigr]
\\
&=
\bar f\,{\mathbb E}\bigl[{\mathbb E}[g(W)\mid{\cal B}_t]\,Y\bigr]
=
\bar f\,{\mathbb E}[g(W)Y],
\end{align*}
where $\bar f:=\int_0^1 f\,d\lambda$. Hence $V^t$ is uniformly
distributed and independent of $\sigma({\cal B}_T,W)$, which with the
inductive hypothesis completes the induction.

Define
\[
\Theta:\Omega^{3+}\to\Omega^{\can}_{\cal T}\times\Omega^1,
\qquad
\Theta(\omega):=\bigl((V^t(\omega))_{t\in{\cal T}},p(\omega)\bigr).
\]
The $V^t$ are independent uniforms, jointly independent of ${\cal B}_T$,
and $p$ is measure preserving, so $\Theta$ is measure preserving. By
\eqref{eq:enrichmentCoordinates} and ${\cal F}^{3+}_T={\cal F}^{3+}$
modulo null sets, the preimage under $\Theta$ of the time-$t$
$\sigma$-algebra of $\underline{\Omega}^{1+}$ is ${\cal F}^{3+}_t$ for
every $t$. By Lemma~\ref{lemma:pointIsomorphism}, $\Theta$ is a mod $0$
isomorphism, and $\pi^{\Omega^1}\circ\Theta=p$ by construction; taking
$\Xi:=\Theta^{-1}$ proves~(i).

For~(ii), let $\Xi'$ be a second isomorphism as in~(i) and set
$\alpha:=\Xi^{-1}\circ\Xi'$, a filtration-preserving mod $0$
automorphism of $\underline{\Omega}^{1+}$ with
$\pi^{\Omega^1}\circ\alpha=\pi^{\Omega^1}$. Thus $\alpha$ fixes the
$\Omega^1$ coordinate and may be written
$\alpha(\omega^{\can},\omega^1)
=(\Phi(\omega^{\can},\omega^1),\omega^1)
=\hat\Phi(\omega^{\can},\omega^1)$ for a jointly measurable $\Phi$. We
check that $\Phi$ is an adapted randomized automorphism at time $0$.
Since $\alpha$ preserves ${\mathbb Q}^{\can}\times{\mathbb Q}^1$ and
fixes the second coordinate, uniqueness of disintegration gives
$\Phi(\,\cdot\,,\omega^1)_*{\mathbb Q}^{\can}={\mathbb Q}^{\can}$ for
almost every $\omega^1$. Writing $\alpha^{-1}=\hat\Psi$ in the same
way, $\Psi(\,\cdot\,,\omega^1)$ and $\Phi(\,\cdot\,,\omega^1)$ are
mutually inverse almost surely on $\Omega^{1+}$, hence, by Fubini,
mutually inverse modulo null sets for almost every $\omega^1$. As
$\alpha$ is filtration preserving, the time-$t$ coordinate of $\Phi$ is
measurable for ${\cal F}^{1+}_t$, the completion of
${\cal F}^{\can}_t\otimes{\cal F}^1_t$, so its sections at fixed
$\omega^1$ are ${\cal F}^{\can}_t$-measurable for almost every
$\omega^1$, and the same holds for $\Psi$. Hence for almost every
$\omega^1$ the map $\Phi(\,\cdot\,,\omega^1)$ is a
filtration-preserving mod $0$ automorphism of
$\underline{\Omega}^{\can}_{\cal T}$; redefining
$\Phi(\,\cdot\,,\omega^1):=\id$ on the exceptional null set, which
alters $\alpha$ only on a null set, this holds for every $\omega^1$.
The remaining requirement of
Definition~\ref{def:adaptedRandomizedAutomorphism} at $s=0$ is the
${\cal F}^{1+}_t$-measurability of the time-$t$ coordinates of $\Phi$
and $\Psi$ just established, and the requirement at a general $s$
follows since ${\cal F}^{1+}_t\subseteq{\cal F}^{1+}_{t\vee s}$.

For~(iii) we use the following elementary fact: if ${\cal A}$ and
${\cal C}$ are conditionally independent given ${\cal B}$, and
${\cal D}\subseteq{\cal A}$, then ${\cal A}$ and ${\cal C}$ are
conditionally independent given $\sigma({\cal B},{\cal D})$. Indeed,
for $C\in{\cal C}$ conditional independence gives
${\mathbb E}[\indicator_C\mid\sigma({\cal B},{\cal A})]
={\mathbb E}[\indicator_C\mid{\cal B}]$ and hence
${\mathbb E}[\indicator_C\mid\sigma({\cal B},{\cal D})]
={\mathbb E}[\indicator_C\mid{\cal B}]$, while for $A\in{\cal A}$,
$D\in{\cal D}$ and $B\in{\cal B}$, applying conditional independence to
$A\cap D\in{\cal A}$,
\[
{\mathbb E}[\indicator_{A\cap D}\indicator_B\indicator_C]
=
{\mathbb E}\bigl[\indicator_B\,
{\mathbb E}[\indicator_{A\cap D}\mid{\cal B}]\,
{\mathbb E}[\indicator_C\mid{\cal B}]\bigr]
=
{\mathbb E}\bigl[\indicator_{A\cap D}\indicator_B\,
{\mathbb E}[\indicator_C\mid{\cal B}]\bigr].
\]
The sets $D\cap B$ form a $\pi$-system generating
$\sigma({\cal B},{\cal D})$, so
${\mathbb E}[\indicator_A\indicator_C\mid\sigma({\cal B},{\cal D})]
={\mathbb E}[\indicator_A\mid\sigma({\cal B},{\cal D})]\,
{\mathbb E}[\indicator_C\mid{\cal B}]$, which is the assertion.

Let now $u$ and ${\cal G}$ be as in~(iii). Conditioning on a
${\cal G}$-state and then on ${\cal B}_t$ is conditioning on
$\sigma({\cal B}_t,{\cal G})$, so what must be checked is that
${\cal F}^{3+}_t$ and ${\cal B}_T$ are conditionally independent given
$\sigma({\cal B}_t,{\cal G})$ for every $t$. For $t\geq u$ we have
${\cal G}\subseteq{\cal F}^3_u\subseteq{\cal F}^{3+}_t$, and the fact
above applied with ${\cal A}={\cal F}^{3+}_t$, ${\cal C}={\cal B}_T$,
${\cal B}={\cal B}_t$ and ${\cal D}={\cal G}$ gives it. For $t<u$ the
same fact with ${\cal A}={\cal F}^{3+}_u$ and ${\cal B}={\cal B}_u$
shows that ${\cal F}^{3+}_u$ and ${\cal B}_T$ are conditionally
independent given $\sigma({\cal B}_u,{\cal G})={\cal G}$, and the claim
follows since ${\cal F}^{3+}_t\subseteq{\cal F}^{3+}_u$ and
$\sigma({\cal B}_t,{\cal G})={\cal G}$. Thus $p$ is again an enrichment
under ${\mathbb Q}^{3+}_y$ for almost every ${\cal G}$-state $y$.
Taking $t=u$ in the first paragraph of this proof shows in addition
that ${\cal G}$ and ${\cal B}_T$ are conditionally independent given
${\cal B}_u$, so that $p$ pushes ${\mathbb Q}^{3+}_y$ forward to
${\mathbb Q}^1_{\pi_u(y)}$, where $\pi_u(y)$ is the time-$u$ state of
$\underline{\Omega}^1$ determined by $y$ through
${\cal B}_u\subseteq{\cal G}$.

Because ${\cal G}\subseteq{\cal F}^3_u$ and the canonical factor is
independent of $\underline{\Omega}^3$, each $U^t$ remains uniformly
distributed and independent of $\sigma({\cal F}^{3+}_{t-1},{\cal B}_T)$
under ${\mathbb Q}^{3+}_y$. The construction above therefore applies
verbatim to the conditioned space.

Finally, it depends measurably on $y$. In
Lemma~\ref{lemma:independentComplement} the realizations $h_t$ and the
generators $Z^t$ may be chosen once and for all, since the
$\sigma$-algebras ${\cal H}_t$ and ${\cal F}^{3+}_t$ do not depend on
$y$; the only ingredient that does is the conditional distribution
function, which by disintegration may be chosen jointly measurable in
$(z,h,y)$, say $F_t(z\mid h,y)$. Then $V^t_y:=F_t(Z^t\mid h_t,y)$, and
hence $\Theta_y=\bigl((V^t_y)_{t\in{\cal T}},p\bigr)$, are jointly
measurable in $(y,\omega)$. So is the inverse: given $\omega^1$ and
$(v^t)_{t\in{\cal T}}$, the state $h_t$ is a measurable function of
$\omega^1$ and $v^0,\ldots,v^{t-1}$ by
\eqref{eq:enrichmentCoordinates}, whence $Z^t=F^{-1}_t(v^t\mid h_t,y)$
recursively, and $Z^T$ generates ${\cal F}^{3+}$. Setting
$\Xi_y:=\Theta_y^{-1}$ proves~(iii).
\end{proof}

\begin{proof}[Proof of Lemma~\ref{lemma:stableEquivalence}]
Let $\underline{\Omega}^1$ be the one-point filtered probability
space, which is a standard filtered probability space, and let
$\pi:\underline{\Omega}\to\underline{\Omega}^1$ be the unique map.
Since $\pi^{-1}({\cal F}^1_t)$ is trivial for every $t$, both
conditions of Definition~\ref{def:enrichment} hold, so $\pi$ is an
enrichment. Here $\underline{\Omega}^{1+}$ is canonically isomorphic
to $\underline{\Omega}^{\can}_{\cal T}$, so
Lemma~\ref{lemma:enrichmentStable}(i) gives a filtration-preserving
mod $0$ isomorphism
$\underline{\Omega}^{\can}_{\cal T}\cong\underline{\Omega}^{+}$.
Applying this with
$\underline{\Omega}=\underline{\Omega}^{\can}_{\cal T}$, which is
itself a standard filtered probability space, gives
$\underline{\Omega}^{\can}_{\cal T}\cong
(\underline{\Omega}^{\can}_{\cal T})^{+}$, and composing the two
isomorphisms shows that
$\underline{\Omega}^{+}\cong(\underline{\Omega}^{\can}_{\cal T})^{+}$;
that is, $\underline{\Omega}$ is stably equivalent to
$\underline{\Omega}^{\can}_{\cal T}$.
\end{proof}

\begin{proof}[Naturality of disintegration, Lemma~\ref{lemma:disintegrationNatural}]
The argument follows that of the group-invariance case treated in \cite[Example~7]{changPollard}.

Since $\Theta$ is filtration preserving, $\Theta^{-1}{\cal F}^2_s={\cal
F}^1_s$ modulo null sets. As ${\cal
F}^2_s=(\omega^2_s)^{-1}{\cal B}(\Omega^2_s)$ modulo null sets, the
composite $\omega^2_s\circ\Theta$ is ${\cal F}^1_s$-measurable; since
${\cal F}^1_s=(\omega^1_s)^{-1}{\cal B}(\Omega^1_s)$ modulo null sets,
the Doob--Dynkin factorisation lemma yields a ${\cal
B}(\Omega^1_s)$-measurable map
$\Theta_s:\Omega^1_s\to\Omega^2_s$, defined ${\mathbb
Q}^1_s$-almost everywhere and unique as such, with
$\omega^2_s\circ\Theta=\Theta_s\circ\omega^1_s$ ${\mathbb Q}^1$-a.s.
Because $\Theta$ is a measure-preserving bijection modulo null sets
intertwining $\omega^1_s$ and $\omega^2_s$ in this way, $\Theta_s$
pushes ${\mathbb Q}^1_s$ forward to ${\mathbb Q}^2_s$, and the same
argument applied to $\Theta^{-1}$ produces an a.e.\ inverse to
$\Theta_s$; hence $\Theta_s$ is a mod 0 isomorphism.

For the fibre correspondence, fix bounded measurable $f:\Omega^2\to{\mathbb
R}$ and $h:\Omega^2_s\to{\mathbb R}$. Since $\Theta$ is measure
preserving and $\omega^2_s\circ\Theta=\Theta_s\circ\omega^1_s$,
\[
\int_{\Omega^2}h(\omega^2_s(\omega))f(\omega)\,{\mathbb Q}^2(d\omega)
=
\int_{\Omega^1}h(\Theta_s(\omega^1_s(\omega')))\,(f\circ\Theta)(\omega')\,{\mathbb Q}^1(d\omega').
\]
Disintegrating the left-hand side over $\Omega^2_s$ and the
right-hand side over $\Omega^1_s$ gives
\begin{align*}
\int_{\Omega^2_s}h(y)\int f\,d\mu^2_y\,{\mathbb Q}^2_s(dy)
&=
\int_{\Omega^1_s}h(\Theta_s(x))\int (f\circ\Theta)\,d\mu^1_x\,{\mathbb Q}^1_s(dx)
\\
&=
\int_{\Omega^1_s}h(\Theta_s(x))\int f\,d(\Theta_*\mu^1_x)\,{\mathbb Q}^1_s(dx).
\end{align*}
Since $\Theta_s$ pushes ${\mathbb Q}^1_s$ forward to ${\mathbb Q}^2_s$,
changing variables $y=\Theta_s(x)$ in the outer integral on the left
shows that this agrees with
$\int_{\Omega^1_s}h(\Theta_s(x))\int f\,d\mu^2_{\Theta_s(x)}\,{\mathbb
Q}^1_s(dx)$ for every bounded measurable $h$. As $f$ was also
arbitrary, uniqueness of the disintegration gives
$\Theta_*\mu^1_x=\mu^2_{\Theta_s(x)}$ for ${\mathbb Q}^1_s$-almost
every $x$.
\end{proof}

\begin{proof}[Canonical representation of an exchangeable sequence, Theorem~\ref{thm:canonicalDeFinetti}]
The definitions above are stated purely in terms of the law
${\mathbb M}$ of $(X^i)_{i\in{\mathbb N}}$, so $Y$ is, a priori, only a
function on $\underline\Omega^{A,\infty}$; the present proof is the
only place we need to link this back to the original sequence, via
the composite $Y\circ X$, so that $\omega\mapsto Y(X(\omega))$ is a
genuine random variable on $(\Omega,{\cal F},{\mathbb P})$.

Part (i) is the classical de~Finetti representation: by
Theorem~\ref{thm:hewittSavage} the coordinates are conditionally
independent given $Y$, with common conditional marginal $\nu_y$ --
see also \cite[Theorem~1.1]{kallenberg2005} -- so the disintegration
of ${\mathbb M}={\rm Law}(X)$ over $Y$ is
$(\nu_y^{\otimes{\mathbb N}})_y$
(Definition~\ref{def:deFinettiRepresentation}), which is the
statement of (i).

For (ii), note that both
$(\Omega^{A,\infty},{\mathbb M})$ and
$(\Omega^{\rm mix}\times\Omega^{A,\infty},{\mathbb M}^\circ)$ are
standard probability spaces, the latter since
$\underline\Omega^{\rm mix}$ and $\underline\Omega^{A,\infty}$ are
each standard. The map $\Xi$ pushes ${\mathbb M}$ forward to
${\mathbb M}^\circ$: composing $\Xi$ with $X$ gives the map
$\omega\mapsto(Y(X(\omega)),X(\omega))$, whose law is
${\mathbb M}^\circ$ by (i), while ${\mathbb M}$ is the law of $X$.
Moreover, the pullback under $\Xi$ of the product $\sigma$-algebra is
the full $\sigma$-algebra of $\Omega^{A,\infty}$, since the second
component of $\Xi$ is the identity. By
Lemma~\ref{lemma:pointIsomorphism}, $\Xi$ is a mod $0$ isomorphism.
Its first component is $Y$, so it commutes with the projection onto
$\underline\Omega^{\rm mix}$ and identifies same-$y$ fibres with
same-$y$ fibres, proving (ii).
\end{proof}

\subsection{Realization of exchangeable $\sigma$-algebras}
\label{appendix:finiteHomogeneous}

\begin{proof}[Proof of Lemma~\ref{lemma:standardFiltered}]
	Let $\Delta^N$ denote the set of orbits of the $S_N$-action on
	$\Omega^{\can,\infty}_{\cal T}$, let
	$\pi^N:\Omega^{\can,\infty}_{\cal T}\to\Delta^N$ be the quotient
	map, and equip $\Delta^N$ with the pushforward measure
	$\hat{\mathbb Q}:=\pi^N_*{\mathbb Q}^{\can,\infty}_{\cal T}$ and the
	$\sigma$-algebras
	\begin{gather*}
	\hat{\cal F}_t:=\{B\subseteq\Delta^N\mid(\pi^N)^{-1}(B)\in{\cal E}^N({\cal F}^{\can}_t)\},
	\\
	\hat{\cal F}:=\{B\subseteq\Delta^N\mid(\pi^N)^{-1}(B)\in{\cal E}^N({\cal F}^{\can}_{\cal T})\},
	\end{gather*}
	each of which is complete since ${\cal E}^N$ of a complete
	$\sigma$-algebra is complete. Every set in
	${\cal E}^N({\cal F}^{\can}_t)$ agrees modulo null sets with a
	genuinely $S_N$-invariant set in ${\cal F}^{\can}_t$ (replace $E$ by
	$\bigcap_{g\in S_N}gE$), and genuinely invariant sets are exactly the
	preimages of subsets of $\Delta^N$; hence
	$(\pi^N)^{-1}(\hat{\cal F}_t)={\cal E}^N({\cal F}^{\can}_t)$ modulo
	null sets, and likewise for $\hat{\cal F}$.

	We claim that the filtered probability space
	$\underline{\Delta}^N:=(\Delta^N,\hat{\cal F},(\hat{\cal F}_t)_{t\in{\cal T}},\hat{\mathbb Q})$
	satisfies the hypotheses of
	Lemma~\ref{lemma:finiteFiltrationClassification}.

	First, $(\Delta^N,\hat{\cal F},\hat{\mathbb Q})$ is standard. Fix a
	Borel isomorphism between $\Omega^{\can,\infty}_{\cal T}$ and
	$[0,1]$. As in the proof of Lemma~\ref{lemma:sectionExists}, the map
	sending each point to the minimum of its (finite) orbit is
	measurable, so the set $S$ of orbit minima is a Borel transversal for
	the action, and $\pi^N$ restricts to a bijection $S\to\Delta^N$. This
	bijection identifies $\hat{\cal F}$ with the trace $\sigma$-algebra
	on $S$: the saturation $\bigcup_{g\in S_N}gC$ of a measurable
	$C\subseteq S$ is an invariant measurable set meeting $S$ exactly in
	$C$. Hence $(\Delta^N,\hat{\cal F},\hat{\mathbb Q})$ is identified
	with a Borel probability measure on a standard Borel space.

	Second, ${\cal F}^{\can}_T$ is the full $\sigma$-algebra of
	$\underline{\Omega}^{\can,\infty}_{\cal T}$, so
	${\cal E}^N({\cal F}^{\can}_T)={\cal E}^N({\cal F}^{\can}_{\cal T})$
	and therefore $\hat{\cal F}_T=\hat{\cal F}$.

	For the remaining hypotheses, let $U_t$ denote the canonical uniform
	random variable at time $t$ on the copy of
	$\underline{\Omega}^{\can}_{\cal T}$ with index $N$. Since $S_N$
	permutes only the indices $0,\ldots,N-1$, the random variable $U_t$
	is $S_N$-invariant and ${\cal E}^N({\cal F}^{\can}_t)$-measurable, so
	it descends to an $\hat{\cal F}_t$-measurable random variable
	$\hat U_t$ on $\Delta^N$. Moreover $U_t$ is uniformly distributed and
	independent of
	${\cal F}^{\can,\infty}_{t-1}\supseteq{\cal E}^N({\cal F}^{\can}_{t-1})$,
	so $\hat U_t$ is uniformly distributed and, for $t \geq 1$, independent of
	$\hat{\cal F}_{t-1}$.

	The spaces $(\Delta^N,\hat{\cal F},\hat{\mathbb Q})$ and
	$(\Delta^N,\hat{\cal F}_0,\hat{\mathbb Q})$ are atomless, since an
	atom of either would be mapped by $\hat U_0$ to an atom of the
	uniform distribution. For $t\geq1$, the conditional measures
	obtained by disintegrating $(\Delta^N,\hat{\cal F}_t,\hat{\mathbb Q})$
	over the time-$(t-1)$ state are almost surely atomless: the
	conditional law of $\hat U_t$ given the time-$(t-1)$ state is almost
	surely the uniform distribution, and, writing
	$\hat U_t=u_t(\hat\omega_t)$ for a Borel function $u_t$ of the
	time-$t$ state, an atom of a conditional measure would be mapped by
	$u_t$ to an atom of this conditional law.

	By Lemma~\ref{lemma:finiteFiltrationClassification} there is
	therefore a filtration-preserving mod $0$ isomorphism
	$\Lambda^N:\underline{\Delta}^N\to\underline{\Omega}^{\can}_{\cal T}$.
	Set $\theta^N:=\Lambda^N\circ\pi^N$. Then $\theta^N$ is measure
	preserving and
	\[
	(\theta^N)^{-1}({\cal F}^{\can}_t)
	=(\pi^N)^{-1}(\hat{\cal F}_t)
	={\cal E}^N({\cal F}^{\can}_t)
	\]
	modulo null sets for every $t\in{\cal T}$.

	Finally, if $Z$ is a random variable on
	$\underline{\Omega}^{\can}_{\cal T}$ then $Z\circ\theta^N$ is a
	random variable on $\underline{\Omega}^{{\cal E}^N}$ with the same
	law, measurable at time $t$ whenever $Z$ is; conversely, by the
	Doob--Dynkin lemma, every ${\cal E}^N({\cal F}^{\can}_t)$-measurable
	random variable is of the form $Z\circ\theta^N$ modulo null sets for
	some ${\cal F}^{\can}_t$-measurable $Z$, and $Z$ is unique up to
	almost sure equality because $\theta^N$ is measure preserving.
\end{proof}

\section{Well-posedness - proofs}
\label{appendix:wellPosedProofs}
\subsection{Compact problems - proofs}
\label{appendix:calculusVariations}

Compactness alone does not imply the existence of optimal
continuations. One must also show that feasible continuation plans
remain feasible under limits.

\begin{lemma}[Closedness of feasible continuation sets]
	\label{lemma:closednessContinuationSets}
	
	For every homogeneous continuation problem
	$
	(s,x,\beta,\overline X),
	$
	the set
	$
	{\cal X}_s^x(\beta,\overline X)
	$
	is closed under \(L^1\)-limits and under almost-sure limits, and
	the same holds for
	$
	\hat{\cal X}_s^x(\beta,\overline X)
	$;
	both statements remain true when a budget sequence
	$\beta_n\to\beta$ (in $L^1$, respectively almost surely along a
	subsequence) varies alongside the streams.
\end{lemma}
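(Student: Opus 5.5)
The plan is to reduce closedness to a handful of elementary stability facts. Each defining constraint of ${\cal X}_s^x(\beta,\overline X)$ and $\hat{\cal X}_s^x(\beta,\overline X)$ is either an almost-sure linear constraint or a pot inequality, and each of these passes to limits.

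The consistency constraint \eqref{eq:classicalConsistency} and the non-participation constraint \eqref{eq:classicalNonParticipation} are almost-sure linear equalities, as is nonnegativity. They survive almost-sure limits directly. They survive $L^1$ limits after passing to an almost surely convergent subsequence. Since ${\cal T}$ is finite, adaptedness is preserved as well, because each $L^1({\cal F}^\infty_t)$ is closed.

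The substantive work concerns the pots.

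\textbf{Continuity of the pot maps in $L^1$.} I will show by backward induction from $R_{T+1}=0$ that $X\mapsto R^X_u$ is $1$-Lipschitz in $L^1$ at each stage. The key ingredients are:
\begin{itemize}
\item conditional expectation is an $L^1$ contraction;
\item $z\mapsto z^+$ is $1$-Lipschitz.
\end{itemize}
This gives $\|R^X_u-R^{X'}_u\|_1\le\sum_{t\ge u}\|X_t-X'_t\|_1$. The same argument applies to $\hat R^X$, with $m_u$ in place of ${\mathbb E}(\cdot\mid{\cal F}^\infty_u)$.

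It follows that the left side of \eqref{eq:homogeneousBudget} depends $L^1$-continuously on $X$. The right side $\beta+\eta'_s$ depends continuously on $\beta$. An almost-sure inequality between $L^1$-convergent sequences persists in the limit. The same reasoning handles \eqref{eq:redistributiveBudget}, after applying $m_s$, and \eqref{eq:noPotOutsideParticipation}, which is an equality of nonnegative quantities on a fixed event. This settles the $L^1$ case, including a varying budget $\beta_n\to\beta$.

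\textbf{The almost-sure case.} This is where I expect the main obstacle. Almost-sure convergence does not give $L^1$ convergence, so the Lipschitz bound is unavailable and conditional expectations need not converge. My plan is to use lower semicontinuity in place of continuity, via the carried-forward form of Remark~\ref{rem:carriedForward}.

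For each $n$, take the witnessing pots $R^{X^n}_u$. By the above, these are bounded in $L^1$ by the budget plus the total contribution value, since each stream is nonnegative and the pots are dominated through the telescoped constraint. Applying Koml\'os' theorem along a diagonal subsequence, Ces\`aro averages of the pots converge almost surely to limits $P_u\ge0$. Ces\`aro averages of $X^n$ still converge almost surely to $X$, and the constraints are convex, so the averaged pairs remain feasible.

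The carried-forward inequalities
\[
X_u+{\mathbb E}(P_{u+1}\mid{\cal F}^\infty_u)\le P_u+\eta'_u
\]
then pass to the limit. The conditional Fatou lemma applies to the nonnegative term ${\mathbb E}(P^n_{u+1}\mid{\cal F}^\infty_u)$, which appears on the favourable side. All other terms converge almost surely.

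Remark~\ref{rem:carriedForward} then gives $R^X_u\le P_u$, which yields \eqref{eq:homogeneousBudget}. The same bound gives \eqref{eq:noPotOutsideParticipation}, since on the relevant events $0\le{\mathbb E}(R^X_{u+1}\mid\cdot)\le\liminf=0$. The redistributive version uses $m_u$ in the same way.

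\textbf{Points to verify.} For a varying budget, only $\beta_n\to\beta$ almost surely along a subsequence is needed, since $\beta$ enters solely as the right-hand bound at time $s$. The one point I still need to check carefully is the uniform $L^1$ bound on the pots when only almost-sure convergence of $X^n$ is assumed. The plan is to derive it from the budget constraint together with nonnegativity of all terms, without any appeal to convergence of the streams.
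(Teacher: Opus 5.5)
Your argument is correct, and its $L^1$ half is the paper's: the pot maps are Lipschitz on $L^1$ by backward induction, and the constraints then pass along an almost surely convergent subsequence.

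For almost-sure limits you take a longer route than needed. The paper extracts nothing. It shows directly, by backward induction through the defining recursion, that
\[
R^X_u\le\liminf_n R^{X^n}_u\qquad\text{almost surely, for every }u.
\]
This uses only that $z\mapsto z^+$ is continuous and monotone, and that conditional Fatou applies to the nonnegative $R^{X^n}_{u+1}$. Each funding and vanishing condition bounds these quantities above by something convergent or fixed, so it passes to the limit.

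You are right that conditional expectations need not converge, and you did say you wanted lower semicontinuity. But the one-sided inequality is already delivered for the unaveraged pots by the conditional Fatou step you use at the end. So the Koml\'os extraction, the appeal to convexity, and the uniform $L^1$ bound you flag as outstanding can all be dropped. That bound does hold, by telescoping ${\mathbb E}[R^{X^n}_{u+1}]\le{\mathbb E}[R^{X^n}_u]+{\mathbb E}[\eta'_u]$ from the budget. However, it is uniform only if the budgets $\beta_n$ are bounded in $L^1$, an extra check in the varying-budget case that the paper's route avoids.

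What your route buys is a limiting pot process obtained as a genuine almost-sure limit, like the witnesses extracted in Step~4 of the proof of Theorem~\ref{thm:generalInfiniteFundLimit}. This lemma does not need one.

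One small correction: in the redistributive version $m_u(X^n_u)$ need not converge almost surely, so ``all other terms converge'' fails there. Handle it by conditional Fatou as well; this works because $X^n_u\ge0$ sits on the favourable side.
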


\begin{proof}
	Let
	$
	X^n
	\in
	{\cal X}_s^x(\beta,\overline X)
	$
	and suppose
	$
	X^n\to X
	$
	in \(L^1\).
	The consistency constraints and non-participation constraints are
	preserved under \(L^1\)-limits.

	For the funding constraints, observe that the maps
	$X\mapsto R^X_u$ are Lipschitz on $L^1$: conditional expectation
	is an $L^1$-contraction and $z\mapsto z^+$ is $1$-Lipschitz, so
	backward induction through the defining recursion gives
	\[
	\bigl\|R^{X}_u-R^{Y}_u\bigr\|_{L^1}
	\leq
	\sum_{u\leq t\leq T}\bigl\|X_t-Y_t\bigr\|_{L^1}.
	\]
	Hence $R^{X^n}_u\to R^X_u$ in $L^1$ for every $u$, and likewise
	${\mathbb E}(R^{X^n}_{u+1}\mid{\cal F}^\infty_u)
	\to{\mathbb E}(R^{X}_{u+1}\mid{\cal F}^\infty_u)$. Passing to an
	almost surely convergent subsequence, the self-financing funding
	constraint \eqref{eq:homogeneousBudget} and the vanishing
	conditions~\eqref{eq:noPotOutsideParticipation} pass to the
	limit.

	The same argument, with $m_u$ in place of
	${\mathbb E}(\,\cdot\mid{\cal F}^\infty_u)$, shows that the
	redistributively feasible set
	$\hat{\cal X}_s^x(\beta,\overline X)$ is also closed under
	$L^1$-limits.

	For almost-sure limits, the conditional form of Fatou's lemma
	gives, by backward induction through the defining recursion,
	\[
	R^X_u
	\leq
	\liminf_n R^{X^n}_u
	\qquad
	\text{almost surely, for every }u,
	\]
	since $z\mapsto z^+$ is continuous and monotone and conditional
	expectation satisfies Fatou's inequality for nonnegative
	integrands. Every funding constraint and vanishing condition is an
	almost-sure inequality bounding such quantities above by quantities
	that converge almost surely (or are fixed), so each passes to the
	limit; the consistency and non-participation constraints pass to
	almost-sure limits directly. The same applies to the redistributive pots
	$\hat R^X_u$ with $m_u$ in place of the conditional expectation,
	and a budget sequence converging alongside changes nothing in
	either argument.
\end{proof}

The previous lemma is independent of the preference system. It is a
structural property of the homogeneous continuation problem and
expresses the stability of budget feasibility under limits.

The continuity of the value function is obtained by perturbing the
continuation budget while preserving feasibility.

\begin{lemma}[Budget perturbation]
	\label{lemma:budgetPerturbation}

	Let
	$
	\beta_n\to\beta
	$
	in $L^1$, with
	${\cal X}_s^x(\beta_n,\overline X)\neq\emptyset$ for every $n$.
	For every
	$
	X\in{\cal X}_s^x(\beta,\overline X)
	$
	there exists a sequence
	$
	X^n\in{\cal X}_s^x(\beta_n,\overline X)
	$
	with $X^n\le X$, such that
	$
	X^n\to X
	$
	in \(L^1\) and
	$
	\liminf_n{\cal J}_s(X^n\mid x)\ge{\cal J}_s(X\mid x).
	$
\end{lemma}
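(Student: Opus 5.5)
The plan is a \emph{scaling} construction: shrink $X$ just enough to fit the new budget, then let the shrinkage factor tend to one. Since $X^n\le X$ is required, I will take $X^n$ to agree with $X$ before $s$ (so the consistency constraint with $\overline X$ is untouched) and to be a damped copy of $X$ from $s$ onwards. Before scaling, I reduce to the case $\beta_n\le\beta$. The general case follows by replacing $\beta_n$ with $\beta_n\wedge\beta$, because the constraint \eqref{eq:homogeneousBudget} is monotone in the budget, so any stream feasible for $\beta_n\wedge\beta$ is feasible for $\beta_n$. Also $\beta_n\wedge\beta\to\beta$ in $L^1$.

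\emph{Construction.} Set $D_n:=\beta-\beta_n\ge0$ and $L:=X_s+{\mathbb E}(R^X_{s+1}\mid{\cal F}^\infty_s)$, so that $L\le\beta+\eta_s$. Define the ${\cal F}^\infty_s$-measurable factor
\[
\lambda_n:=\Bigl(1-\frac{D_n}{L}\Bigr)^{+}\quad\text{on }\{L>0\},\qquad \lambda_n:=1\ \text{on }\{L=0\},
\]
and put $X^n_t:=\lambda_n X_t$ for $t\ge s$.

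\emph{Feasibility.} Because $R^X$ is built from positive parts and conditional expectations, it is positively homogeneous in the pair $(X,\eta)$. Scaling $X$ alone by $\lambda_n\le1$ therefore gives $R^{X^n}_u\le\lambda_n R^{X}_u+(1-\lambda_n)\cdot 0$. This bound holds because $\eta\ge0$ only helps and $\lambda_n$ is ${\cal F}^\infty_s$-measurable, so it passes through conditional expectations at later times.
\begin{itemize}
\item[(a)] Budget constraint: $X^n_s+{\mathbb E}(R^{X^n}_{s+1}\mid{\cal F}^\infty_s)\le\lambda_n L\le (L-D_n)^+$. This is at most $\beta_n+\eta_s$ on $\{L\ge D_n\}$.
\item[(b)] The set $\{L<D_n\}$, where $X^n$ vanishes from $s$: here the zero stream must be admissible, i.e.\ $\beta_n+\eta_s\ge0$. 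I take this from the assumed non-emptiness of ${\cal X}^x_s(\beta_n,\overline X)$, which forces $\beta_n+\eta_s\ge0$ almost surely. Equivalently, on that set I can splice in an element of ${\cal X}^x_s(\beta_n,\overline X)$ truncated below $X$ via the closedness of Lemma~\ref{lemma:closednessContinuationSets}, but the zero tail suffices.
\end{itemize}
Non-participation constraints and \eqref{eq:noPotOutsideParticipation} are preserved under scaling by nonnegative factors.

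\emph{Convergence.} Since $D_n\to0$ in $L^1$, along every subsequence there is a further subsequence with $\lambda_n\to1$ almost surely. Then $X^n\to X$ almost surely with $0\le X^n\le X$, so $X^n\to X$ in $L^1$ by dominated convergence.

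\emph{Utility.} This is the delicate step and the main obstacle. Upper semicontinuity of ${\cal J}$ gives only the wrong inequality. My first attempt is the \emph{Monotone continuity} axiom: replace $\lambda_n$ by the monotone sequence $\mu_n:=\inf_{m\ge n}\lambda_m$. The corresponding streams increase almost surely to $X$, and $\mu_n X$ is still feasible for $\beta_n$ because it lies below $\lambda_n X$ and feasibility is monotone downward by the same pot comparison. Monotone continuity then yields ${\cal J}_s(\mu_nX\mid x)\to{\cal J}_s(X\mid x)$, and this convergence is what I use.

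To run this I need $\mu_n\to1$ almost surely for the whole sequence, not merely along subsequences. I will first pass to a subsequence with $\sum\|D_n\|_{L^1}<\infty$, which gives $D_n\to0$ almost surely and hence $\mu_n\to1$. For the full sequence I will instead define $\mu_n$ using $\sup_{m\ge n}$ of the normalised deficits. If that is not integrable enough, I will accept the subsequence version and recover the $\liminf$ statement for the full sequence by a subsequence-of-subsequence argument: any subsequence has a further subsequence on which the conclusion holds, and the conclusion involves only $\liminf$ of real numbers.
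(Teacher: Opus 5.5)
Your proof is correct and is essentially the paper's proof. Both arguments scale the tail of $X$ by an ${\cal F}^\infty_s$-measurable factor and use $R^{\lambda X}_u\le\lambda R^X_u$. Both take $\beta_n+\eta_s\ge0$ from non-emptiness of ${\cal X}^x_s(\beta_n,\overline X)$ to cover the set where the tail is killed. Both get the $\liminf$ from monotone continuity applied to $\bigl(1-\sup_{j\ge k}\varepsilon_{n_j}\bigr)X$ (your $\mu_kX$), together with monotonicity and a subsequence-of-subsequence argument.

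The only difference is cosmetic: you normalise the deficit $(\beta-\beta_n)^+$ by the cost $L=X_s+{\mathbb E}(R^X_{s+1}\mid{\cal F}^\infty_s)$ of $X$, whereas the paper normalises by the budget $\beta+\eta_s$, and this is harmless since $L\le\beta+\eta_s$. Feasibility of $\mu_kX$ is never needed; only $\mu_k\le\lambda_{n_k}$ is used.
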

\begin{proof}
	Define the ${\cal F}^\infty_s$-measurable random variables
	\[
	\varepsilon_n
	:=
	1\wedge\frac{(\beta-\beta_n)^+}{\beta+\eta_s},
	\]
	with the convention $0/0:=0$, and set
	$X^n_t:=(1-\varepsilon_n)X_t$ for $t\ge s$ and
	$X^n_t:=\overline X_t$ for $t<s$.

	Since $z\mapsto z^+$ is positively homogeneous and convex, and
	$\varepsilon_n$ is ${\cal F}^\infty_s$-measurable, backward
	induction through the defining recursion gives
	$R^{X^n}_u\le(1-\varepsilon_n)R^X_u$ for every $u>s$; in
	particular the conditions~\eqref{eq:noPotOutsideParticipation} are inherited from $X$.
	For the funding constraint, on the event
	$\{(\beta-\beta_n)^+\le\beta+\eta_s\}$,
	\[
	X^n_s+{\mathbb E}\bigl(R^{X^n}_{s+1}\,\big|\,{\cal F}^\infty_s\bigr)
	\le
	(1-\varepsilon_n)(\beta+\eta_s)
	=
	\beta+\eta_s-(\beta-\beta_n)^+
	\le
	\beta_n+\eta_s;
	\]
	on the complementary event $\varepsilon_n=1$, the left-hand side
	vanishes, and $\beta_n+\eta_s\ge0$ almost surely because
	${\cal X}_s^x(\beta_n,\overline X)$ is non-empty. Hence
	$X^n\in{\cal X}_s^x(\beta_n,\overline X)$.

	Since $(\beta-\beta_n)^+\to0$ in $L^1$, $\varepsilon_n\to0$ in
	probability, and $|X-X^n|\le\varepsilon_n X\le X$ with $X$
	integrable, so $X^n\to X$ in $L^1$ by dominated convergence.

	Finally, fix a subsequence; passing to a further subsequence we
	may assume $\varepsilon_n\to0$ almost surely. The streams
	$\tilde X^k:=\bigl(1-\sup_{j\ge k}\varepsilon_{n_j}\bigr)X$
	increase almost surely to $X$, so
	${\cal J}_s(\tilde X^k\mid x)\to{\cal J}_s(X\mid x)$ by the
	Monotone continuity axiom, while
	${\cal J}_s(X^{n_k}\mid x)\ge{\cal J}_s(\tilde X^k\mid x)$ by
	Monotonicity. Hence every subsequence has a further subsequence
	along which
	$\liminf{\cal J}_s(X^n\mid x)\ge{\cal J}_s(X\mid x)$, which gives
	the claim.
\end{proof}

\begin{lemma}[Continuity in budget]
	\label{lemma:valueContinuity}
	Assume that
	$
	\Prob
	$
	is compact. Suppose
	$
	(\beta_n,\eta'_n)\to(\beta,\eta')
	$
	in $L^1$ as in the stability clause of
	Definition~\ref{def:compactProblem}(ii), that the perturbed
	constraint sets are non-empty with values exceeding $-\infty$, and
	that the limiting value exceeds $-\infty$.
	Then the perturbed values converge to the limiting value.
\end{lemma}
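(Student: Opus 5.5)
We prove the two inequalities $\liminf_n V_s(\beta_n,\eta'_n)\ge V_s(\beta,\eta')$ and $\limsup_n V_s(\beta_n,\eta'_n)\le V_s(\beta,\eta')$ separately, writing $V_s(\beta_n,\eta'_n)$ for $V_s(\Prob(\eta'_n);x,\beta_n,\overline X)$. The lower bound is a feasibility-perturbation argument using Lemma~\ref{lemma:budgetPerturbation}. The upper bound is a compactness argument using the stability clause of Definition~\ref{def:compactProblem}(ii). We then upgrade subsequential convergence to convergence of the whole sequence by the usual subsequence-of-every-subsequence argument.

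\emph{Lower bound.} Since the limiting value exceeds $-\infty$, attainment in Definition~\ref{def:compactProblem}(ii) supplies an optimiser $X\in{\cal X}^x_s(\Prob(\eta');\beta,\overline X)$. We absorb the income perturbation into the budget: the pot recursion depends on $\eta'$ only through the terms $-\eta'_u$, so we set $\tilde\beta_n:=\beta_n-\sum_{u\ge s}{\mathbb E}\bigl((\eta'_u-\eta'_{n,u})^+\mid{\cal F}^\infty_s\bigr)$. Then $\tilde\beta_n\to\beta$ in $L^1$, and any stream funded from $\tilde\beta_n$ under $\eta'$ is also funded from $\beta_n$ under $\eta'_n$. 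This follows because $R^X$ is monotone in the income, and the shortfall in income can be carried as extra pot, by the carried-forward form of Remark~\ref{rem:carriedForward}. We then apply the scaling construction of Lemma~\ref{lemma:budgetPerturbation} to $X$ and $\tilde\beta_n$. It gives $X^n\le X$ with $X^n\to X$ in $L^1$ and $\liminf_n{\cal J}_s(X^n\mid x)\ge{\cal J}_s(X\mid x)$. Its proof uses nonemptiness only to ensure $\tilde\beta_n+\eta'_s\ge0$ on the event where $\varepsilon_n=1$, so we take $\varepsilon_n$ relative to $\beta+\eta'_s$ and handle that event with the zero stream, which is feasible because the perturbed sets are nonempty. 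Since each $X^n$ is feasible for the $n$-th problem, $V_s(\beta_n,\eta'_n)\ge{\cal J}_s(X^n\mid x)$, which gives the lower bound.

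\emph{Upper bound.} Pass to a subsequence attaining $\limsup_n V_s(\beta_n,\eta'_n)$. If this limsup equals $-\infty$ there is nothing to prove. Otherwise the lower bound already shows the perturbed values are eventually bounded below, so we may discard finitely many terms and assume $\inf_n V_s(\beta_n,\eta'_n)>-\infty$. Attainment then gives optimisers $X^n$ of the perturbed problems. The stability clause of Definition~\ref{def:compactProblem}(ii) yields a further subsequence with $X^n\to X^*$ in $L^1$, where $X^*$ is an optimiser of the limit problem. It remains to get
\[
\limsup_n{\cal J}_s(X^n\mid x)\le{\cal J}_s(X^*\mid x)=V_s(\beta,\eta').
\]
Two routes are available. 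The first is the Upper semicontinuity axiom applied directly to the $L^1$-convergent sequence. The second passes to an almost surely convergent subsequence and uses the Fatou property, Definition~\ref{def:compactProblem}(i). That route needs the $X^n$ to lie in a common redistributively feasible set, which we obtain by dominating $\beta_n$ with $\beta+|\beta_n-\beta|$, using Lemma~\ref{lemma:closednessContinuationSets}.

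\emph{Main obstacle.} The delicate step is the lower bound's transfer of the income perturbation into a budget perturbation, because Lemma~\ref{lemma:budgetPerturbation} is stated only for budget changes. If the carried-forward argument fails, we would instead rescale $X$ by $(1-\varepsilon_n)$ with $\varepsilon_n$ chosen, through the positive homogeneity of the pot recursion, to absorb the total discounted income shortfall. Convergence $\varepsilon_n\to0$ in probability would then follow from $\eta'_n\to\eta'$ in $L^1$, and the Monotone continuity argument in that lemma's proof carries over verbatim.
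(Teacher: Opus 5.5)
Your skeleton is the paper's: a lower bound from Lemma~\ref{lemma:budgetPerturbation} applied to an optimiser of the limit problem, an upper bound from attainment, the stability clause and the Fatou property along an almost surely convergent subsequence, and a subsequence-of-subsequences argument. The paper's proof, however, treats only the budget perturbation (it writes $V_s(x,\beta_n,\overline X)$ throughout). The one step you add, absorbing $\eta'_n$ into the budget, fails in the generality you claim. Your carried-forward argument correctly shows that a stream satisfying \eqref{eq:homogeneousBudget} from $\tilde\beta_n$ under $\eta'$ also satisfies it from $\beta_n$ under $\eta'_n$. But membership in ${\cal X}^x_s$ also requires \eqref{eq:noPotOutsideParticipation}. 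At a pre-entry state this forces the pot carried into the entry date to vanish. Consumption from entry on must then be financed from contributions received from entry on, so a post-entry shortfall cannot be prefunded from the time-$s$ budget at all. At a participating state with $\beta(x)+\eta'_s(x)=0$ (no current resources, positive later contributions), take $\beta_n=\beta$: any positive shortfall makes $\beta-\tilde\beta_n>0$. Your $\varepsilon_n$, normalised by $\beta+\eta'_s$, then equals $1$ for every $n$, so $X^n=0$. In Lemma~\ref{lemma:budgetPerturbation} this cannot happen, because nonemptiness of the perturbed set forces $\beta_n\ge\beta$ wherever $\beta+\eta_s=0$; an income shortfall leaves that set nonempty. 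Your fallback, rescaling by an ${\cal F}^\infty_s$-measurable factor, runs into the same two obstructions.

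Neither case can be repaired, because the income form of the statement is false there. Take ${\cal T}=\{0,1,2\}$, $\tau\equiv1$, $\tau_d\equiv2$, $\eta'_1\equiv1$, and ${\cal J}_0(X\mid x)={\mathbb E}_{{\mathbb Q}^\infty_x}[u(X_1)]$ with $u(z)=z^\rho/\rho$, $\rho<0$. For any $\beta\ge0$ the feasible set at $(0,x,\beta)$ is $\{X_1\le\eta'_1\}$, so the value is $u(1)$. Replace $\eta'_1$ by $\delta_n$ on a set of conditional probability $1/n$, with $u(\delta_n)=-n$. Then $\eta'_n\to\eta'$ in $L^1$, but the values are $(1-1/n)u(1)-1\to u(1)-1$. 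Taking $\tau\equiv0$, $\eta'_0=0$ and $\beta=0$ instead gives the participating case. So restrict the income absorption to participating states with $\beta(x)+\eta'_s(x)>0$, where your argument is correct, or prove only the budget version, as the paper in effect does.

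Two smaller points. First, attainment for the perturbed problems needs compactness of $\Prob(\eta'_n)$, not just of $\Prob$. Second, $\beta+|\beta_n-\beta|$ still depends on $n$, so it does not put the $X^n$ in the single set $\hat{\cal X}^x_s(\bar\beta,\overline X)$ that the Fatou property requires. Take instead $\bar\beta:=\sup_n\beta_n(x)$ at the fixed homogeneous state, together with a dominating income along a fast subsequence, or use your Upper semicontinuity route. The paper itself applies the Fatou property to these $X^n$ without addressing this point.
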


\begin{proof}
	
	We first show
	\[
	\liminf_n
	V_s(x,\beta_n,\overline X)
	\ge
	V_s(x,\beta,\overline X).
	\]
	Let
	$
	X^\circ
	$
	be optimal for
	$
	(s,x,\beta,\overline X).
	$
	By Lemma~\ref{lemma:budgetPerturbation}, there exist
	$
	Y^n\in{\cal X}_s^x(\beta_n,\overline X)
	$
	with
	$
	Y^n\to X^\circ
	$
	in $L^1$ and
	\[
	\liminf_n
	{\cal J}_s(Y^n\mid x)
	\ge
	{\cal J}_s(X^\circ\mid x)
	=
	V_s(x,\beta,\overline X).
	\]
	Since
	\[
	V_s(x,\beta_n,\overline X)
	\ge
	{\cal J}_s(Y^n\mid x),
	\]
	we obtain
	\[
	\liminf_n
	V_s(x,\beta_n,\overline X)
	\ge
	V_s(x,\beta,\overline X).
	\]
	
	We now prove the converse inequality for the limsup. Fix a
	subsequence along which $V_s(x,\beta_n,\overline X)$ converges to
	some limit $L$; by the inequality just proved,
	$L\ge V_s(x,\beta,\overline X)>-\infty$, so eventually
	$V_s(x,\beta_n,\overline X)>-\infty$ and optimisers $X^n$ for
	$(s,x,\beta_n,\overline X)$ exist by condition~(ii) of
	Definition~\ref{def:compactProblem}. The stability clause of the
	same condition provides a further subsequence with $X^n\to X^\circ$
	in $L^1$, where $X^\circ$ is an optimiser for
	$(s,x,\beta,\overline X)$. Passing to yet a further subsequence
	with almost-sure convergence, the Fatou property (condition~(i))
	gives
	\[
	V_s(x,\beta,\overline X)
	=
	{\cal J}_s(X^\circ\mid x)
	\ge
	\limsup_n
	{\cal J}_s(X^n\mid x)
	=
	L .
	\]
	This
	argument applies equally to any subsequence of $(\beta_n)_n$, since
	any such subsequence again converges to $\beta$ in $L^1$; hence every
	subsequence of $\bigl(V_s(x,\beta_n,\overline X)\bigr)_n$ has a
	further subsequence converging to a limit at most
	$V_s(x,\beta,\overline X)$. In particular, the subsequence realising
	$\limsup_n V_s(x,\beta_n,\overline X)$ has a further subsequence
	converging to this same limsup, forcing
	\[
	\limsup_n
	V_s(x,\beta_n,\overline X)
	\le
	V_s(x,\beta,\overline X).
	\]
	
	Combining the two inequalities proves the claim.
\end{proof}

We are now ready to prove Theorem~\ref{thm:regularMaximisers}.

\begin{proof}[Compact problems admit regular maximisers, Theorem~\ref{thm:regularMaximisers}]

Throughout, fix $(s,x,\beta,\overline X)$ and abbreviate
$V(b):=V_s(x,\beta+b,\overline X)$ where convenient.

{\em Part (i).} Condition (iii) of compactness provides a
measurable family of optimisers at every state at which the
supremum is attained. By the attainment clause of condition (ii) of
Definition~\ref{def:compactProblem}, the supremum is attained at
every state with a non-empty constraint set and value exceeding
$-\infty$; at states where the value equals $-\infty$
every feasible stream attains it. Hence the family is an arg max.
The same argument applies to a random budget, which is simply an
element of $L^1({\mathbb R},{\cal F}^\infty_s,{\mathbb Q}^\infty_x)$.

{\em Part (ii).} At homogeneous states the self-financing constraint set
is contained in the redistributively feasible set, so both assertions follow
from condition (iv) of compactness.

{\em Part (iv).} This is part (iii) of the no-satiation condition.

{\em Part (iii): strict monotonicity.} Let $c'>c$ with
${\cal X}^x_s(\beta+c,\overline X)\neq\emptyset$. If
$V(c)=-\infty$: the constraint set at $c$ is non-empty, so
$\beta(x)+c+H^x_s\ge0$ by feasibility, whence
$\beta(x)+c'+H^x_s>0$, and no-satiation (iii) gives
$V(c')>-\infty=V(c)$. Suppose
instead $V(c)>-\infty$ and let $X^\circ$ be an optimiser
(attainment, Definition~\ref{def:compactProblem}(ii)). Let $\rho^e$
denote the pot process of the window stream $e^s$ computed with
$\eta'\equiv0$, and set
$P_s:=e^s_s+{\mathbb E}(\rho^e_{s+1}\mid{\cal F}^\infty_s)$, which
is integrable and satisfies $P_s(x)\ge1$ at participating states.
Since the pot recursion is subadditive and positively homogeneous,
$R^{X^\circ+\lambda e^s}_u\le R^{X^\circ}_u+\lambda\rho^e_u$, so
with $\lambda:=(c'-c)/P_s(x)>0$ the stream
$X^\circ+\lambda e^s$ lies in
${\cal X}^x_s(\beta+c',\overline X)$. By no-satiation (i),
\[
V(c')
\ge
{\cal J}_s(X^\circ+\lambda e^s\mid x)
>
{\cal J}_s(X^\circ\mid x)
=
V(c).
\]

{\em Part (iii): right-continuity.} Let $b_n\downarrow b$ in the
domain and let $L:=\lim_nV(b_n)$, which exists by monotonicity and
satisfies $L\ge V(b)$. If $L=-\infty$ there is nothing to prove.
Otherwise, discarding finitely many terms, each $V(b_n)>-\infty$;
let $X^n$ be optimisers for $\beta+b_n$, so that
${\cal J}_s(X^n\mid x)=V(b_n)$ is bounded below. The stability
clause of condition (ii) of compactness, applied along
$\beta+b_n\to\beta+b$, yields a subsequence converging in $L^1$ to
an optimiser $X^\circ$ of $(s,x,\beta+b,\overline X)$; passing to a
further almost surely convergent subsequence, the Fatou property
gives
$V(b)={\cal J}_s(X^\circ\mid x)\ge\limsup_n{\cal J}_s(X^n\mid x)=L$.
Hence $V(b)=L$.

{\em Part (iii): continuity at interior points.} Let $c$ be
interior and $c_n\to c$; splitting into the subsequences above and
below $c$, right-continuity handles the former, and for
$c_n\uparrow c$ the two inequalities follow from
Lemma~\ref{lemma:valueContinuity} when $V(c)>-\infty$ and from
monotonicity when $V(c)=-\infty$ (interior points with
$V(c)=-\infty$ have $V(c_n)=-\infty$ for $c_n<c$).

{\em Part (iii): the supremum.} For any integrable stream $Y$
satisfying the consistency and non-participation constraints, the
quantity $Y_s+{\mathbb E}(R^Y_{s+1}\mid{\cal F}^\infty_s)$ is
integrable, so $Y\in{\cal X}^x_s(\beta+c,\overline X)$ once $c$
exceeds its conditional realization at $x$ less $\beta(x)+\eta_s(x)$;
streams violating those constraints are almost surely dominated,
after modification on the constrained dates, by streams of no
smaller utility (Monotonicity), the modification being irrelevant
to ${\cal J}_s(\,\cdot\mid x)$ at a participating $x$ only through
the constrained dates. We omit the routine details. Hence
$\sup_cV(\beta+c)$ equals the supremum of ${\cal J}_s(Y\mid x)$
over all integrable streams. By no-satiation (ii) this supremum is
attained by no integrable stream; in particular
$V(\beta+c)={\cal J}_s(X^\circ_c\mid x)$ never attains it.
\end{proof}

\subsection{Establishing compactness - proofs}
\label{appendix:duality}

Compactness (Definition~\ref{def:compactProblem}) is verified
fibrewise in $(t,x)$ throughout this section, so in every proof below
it suffices to treat the parameters of the relevant fibre as
constant; we write $\alpha,\beta,\rho,\theta$ accordingly without
further comment, using $\rho$ for the exponent on current consumption
in either family and $\alpha$ for the exponent governing
certainty-equivalent aggregation of continuation values in the
Epstein--Zin case.

We begin by developing the technical machinery we will need in
addition to the direct-method existence arguments to establish the
measurability properties of a compact problem.

\begin{lemma}[Measurability of the admissible correspondence]
	\label{lemma:measurableAdmissibleCorrespondence}
	Fix
	$
	s\in{\cal T}$,
	$
	\beta\in
	L^1({\mathbb R},{\cal F}_s^\infty,{\mathbb Q}^\infty)$,
	and
	\[
	\overline X
	\in
	L^1
	\Bigl(
	{\mathbb R}_{\ge0},
	({\cal F}_t^\infty)_{t<s},
	{\mathbb Q}^\infty
	\Bigr).
	\]
	Then the correspondence
	\[
	x
	\longmapsto
	{\cal A}_s^x(\beta,\overline X)
	\]
	has measurable graph.
\end{lemma}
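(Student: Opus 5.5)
The graph is the set of pairs $(x,\gamma)$ with $\gamma\in{\cal A}^x_s(\beta,\overline X)$, and the plan is to write it as a finite intersection of measurable subsets of $\Omega^M_s\sqcup\Omega^{M\times I}_s$ times the separable metric space $E:=\prod_{u\in{\cal T}_{\ge s}}L^1({\mathbb R}_{\ge0},({\cal F}^\infty_t)_{t\in{\cal T}},{\mathbb Q}^\infty)$ of Lemma~\ref{lemma:separabilityA}. Each defining condition of Definition~\ref{def:admissibleNoIdiosyncratic} is an almost-sure statement under the conditional measure ${\mathbb Q}^\infty_x$. I will therefore convert each one into the vanishing of a nonnegative quantity of the form
\[
F(x,\gamma):=\int\Phi(\gamma)\,d{\mathbb Q}^\infty_x,
\]
where $\Phi$ is a continuous map from $E$ into the nonnegative elements of $L^1({\mathbb Q}^\infty)$. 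Admissibility then becomes the simultaneous vanishing of finitely many such functions, and the graph is the corresponding finite intersection of zero sets.

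\emph{The constraints as $\Phi$-functionals.}
\begin{itemize}
\item For the consistency and non-participation conditions, take $\Phi=|\gamma_{u,t}-\overline X_t|$, $|\gamma_{u,t}-\gamma_{t,t}|$, or $\gamma_{u,t}\indicator_{E}$ with $E$ the relevant participation event. Each is Lipschitz from $E$ to $L^1$.
\item For the funding conditions \eqref{eq:budgetJ} and \eqref{eq:selfFinancing}, the difficulty is that $m_w$ is a conditional expectation under ${\mathbb Q}^\infty_x$, not under ${\mathbb Q}^\infty$. I will handle this with the disintegration identity used in the proof of Lemma~\ref{lemma:gluingConditionalContinuations}. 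Since $x$ is a time-$s$ state and $w\ge s$, one has ${\mathbb E}_{{\mathbb Q}^\infty_x}(Z\mid\iota{\cal F}^M_w)={\mathbb E}_{{\mathbb Q}^\infty}(Z\mid\sigma(\omega^\infty_s,\iota{\cal F}^M_w))$ on the fibre over $x$, for almost every $x$.
\item I will then replace $m_w$ by the single unconditioned operator $\tilde m_w:={\mathbb E}_{{\mathbb Q}^\infty}(\,\cdot\mid\sigma(\omega^\infty_s,\iota{\cal F}^M_w))$, chosen independently of $x$. It is an $L^1$-contraction, so each funding slack $D(\gamma)$ is a continuous affine map $E\to L^1({\mathbb Q}^\infty)$. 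Taking $\Phi=D(\gamma)^-$ expresses the funding condition as $\int D(\gamma)^-\,d{\mathbb Q}^\infty_x=0$.
\end{itemize}
The identification of $m_w$ with $\tilde m_w$ holds only for almost every $x$. I will state the lemma modulo a ${\mathbb Q}^\infty$-null set of states, or redefine the correspondence on that null set, which is harmless for the applications to measurable selection in this section.

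\emph{Joint measurability of $F$.} For fixed $\gamma$, the map $x\mapsto F(x,\gamma)$ is Borel by the defining property of the disintegration $(\mu_x)$. For fixed $x$, the map $\gamma\mapsto F(x,\gamma)$ is continuous from $E$: $\Phi$ is continuous into $L^1({\mathbb Q}^\infty)$, and integration against $\mu_x$ is continuous there at least for almost every $x$. So $F$ is a Carathéodory function on a standard Borel space times a separable metric space, and is therefore jointly measurable. The graph is the zero set of $\sum F_k$, hence measurable.

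\emph{Main obstacle.} The weak point is continuity of $\gamma\mapsto\int\Phi(\gamma)\,d\mu_x$ on $L^1({\mathbb Q}^\infty)$ for individual $x$: $\mu_x$ need not be dominated by ${\mathbb Q}^\infty$ with bounded density. To avoid needing this, I will not rely on continuity at fixed $x$. Instead I will fix a countable dense subset $D\subseteq E$. For a limit $\gamma_k\to\gamma$ in $E$ with $\|\gamma_k-\gamma\|$ summable, $\Phi(\gamma_k)\to\Phi(\gamma)$ ${\mathbb Q}^\infty$-a.s., hence $\mu_x$-a.s. for almost every $x$, and Fatou's lemma then controls $\int\Phi(\gamma)\,d\mu_x$ by the $\Phi(\gamma_k)$. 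This writes $F$ as a limit of Borel functions of $(x,\gamma)$ built from approximations in $D$, giving joint measurability directly.
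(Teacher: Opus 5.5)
Your skeleton (separability of the strategy space from Lemma~\ref{lemma:separabilityA}, joint measurability of finitely many constraint functionals, and a finite intersection) is the paper's, but you diverge at the step that carries the weight. The paper treats the consistency and non-participation conditions as closed subspaces of the $\gamma$-space alone. For the funding residuals $D_u(x,\gamma)$, computed with $m_u$ under ${\mathbb Q}^\infty_x$, it applies Carath\'eodory's theorem, asserting continuity in $\gamma$ at fixed $x$ because conditional expectation is continuous on $L^1$. You instead encode every condition as the vanishing of $\int\Phi(\gamma)\,d{\mathbb Q}^\infty_x$, with $\Phi$ Lipschitz into the nonnegative cone of $L^1({\mathbb Q}^\infty)$. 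This includes consistency and non-participation, which are also ${\mathbb Q}^\infty_x$-almost-sure and hence genuinely $x$-dependent. You also fix one $x$-independent version $\tilde m_w$ of the conditional expectation, and you replace fixed-$x$ continuity by approximation through a countable dense set. Your objection is well founded, and in fact understated. Whenever the law of the time-$s$ state is atomless, ${\mathbb Q}^\infty_x$ lives on a ${\mathbb Q}^\infty$-null fibre and is singular to ${\mathbb Q}^\infty$. Then $\gamma\mapsto\int\Phi(\gamma)\,d{\mathbb Q}^\infty_x$ is not even well defined on $L^1({\mathbb Q}^\infty)$-classes, let alone continuous, and the paper's Carath\'eodory step takes exactly this for granted. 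The paper's route buys brevity; yours buys a sound argument. The price is that what you prove measurable is a canonical version of the correspondence, and this is unavoidable: pointwise in $x$ the sets ${\cal A}^x_s(\beta,\overline X)$ are only determined up to $\gamma$-dependent null sets.

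Tighten two points. First, Fatou's lemma yields only $\int\Phi(\gamma)\,d{\mathbb Q}^\infty_x\le\liminf_k\int\Phi(\gamma_k)\,d{\mathbb Q}^\infty_x$, which does not exhibit $F$ as a limit. Use the summability instead: $S:=\sum_k|\Phi(\gamma_k)-\Phi(\gamma)|$ lies in $L^1({\mathbb Q}^\infty)$, so disintegration gives $\int S\,d{\mathbb Q}^\infty_x<\infty$ for almost every $x$, and hence the integrals converge. Take $n_k(\gamma)$ to be the least $n$ with $\|d_n-\gamma\|<2^{-k}$, which is Borel in $\gamma$. Then $\limsup_kF(x,d_{n_k(\gamma)})$ is jointly Borel and equals $F(\cdot,\gamma)$ almost everywhere for each $\gamma$.

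Second, for the same reason you cannot ``state the lemma modulo a null set of states''. The exceptional set in identifying $m_w$ with $\tilde m_w$ depends on $\gamma$, and uncountably many such sets need not have null union. So the $\tilde m_w$/limsup version should be taken as the definition of the correspondence at every $x$.

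Also add the functional $\int\beta^-\,d{\mathbb Q}^\infty_x$ to encode the convention that ${\cal A}^x_s$ is empty when ${\mathbb Q}^\infty_x(\beta<0)>0$. At collective states $x\in\Omega^M_s$, condition on $\omega^M_s$ rather than $\omega^\infty_s$.
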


\begin{proof}
	
	The admissible correspondence is determined by the consistency
	constraints \eqref{eq:consistencyCondition} and
	\eqref{eq:consistencyConditionRealised},
	the non-participation constraints
	\eqref{eq:nonParticipation}
	and
	\eqref{eq:noConsumptionNonParticipation},
	and the finitely many funding conditions~\eqref{eq:budgetJ} and \eqref{eq:selfFinancing}.

	The consistency and non-participation constraints define closed linear
	subspaces of the ambient product \(L^1\)-space and therefore Borel
	subsets.

	For $u\in{\cal T}_{>s}$ define the funding residuals
	\begin{gather*}
	D_s(x,\gamma)
	:=
	m_s(b)+m_s(\eta_s)-\sum_{s\le t\le T}m_s(\gamma_{s,t}),
	\\
	D_u(x,\gamma)
	:=
	\sum_{u\le t\le T}m_u(\gamma_{u-1,t})+m_u(\eta_u)
	-\sum_{u\le t\le T}m_u(\gamma_{u,t}),
	\end{gather*}
	where $m_u$ is computed under ${\mathbb Q}^\infty_x$.
	For fixed \(\gamma\), measurability of the disintegration
	\(x\mapsto{\mathbb Q}_x^\infty\)
	implies that each
	$x\mapsto D_u(x,\gamma)$
	is measurable.
	For fixed \(x\),
	conditional expectation is a continuous linear operator on \(L^1\),
	and therefore each
	$\gamma\mapsto D_u(x,\gamma)$
	is continuous.
	Since the ambient strategy space is separable
	(Lemma~\ref{lemma:separabilityA}),
	the Carath\'eodory theorem implies that each
	$(x,\gamma)\mapsto D_u(x,\gamma)$
	is jointly measurable, so
	$\{(x,\gamma):D_u(x,\gamma)\ge0\}$
	is measurable.

	Since there are only finitely many funding conditions, the graph of
	\[
	x
	\mapsto
	{\cal A}_s^x(\beta,\overline X)
	\]
	is a finite intersection of measurable sets and is therefore
	measurable.

	The same argument applies to the stream correspondences
	$x\mapsto{\cal X}^x_s(\beta,\overline X)$ and
	$x\mapsto\hat{\cal X}^x_s(\beta,\overline X)$: the maps
	$X\mapsto R^X_u$ and $X\mapsto\hat R^X_u$ are Lipschitz on $L^1$
	(Lemma~\ref{lemma:closednessContinuationSets}), so the
	corresponding funding constraints define jointly measurable sets
	in the same way.
\end{proof}

\begin{proposition}[Measurability]
	\label{prop:measurabilityFromL1}
	
	Let ${\cal J}$ be a family of preferences. Suppose that
	condition~(i) of Definition~\ref{def:compactProblem} (the Fatou
	property) holds, and that for every homogeneous continuation
	problem with non-empty constraint set and value exceeding
	$-\infty$, the supremum defining the value is attained. Then
	condition~(iii) of Definition~\ref{def:compactProblem} holds, in
	its parametrised form.
	
\end{proposition}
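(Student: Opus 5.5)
The plan is to obtain condition~(iii) of Definition~\ref{def:compactProblem} from a measurable selection theorem applied to the argmax correspondence, regarded as a correspondence in the joint parameter $p:=(y,b,\overline X)$. Here $y$ ranges over $\Omega^{M\times I}_s$, $b$ over ${\mathbb R}$ (the conditional realization of the budget), and $\overline X$ over the separable space $L^1({\mathbb R}_{\ge0},({\cal F}^\infty_t)_{t<s},{\mathbb Q}^\infty)$. The parameter space is then standard Borel, and streams live in the separable complete metric space $L^1({\mathbb R}_{\ge0},({\cal F}^\infty_t)_{t\in{\cal T}},{\mathbb Q}^\infty)$. Note that the fibre measure ${\mathbb Q}^\infty_y$ is a.c.\ with respect to ${\mathbb Q}^\infty$ on the relevant fibre, so we work in the ambient space and restrict.

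First, extend Lemma~\ref{lemma:measurableAdmissibleCorrespondence} to the parametrised stream correspondence $p\mapsto{\cal X}^y_s(b,\overline X)$. The funding residuals are built from the Lipschitz maps $X\mapsto R^X_u$ (Lemma~\ref{lemma:closednessContinuationSets}) together with conditional expectations under ${\mathbb Q}^\infty_y$. These are Carath\'eodory in $(p,X)$: continuous in $X$, and measurable in $p$ by measurability of the disintegration and linearity in $b$ and $\overline X$. Hence the graph is Borel. By Lemma~\ref{lemma:closednessContinuationSets} the values are closed.

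Second, the objective $(p,X)\mapsto{\cal J}_s(X\mid y)$ must be jointly measurable. For fixed $X$ it is measurable in $y$ by the Measurability axiom. In $X$ it is upper semicontinuous by the Upper semicontinuity axiom, and additionally Fatou-usc along a.s.\ limits in the feasible set. An upper semicontinuous Carath\'eodory-type function is jointly measurable: write it as a decreasing limit of the Lipschitz envelopes $\sup_Z\{{\cal J}(Z)-k\|X-Z\|\}$ over a countable dense set, or equivalently use the fact that superlevel sets in $X$ are closed and apply the standard lemma for normal integrands. Next, the value $V(p)=\sup_{X\in{\cal X}(p)}{\cal J}_s(X\mid y)$ is universally measurable by the projection theorem, and Borel modulo the null sets relevant to each conditioning measure. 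The attainment set $\{p:V(p)\text{ attained}\}$, which by hypothesis contains every $p$ with nonempty feasible set and $V>-\infty$, is therefore measurable. On the complementary set where $V=-\infty$, any measurable selection of the feasible set is an argmax.

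Third, and this is the main obstacle, we must produce the selection. The argmax correspondence $p\mapsto\{X\in{\cal X}(p):{\cal J}_s(X\mid y)\ge V(p)\}$ has measurable graph, being the intersection of a Borel graph with the superlevel set of a measurable function minus a measurable one. Its values are nonempty on the attainment set and closed, by usc of ${\cal J}$ in $X$ on closed feasible sets. The von Neumann--Aumann theorem then gives a universally measurable selection. Modifying it on null sets of the (countably many) relevant pushforward measures yields the jointly measurable family required in the parametrised form, defined at every attainment state. The delicate points are twofold. First, the Kuratowski--Ryll-Nardzewski theorem would give Borel selections but requires weak measurability of closed-valued maps, which we do not directly have; hence the fallback on universal measurability. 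Second, we must check that ``every attainment state'' survives the null-set modification. For the latter, I would select Borel on a Borel subset of full measure for each conditioning law, and simply refine the selection pointwise at the remaining states, using the Aumann selection there.
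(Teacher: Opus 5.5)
\paragraph{Verdict.} Your architecture is the paper's. You parametrise over $\Theta=\Omega^{M\times I}_s\times{\mathbb R}\times L^1$. You get a measurable graph for $\theta\mapsto{\cal X}^y_s(b,\overline W)$ from Carath\'eodory residuals built on the Lipschitz pot maps of Lemma~\ref{lemma:closednessContinuationSets}. You treat $(\theta,Z)\mapsto{\cal J}_s(Z\mid y)$ as an upper semicontinuous normal integrand and select from the argmax.

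Two differences from the paper are worth recording.
\begin{enumerate}[(1)]
\item You take upper semicontinuity in $Z$ directly from the Upper semicontinuity axiom (iii) of Definition~\ref{def:invariantPreferences}. The paper instead derives it on the feasible set from the Fatou hypothesis together with almost-sure closedness of ${\cal X}$. Your route is legitimate and does not use the Fatou hypothesis at all.
\item You work with universal measurability: the projection theorem for $V$, then a von Neumann--Aumann type selection over the universally measurable $\sigma$-algebra. The paper cites the measurable maximum theorem and Kuratowski--Ryll-Nardzewski from Castaing--Valadier. The graph-measurability forms of those results are for complete $\sigma$-algebras, so the paper's selector is likewise measurable only up to completion; your version is the more explicit one.
\end{enumerate}
Note also that, by the attainment hypothesis, the attainment set is exactly the projection of the Borel graph of ${\cal X}$, hence analytic.

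\paragraph{Step that fails as argued: joint measurability.} Measurability in $y$ plus upper semicontinuity in $Z$ on a separable space does not give joint measurability of $(y,Z)\mapsto{\cal J}_s(Z\mid y)$. On $[0,1]^2$, take $f(t,x):=\indicator_A(t)\indicator_{\{x=t\}}$ with $A$ non-measurable. It is upper semicontinuous in $x$ and Borel in $t$ for each fixed $x$, yet $\sup_x f(t,x)=\indicator_A(t)$.

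Your countable-dense Lipschitz envelopes fail for the same reason: they recover the upper semicontinuous hull of $f$ restricted to $D$, not $f$. Likewise, the ``standard lemma for normal integrands'' needs a measurable hypograph multifunction, not merely closed superlevel sets in $Z$.

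The paper's sentence ``since the strategy space is separable, $f$ is a normal integrand'' makes the same leap, but the framework supplies the fix. The Measurability axiom (i) of Definition~\ref{def:invariantPreferences} is stated for jointly measurable families $(W_r)_{r\in\Omega^{\can}}$. Compose a Borel isomorphism $\Omega^{\can}\to L^1$ with a jointly measurable version of $(r,\omega)\mapsto W_r(\omega)$; this shows that $(y,Z)\mapsto{\cal J}_s(Z\mid y)$ is jointly Borel.

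\paragraph{Unnecessary step: the null-set patching.} The conditioning laws are indexed by $(x,\beta,\overline X)$ and are uncountably many. So neither ``countably many pushforward measures'' nor a law-by-law Borel modification can be made uniform. Nothing of the sort is needed. A universally measurable selector defined at every point of the attainment set, composed with any Borel map $y\mapsto(y,\beta(y),\overline X)$, is measurable for the completion of every conditioning law.

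Relatedly, ${\mathbb Q}^\infty_y$ is singular, not absolutely continuous, with respect to ${\mathbb Q}^\infty$ whenever the fibre is null. The right justification for working in the ambient $L^1$ is simply that the problems are posed there, as in Lemma~\ref{lemma:separabilityA}.
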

\begin{proof}
	
	Fix a homogeneous continuation problem $(s,x,\beta,\overline X)$.
	
	\emph{Upper semicontinuity along $L^1$ limits.} Suppose
	$X^n\in{\cal X}_s^x(\beta,\overline X)$ with $X^n\to X$ in $L^1$,
	and let $L:=\limsup_n{\cal J}_s(X^n\mid x)$. By
	Lemma~\ref{lemma:closednessContinuationSets},
	$X\in{\cal X}_s^x(\beta,\overline X)$. If $L=-\infty$ there is
	nothing to prove; otherwise pass to a subsequence along which
	${\cal J}_s(X^n\mid x)\to L$ and, further, $X^n\to X$ almost
	surely. The Fatou property gives $L\le{\cal J}_s(X\mid x)$.

	\emph{Condition~(iii).} Define $C(y):={\cal X}_s^y(\beta,\overline X)$
	and $f(y,Z):={\cal J}_s(Z\mid y)$. By the attainment hypothesis,
	the maximum of $f(y,\cdot)$ over $C(y)$ is attained at every state
	at which $C(y)$ is non-empty with value exceeding $-\infty$, and
	trivially wherever the value is $-\infty$. By the
	measurability axiom for preferences, for every admissible strategy
	$Z$, $y\mapsto f(y,Z)$ is measurable, and
	$Z\mapsto f(y,Z)$ is upper semicontinuous for every compatible
	state $y$, by the upper-semicontinuity step above. Since
	the admissible strategy space is separable
	(Lemma~\ref{lemma:separabilityA}), $f$ is a normal integrand in the
	sense of Castaing and Valadier
	\cite[Chapter~VII]{CastaingValadier}. By
	Lemma~\ref{lemma:measurableAdmissibleCorrespondence}, the
	feasible-set correspondence $C$ has measurable graph. Therefore the
	measurable maximum theorem for upper semicontinuous normal
	integrands (see Castaing and Valadier
	\cite[Chapter~III]{CastaingValadier}) implies that
	$v(y)=\max_{Z\in C(y)}f(y,Z)$ is measurable and that the maximiser
	correspondence $F(y)=\argmax_{Z\in C(y)}f(y,Z)$ has measurable
	graph on the set of states at which the maximum is attained.
	The Kuratowski--Ryll--Nardzewski measurable selection theorem
	(Castaing and Valadier \cite[Chapter~III]{CastaingValadier}) yields a
	measurable selector $y\mapsto X^y\in F(y)$ on that set.

	For the parametrised form, enlarge the parameter space: let
	$\Theta:=\Omega^{M\times I}_s\times{\mathbb R}\times
	L^1({\mathbb R}_{\ge0},({\cal F}^\infty_t)_{t<s},{\mathbb Q}^\infty)$,
	a standard Borel space, with generic point
	$\theta=(y,b,\overline W)$, and set
	$C(\theta):={\cal X}^y_s(b,\overline W)$ and
	$f(\theta,Z):={\cal J}_s(Z\mid y)$. The funding residuals are
	jointly continuous in $(b,\overline W,Z)$ for fixed $y$ as they
	are built from conditional expectations, the Lipschitz pot maps of
	Lemma~\ref{lemma:closednessContinuationSets}, and the continuous
	concatenation map $(\overline W,Z)\mapsto\overline W\oplus Z$. They are also measurable in $y$ for fixed $(b,\overline W,Z)$ by
	measurability of the disintegration, so by Carath\'eodory's
	criterion the graph of $C$ is measurable in
	$\Theta\times L^1$. Likewise $f$ is measurable in $\theta$ for
	fixed $Z$, upper semicontinuous in $Z$ for fixed $\theta$, and
	therefore a normal integrand on the enlarged space. The measurable
	maximum theorem and the Kuratowski--Ryll-Nardzewski theorem, applied
	over $\Theta$ exactly as above, yield a selector
	$\theta\mapsto X^\theta$, jointly measurable on the set of
	parameters at which the maximum is attained. This
	verifies condition~(iii) in its parametrised form.
\end{proof}

For $u\in{\cal T}$, write
\[
H_u^x:=C_u(\eta)(x)={\mathbb E}_{{\mathbb Q}^\infty}\Bigl(\sum_{r=u}^T\eta_r\,\Big|\,\omega_u^\infty=x\Bigr)
\]
for the present value, as of $(u,x)$, of the remaining payment
stream: this is simply $C_u$, already defined in
Definition~\ref{def:classicalContinuationProblem}, applied to the
process $\eta$ in place of a consumption stream. As with $C_u(X)$, $H$
satisfies the recursion $H_u^x=\eta_u+E(H_{u+1}^{x'}\mid{\cal
F}_u^\infty)$.

By Remark~\ref{rem:CRRAFromEZ}, specialising
Definition~\ref{def:EZPreferences} to $\alpha_t^x=\rho_t^x$ gives the
\emph{CRRA comparison recursion}
\begin{gather*}
{\cal J}_t^{\rm CRRA}(X\mid x)
=
(1-\beta_t^x)\frac{X_t^{\rho_t^x}}{\rho_t^x}
+
\beta_t^x
E_x\!\left[{\cal J}_{t+1}^{\rm CRRA}(X\mid x)\,\middle|\,{\cal F}_t\right],
\\
{\cal J}_T^{\rm CRRA}(X\mid x)=\frac{X_T^{\rho_T^x}}{\rho_T^x},
\end{gather*}
using whichever $\rho_t^x,\beta_t^x$ are attached to the preference
system under consideration (Epstein--Zin or entropic): the classical,
discounted, time-additive expected-CRRA-utility recursion, with no
recursive certainty-equivalent distortion. We compactify this
recursion directly, without backward induction, in
Theorem~\ref{thm:CRRACompact} below; every other preference system
considered in this Appendix is then handled by \emph{comparison}
against it.

\subsubsection*{Compactness of the CRRA comparison recursion}

We verify condition~(i) of Definition~\ref{def:compactProblem} for
${\cal J}^{\rm CRRA}$ directly, for the whole horizon at once, working
in a strictly larger feasible set in which the entire present value
$H_s^x$ of future payments, not merely the payment
$\eta_s$ actually banked so far, may be spent freely against the
whole remaining stream. Condition~(i) is then transferred back to the
true, non-anticipating feasible set ${\cal X}_s^x(\beta,\overline X)$
using Lemma~\ref{lemma:closednessContinuationSets}, which was proved
without reference to any particular preference system, so requires no
further work here.

For $r\ge s$ write $\zeta_r^x:=Y_r^x/Y_s^x$ for the state-price
deflator from $s$ to $r$ (Definition~\ref{def:statePriceDensity}), so
that $\zeta_s^x=1$ and, by the tower property of conditional
expectation, $E_x[\zeta_r^xZ\mid{\cal F}_s]$ computes the time-$s$
price of an ${\cal F}_r$-measurable payoff $Z$.

\begin{definition}[Relaxed continuation problem]
\label{def:relaxedContinuationProblem}
For a homogeneous continuation problem $(s,x,\beta,\overline X)$, let
$\widetilde{\cal X}_s^x(\beta,\overline X)$ be the set of streams $X$
satisfying the consistency and non-participation
constraints~\eqref{eq:classicalConsistency}
and~\eqref{eq:classicalNonParticipation}, together with the single
aggregate budget constraint
\[
E_x\!\left[\sum_{r=s}^T\zeta_r^xX_r\,\middle|\,{\cal F}_s\right]
\le
\beta+H_s^x.
\]
\end{definition}

\begin{lemma}[The relaxed problem is a genuine relaxation]
\label{lemma:relaxedContainment}
At every homogeneous state,
${\cal X}_s^x(\beta,\overline X)
\subseteq\hat{\cal X}_s^x(\beta,\overline X)
\subseteq\widetilde{\cal X}_s^x(\beta,\overline X)$.
\end{lemma}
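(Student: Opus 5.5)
The two inclusions are separate, and I would prove them one after the other. The first, ${\cal X}_s^x\subseteq\hat{\cal X}_s^x$, is already stated after Definition~\ref{def:classicalContinuationProblem}, so the only work is to make it precise. The consistency and non-participation constraints coincide in the two sets. For the funding constraints, I show by backward induction on $u$ that $m_u(R^X_u)\ge\hat R^X_u$. At $u=T+1$ both sides vanish. For the inductive step, apply $m_u$ to the defining recursion of $R^X_u$ and use Jensen for the convex map $z\mapsto z^+$ to get $m_u(R^X_u)\ge(m_u(X_u)+m_u(R^X_{u+1})-m_u(\eta_u))^+$. Since $m_u(R^X_{u+1})=m_u(m_{u+1}(R^X_{u+1}))\ge m_u(\hat R^X_{u+1})$ by the tower property and the induction hypothesis, this is at least $\hat R^X_u$. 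Apply $m_s$ to \eqref{eq:homogeneousBudget}, using that $\beta$ is market-measurable at a homogeneous state, and \eqref{eq:redistributiveBudget} follows.

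For the second inclusion, $\hat{\cal X}_s^x\subseteq\widetilde{\cal X}_s^x$, I would pass to the carried-forward form (Remark~\ref{rem:carriedForward}, redistributive version). This gives a market-adapted pot process $(\hat P_u)$ with $\hat P_s$ replaced by $m_s(\beta)$ and, for every $u\ge s$,
\[
m_u(X_u)+m_u(\hat P_{u+1})\le\hat P_u+m_u(\eta_u).
\]
I multiply the inequality at date $u$ by $\zeta^x_u$ and take $E_x[\,\cdot\mid{\cal F}_s]$. The key identification is that under ${\mathbb Q}$ prices are plain expectations. Pricing via $\zeta$ is the same as pricing by the ${\mathbb Q}^\infty_x$-expectation (the deflator converts the physical ${\mathbb P}_x$ to ${\mathbb Q}$), so $E_x[\zeta^x_u m_u(Z)\mid{\cal F}_s]={\mathbb E}_{\mathbb Q}(Z\mid{\cal F}_s)$ for ${\cal F}_u$-measurable $Z$. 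This holds because $m_u$ conditions on a sub-$\sigma$-algebra of ${\cal F}_u$ and the tower property applies.

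Summing over $u=s,\dots,T$, the pot terms ${\mathbb E}_{\mathbb Q}(\hat P_{u+1}\mid{\cal F}_s)$ telescope, and nonnegativity lets me drop the terminal pot. This leaves ${\mathbb E}_{\mathbb Q}(\sum_{r\ge s}X_r\mid{\cal F}_s)\le\beta+{\mathbb E}_{\mathbb Q}(\sum_{r\ge s}\eta_r\mid{\cal F}_s)$. Two steps then finish the argument. First, the left side equals the deflated price $E_x[\sum\zeta^x_rX_r\mid{\cal F}_s]$. Second, the right side's remaining-contribution term is $H^x_s$ evaluated at the state $x$, which is $C_s(\eta)$. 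The consistency and non-participation constraints transfer verbatim.

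I expect the main obstacle to be bookkeeping about measures and conditioning rather than the inequality itself. First, $m_u$ is a ${\mathbb Q}^\infty_x$-conditional expectation onto market information while $\zeta$ is defined against ${\mathbb P}_x$, so I need to check carefully that the change of numéraire is consistent with the disintegration at $x$. Second, ${\mathbb E}(\,\cdot\mid{\cal F}_s)$ of a market-only quantity must agree with its value at the homogeneous state. If the direct identification proves awkward, I would instead state the relaxed budget under ${\mathbb Q}^\infty_x$, where the telescoping is immediate, and invoke Definition~\ref{def:statePriceDensity} only at the last step to rewrite it in deflator form.
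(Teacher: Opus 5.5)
Your proposal is correct and follows essentially the paper's route: the first inclusion is the pot comparison $m_u(R^X_u)\ge\hat R^X_u$ recorded after Definition~\ref{def:classicalContinuationProblem}, and the second applies time-$s$ pricing to the redistributive pot inequalities, telescopes, and rewrites the result with the deflator $\zeta^x$. One justification needs tightening. The identity ${\mathbb E}_{{\mathbb Q}^\infty_x}(m_u(Z)\mid{\cal F}^\infty_s)={\mathbb E}_{{\mathbb Q}^\infty_x}(Z\mid{\cal F}^\infty_s)$ requires ${\cal F}^\infty_s$ to lie, modulo ${\mathbb Q}^\infty_x$-null sets, inside $\iota^{\Omega^\infty}{\cal F}^M_u$. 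That holds because ${\cal F}^\infty_s$ is ${\mathbb Q}^\infty_x$-trivial at a homogeneous state, not because $m_u$ projects onto a sub-$\sigma$-algebra of ${\cal F}^\infty_u$; the latter fact is only what lets the date-$u$ deflator price $m_u(Z)$.
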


\begin{proof}
The first containment was noted after
Definition~\ref{def:classicalContinuationProblem}. For the second,
let $X\in\hat{\cal X}_s^x(\beta,\overline X)$. The recursion for the
redistributive pot gives
$m_u(X_u)+m_u(\hat R^X_{u+1})-m_u(\eta_u)\le\hat R^X_u$ for
$u\in{\cal T}_{>s}$, and the funding constraint
\eqref{eq:redistributiveBudget} gives the corresponding bound at $u=s$ with
$m_s(\beta)$ in place of $\hat R^X_s$. At a homogeneous state the
operators $m_u$ coincide with $E_x[\,\cdot\mid{\cal F}_u]$ under the
fibrewise convention of this Appendix. Applying
$E_x[\,\cdot\mid{\cal F}_s]$ to each bound, summing over
$u\in{\cal T}_{\ge s}$ and telescoping the pot terms, which are
nonnegative, yields
\[
E_x\!\left[\sum_{r=s}^T\zeta_r^xX_r\,\middle|\,{\cal F}_s\right]
\le
\beta
+
E_x\!\left[\sum_{r=s}^T\zeta_r^x\eta_r\,\middle|\,{\cal F}_s\right]
=
\beta+H_s^x,
\]
where the deflator identity converts conditional expectations under
${\mathbb Q}^\infty_x$ into deflated expectations under
${\mathbb P}_x$. Hence $X\in\widetilde{\cal
X}_s^x(\beta,\overline X)$.
\end{proof}

\begin{lemma}[Value bound and Fatou property for the CRRA recursion]
\label{lemma:relaxedTightness}
Suppose
\[
K_s^x:=E_x\!\left[\sum_{r=s}^T(\zeta_r^x)^{\rho_r^x/(\rho_r^x-1)}\,\middle|\,{\cal F}_s\right]<\infty.
\]
Then:
\begin{enumerate}[(a)]
\item there is a constant $c$, depending only on the exponents, with
\[
{\cal J}_s^{\rm CRRA}(X\mid x)\le\beta+H_s^x+cK_s^x
\qquad
\text{for every }
X\in\widetilde{\cal X}_s^x(\beta,\overline X);
\]
\item whenever $(X^n)_n\subseteq\widetilde{\cal
X}_s^x(\beta,\overline X)$ and $X^n\to X$ almost surely, we have
$X\in\widetilde{\cal X}_s^x(\beta,\overline X)$ and
\[
\limsup_n{\cal J}_s^{\rm CRRA}(X^n\mid x)
\le
{\cal J}_s^{\rm CRRA}(X\mid x).
\]
\end{enumerate}
\end{lemma}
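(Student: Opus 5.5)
Part (a) is a fibrewise Fenchel--Young (convex conjugate) estimate, and part (b) is a uniform-integrability argument that this estimate makes available. The CRRA recursion unrolls: by the tower property,
\[
{\cal J}_s^{\rm CRRA}(X\mid x)=E_x\Bigl[\sum_{r=s}^T w_r\,\frac{X_r^{\rho_r}}{\rho_r}\,\Big|\,{\cal F}_s\Bigr],
\]
where the weights $w_r\in(0,1]$ are products of the $\beta_t$ and $(1-\beta_t)$ and hence bounded. Everything below is done termwise, for each fixed $r$.

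\textbf{Part (a).} Fix $r$ and write $p:=\rho_r$, $q:=p/(p-1)$ for the conjugate exponent, so that $K_s^x$ is built from $(\zeta_r)^q$.

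When $p<0$, the utility term $w_rX_r^p/p$ is nonpositive, and the bound holds trivially because $\beta+H_s^x\ge0$ on a feasible set.

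When $0<p<1$, apply Young's inequality in the form $u(y)\le \zeta y+\tilde u(\zeta)$, where the conjugate is $\tilde u(\zeta)=c_p\,\zeta^{q}$ (this is the standard Legendre conjugate for CRRA, with $w_r$ absorbed into $c_p$). Taking $y=X_r$ and $\zeta=\zeta_r^x$, summing over $r$ and taking $E_x[\cdot\mid{\cal F}_s]$ gives
\[
{\cal J}_s^{\rm CRRA}\le E_x\Bigl[\sum_r\zeta_rX_r\,\Big|\,{\cal F}_s\Bigr]+cK_s^x .
\]
The aggregate budget constraint defining $\widetilde{\cal X}$ then bounds the first term by $\beta+H^x_s$, which is (a).

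\textbf{Part (b).} Membership of the limit is a Fatou argument. The constraints~\eqref{eq:classicalConsistency} and~\eqref{eq:classicalNonParticipation} pass to almost-sure limits. For the aggregate budget constraint, the integrand $\zeta_rX^n_r$ is nonnegative, so the conditional Fatou lemma gives the constraint for $X$. Hence $X\in\widetilde{\cal X}_s^x(\beta,\overline X)$.

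For the limsup inequality, treat the two signs of $p$ separately.

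When $p<0$, each utility term is nonpositive, so the reverse Fatou lemma for functions bounded above applies directly. Since $X^n_r\to X_r$ almost surely and $y\mapsto y^p/p$ is continuous into $[-\infty,0)$, it gives $\limsup_n E[\cdot]\le E[\limsup_n\cdot]$, which is the claim.

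When $0<p<1$, I need uniform integrability of $\{(X^n_r)^p\}_n$, and I expect this to be the main obstacle, since pure $L^1$-boundedness is not enough. The plan is to sharpen the Young step with a small parameter $\lambda\in(0,1]$:
\[
(X^n_r)^p\le\lambda\zeta_rX^n_r+c\,\lambda^{-q'}\zeta_r^{q},
\]
where $q'>0$ is the exponent produced by rescaling $\zeta\mapsto\lambda\zeta$ in the conjugate. Fix $\lambda$ and apply this on an arbitrary event $A$: the first term contributes at most $\lambda(\beta+H^x_s)$, uniformly in $n$ and in $A$. The second term has integral over $A$ tending to $0$ as the probability of $A$ tends to $0$, because $\zeta_r^q$ is integrable (this is exactly $K^x_s<\infty$). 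Choosing $\lambda$ small first and then the event small gives uniform integrability. An alternative route to the same conclusion is the de la Vallée-Poussin criterion applied to $(X^n_r)^p$ through H\"older's inequality with exponents $1/p$ and $1/(1-p)$, using the $L^1$ bound on $\zeta_rX^n_r$.

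Uniform integrability, combined with almost-sure convergence, lets Vitali's theorem give convergence of the conditional expectations of the utility terms. This actually yields equality in the limit, which is stronger than the required limsup inequality. Summing over the finitely many $r$ completes (b).

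A subtlety to check is that the above should be carried out under ${\mathbb P}_x$ conditioned on ${\cal F}_s$, that is, fibrewise at fixed $x$, where ${\cal F}_s$ is effectively trivial. Working fibrewise turns all the conditional statements above into unconditional ones.
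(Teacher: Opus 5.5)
Your proof is correct and follows the paper's argument. For (a) you use the reversed Fenchel--Young bound $z^{\rho}/\rho\le\zeta z-\zeta^{q}/q$ together with the relaxed budget, and for (b) you use conditional Fatou for membership of the limit, reverse Fatou at dates with $\rho_r<0$, and at dates with $0<\rho_r<1$ the $\lambda$-rescaled Young bound combined with integrability of $(\zeta_r^x)^{q_r}$. The only difference is packaging: you turn that estimate into uniform integrability and apply Vitali, which gives actual convergence of those terms, whereas the paper truncates at a level $M$ and applies bounded convergence plus reverse Fatou to the tail.
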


\begin{proof}
(a) Write $q_r:=\rho_r^x/(\rho_r^x-1)$. Since one of the conjugate
exponents $(\rho_r^x,q_r)$ always lies below $1$, Young's inequality
holds in the reversed form
$\zeta_r^xz\ge z^{\rho_r^x}/\rho_r^x+(\zeta_r^x)^{q_r}/q_r$
for every $z\ge0$, that is,
$z^{\rho_r^x}/\rho_r^x\le\zeta_r^xz-(\zeta_r^x)^{q_r}/q_r$. Summing
over $r$, taking $E_x[\,\cdot\mid{\cal F}_s]$, and using the budget
constraint defining $\widetilde{\cal X}_s^x(\beta,\overline X)$
gives the bound.

(b) The budget constraint passes to the almost-sure limit by Fatou's
lemma, and the consistency and non-participation constraints pass
directly, so $X\in\widetilde{\cal X}_s^x(\beta,\overline X)$. For the
utilities we argue date by date; the discount weights are harmless.
At a date with $\rho:=\rho^x_r<0$ the felicity $u(z):=z^{\rho}/\rho$
is nonpositive, and the reverse Fatou lemma gives
$\limsup_nE_x[u(X^n_r)]\le E_x[u(X_r)]$ directly.

At a date with $\rho\in(0,1)$, fix $\lambda>0$ and $M>0$. Young's
inequality in the reversed form, applied with the dual variable
$\lambda\zeta_r^x$, gives
$u(z)\le\lambda\zeta_r^xz+|q_r|^{-1}\lambda^{q_r}(\zeta_r^x)^{q_r}$.
Writing $A^n:=\{u(X^n_r)>M\}$,
\[
E_x[u(X^n_r)]
\le
E_x[u(X^n_r)\wedge M]
+\lambda E_x[\zeta_r^xX^n_r]
+|q_r|^{-1}\lambda^{q_r}E_x\bigl[(\zeta_r^x)^{q_r}\indicator_{A^n}\bigr].
\]
The first term converges to $E_x[u(X_r)\wedge M]\le E_x[u(X_r)]$ by
bounded convergence; the second is at most $\lambda(\beta+H_s^x)$ by
the budget constraint; and since $X^n_r\to X_r$ almost surely,
$\limsup_n\indicator_{A^n}\le\indicator_{\{u(X_r)\ge M\}}$
pointwise, so the reverse Fatou lemma applied to the integrable
dominator $(\zeta_r^x)^{q_r}$ bounds the limsup of the third term by
$|q_r|^{-1}\lambda^{q_r}\,\delta(M)$, where
$\delta(M):=E_x[(\zeta_r^x)^{q_r}\indicator_{\{u(X_r)\ge M\}}]\to0$
as $M\to\infty$ by dominated convergence. Taking $\limsup_n$, then
$M\to\infty$, then $\lambda\to0$, gives
$\limsup_nE_x[u(X^n_r)]\le E_x[u(X_r)]$.
\end{proof}

\begin{lemma}[Attainment and stability for the CRRA recursion]
\label{lemma:CRRAStability}
Suppose $K_s^x<\infty$ at every fibre. Then
$\Prob^{\rm CRRA}$ satisfies conditions (i) and (ii) of
Definition~\ref{def:compactProblem}.
\end{lemma}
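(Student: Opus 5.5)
We follow the direct method, working throughout in the relaxed feasible set $\widetilde{\cal X}_s^x(\beta,\overline X)$ of Definition~\ref{def:relaxedContinuationProblem}. Afterwards we return to the true sets ${\cal X}_s^x$ and $\hat{\cal X}_s^x$ using the containments of Lemma~\ref{lemma:relaxedContainment} together with the closedness of Lemma~\ref{lemma:closednessContinuationSets}.

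\emph{Condition (i).} Let $X^n\in\hat{\cal X}_s^x(\beta,\overline X)$ converge almost surely to $X$. These streams lie in $\widetilde{\cal X}_s^x(\beta,\overline X)$, so Lemma~\ref{lemma:relaxedTightness}(b) gives $\limsup_n{\cal J}^{\rm CRRA}_s(X^n\mid x)\le{\cal J}^{\rm CRRA}_s(X\mid x)$. Since the time-additive recursion is a sum of discounted date-by-date expectations, the date-wise bound of that lemma applies directly to the recursion, and we only need to track the positive weights $\beta_t^x$. Lemma~\ref{lemma:closednessContinuationSets} shows that the almost-sure limit stays in $\hat{\cal X}$, but (i) does not require this.

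\emph{Condition (ii), attainment.} Take a maximising sequence $X^n\in{\cal X}_s^x(\beta,\overline X)$ with finite value. Lemma~\ref{lemma:relaxedTightness}(a) shows the value is finite. The aggregate budget constraint bounds $E_x[\sum_r\zeta^x_rX^n_r]$, so the sequence is bounded in weighted $L^1$. We would apply Koml\'os' theorem, or the Delbaen--Schachermayer convex-combination lemma, to obtain forward convex combinations $\tilde X^n\in\operatorname{conv}(X^n,X^{n+1},\dots)$ converging almost surely to some $X^\circ$. The set ${\cal X}_s^x$ is convex: the pot recursion is built from $z\mapsto z^+$ and conditional expectations, so $R^X$ is convex in $X$, and the funding constraints are therefore convex. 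The CRRA recursion is concave because each felicity $z^\rho/\rho$ is concave and the aggregator is linear. Hence the $\tilde X^n$ are still feasible and maximising. Lemma~\ref{lemma:closednessContinuationSets} places $X^\circ$ in ${\cal X}_s^x$, and the Fatou property gives ${\cal J}^{\rm CRRA}_s(X^\circ\mid x)\ge V_s$.

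\emph{Condition (ii), stability.} Suppose $(\beta_n,\eta'_n)\to(\beta,\eta')$ in $L^1$, with optimisers $X^n$ whose values are bounded below. Applying the same Koml\'os step gives almost-sure convergence of convex combinations to a limit $X^\circ$, and closedness with varying budgets (Lemma~\ref{lemma:closednessContinuationSets}) shows $X^\circ$ is feasible for the limit problem. The Fatou property combined with the liminf half of Lemma~\ref{lemma:valueContinuity}, which uses only Lemma~\ref{lemma:budgetPerturbation}, shows that $X^\circ$ is optimal. The \textbf{main obstacle} is that the statement asks for a subsequence of the $X^n$ themselves converging in $L^1$, not of their convex combinations. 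Our plan is to use strict concavity of the felicities where $X$ is positive. If a subsequence were not Cauchy in measure, then the midpoints $\tfrac12(X^n+X^m)$ would be feasible for a slightly perturbed budget and would have utility strictly above the optimum by a uniform margin. Comparing with the value limit would then give a contradiction. Once we have convergence in measure, uniform integrability in $L^1$ follows from the $\zeta$-weighted budget bound and the moment hypothesis $K^x_s<\infty$ via H\"older, and this upgrades convergence in measure to $L^1$ convergence. Uniqueness of the optimiser, which also follows from strict concavity, then identifies the limit.
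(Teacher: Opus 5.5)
Your treatment of condition (i) and of attainment is the paper's argument: relaxed containment plus Lemma~\ref{lemma:relaxedTightness}(b), then Koml\'os on the $\zeta$-weighted sequence, convexity of the constraint set, concavity, almost-sure closedness and Fatou. Your strict-concavity midpoint idea for convergence in probability is also the paper's, except that the paper pairs $X^n$ with the fixed optimiser $X^\circ$ rather than with $X^m$.

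Two smaller points in that step:
\begin{itemize}
\item The concavity gap of $z^\rho/\rho$ at a fixed separation $\epsilon$ tends to $0$ when both arguments go to infinity. Your ``uniform margin'' therefore needs tightness, which Markov's inequality gives from the $L^1({\mathbb Q}^\infty_x)$ bound. The paper avoids this by pinning one argument to a compact set through $X^\circ$.
\item The comparison needs the limsup of the values at the midpoint budgets to be at most $V_s(x,\beta,\overline X)$. You must derive this from the Koml\'os--Fatou step along a subsequence. You cannot cite the limsup half of Lemma~\ref{lemma:valueContinuity}, because its proof uses the very stability clause you are proving.
\end{itemize}

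The genuine gap is the final upgrade from convergence in measure to $L^1$. Uniform integrability does not follow from the weighted budget bound and $K^x_s<\infty$ via H\"older.
\begin{itemize}
\item The budget gives only $\sup_n E_x[\zeta^x_rX^n_r]<\infty$, i.e.\ boundedness of $X^n_r$ in $L^1({\mathbb Q}^\infty_x)$.
\item $K^x_s$ is a moment of $\zeta$ alone and says nothing about the tails of $X^n_r$.
\item H\"older would need a bound on $X^n_r$ in some $L^p$ with $p>1$, and nothing in the hypotheses supplies one.
\end{itemize}
Streams of the form $a+c\,\indicator_{A_n}/{\mathbb Q}^\infty_x(A_n)$ with ${\mathbb Q}^\infty_x(A_n)\to0$ satisfy every a priori bound you invoke, uniformly in $n$, and are not uniformly integrable. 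So escaping mass has to be ruled out using optimality itself.

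The paper does this in four steps:
\begin{itemize}
\item By strict monotonicity of the felicities an optimiser wastes no resources. Every funding inequality binds, so $E_x[\sum_r\zeta^x_rX^n_r\mid{\cal F}_s]=\beta_n+H^x_s$, and likewise for $X^\circ$ at $\beta$.
\item Hence the total prices converge.
\item Datewise Fatou gives $\liminf_n E_x[\zeta^x_rX^n_r]\ge E_x[\zeta^x_rX^\circ_r]$ at each date. Combined with convergence of the totals, this forces $E_x[\zeta^x_rX^n_r]\to E_x[\zeta^x_rX^\circ_r]$ at every date.
\item Scheff\'e's lemma then turns convergence in probability of the nonnegative $\zeta^x_rX^n_r$ into $L^1$ convergence, which is exactly $L^1({\mathbb Q}^\infty_x)$ convergence of $X^n_r$.
\end{itemize}
Budget exhaustion by optimisers is the idea missing from your proposal.
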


\begin{proof}
{\em Condition (i).} By Lemma~\ref{lemma:relaxedContainment} the
redistributively feasible set is contained in the relaxed set, so this is
Lemma~\ref{lemma:relaxedTightness}(b).

{\em Attainment.} Let $(X^n)_n$ be a maximising sequence for a
problem with non-empty constraint set and value exceeding
$-\infty$. The relaxed budget constraint bounds
$(\zeta^x_rX^n_r)_n$ in $L^1$ for each of the finitely many dates
$r$, so Koml\'os' theorem \cite{Komlos1967}, applied successively at each date,
yields a subsequence whose Ces\`aro averages
$\overline X^n$ converge almost surely to some $X^\circ$. The
constraint set is convex so
$\overline X^n\in{\cal X}^x_s(\beta,\overline X)$, and almost-sure
closedness (Lemma~\ref{lemma:closednessContinuationSets}) gives
$X^\circ\in{\cal X}^x_s(\beta,\overline X)$. Concavity of
${\cal J}^{\rm CRRA}_s$ gives
${\cal J}^{\rm CRRA}_s(\overline X^n\mid x)$ at least the
corresponding Ces\`aro average of the utilities, which converges to
the value; the Fatou property (Lemma~\ref{lemma:relaxedTightness}(b))
bounds $\limsup_n{\cal J}^{\rm CRRA}_s(\overline X^n\mid x)$ by
${\cal J}^{\rm CRRA}_s(X^\circ\mid x)$. Hence $X^\circ$ attains the
value.

{\em Stability.} Let $(\beta_n,\eta'_n)\to(\beta,\eta')$ in $L^1$
with optimisers $X^n$ for the perturbed problems whose utilities are
bounded below; the income stream enters every constraint in the same
additive fashion as the budget, so we treat the two perturbations
together and, to lighten notation, suppress $\eta'_n$ below. First,
$V_s(x,\beta_n,\overline X)\to V_s(x,\beta,\overline X)$: the
liminf inequality follows from
Lemma~\ref{lemma:budgetPerturbation}, which uses no compactness,
and for the limsup one applies Koml\'os and the Fatou property to
the Ces\`aro averages of the optimisers exactly as in the
attainment step, the averaged budgets converging to $\beta$.
Attainment at $\beta$ (just proved) provides an optimiser
$X^\circ$; since the felicities are strictly concave and the
constraint sets convex, $X^\circ$ is the unique optimiser. The
midpoints $Z^n:=(X^n+X^\circ)/2$ are feasible for the budgets
$(\beta_n+\beta)/2\to\beta$, so
$\limsup_n{\cal J}^{\rm CRRA}_s(Z^n\mid x)\le\lim_nV_s(x,(\beta_n+\beta)/2,\overline X)=V_s(x,\beta,\overline X)$,
while concavity gives
${\cal J}^{\rm CRRA}_s(Z^n\mid x)\ge\tfrac12{\cal J}^{\rm
CRRA}_s(X^n\mid x)+\tfrac12{\cal J}^{\rm CRRA}_s(X^\circ\mid
x)+G^n$, where $G^n\ge0$ is the integrated concavity gap. Since the
right-hand side without $G^n$ converges to
$V_s(x,\beta,\overline X)$, we conclude $G^n\to0$, and strict
concavity of the felicities then forces $X^n\to X^\circ$ in
probability. Indeed, write
$g(a,b):=u\bigl(\tfrac{a+b}2\bigr)-\tfrac12u(a)-\tfrac12u(b)\ge0$
for the felicity gap at a date $r$, and suppose that, along a
subsequence,
${\mathbb Q}^\infty_x(|X^n_r-X^\circ_r|\ge\epsilon)\ge\delta$ for
some $\epsilon,\delta>0$. Choose a compact $K\subset[0,\infty)$,
contained in $(0,\infty)$ when $\rho_r^x<0$ with
${\mathbb Q}^\infty_x(X^\circ_r\in K)\ge1-\delta/2$. On
$\{|X^n_r-X^\circ_r|\ge\epsilon\}\cap\{X^\circ_r\in K\}$, of
probability at least $\delta/2$, one has $g(X^n_r,X^\circ_r)\ge c$
for a constant $c>0$ depending only on $K$, $\epsilon$ and the
exponent: $g$ is continuous and strictly positive off the diagonal
by strict concavity, so its infimum over
$\{b\in K,\,|a-b|\ge\epsilon,\,a\le A\}$ is positive for each $A$,
while as $a\to\infty$ it tends to $+\infty$ when
$\rho_r^x\in(0,1)$, since $2^{-\rho_r^x}>\tfrac12$, and to
$-u(b)/2$, bounded away from zero on $K$, when $\rho_r^x<0$. Up to
the positive discount weight at the date $r$, this bounds $G^n$
below along the subsequence, contradicting $G^n\to0$.

Finally, an optimiser wastes no resources: by strict monotonicity
of the felicities, every funding inequality binds, so
$E_x[\sum_r\zeta^x_rX^n_r\mid{\cal F}_s]=\beta_n+H^x_s$ and likewise
for $X^\circ$ at $\beta$. Taking expectations, the total prices
converge; together with datewise Fatou this forces
$E_x[\zeta^x_rX^n_r]\to E_x[\zeta^x_rX^\circ_r]$ at every date, and
Scheff\'e's lemma upgrades the convergence in probability of the
nonnegative $\zeta^x_rX^n_r$ to convergence in $L^1$; dividing by
the almost surely positive $\zeta^x_r$ and using the equivalence of
the measures gives $X^n\to X^\circ$ in $L^1$ under
${\mathbb Q}^\infty_x$ as well.
\end{proof}

\begin{theorem}[Compactness of the CRRA comparison recursion]
\label{thm:CRRACompact}
Suppose $K_s^x<\infty$ (Lemma~\ref{lemma:relaxedTightness}) at every
fibre $(s,x)$. Then $\Prob^{\rm CRRA}=({\cal J}^{\rm
CRRA},\tau,\tau_d,\eta)$ is compact. Since $K_s^x$ depends only on
the market and on $(\rho_t^x)$, not on $\eta$, and neither the
containment Lemma~\ref{lemma:relaxedContainment} nor the tightness
Lemma~\ref{lemma:relaxedTightness} uses any specific property of
$\eta$ beyond the standing assumptions, $\Prob^{\rm CRRA}$ is in fact
compact independent of $\eta$
(Definition~\ref{def:compactIndependentEta}).
\end{theorem}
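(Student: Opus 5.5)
The proof will check the four conditions of Definition~\ref{def:compactProblem} one by one for ${\cal J}^{\rm CRRA}$, using the lemmas just established. Conditions (i) and (ii) are exactly Lemma~\ref{lemma:CRRAStability}, which applies under the hypothesis $K_s^x<\infty$ at every fibre.

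Condition (iv) comes from Lemma~\ref{lemma:relaxedContainment} combined with Lemma~\ref{lemma:relaxedTightness}(a). At a homogeneous state, ${\cal X}\subseteq\hat{\cal X}\subseteq\widetilde{\cal X}$. On the relaxed set, ${\cal J}^{\rm CRRA}_s(\cdot\mid x)$ is bounded above by $\beta+H_s^x+cK_s^x$, and this bound is finite: $\beta$ and $H_s^x$ are conditional realizations of integrable quantities, and $K_s^x<\infty$ by hypothesis. So both suprema in (iv) are finite.

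Condition (iii) comes from Proposition~\ref{prop:measurabilityFromL1}. Its hypotheses are the Fatou property, which is condition (i), and attainment at every problem with a non-empty constraint set and value above $-\infty$, which is the attainment half of condition (ii). It then supplies the parametrised measurable selection. The only subtlety is that the proposition requires $y\mapsto{\cal J}^{\rm CRRA}_s(Z\mid y)$ to be measurable. This follows from the Measurability axiom, since the recursion is built from conditional expectations under the measurable disintegration.

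For independence of $\eta$, I would note that $K_s^x$ involves only the deflators $\zeta^x_r$ and the exponents $\rho^x_r$. The proofs of Lemmas~\ref{lemma:relaxedContainment}, \ref{lemma:relaxedTightness}, \ref{lemma:CRRAStability} and Proposition~\ref{prop:measurabilityFromL1} use only that the payment stream is nonnegative, integrable and vanishes outside participation, so the same argument applies verbatim to every admissible $\eta'$. The stability clause already allows perturbations of $\eta'$, and Lemma~\ref{lemma:CRRAStability} treats those together with the budget.

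The main obstacle is the uniform bookkeeping in the $\eta$-independence claim: the constants and the relaxed budget $\beta+H_s^x$ must be checked to depend on $\eta'$ only through integrable quantities. I would verify this directly: $H_s^x$ computed from $\eta'$ is again finite almost surely because $\eta'\in L^1$, and the constant $c$ in Lemma~\ref{lemma:relaxedTightness}(a) depends only on the exponents. So no step uses anything about $\eta'$ beyond the standing assumptions.
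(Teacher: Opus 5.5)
Your proposal is correct and follows the paper's proof essentially verbatim. Conditions (i) and (ii) come from Lemma~\ref{lemma:CRRAStability}; condition (iv) comes from the containment $\hat{\cal X}^x_s\subseteq\widetilde{\cal X}^x_s$ of Lemma~\ref{lemma:relaxedContainment} together with the bound $\beta+H^x_s+cK^x_s$ of Lemma~\ref{lemma:relaxedTightness}(a); and condition (iii) comes from Proposition~\ref{prop:measurabilityFromL1}, using the Fatou property and the attainment clause of (ii). Your extra paragraph on $\eta$-independence only spells out the justification that the paper places in the statement of the theorem itself.
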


\begin{proof}
Conditions~(i) and~(ii) are Lemma~\ref{lemma:CRRAStability}. For
condition~(iv), Lemma~\ref{lemma:relaxedContainment} and
Lemma~\ref{lemma:relaxedTightness}(a) give
${\cal J}^{\rm CRRA}_s(X\mid x)\le\beta+H^x_s+cK^x_s<\infty$ for
every $X\in\hat{\cal X}^x_s(\beta,\overline X)$. Condition~(iii)
follows from Proposition~\ref{prop:measurabilityFromL1}, whose
attainment hypothesis is the first clause of condition~(ii).
\end{proof}

\subsubsection*{Domination}

\begin{lemma}[Entropic certainty equivalent dominated by expectation]
\label{lemma:CEDomination}
Fix $t\in{\cal T}$ and a.e.\ $x$. For $\theta<0$, the entropic
certainty equivalent $M(Z):=\theta^{-1}\log E[e^{\theta Z}\mid{\cal
F}_t]$ is monotone ($Z\le Z'$ a.s.\ $\Rightarrow$ $M(Z)\le M(Z')$
a.s.) and satisfies $M(Z)\le E[Z\mid{\cal F}_t]$ a.s.
\end{lemma}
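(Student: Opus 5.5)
Both claims reduce to elementary properties of the exponential and logarithm together with conditional Jensen, so the plan is to verify them directly for $\theta<0$. Throughout, I will assume the conditional expectations are well defined. This holds whenever $e^{\theta Z}$ is integrable, which is automatic if $Z$ is bounded below. In the applications $Z$ is a continuation utility bounded below on the fibres considered, and elsewhere the value $-\infty$ is allowed.

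\emph{Monotonicity.} Suppose $Z\le Z'$ almost surely. Since $\theta<0$, the map $z\mapsto e^{\theta z}$ is decreasing, so $e^{\theta Z}\ge e^{\theta Z'}$ almost surely. Monotonicity of conditional expectation then gives
\[
E[e^{\theta Z}\mid{\cal F}_t]\ge E[e^{\theta Z'}\mid{\cal F}_t]>0 .
\]
Since $\log$ is increasing, applying it preserves this inequality. Multiplying by $\theta^{-1}<0$ reverses it, which yields $M(Z)\le M(Z')$ almost surely.

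\emph{Domination by expectation.} The function $z\mapsto e^{\theta z}$ is convex, so conditional Jensen gives
\[
E[e^{\theta Z}\mid{\cal F}_t]\ge e^{\theta E[Z\mid{\cal F}_t]} .
\]
Taking logarithms gives $\log E[e^{\theta Z}\mid{\cal F}_t]\ge\theta E[Z\mid{\cal F}_t]$. Dividing by $\theta<0$ reverses the inequality, which gives $M(Z)\le E[Z\mid{\cal F}_t]$.

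Some care is needed when $E[Z\mid{\cal F}_t]$ is not finite. If $E[Z\mid{\cal F}_t]=+\infty$ on some event, the bound holds trivially there. If $Z$ is not integrable from below, conditional Jensen should be applied to the truncations $Z\wedge n$, or to $Z\vee(-n)$ with a monotone limit. I expect this integrability bookkeeping to be the only real obstacle. The cleanest approach is to state the lemma for $Z$ bounded below, as it is used, and to pass to extended-real values by monotone convergence, using the monotonicity already established.
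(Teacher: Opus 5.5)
Your proof is correct and follows the paper's argument: conditional Jensen for the convex map $z\mapsto e^{\theta z}$, then logarithms and division by the negative $\theta$, with monotonicity coming from the two order reversals. On the integrability bookkeeping, which the paper omits, note that monotonicity needs no integrability at all (the conditional expectation of $e^{\theta Z}>0$ always exists in $(0,\infty]$), and for the bound the clean case split is on whether $E[e^{\theta Z}\mid{\cal F}_t]$ is finite (if not, $M(Z)=-\infty$ and the bound is trivial; if so, $Z^-$ is conditionally integrable and Jensen applies); this avoids assuming $Z$ bounded below, which fails for the felicities $X^{\rho}/\rho$ with $\rho<0$.
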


\begin{proof}
This is the standard fact that the entropic certainty equivalent is
dominated by the mean, an immediate consequence of Jensen's
inequality applied to the convex function $z\mapsto e^{\theta z}$;
see e.g.\ F\"ollmer and Schied \cite{follmerSchiedBook}. Monotonicity
of $M$ in $Z$ follows in the same way: the exponential is
order-preserving, and the outer transform (dividing by the negative
$\theta$, after taking logarithms) reverses the order a second time.
\end{proof}

\begin{proof}[Domination, Lemma~\ref{lemma:domination}]
\emph{Domination.} By backward induction on $t$. At $t=T$, ${\cal
J}_T(X\mid x)=X_T^{\rho_T^x}/\rho_T^x={\cal J}_T^{\rm CRRA}(X\mid x)$,
equality. Assume ${\cal J}_{t+1}(X\mid x')\le{\cal J}_{t+1}^{\rm
CRRA}(X\mid x')$ for every compatible $x'$. By monotonicity of
$M_t^x$, $M_t^x({\cal J}_{t+1}(X\mid\cdot))\le M_t^x({\cal
J}_{t+1}^{\rm CRRA}(X\mid\cdot))$, and by domination, $M_t^x({\cal
J}_{t+1}^{\rm CRRA}(X\mid\cdot))\le E_x[{\cal J}_{t+1}^{\rm
CRRA}(X\mid\cdot)\mid{\cal F}_t]$. Combining, multiplying by
$\beta_t^x>0$ and adding $(1-\beta_t^x)X_t^{\rho_t^x}/\rho_t^x$ to
both sides gives ${\cal J}_t(X\mid x)\le{\cal J}_t^{\rm CRRA}(X\mid
x)$, closing the induction.

\emph{Transfer.} Condition~(iv) transfers directly: by the
domination just proved,
${\cal J}_s\le{\cal J}^{\rm CRRA}_s$ on every redistributively feasible set, and
the right-hand side is bounded there by
Lemma~\ref{lemma:relaxedContainment} and
Lemma~\ref{lemma:relaxedTightness}(a). The remaining conditions are
not consequences of domination alone; they are verified for the
entropic family in
Theorem~\ref{thm:entropicDominationCompact} and for the Epstein--Zin
family in Theorem~\ref{thm:EZDuality}, by adapting the arguments of
Lemmas~\ref{lemma:relaxedTightness} and~\ref{lemma:CRRAStability} to
the respective recursions.
\end{proof}

\begin{theorem}[Compact entropic preferences via domination]
	\label{thm:entropicDominationCompact}

	Suppose that the preference system is entropic
	(Definition~\ref{def:entropicPreferences}), and that $K_s^x<\infty$
	at every fibre. Then $\Prob=({\cal J},\tau,\tau_d,\eta)$ is compact
	independent of $\eta$ (Definition~\ref{def:compactIndependentEta}):
	the hypothesis $K_s^x<\infty$ does not involve $\eta$, and neither
	does the domination argument of
	Lemma~\ref{lemma:domination}, so the conclusion holds
	for $\Prob^{\rm CRRA}$, and hence for $\Prob$, whatever $\eta$ is.

\end{theorem}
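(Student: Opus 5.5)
The statement has four parts, and I would verify the conditions of Definition~\ref{def:compactProblem} one at a time, fibrewise in $(t,x)$ as in the rest of this appendix. I would track $\eta$ only through the relaxed budget $\beta+H^x_s$, so the same argument works for every admissible $\eta'$.

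Condition~(iv) is already supplied by Lemma~\ref{lemma:domination}. Lemma~\ref{lemma:CEDomination} shows that the entropic certainty equivalent is monotone and dominated by conditional expectation, so ${\cal J}_s\le{\cal J}^{\rm CRRA}_s$. Lemmas~\ref{lemma:relaxedContainment} and~\ref{lemma:relaxedTightness}(a) then bound ${\cal J}^{\rm CRRA}_s$ on $\hat{\cal X}^x_s$ by $\beta+H^x_s+cK^x_s<\infty$. Condition~(iii) will follow at the end from Proposition~\ref{prop:measurabilityFromL1}, once the Fatou property and attainment are established.

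For the Fatou property~(i), I would argue by backward induction that if $X^n\to X$ almost surely, then $\limsup_n{\cal J}_t(X^n\mid x')\le{\cal J}_t(X\mid x')$ for almost every $x'$ at every date $t$. At $t=T$ this is continuity of $z\mapsto z^{\rho}/\rho$, in the extended sense when $\rho<0$. For the inductive step, the felicity term converges pointwise. For the certainty-equivalent term, write $Z^n:={\cal J}_{t+1}(X^n\mid\cdot)$; the inductive hypothesis gives $\limsup_n Z^n\le Z$. Since $\theta<0$, the variables $e^{\theta Z^n}\ge0$, so ordinary conditional Fatou gives
\[
\liminf_n E[e^{\theta Z^n}\mid{\cal F}_t]\ge E[e^{\theta\limsup_n Z^n}\mid{\cal F}_t]\ge E[e^{\theta Z}\mid{\cal F}_t].
\]
Taking logarithms and dividing by $\theta<0$ then yields $\limsup_n M^x_t(Z^n)\le M^x_t(Z)$. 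So the entropic case needs none of the Young-inequality tightness of Lemma~\ref{lemma:relaxedTightness}(b). Feasibility of the limit is Lemma~\ref{lemma:closednessContinuationSets}.

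For attainment in~(ii), I would first show by backward induction that ${\cal J}_t$ is concave. The entropic certainty equivalent is concave and monotone for $\theta<0$, and it is composed with concave continuation values. The Koml\'os argument of Lemma~\ref{lemma:CRRAStability} then applies verbatim. The relaxed budget bounds $\zeta^x_rX^n_r$ in $L^1$, the Ces\`aro averages remain feasible by convexity, they remain maximising by concavity, and the Fatou property just proved shows the almost-sure limit attains the value.

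The stability clause is the main obstacle. The convergence of values $V_s(x,\beta_n,\overline X)\to V_s(x,\beta,\overline X)$ goes through as in the CRRA case: Lemma~\ref{lemma:budgetPerturbation} gives the liminf, and Koml\'os plus Fatou gives the limsup. Uniqueness of the optimiser $X^\circ$ and the midpoint argument require strict concavity of ${\cal J}_s$ in the whole stream, not just in $X_s$. I would set $Z^n:=(X^n+X^\circ)/2$ and propagate the datewise felicity gaps $g(X^n_r,X^\circ_r)$ through the recursion. The tool is a strict-monotonicity estimate for the entropic operator: if $G\ge0$, then
\[
M(Z+G)-M(Z)\ge\theta^{-1}\log\bigl(1-E[w(1-e^{\theta G})\mid{\cal F}_t]\bigr),
\]
where $w:=e^{\theta Z}/E[e^{\theta Z}\mid{\cal F}_t]$ is the Esscher weight. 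This shows that an integrated gap $G^n\to0$ at date $s$ forces the date-$r$ gaps to tend to zero in probability under the Esscher-tilted measure, which is equivalent to ${\mathbb Q}^\infty_x$. The compact-set argument of Lemma~\ref{lemma:CRRAStability} then gives $X^n\to X^\circ$ in probability. Finally, strict monotonicity of the felicities makes every funding inequality bind, so total prices converge. Datewise Fatou and Scheff\'e's lemma then upgrade the convergence to $L^1$, exactly as in the CRRA proof.
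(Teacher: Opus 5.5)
Your decomposition matches the paper's proof: (iv) from Lemma~\ref{lemma:domination} via Lemma~\ref{lemma:CEDomination}; (i) by backward induction with conditional Fatou on the nonnegative integrands $e^{\theta Z^n}$; (ii) by concavity plus the Koml\'os, midpoint and Scheff\'e arguments of Lemma~\ref{lemma:CRRAStability}; and (iii) from Proposition~\ref{prop:measurabilityFromL1}. Two departures are both improvements.

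\emph{Condition (i).} The paper says the current-consumption terms are handled as in Lemma~\ref{lemma:relaxedTightness}(b). You observe that the date-$t$ felicity is evaluated at the time-$t$ state and so converges pointwise. Future felicities enter only through $M$, where $\theta<0$ puts Fatou on the favourable side. So no Young-inequality tightness is needed, and this is correct.

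\emph{Stability.} The paper claims the CRRA midpoint argument applies verbatim because of strict concavity in current consumption. But the entropic gap at date $s$ is not a discounted sum of expected datewise gaps. Something like your Esscher identity is genuinely required to carry strict concavity down the recursion, both for uniqueness of $X^\circ$ and for the midpoint step. It is in fact an equality, since $E[w\mid{\cal F}_t]=1$.

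One inference in that step does not follow as written, though it is repairable. Set $Z^n:=\tfrac12(X^n+X^\circ)$ and $A^n:=\tfrac12{\cal J}_{t+1}(X^n\mid\cdot)+\tfrac12{\cal J}_{t+1}(X^\circ\mid\cdot)$. Let $\Delta^n_{t+1}:={\cal J}_{t+1}(Z^n\mid\cdot)-A^n\ge0$ and $h_n:=1-e^{\theta\Delta^n_{t+1}}\in[0,1)$. In the midpoint argument the base point of your estimate is $A^n$, so the weight $w^n=e^{\theta A^n}/E[e^{\theta A^n}\mid{\cal F}_t]$ depends on $n$.

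Equivalence of each tilted measure to ${\mathbb Q}^\infty_x$ therefore does not turn $E[w^nh_n\mid{\cal F}_t]\to0$ into $h_n\to0$ in ${\mathbb Q}^\infty_x$-probability. A priori $X^n$ could make $A^n$ large exactly where the gaps sit, so that $w^n$ is negligible there. You need $w^n$ bounded away from zero in probability, uniformly in $n$. Then the bound ${\mathbb Q}(h_n>\epsilon)\le{\mathbb Q}(w^n<\delta)+E\bigl[\min\bigl(1,E[w^nh_n\mid{\cal F}_t]/(\delta\epsilon)\bigr)\bigr]$ finishes the step.

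The uniform bound is available from the paper's own estimates.
\begin{itemize}
\item \emph{Denominator.} By convexity of $z\mapsto e^{\theta z}$, it is at most $\tfrac12e^{\theta M({\cal J}_{t+1}(X^n\mid\cdot))}+\tfrac12e^{\theta M({\cal J}_{t+1}(X^\circ\mid\cdot))}$. At $t=s$ this is bounded because $\inf_n{\cal J}_s(X^n\mid x)>-\infty$ and the funding bound on $X^n_s$ caps the current felicity. At later dates the same bound holds in probability, by Markov's inequality applied to $e^{\theta{\cal J}_t(X^n\mid\cdot)}$.
\item \emph{Numerator.} It is bounded below in probability because ${\cal J}_{t+1}(X^n\mid\cdot)\le{\cal J}^{\rm CRRA}_{t+1}(X^n\mid\cdot)\le\lambda B^n_{t+1}+c_\lambda K_{t+1}$. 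Here $E[B^n_{t+1}]$ is bounded by the relaxed budget; this is exactly the envelope used in the proof of Lemma~\ref{lemma:EZFatou}.
\end{itemize}
This bound must be re-established at each date as you pass from $s$ to $r$.
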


\begin{proof}
The entropic certainty-equivalent operator satisfies
Lemma~\ref{lemma:CEDomination}, with $\theta=\theta_t^x<0$
(Definition~\ref{def:entropicPreferences}), so
Lemma~\ref{lemma:domination} gives condition~(iv).

For condition~(i), argue by backward induction through the
recursion. The current-consumption terms are handled exactly as in
Lemma~\ref{lemma:relaxedTightness}(b), and the certainty equivalent
satisfies the Fatou inequality
$\limsup_nM(Z_n)\le M(Z)$ whenever $\limsup_nZ_n\le Z$ almost
surely: Fatou's lemma applied to the nonnegative integrands
$e^{\theta Z_n}$ gives
$\liminf_nE[e^{\theta Z_n}\mid{\cal F}_t]\ge
E[\liminf_ne^{\theta Z_n}\mid{\cal F}_t]
\ge E[e^{\theta Z}\mid{\cal F}_t]$,
and dividing the logarithm by the negative $\theta$ reverses the
inequality.

For condition~(ii), the entropic functional is concave in $X$: the
certainty equivalent $M$ is concave and monotone for $\theta<0$, so
concavity propagates backward through the recursion, and the
functional is strictly concave in current consumption through the
CRRA felicity terms. The Koml\'os, midpoint and Scheff\'e arguments
of Lemma~\ref{lemma:CRRAStability} then apply verbatim, using
Lemma~\ref{lemma:relaxedTightness}(a) and the domination
${\cal J}\le{\cal J}^{\rm CRRA}$ for the value bound. Condition~(iii)
follows from Proposition~\ref{prop:measurabilityFromL1}.
\end{proof}

\subsubsection*{Epstein--Zin preferences}

For Epstein--Zin, ${\cal J}_t$ is built from the power
certainty-equivalent $M_t^x(Z)=(E_x[Z^{\alpha_t^x}\mid{\cal
F}_t])^{1/\alpha_t^x}$ not additively, but through the CES aggregator
\[
{\cal J}_t(X\mid x)
=
\Bigl((1-\beta_t^x)X_t^{\rho_t^x}+\beta_t^xM_t^x\bigl({\cal
J}_{t+1}(X\mid\cdot)\bigr)^{\rho_t^x}\Bigr)^{1/\rho_t^x}
\]
(Definition~\ref{def:EZPreferences}), homogeneous of degree $1$ in
$X$ rather than degree $\rho_t^x$ like ${\cal J}^{\rm CRRA}$:
Lemma~\ref{lemma:domination} does not apply directly,
since ${\cal J}_t$ and ${\cal J}_t^{\rm CRRA}$ scale differently in
$X$.

When $\rho_t^x>0$ at every fibre, this is repaired by an increasing
reparametrisation matching the two homogeneity degrees. Set
\[
W_t(X\mid x) := \frac{{\cal J}_t(X\mid x)^{\rho_t^x}}{\rho_t^x}.
\]
Raising the CES recursion to the power $\rho_t^x$ and dividing by
$\rho_t^x$ gives
\begin{gather*}
W_t(X\mid x)
=
(1-\beta_t^x)\frac{X_t^{\rho_t^x}}{\rho_t^x}
+
\beta_t^x\widetilde M_t^x\bigl(W_{t+1}(X\mid\cdot)\bigr),
\\
\widetilde M_t^x(W)
:=
\frac1{\rho_t^x}
\Bigl(
E_x\bigl[(\rho_t^xW)^{\alpha_t^x/\rho_t^x}\mid{\cal F}_t\bigr]
\Bigr)^{\rho_t^x/\alpha_t^x},
\end{gather*}
exactly the additively separable form required by
Lemma~\ref{lemma:domination}, with $\widetilde M_t^x$ in
place of $M_t^x$. (The fractional powers are well defined because
$\rho_t^xW_{t+1}={\cal J}_{t+1}^{\rho_t^x}>0$ always, ${\cal J}$ being
positive.)

\begin{lemma}[The reparametrised certainty equivalent is dominated]
\label{lemma:EZReparametrisedDomination}
Suppose $\rho_t^x>0$ and $\alpha_t^x\le\rho_t^x$. Then $\widetilde
M_t^x$ is monotone, and $\widetilde M_t^x(W)\le E_x[W\mid{\cal F}_t]$
a.s., for every $W$ with $\rho_t^xW>0$ a.s.
\end{lemma}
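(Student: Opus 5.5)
The plan is to reduce both claims to elementary facts about the power function $z\mapsto z^{p}$ with $p:=\alpha_t^x/\rho_t^x$, after the substitution $V:=\rho_t^xW>0$. In terms of $V$, since $\rho_t^x>0$,
\[
\widetilde M_t^x(W)=\frac1{\rho_t^x}\bigl(E_x[V^{p}\mid{\cal F}_t]\bigr)^{1/p},
\qquad
E_x[W\mid{\cal F}_t]=\frac1{\rho_t^x}E_x[V\mid{\cal F}_t].
\]
Both target statements are therefore unchanged if we drop the positive factor $1/\rho_t^x$. So the whole lemma amounts to showing that the power mean $N(V):=(E_x[V^p\mid{\cal F}_t])^{1/p}$ is monotone and bounded by the conditional mean. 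The hypotheses $\alpha_t^x\le\rho_t^x$ and $\alpha_t^x\ne0$ give $p\le1$ and $p\ne0$, and the case $p=1$ is trivial equality.

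\emph{Monotonicity.} I would split on the sign of $p$.
\begin{itemize}
\item If $0<p<1$: both $z\mapsto z^p$ and $z\mapsto z^{1/p}$ are increasing on $(0,\infty)$, and conditional expectation is monotone. Hence $V\le V'$ implies $N(V)\le N(V')$.
\item If $p<0$: $z\mapsto z^p$ is decreasing, so $V\le V'$ gives $E_x[V^p\mid{\cal F}_t]\ge E_x[V'^p\mid{\cal F}_t]$. Then $z\mapsto z^{1/p}$ is also decreasing and reverses the order a second time.
\end{itemize}
This mirrors the double-reversal argument of Lemma~\ref{lemma:CEDomination}. Monotonicity in $W$ then follows because $W\mapsto\rho_t^xW$ is increasing.

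\emph{Domination.} The bound $N(V)\le E_x[V\mid{\cal F}_t]$ is the conditional power-mean inequality (Jensen). Again I would split on $p$.
\begin{itemize}
\item If $0<p<1$: $z\mapsto z^p$ is concave, so conditional Jensen gives $E_x[V^p\mid{\cal F}_t]\le(E_x[V\mid{\cal F}_t])^p$. Applying the increasing map $z\mapsto z^{1/p}$ yields the claim.
\item If $p<0$: $z\mapsto z^p$ is convex on $(0,\infty)$, so $E_x[V^p\mid{\cal F}_t]\ge(E_x[V\mid{\cal F}_t])^p$. Applying the decreasing map $z\mapsto z^{1/p}$ reverses this into the claim.
\end{itemize}

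The only step needing care is integrability and the use of regular conditional versions. Conditional Jensen for nonnegative or possibly non-integrable variables should be applied fibrewise to the regular conditional law, which exists because the spaces are standard. Values $+\infty$ or $0$ of the conditional expectations must be handled with the conventions $\infty^{1/p}$ and $0^{1/p}$. For $p<0$, if $E_x[V^p\mid{\cal F}_t]=\infty$ then $N(V)=0$ and both claims hold trivially. If $E_x[V\mid{\cal F}_t]=\infty$ the domination bound is vacuous. I expect this bookkeeping of extended values to be the only obstacle; the inequalities themselves are standard.
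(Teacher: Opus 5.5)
Your proposal is correct and follows essentially the paper's route. Both arguments substitute $V=\rho_t^xW$ and reduce the lemma to monotonicity of the conditional power mean with exponent $p=\alpha_t^x/\rho_t^x\le1$, $p\ne0$. The paper then obtains domination by citing the classical power-mean inequality $M_p\le M_1$, which is exactly what your case-split conditional Jensen argument proves; your extended-value bookkeeping is a harmless addition that the paper omits.
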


\begin{proof}
Write $V:=\rho_t^xW>0$ and, for $p\ne0$,
$M_p(V):=(E_x[V^p\mid{\cal F}_t])^{1/p}$, the fibrewise power mean of
$V$ with exponent $p$. Power means are monotone increasing in $V$
(pointwise), for every real $p\ne0$: for $p>0$ this is immediate, and
for $p<0$ it follows because $z\mapsto z^p$ and $t\mapsto t^{1/p}$ are
each decreasing, so the composition is increasing. Power means are
also monotone increasing in $p$ itself (the classical power-mean
inequality).

Since $\widetilde M_t^x(W)=(\rho_t^x)^{-1}M_{\alpha_t^x/\rho_t^x}(V)$
and $E_x[W\mid{\cal F}_t]=(\rho_t^x)^{-1}M_1(V)$, and $\rho_t^x>0$,
both claims transfer directly from the corresponding facts about
$V\mapsto M_p(V)$: monotonicity of $\widetilde M_t^x$ in $W$ from
monotonicity of $M_{\alpha_t^x/\rho_t^x}$ in $V$, and $\widetilde
M_t^x(W)\le E_x[W\mid{\cal F}_t]$ from $M_{\alpha_t^x/\rho_t^x}(V)\le
M_1(V)$, valid since $\alpha_t^x/\rho_t^x\le1$ (dividing
$\alpha_t^x\le\rho_t^x$ by the positive $\rho_t^x$).
\end{proof}

\begin{lemma}[Fatou property for the reparametrised recursion]
\label{lemma:EZFatou}
Suppose $\rho_t^x>0$ and $\alpha_t^x\le\rho_t^x$ at every fibre, and
$K_s^x<\infty$. Whenever
$(X^n)_n\subseteq\widetilde{\cal X}_s^x(\beta,\overline X)$ and
$X^n\to X$ almost surely,
\[
\limsup_nW_s(X^n\mid x)\le W_s(X\mid x).
\]
\end{lemma}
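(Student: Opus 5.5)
The plan is a backward induction on $t$ from $T$ down to $s$, proving at each date the conditional statement
\[
\limsup_n W_t(X^n\mid\cdot)\le W_t(X\mid\cdot)\qquad\text{almost surely},
\]
in the same pattern as the entropic case of Theorem~\ref{thm:entropicDominationCompact}. Each step of the recursion for $W_t$ has two parts: a current-consumption term $(1-\beta_t^x)X_t^{\rho_t^x}/\rho_t^x$ and a certainty-equivalent term $\beta_t^x\widetilde M_t^x(W_{t+1})$.

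The current-consumption terms have $\rho_t^x\in(0,1)$. They are exactly the terms treated in Lemma~\ref{lemma:relaxedTightness}(b), and that argument will be reused verbatim. It uses the reversed Young inequality with dual variable $\lambda\zeta_r^x$, the relaxed budget constraint, and the integrable dominator $(\zeta_r^x)^{q_r}$ supplied by $K_s^x<\infty$. These give uniform integrability of the positive parts $(X^n_r)^{\rho}$ along almost surely convergent sequences, and hence the Fatou inequality at each date.

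The certainty-equivalent step needs more care. Set $p:=\alpha_t^x/\rho_t^x\le1$ and write $\widetilde M_t^x(W)=\rho^{-1}M_p(\rho W)$ with $V_n:=\rho_t^xW_{t+1}(X^n)>0$. I will split into cases according to the sign of $p$.

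\emph{Case $p<0$.} The map $z\mapsto z^p$ is decreasing, so $\limsup V_n\le V$ gives $\liminf V_n^p\ge V^p$. Fatou's lemma for nonnegative integrands then gives $\liminf E[V_n^p\mid{\cal F}_t]\ge E[V^p\mid{\cal F}_t]$. Since $t\mapsto t^{1/p}$ is decreasing, this yields $\limsup M_p(V_n)\le M_p(V)$. No integrability is needed, exactly as in the entropic case.

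\emph{Case $0<p\le1$.} Here a domination argument is needed to apply the reverse Fatou lemma to $V_n^p$. Monotonicity of the recursion and Lemma~\ref{lemma:EZReparametrisedDomination} give, by induction, $W_{t+1}(X^n)\le W^{\rm CRRA}_{t+1}(X^n)$. Because $p\le1$, the quantity $V_n^p\le 1+V_n$ is controlled by the CRRA continuation value. That value is a conditional expectation of sums of $(X^n_r)^{\rho}$ terms, which the Young/dual bound with the dominator $(\zeta_r^x)^{q_r}$ makes uniformly integrable. In practice I will carry through the induction the stronger statement that the family $\{(X^n_r)^{\rho_r}\}_n$ is uniformly integrable at each date (obtained via the $\lambda$, $M$ truncation of Lemma~\ref{lemma:relaxedTightness}(b)) and that $W_{t+1}(X^n)$ is dominated by a uniformly integrable family. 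With that in hand, the reverse Fatou lemma applies to $V_n^p$, and monotonicity of $t\mapsto t^{1/p}$ closes the step.

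Combining the two parts with the nonnegative weights $\beta_t^x$ and $1-\beta_t^x$, and using that $\limsup$ is subadditive, completes the induction step. Taking $t=s$ gives the lemma.

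I expect the main obstacle to be the case $0<p\le1$, specifically establishing a genuinely uniformly integrable dominator for the continuation values $W_{t+1}(X^n)$ from the single aggregate budget constraint of $\widetilde{\cal X}_s^x$. If the truncation argument does not give uniform integrability directly, I would fall back on a de la Vallée-Poussin type bound. This would use $K_s^x<\infty$ together with the power-mean domination $M_p\le M_1$ to bound $E[\sup_{\text{tail}}]$ terms date by date.
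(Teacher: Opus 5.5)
Your case $p<0$ is sound, since it is exactly the conditional-Fatou step the paper uses for $\theta<0$. The case $0<p\le1$ fails, and not just for want of a better estimate. The statement you propose to carry, $\limsup_n W_t(X^n\mid\cdot)\le W_t(X\mid\cdot)$ almost surely at intermediate dates $s<t<T$, is false in general.

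The relaxed budget constraint only prices the stream at the single fibre $x$ at time $s$. The later conditional budgets $E_x[\sum_{r\ge t}\zeta_r^xX^n_r\mid{\cal F}_t]$ are bounded in $L^1$ uniformly in $n$, but not almost surely. Uniform integrability of $V_n^p$ therefore only gives $E_x[(V_n^p-V^p)^+\mid{\cal F}_t]\to0$ in $L^1$, not almost surely.

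Here is a concrete counterexample. Take $s=T-2$ and $\zeta\equiv1$ (so $K_s^x<\infty$ trivially). Let ${\cal F}_{T-1}=\sigma(U_1)$ and ${\cal F}_T=\sigma(U_1,U_2)$, with $U_1,U_2$ independent uniforms. Set $X^n_r=X_r=1$ for $r<T$, $X_T=1$, and $X^n_T=1+a_n\indicator_{\{U_2\le b_n\}}\indicator_{\{U_1\in I_n\}}$. Here $(I_n)$ is a typewriter sequence of intervals with $|I_n|\to0$ covering every point infinitely often, $a_n:=|I_n|^{-1/(1-\alpha)}$, $b_n:=a_n^{-\alpha}$, and $0<\alpha\le\rho<1$.

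Then $X^n\to X$ almost surely and $E[\sum_rX^n_r]=4$ for every $n$, so all $X^n$ lie in one relaxed set. Yet on $\{U_1\in I_n\}$ one has $E[(X^n_T)^\alpha\mid U_1]\ge 2-b_n$. Hence $\limsup_nW_{T-1}(X^n)>W_{T-1}(X)$ at every point. In particular no integrable dominator of $\sup_nV_n^p$ can exist, so your de~la~Vall\'ee-Poussin fallback cannot rescue the argument either.

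The missing idea is to make the inductive statement an $L^1$ one, as the paper does. Show $E_x[\Gamma^n_t]\to0$ backward in $t$, where $\Gamma^n_t:=(W_t(X^n\mid\cdot)-W_t(X\mid\cdot))^+$.

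\begin{itemize}
\item \emph{Current-consumption part.} It contributes $E_x[(u(X^n_t)-u(X_t))^+]\to0$ by the truncation argument of Lemma~\ref{lemma:relaxedTightness}(b). In your almost-sure formulation these terms converge pointwise and need nothing; uniform integrability matters only once they are integrated.
\item \emph{Certainty-equivalent part, $0<\theta\le1$.} Write $V^n\le V+G_n$ with $G_n:=\rho_t^x\Gamma^n_{t+1}$. Subadditivity of $z\mapsto z^\theta$ and conditional Jensen give $E_x[(V^n)^\theta\mid{\cal F}_t]\le E_x[V^\theta\mid{\cal F}_t]+(E_x[G_n\mid{\cal F}_t])^\theta$.
\item \emph{Refinement.} The $L^1$ hypothesis lets you refine any subsequence so that $E_x[G_n\mid{\cal F}_t]\to0$ almost surely. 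The case $\theta<0$ goes through along the same refinement, using $G_n\to0$ almost surely.
\item \emph{Upgrade to $L^1$.} The positive parts of the certainty-equivalent gaps then vanish almost surely along the refinement. They are dominated by the uniformly integrable family $\widetilde M_t^x(W_{t+1}(X^n\mid\cdot))$, which is your envelope $\lambda B^n_t+c_\lambda K_t$ from reversed Young and Lemma~\ref{lemma:EZReparametrisedDomination}. Vitali then gives $L^1$ convergence, and since the subsequence was arbitrary, it holds for the whole sequence.
\end{itemize}

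Only at the initial date, where one conditions on the single state $x$, does this become the pointwise inequality of the lemma. Your uniform-integrability estimate is the right ingredient; it just has to propagate $L^1$ convergence of the gaps rather than an almost-sure $\limsup$.
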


\begin{proof}
Write, per fibre, $\theta:=\alpha_t^x/\rho_t^x\le1$,
$u(z):=z^{\rho_t^x}/\rho_t^x$, and
$\Gamma^n_t:=\bigl(W_t(X^n\mid\cdot)-W_t(X\mid\cdot)\bigr)^+$. We
show $E_x[\Gamma^n_t]\to0$ backward in $t$; the inductive step
applied once more, with the conditioning at $x$ in place of
${\cal F}_t$, then gives the claim.

Two preliminary facts. First, the reversed Young inequality
$u(z)\le\lambda\zeta_r^xz+c_\lambda(\zeta_r^x)^{q_r}$, with
$c_\lambda:=\max_r|q_r|^{-1}\lambda^{q_r}$, together with the
domination
$\widetilde M^x_r\le E_x[\,\cdot\mid{\cal F}_r]$
(Lemma~\ref{lemma:EZReparametrisedDomination}), gives by backward
induction the envelope
\begin{gather*}
W_t(X^n\mid\cdot)
\le
\lambda B^n_t+c_\lambda K_t,
\\
B^n_t:=E_x\Bigl[\sum_{r\ge t}\zeta_r^xX^n_r\,\Big|\,{\cal F}_t\Bigr],
\quad
K_t:=E_x\Bigl[\sum_{r\ge t}(\zeta_r^x)^{q_r}\,\Big|\,{\cal F}_t\Bigr],
\end{gather*}
for every $\lambda>0$; by the tower property and the domination the
same bound, with $B^n_t$ and $K_t$ at the earlier date, dominates
$\widetilde M^x_t(W_{t+1}(X^n\mid\cdot))$. Here
$E_x[B^n_t]\le\beta+H^x_s$ by the relaxed budget constraint, and
$E_x[K_t]<\infty$. Hence the families
$\{\widetilde M^x_t(W_{t+1}(X^n\mid\cdot))\}_n$ are uniformly
integrable: their integrals over any event $C$ are at most
$\lambda(\beta+H^x_s)+c_\lambda E_x[K_t\indicator_C]$, and
$\lambda$ was arbitrary. Second, the truncation argument of
Lemma~\ref{lemma:relaxedTightness}(b), applied to the sequence
$u(X^n_r)\vee u(X_r)$, gives
$E_x[(u(X^n_r)-u(X_r))^+]\to0$ at every date $r$.

At the terminal date $\Gamma^n_T$ is a felicity gap, so
$E_x[\Gamma^n_T]\to0$. For the inductive step, the recursion and
subadditivity of the positive part give
\[
\Gamma^n_t
\le
(1-\beta_t^x)\bigl(u(X^n_t)-u(X_t)\bigr)^+
+
\beta_t^x
\bigl(
\widetilde M^x_t(W^n_{t+1})-\widetilde M^x_t(W_{t+1})
\bigr)^+,
\]
writing $W^n_{t+1}:=W_{t+1}(X^n\mid\cdot)$ and
$W_{t+1}:=W_{t+1}(X\mid\cdot)$, so it suffices that the second term
vanish in expectation. We argue along an arbitrary subsequence,
refined so that $G_n:=\rho_t^x\,\Gamma^n_{t+1}$ and
$E_x[G_n\mid{\cal F}_t]$ tend to $0$ almost surely; such a
refinement exists within every subsequence, since
$E_x[G_n]\to0$ by the inductive hypothesis. Write
$V:=\rho_t^xW_{t+1}>0$ and $V^n:=\rho_t^xW^n_{t+1}\le V+G_n$, by
monotonicity of the positive part.

If $\theta\in(0,1]$, subadditivity of $z\mapsto z^\theta$ and the
conditional Jensen inequality give
\[
E_x[(V^n)^\theta\mid{\cal F}_t]
\le
E_x[V^\theta\mid{\cal F}_t]
+\bigl(E_x[G_n\mid{\cal F}_t]\bigr)^\theta
\longrightarrow
E_x[V^\theta\mid{\cal F}_t]
\]
almost surely, and applying the increasing map
$z\mapsto z^{1/\theta}$ gives
\[
\limsup_n\widetilde M^x_t(W^n_{t+1})
\le\widetilde M^x_t(W_{t+1})
\]
almost surely.

If $\theta<0$, then $(V^n)^\theta\ge(V+G_n)^\theta\to V^\theta$
almost surely, so the conditional Fatou lemma applied to the
nonnegative $(V^n)^\theta$ gives
$\liminf_nE_x[(V^n)^\theta\mid{\cal F}_t]
\ge E_x[V^\theta\mid{\cal F}_t]$
almost surely, and applying the decreasing map
$z\mapsto z^{1/\theta}$ gives the same conclusion.

In either case
$(\widetilde M^x_t(W^n_{t+1})-\widetilde M^x_t(W_{t+1}))^+\to0$
almost surely along the refined subsequence, and these positive
parts are dominated by the uniformly integrable
$\widetilde M^x_t(W^n_{t+1})$, so they tend to $0$ in expectation.
As the subsequence was arbitrary, the full sequence converges,
completing the induction.
\end{proof}

\begin{theorem}[Compact Epstein--Zin preferences via domination]
	\label{thm:EZDuality}
	Suppose that the preference system is Epstein--Zin
	(Definition~\ref{def:EZPreferences}), that $\rho_t^x>0$ and
	$\alpha_t^x\le\rho_t^x$ at every fibre, and that $K_s^x<\infty$
	(Lemma~\ref{lemma:relaxedTightness}) at every fibre. Then $\Prob$ is
	compact independent of $\eta$
	(Definition~\ref{def:compactIndependentEta}): the hypotheses
	involve only the market and $(\alpha_t^x,\rho_t^x)$, and the proof
	below uses nothing about $\eta$ beyond the standing assumptions, at
	every date.
\end{theorem}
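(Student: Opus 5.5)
The proof will mirror that of Theorem~\ref{thm:entropicDominationCompact}, with the reparametrised recursion $W_t=\mathcal J_t^{\rho_t^x}/\rho_t^x$ in place of $\mathcal J$. Since $\rho_t^x>0$, the map $z\mapsto z^{\rho_t^x}/\rho_t^x$ is strictly increasing and continuous on $(0,\infty)$. Hence every argmax, attainment, Fatou and measurability statement for $\mathcal J_s(\cdot\mid x)$ is equivalent to the same statement for $W_s(\cdot\mid x)$. Finiteness of values also transfers, because $W_s<+\infty$ if and only if $\mathcal J_s<+\infty$. So I will verify conditions (i)--(iv) of Definition~\ref{def:compactProblem} for $W$, fibrewise, taking the parameters constant on each fibre as the appendix does.

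\emph{Condition (iv).} By Lemma~\ref{lemma:EZReparametrisedDomination}, $\widetilde M_t^x$ is monotone and dominated by conditional expectation. Lemma~\ref{lemma:domination} therefore gives $W_t\le\mathcal J_t^{\rm CRRA}$ at every date. Combining Lemma~\ref{lemma:relaxedContainment} with Lemma~\ref{lemma:relaxedTightness}(a) yields
\[
W_s(X\mid x)\le\beta+H_s^x+cK_s^x<\infty
\]
on every redistributively feasible set, and hence on the self-financing set as well.

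\emph{Condition (i).} This is Lemma~\ref{lemma:EZFatou}. It applies because redistributively feasible streams lie in the relaxed set $\widetilde{\mathcal X}$ (Lemma~\ref{lemma:relaxedContainment}).

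\emph{Condition (ii).} Attainment follows the Koml\'os argument of Lemma~\ref{lemma:CRRAStability}. The relaxed budget bounds $\zeta_r^xX^n_r$ in $L^1$, so Ces\`aro averages of a maximising sequence converge almost surely, and the limit is feasible by Lemma~\ref{lemma:closednessContinuationSets}. The upper bound on the limit utility comes from Lemma~\ref{lemma:EZFatou}. The lower bound on the averaged utilities needs concavity of $\mathcal J_s$ in $X$, which I would prove by backward induction. The CES aggregator with $\rho\in(0,1)$ is concave and increasing in its arguments. The power certainty equivalent $M_{\alpha}$ is concave for $\alpha\le1$, since the power mean of order at most $1$ is concave. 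Degree-one homogeneity then gives concavity of each composite, and concavity of $\mathcal J$ yields the lower bound on the Ces\`aro averages. The averages therefore attain the value, and attainment transfers from $\mathcal J$ to $W$ because the reparametrisation is monotone.

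Stability is the step I expect to be the main obstacle. The midpoint/uniqueness argument of Lemma~\ref{lemma:CRRAStability} needs strict concavity, and $\mathcal J$ is only degree-one homogeneous, so it is not strictly concave along rays. I would try the midpoint argument on $W$ instead. First I will show that $W_s$ is concave, using the fact that a concave, degree-one homogeneous, positive functional raised to a power $\rho\in(0,1)$ stays concave. Next I will extract strictness from the current-consumption felicity term $(1-\beta)X_t^{\rho}/\rho$, propagated through the strictly increasing outer aggregation. This forces the concavity gap $G^n\to0$ and hence $X^n\to X^\circ$ in probability at every date. If this route fails, the fallback is to circumvent uniqueness altogether: take the Koml\'os limit of the optimisers and show it is optimal for the limiting problem, using value continuity from Lemma~\ref{lemma:budgetPerturbation} together with the Fatou property.

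In either route, the upgrade from convergence in probability to convergence in $L^1$ follows as before. By no satiation the funding constraints bind, so the total prices converge. Scheff\'e's lemma then upgrades convergence of $\zeta^x_rX^n_r$ to $L^1$, and equivalence of measures transfers this to $L^1$ convergence of $X^n$.

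\emph{Condition (iii).} This follows from Proposition~\ref{prop:measurabilityFromL1}, given (i) and attainment.

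\emph{Independence of $\eta$.} The hypotheses involve only the market and $(\alpha,\rho)$, and every step above treats $\eta'_n$ additively with the budget, exactly as in Theorem~\ref{thm:CRRACompact}. The conclusion therefore holds for every admissible $\eta'$.
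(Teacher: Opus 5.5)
Your proposal follows the paper's proof essentially step for step. You reparametrise to $W_t={\cal J}_t^{\rho_t^x}/\rho_t^x$ and obtain (iv) from Lemma~\ref{lemma:domination} with Lemmas~\ref{lemma:relaxedContainment} and~\ref{lemma:relaxedTightness}(a), (i) from Lemma~\ref{lemma:EZFatou}, (ii) from the Koml\'os--midpoint--Scheff\'e argument of Lemma~\ref{lemma:CRRAStability} (proving directly the concavity of $W$, strict in current consumption, which the paper cites from Epstein--Zin), and (iii) from Proposition~\ref{prop:measurabilityFromL1}.

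Drop your stability fallback, because a Koml\'os limit controls only Ces\`aro averages of the optimisers and not a subsequence of them, so the strict-concavity route is needed. Also, the funding constraints bind because the recursion is strictly monotone, not because of no satiation, which this theorem does not assume.
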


\begin{proof}
The map $z\mapsto(\rho_s^xz)^{1/\rho_s^x}$ is increasing and
continuous on the relevant range, so ${\cal J}$ and $W$ define the
same optimisers on every constraint set, the same suprema up to
this reparametrisation, and each of conditions (i)--(iii) of
Definition~\ref{def:compactProblem} holds for $\Prob$ if and only
if it holds for $\widehat\Prob:=(W,\tau,\tau_d,\eta)$.

For $\widehat\Prob$, condition (i) holds by
Lemma~\ref{lemma:EZFatou}, the redistributively feasible sets lying in the
relaxed sets by Lemma~\ref{lemma:relaxedContainment}, and condition
(ii) is verified exactly as in Lemma~\ref{lemma:CRRAStability}:
monotonicity and domination of $\widetilde M^x_t$
(Lemma~\ref{lemma:EZReparametrisedDomination}) give the value bound
$W_s\le{\cal J}^{\rm CRRA}_s\le\beta+H^x_s+cK^x_s$ on the relaxed
set, Lemma~\ref{lemma:EZFatou} replaces
Lemma~\ref{lemma:relaxedTightness}(b) throughout, and $W$ is
concave in $X$ for $\rho_t^x\in(0,1)$ and $\alpha_t^x\le\rho_t^x$
\cite{epsteinZin1}, strictly so in current consumption, so the
Koml\'os, midpoint and Scheff\'e arguments apply verbatim.

For condition~(iv), the domination
$W_t\le{\cal J}^{\rm CRRA}_t$ gives
${\cal J}_s(X\mid x)=(\rho_s^xW_s(X\mid x))^{1/\rho_s^x}
\le(\rho_s^x{\cal J}^{\rm CRRA}_s(X\mid x))^{1/\rho_s^x}<\infty$
on every redistributively feasible set. Condition~(iii)
follows from Proposition~\ref{prop:measurabilityFromL1}.
\end{proof}

\begin{remark}[The case $\rho_t^x<0$]
When $\rho_t^x<0$, ${\cal J}_t>0$ forces $W_t<0$ always, so the
domination $W_t\le{\cal J}_t^{\rm CRRA}$ carries no information: a
negative quantity is automatically bounded above by the nonnegative
${\cal J}_t^{\rm CRRA}$. We do not know whether $\Prob$ is compact in
this case.
\end{remark}

\subsubsection*{Domination via the market}

The statements of Definition~\ref{def:marketDomination},
Proposition~\ref{prop:marketDomination} and
Corollary~\ref{cor:marketDomination}, and their interpretation, are
given in the main text. Here we record the connection to our own
construction and the deferred proofs.

For $s<r$, write $G_r^x$ for the quantile function of
$\zeta_r^x=Y_r^x/Y_s^x$ under ${\mathbb P}_x$
(Definition~\ref{def:statePriceDensity}): the increasing generalised
inverse of its distribution function, so that
$\zeta_r^x\stackrel{d}{=}G_r^x(U)$ for $U\sim\mathrm{Unif}[0,1]$
(Definition~\ref{def:marketDomination}), and
\[
K_s^x
=
E_x\!\left[\sum_{r=s}^T(\zeta_r^x)^{q_r}\,\middle|\,{\cal F}_s\right]
=
\sum_{r=s}^T\int_0^1G_r^x(x)^{q_r}\,\ed x,
\qquad
q_r:=\frac{\rho_r^x}{\rho_r^x-1}.
\]
So $K_s^x<\infty$ is a statement purely about the growth of each
$G_r^x$, and we may compare markets by comparing this growth.

\begin{proof}[Market domination transfers the moment condition, Proposition~\ref{prop:marketDomination}]
By Definition~\ref{def:marketDomination}, there are $\delta\in(0,1)$
and $K<\infty$ with $G_1(x)\le KG_2(x)$ for $x\in[1-\delta,1)$. Since
$G_1$ is increasing and finite a.e., it is bounded on $[0,1-\delta]$,
say by $M<\infty$. Then, using $q>0$ so that $z\mapsto z^q$ preserves
the inequality $G_1\le KG_2$ on $[1-\delta,1)$,
\begin{align*}
\int_0^1G_1(x)^q\,\ed x
&=
\int_0^{1-\delta}G_1(x)^q\,\ed x
+
\int_{1-\delta}^1G_1(x)^q\,\ed x
\\
&\le
(1-\delta)M^q+K^q\int_0^1G_2(x)^q\,\ed x
<
\infty.
\qedhere
\end{align*}
\end{proof}

The restriction $q>0$ corresponds to $\rho_r^x<0$. For
$0<\rho_r^x<1$, so $q_r<0$, the regime needed for
Theorem~\ref{thm:EZDuality}, the
mirror statement holds for the lower tail $x\to0^+$, via the
following.

\begin{corollary}[Market domination transfers the moment condition: lower tail]
\label{cor:marketDominationLowerTail}
For $i=1,2$ let $\hat G_i(x):=1/G_i(1-x)$, the quantile function of
$1/\zeta_i$ where $\zeta_i\stackrel{d}{=}G_i(U)$. Suppose $\hat
G_1\preceq\hat G_2$ (Definition~\ref{def:marketDomination}) and
$q<0$. Then $\int_0^1G_2(x)^q\,\ed x<\infty$ implies
$\int_0^1G_1(x)^q\,\ed x<\infty$.
\end{corollary}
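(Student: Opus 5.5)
The plan is to reduce the lower-tail statement to Proposition~\ref{prop:marketDomination}, applied to the reciprocal quantile functions $\hat G_i$ with the positive exponent $-q$.

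The first step records the substitution identity. Since $G_i(1-x)=1/\hat G_i(x)$ and $x\mapsto 1-x$ preserves Lebesgue measure on $[0,1]$,
\[
\int_0^1 G_i(x)^q\,\ed x=\int_0^1 G_i(1-x)^q\,\ed x=\int_0^1\hat G_i(x)^{-q}\,\ed x .
\]
The second step checks that each $\hat G_i$ is admissible in Definition~\ref{def:marketDomination}, that is, increasing and $(0,\infty)$-valued. This holds because $G_i$ is increasing and positive, so $x\mapsto G_i(1-x)$ is decreasing and positive, and its reciprocal is increasing and positive. One should also confirm that $\hat G_i$ really is a quantile function of $1/\zeta_i$. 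This is only needed for the interpretation, not for the integral bound, and it holds up to the usual left/right-continuity modification on a countable null set. That modification does not affect the integrals.

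The third step sets $p:=-q>0$. The hypothesis $\int_0^1 G_2^q<\infty$ then reads $\int_0^1\hat G_2^{\,p}<\infty$, and $\hat G_1\preceq\hat G_2$ is exactly the hypothesis of Proposition~\ref{prop:marketDomination} with $(G_1,G_2,q)$ replaced by $(\hat G_1,\hat G_2,p)$. That proposition gives $\int_0^1\hat G_1^{\,p}<\infty$, and the identity from the first step converts this back to $\int_0^1 G_1^q<\infty$.

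No step is a real obstacle. The only point needing care is the well-definedness caveat in the second step: $\hat G_i$ might take the value $+\infty$ where $G_i(1-x)=0$. This happens at most at the single endpoint $x=1$, because $G_i>0$ on $[0,1)$ by assumption, so it is a null set. The boundedness of $\hat G_1$ on $[0,1-\delta]$, which the proof of Proposition~\ref{prop:marketDomination} uses, follows from monotonicity and finiteness there.
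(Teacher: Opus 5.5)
Your proposal is correct and is essentially the paper's proof: the substitution $x\mapsto 1-x$ turns $\int_0^1 G_i(x)^q\,dx$ into $\int_0^1\hat G_i(x)^{-q}\,dx$, and then Proposition~\ref{prop:marketDomination} is applied to $(\hat G_1,\hat G_2,-q)$. Your extra check that each $\hat G_i$ is increasing and $(0,\infty)$-valued is a detail the paper leaves implicit. The $+\infty$ caveat is unnecessary, however, because Definition~\ref{def:marketDomination} already takes $G_i$ positive on all of $[0,1]$, including at $0$.
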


\begin{proof}
Substituting $x\mapsto1-x$ and using $G_i(1-x)=1/\hat G_i(x)$,
\[
\int_0^1G_i(x)^q\,\ed x
=
\int_0^1G_i(1-x)^q\,\ed x
=
\int_0^1\hat G_i(x)^{-q}\,\ed x,
\]
and $-q>0$, so this is exactly the integral of
Proposition~\ref{prop:marketDomination} applied to $\hat G_1,\hat
G_2,-q$ in place of $G_1,G_2,q$.
\end{proof}

\begin{proof}[Compactness transfers from a dominating market, Corollary~\ref{cor:marketDomination}]
Immediate from Proposition~\ref{prop:marketDomination} (if
$\rho_r^x<0$) or Corollary~\ref{cor:marketDominationLowerTail} (if
$0<\rho_r^x<1$), applied at each date $r$, and the identity for
$K_s^x$ above.
\end{proof}

\end{document}